 \documentclass[11pt,a4paper]{article}
\usepackage[margin=0.6in]{geometry}
\usepackage{amsthm}
\usepackage{bm}
\usepackage[dvips]{graphics}
\usepackage{graphicx}
\usepackage{algorithm,}
\usepackage{courier}

\usepackage[allcolors=blue]{hyperref}

\usepackage{psfrag}
\usepackage{units}
\usepackage{sidecap}
\usepackage{amsfonts,amsmath,amssymb}
\hypersetup{colorlinks=true}

\usepackage{dsfont,color,setspace, cite}

\usepackage{caption,cite}
\usepackage{subcaption}
\usepackage{wrapfig}
\captionsetup[wrapfigure]{justification=raggedright}
\usepackage{comment}

\DeclareMathOperator*{\argmin}{arg\,min}


\usepackage{xspace}
\usepackage{bbm}

\def\tr{\mathop{\rm tr}\nolimits}%
\def\var{\mathop{\rm var}\nolimits}%
\def\diag{\mathop{\rm diag}\nolimits}%
%
%
%
%
%
%

\def\e{\epsilon}

\DeclareMathOperator\E{E}



\DeclareMathOperator\R{R}



\newcommand{\N}{\mathrm{N}}


\def\textiid{i.i.d.\@\xspace}
\newcommand\iid{\ifmmode\text{ i.i.d. } \else \textiid \fi}




\newtheorem{assumption}{Assumption}
\newtheorem{theorem}{Theorem}
\newtheorem{lemma}{Lemma}

\newtheorem{corollary}{Corollary}

\begin{document}
\long\def\/*#1*/{}

%
\newcommand{\blind}{1}  
\newcommand{\longer}{1} 

\if1\blind
{
  \title{\bf A scalable estimate of the out-of-sample prediction error  via approximate leave-one-out}
  \author{Kamiar Rahnama Rad  \thanks{
    K.R. is supported by the NSF DMS grant 1810888, and the Betty and Marvin Levine Fund. }
    \hspace{.2cm}\\
    Baruch College, City University of New York\\
    and \\
    Arian Maleki  \thanks{
    A.M. gratefully acknowledges NSF DMS grant 1810888. } \\
    Columbia University}
    \date{}
  \maketitle
} \fi

\if0\blind
{
  \bigskip
  \bigskip
  \bigskip
  \begin{center}
    {\LARGE\bf A scalable estimate of the out-of-sample prediction error  via approximate leave-one-out}
\end{center}
  \medskip
} \fi


\newcommand{\io}{\underline{1}}

\newcommand{\alo}{{\rm ALO}}
\newcommand{\lo}{{\rm LO}}

\newcommand{\extra}{{\rm Err}_{\rm extra}}
\newcommand{\ine}{\rm{Err}_{\rm in}}
\newcommand{\inew}{\widehat{{\rm Err}}_{\rm in}}
\newcommand{\p}{\mathds{P}}
\newcommand{\mb}{\mathbf{m}}   
\newcommand{\bb}{\mathbf{b}}
\newcommand{\bl}{\bm{\hat \beta}}
\newcommand{\blo}{\hat \beta^\circ}
\newcommand{\bli}{\bm{ \hat \beta}_{/i }}
\newcommand{\blio}{ \hat \beta_{/i }^\circ}
\newcommand{\df}{\text{df}}
\newcommand{\poly}{\rm{poly}}
\newcommand{\snr}{\text{snr}}
\newcommand{\blii}{  \bm{\tilde \beta}_{/i }}
\newcommand{\blt}{  \bm{\tilde \beta}^i}

\newcommand{\polyn}{\rm{poly}(\log n)}
\newcommand{\tXI}{{\bm{\bar X}}_{/i}}
\newcommand{\tGI}{{ \bm{\Gamma}}_{i/\bm{\tilde \delta_\ell}, \bm{\tilde \delta_r } }}
\newcommand{\tGII}{{ \bm{\Gamma}}_{ i/{\bm{\tilde \delta}_{1,\ell},\bm{\tilde \delta}_{1,r}  } }}
\newcommand{\tGIII}{{ \bm{\Gamma}}_{i/{\bm{\tilde \delta}_{2,\ell}, \bm{\tilde \delta}_{2,r} }}}
\newcommand{\zGI}{{\bm{ \Gamma}}_{\bm{\tilde \zeta} /i}}
\newcommand{\xGI}{{ \bm{\Gamma}}_{\bm{\tilde \xi} /i}}
\newcommand{\xxGI}{{ \bm{\Gamma}}_{ \bm{\xi} /i}}

\newcommand{\XI}{\bm{X}_{/i}}
\newcommand{\yi}{\bm{y}_{/i}}

\newcommand{\ld}{\dot{\ell}}
\newcommand{\ldd}{\ddot{\ell}}
\newcommand{\efd}{\dot{f}}
\newcommand{\efdd}{\ddot{f}}
\newcommand{\rd}{\dot{r}}
\newcommand{\psd}{\dot{\psi}}
\newcommand{\rdd}{\ddot{r}}
\newcommand{\rddd}{\dddot{r}}
\newcommand{\lddd}{\dddot{\ell}}
\newcommand{\dli}{\bm{\Delta}_{ \slash i}}
\newcommand{\pd}{\dot{\phi}}

\newcommand{\fd}{\dot{f}_e}
\newcommand{\fdd}{\ddot{f}_e}

\begin{abstract}
The paper considers the problem of out-of-sample risk estimation under the high dimensional settings where  standard techniques such as $K$-fold cross validation suffer from large biases. Motivated by the low bias of the leave-one-out cross validation (LO) method, we propose a computationally efficient closed-form approximate leave-one-out formula (ALO) for a large class of regularized estimators. Given the regularized estimate, calculating ALO requires minor computational overhead.  With minor assumptions about the data generating process, we obtain a finite-sample upper bound for $|\text{LO} - \text{ALO}|$. Our theoretical analysis illustrates that $|\text{LO} - \text{ALO}| \rightarrow 0$ with overwhelming probability, when $n,p \rightarrow \infty$, where the dimension $p$ of the feature vectors may be comparable with or even greater than the number of observations, $n$. Despite the high-dimensionality of the problem, our theoretical results do not require any sparsity assumption on the vector of regression coefficients. Our extensive numerical experiments  show that $|\text{LO} - \text{ALO}|$ decreases as $n,p$ increase, revealing the excellent finite sample performance of ALO. We further illustrate the usefulness of our proposed out-of-sample risk estimation method  by an example of real recordings from spatially sensitive neurons (grid cells) in the medial entorhinal cortex of a rat. 
\end{abstract}

\noindent%
{\it Keywords:}  High-dimensional statistics, Regularized estimation, Out-of-sample risk estimation, Cross validation, Generalized linear models.

\section{Introduction}\label{section:intro}


\subsection{Main objectives} 
Consider a dataset  $\mathcal{D} = \{ (y_1, \bm{x_1}), (y_2, \bm{x_2}), \ldots, (y_n, \bm{x_n})\}$ where $\bm{x_i} \in \R^p$ and $y_i \in \R$. In many applications, we model these observations as independent and identically distributed draws from  some joint distribution $q(y_i | \bm{x_i}^\top \bm{\beta}^*) p(\bm{x_i})$ where the superscript $\top$ denotes the transpose of a vector. To estimate the parameter $\bm{\beta}^*$ in such models, researchers often use the  optimization problem
\begin{eqnarray}\label{eq:ori_opt}
\bl \triangleq  \underset{\bm{\beta} \in \R^p}{\argmin}  \Bigl \{ \sum_{i=1}^n  \ell ( y_i|\bm{x_i}^\top \bm{\beta} ) + \lambda r(\bm{\beta})  \Bigr \}, \label{eq:bl}
\end{eqnarray}
where $\ell$ is called the loss function, and is typically set to $ - \log q(y_i | \bm{x_i}^\top \bm{\beta})$ when $q$ is known, and $r(\bm{\beta})$ is called the regularizer. In many applications, such as parameter tuning or model selection, one would like to estimate the \textit{out-of-sample prediction error}, defined as
\begin{equation}
\extra \triangleq \E [ \phi ( y_{\rm new},\bm{x}_{\rm new}^\top \bl ) | \mathcal{D} ], 
\end{equation}
where $(y_{\rm new},\bm{x}_{\rm new})$ is a new sample from the distribution $q(y | \bm{x}^\top \bm{\beta}^*) p(\bm{x})$ independent of $\mathcal{D}$, and $\phi$ is a function that measures the closeness of $y_{\rm new}$ to $\bm{x}_{\rm new}^\top \bl$. A standard choice for $\phi$ is $- \log q(y | \bm{x}^\top \bm{\beta})$.  However, in general we may use other functions too. Since $\extra$ depends on the rarely known  joint distribution of $(y_i,\bm{x_i})$, a core problem  in model  assessment is to estimate it from data.

 \begin{figure}
\hspace{5.2cm}
\begin{center}
        \includegraphics[width=0.8\textwidth]{./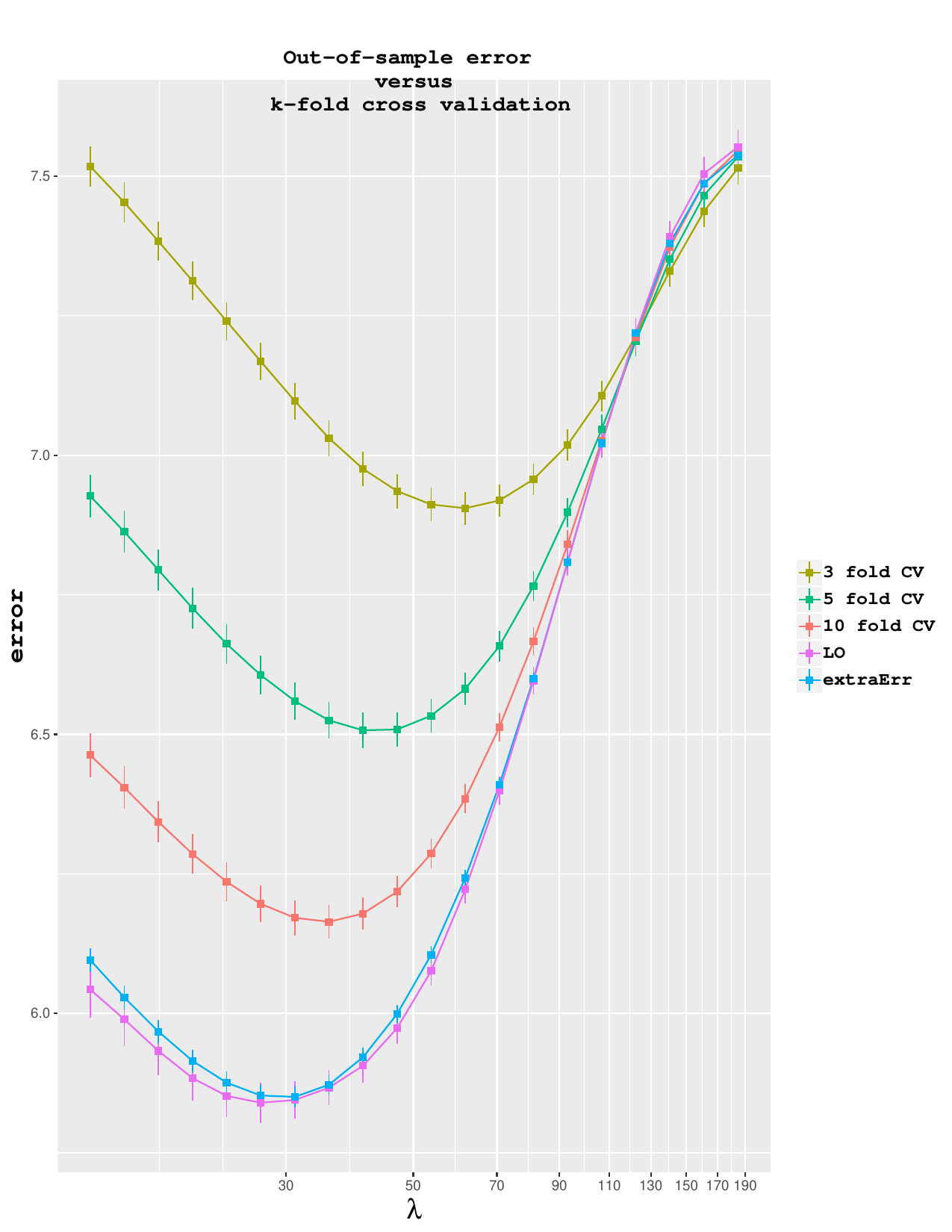}
          \caption{ Comparison of  $K$-fold cross validation (for $K=3,5,10$) and leave-one-out cross validation with the true (oracle-based)  out-of-sample  error for the LASSO problem where $\ell ( y|\bm{x}^\top \bm{\beta} )=\frac{1}{2}(y-\bm{x}^\top \bm{\beta})^2$ and $r(\bm{\beta})=\| \bm{\beta} \|_1$. In high-dimensional settings the upward bias of $K$-fold CV clearly decreases as number of folds increase. Data is $\bm{y} \sim \N(\bm{X \beta}^*,\sigma^2\bm{I})$ where $\bm{X} \in \R^{p \times n}$. The number of nonzero elements of the true $\bm{\beta}^*$ is set to $k$ and their values is set to $1/3$. Dimensions are $(p,n,k)=\bigl(1000,250,50\bigr)$ and  $\sigma=2$. The rows of $\bm{X}$ are independent $\N(\bm{0}, \bm{I})$. Extra-sample test data is $y_{\rm new} \sim \N(\bm{x}_{\rm new}^\top \bm{\beta}^*, \sigma^2)$ where  $\bm{x}_{\rm new} \sim \N(\bm{0}, \bm{I})$. The true (oracle-based) out-of-sample prediction error is $\extra =  \E [  ( y_{\rm new}-\bm{x}_{\rm new}^\top \bl )^2 | \bm{y,X} ]= \sigma^2 + \|\bl-\bm{\beta}^* \|_2^2$.  All depicted quantities are averages based on 500 random independent samples, and error bars depict one standard error. } 
          \label{fig:loovscv5}
          \end{center}
\end{figure}

 
 This paper considers a computationally efficient approach to the problem of  estimating  $\extra$ under the high-dimensional setting, where both $n$ and $p$ are large, but $n/p$ is a fixed number, possibly less than one. This high dimensional setting has received a lot of attention \cite{karoui2013asymptotic,EBBLY13,bean2013optimal,donoho2016high,NR16,SBC17,DW18}. But the problem of estimating $\extra$  has not been carefully studied in generality, and as a result the issues of the existing techniques and their remedies have not been explored. For instance, a popular technique in practice is the $K$-fold cross validation, where $K$ is a small number, e.g. $3$ or $5$. Figure \ref{fig:loovscv5} compares the performance of the $K$-fold cross validation for $4$ different values of $K$ on a  LASSO linear regression problem. This figure implies that in high-dimensional settings, $K$-fold cross validation suffers from a large bias, unless $K$ is a large number. This bias is due to the fact that in high-dimensional settings the fold that is removed in the training phase, may have a major effect on the solution of \eqref{eq:ori_opt}. This claim can be easily seen for LASSO linear regression with an IID design matrix using  phase transition diagrams \cite{DMM11}.  To summarize, as the number of folds increases, the bias of the estimates reduces at the expense of a higher computational complexity.

 In this paper, we consider the most extreme form of cross validation, namely leave-one-out cross-validation (LO), which according to Figure \ref{fig:loovscv5} is the least biased cross validation based estimate of the out-of-sample error.  We will use the fact that both $n$ and $p$ are large numbers to approximate LO for both smooth and non-smooth regularizers. Our estimate, called approximate leave-one-out (ALO), requires solving the optimization problem \eqref{eq:ori_opt} once.  Then, it uses $\bl$ to approximate $\lo$ without solving the optimization problem again. In addition to obtaining $\bl$, $\alo$ requires a matrix inversion and two matrix-matrix multiplications. Despite these extra steps $\alo$ offers a significant computational saving compared to $\lo$.  This point is illustrated in Figure \ref{fig:time}  by comparing the computational complexity of $\alo$ with that of $\lo$ and a single fit as both $n$ and $p$ increase for various data shapes, that is $n>p, \ n=p$, and $n<p$. Details of this simulation are given in Section \ref{sec:time}. 
 
The main algorithmic and theoretical contributions of this paper are as follows. First, our computational complexity comparison between $\lo$ and $\alo$, confirmed by extensive numerical experiments, show that $\alo$ offers a major reduction in the computational complexity of estimating the out-of-sample risk. Moreover,  with minor assumptions about the data generating process, we obtain a finite-sample upper bound for $|\text{LO} - \text{ALO}|$, proving that under the high-dimensional settings ALO presents a sensible approximation of LO for a large class of regularized estimation problems in the generalized linear family. Finally, we provide readily usable R implementation of ALO  online; see \texttt{https://github.com/Francis-Hsu/alocv}, and we illustrate the usefulness of our proposed out-of-sample risk estimation in unexpected scenarios that fail to satisfy the assumptions of our theoretical framework. Specifically, we present a novel neuroscience example about the computationally efficient tuning of the spatial scale in estimating an inhomogeneous spatial point process.


 \begin{figure} 
    \centering
    \begin{subfigure}[b]{0.32\textwidth}
        \includegraphics[width=\textwidth]{./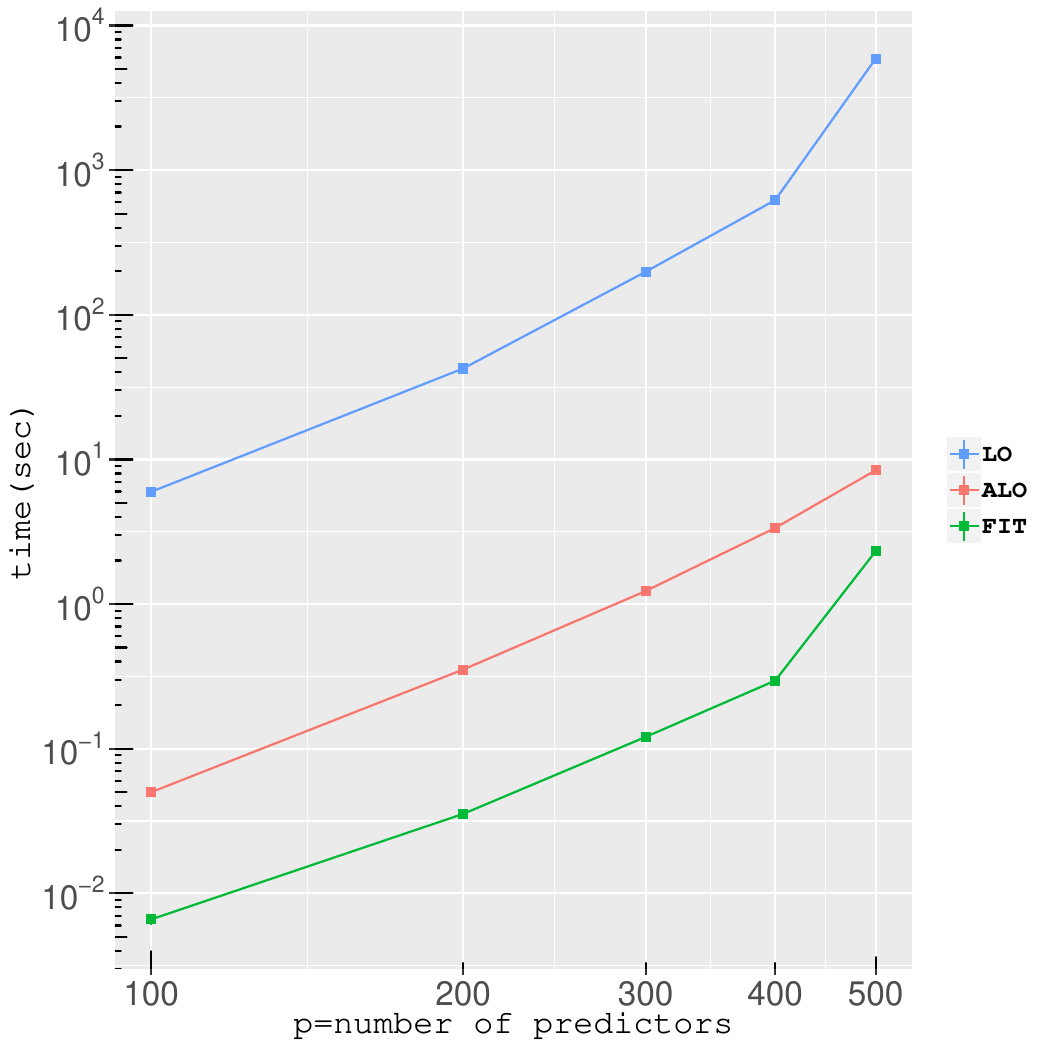}
        \caption{Elastic-net linear regression (section \ref{sec:num:linear}) for $\frac{n}{p}=5$.}
        \label{fig:linear_time}
    \end{subfigure}    
    \begin{subfigure}[b]{0.32\textwidth}
        \vspace{1cm}
        \includegraphics[width=\textwidth]{./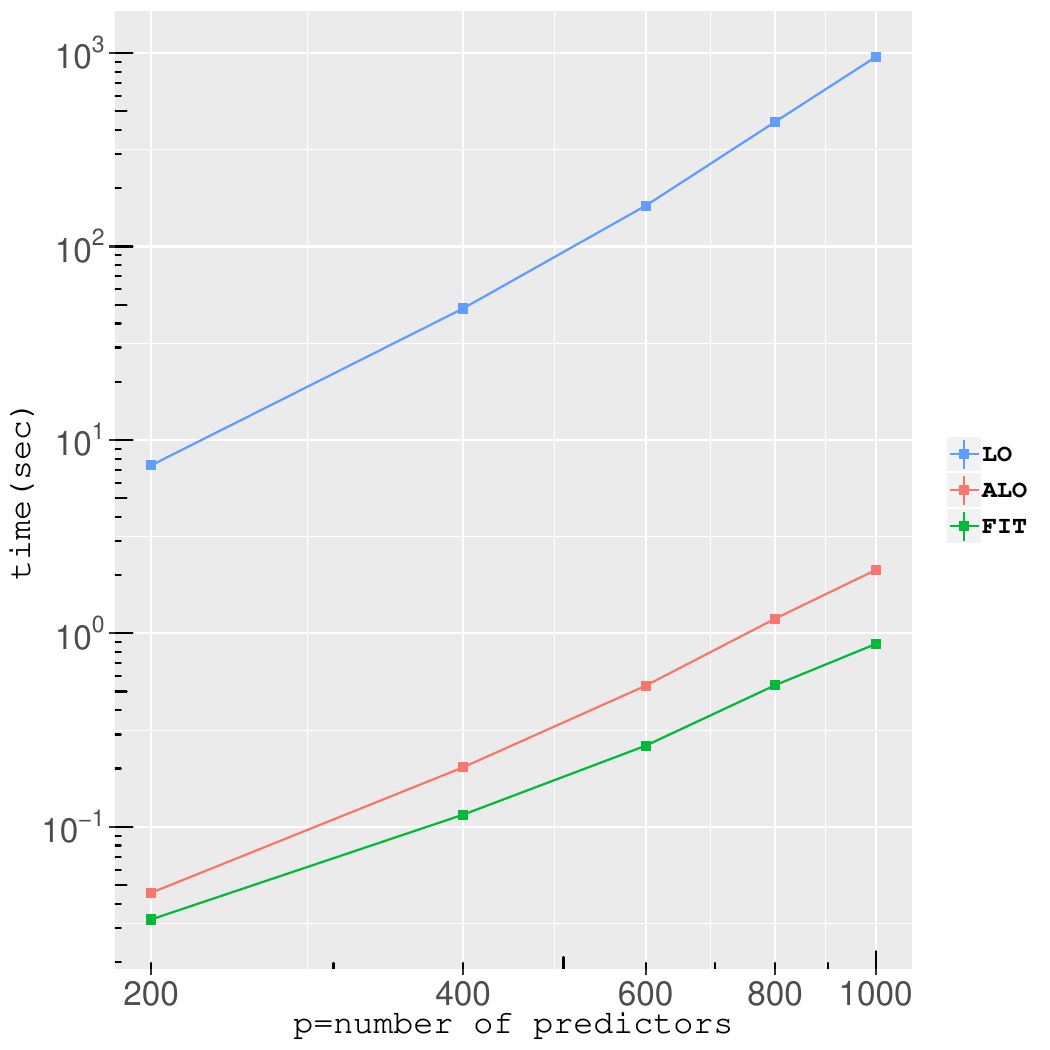}
        \caption{LASSO logistic  regression (section \ref{sec:num:logistic}) for $\frac{n}{p} =1$. }
        \label{fig:logistic_time}
    \end{subfigure}
        ~ 
    \begin{subfigure}[b]{0.32\textwidth}
        \includegraphics[width=\textwidth]{./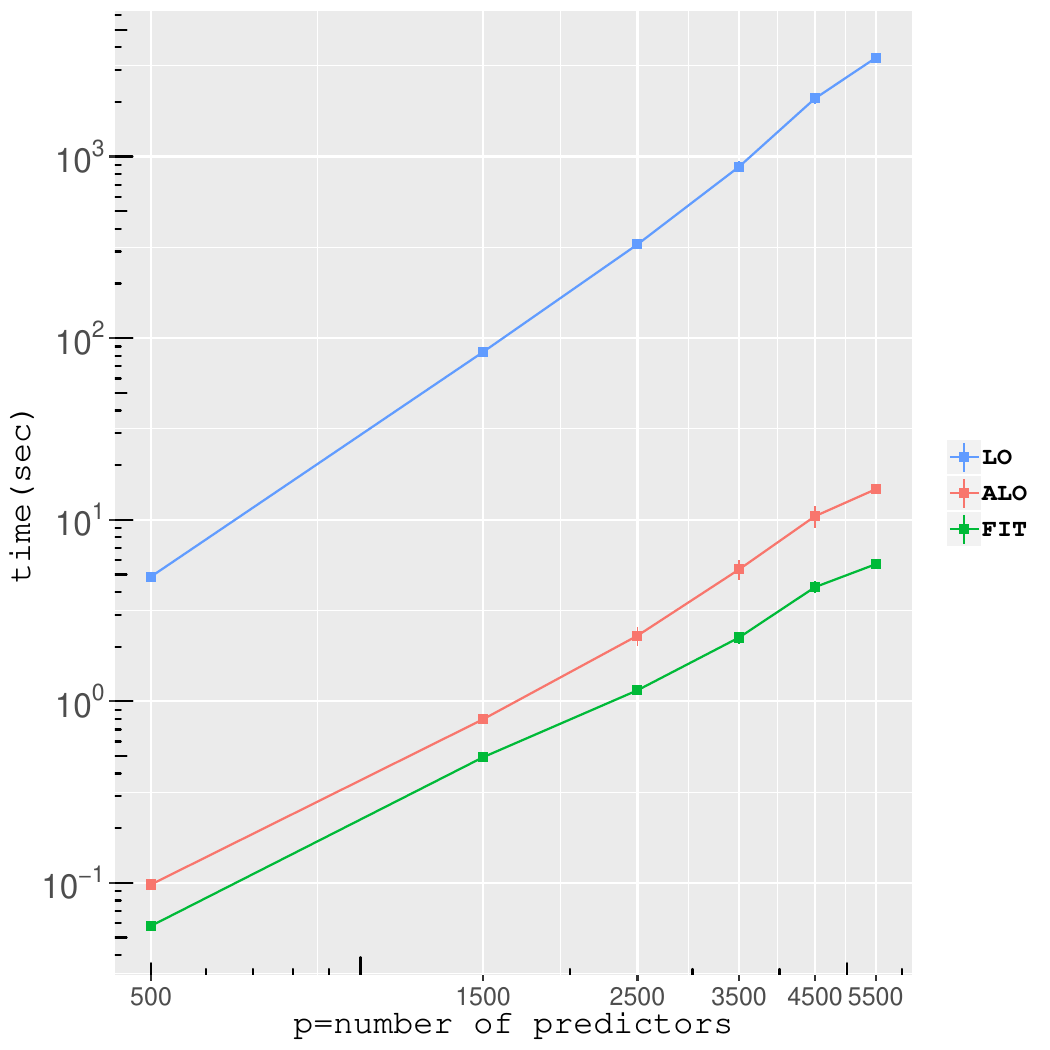}
        \caption{Elastic-net Poisson  regression  (section \ref{sec:num:poisson}) for $\frac{n}{p}=\frac{1}{10}$.}
        \label{fig:poisson_time}
    \end{subfigure}
    \caption{The time to compute $\alo$ and $\lo$. FIT refers to the time to fit $\bl$ and the $\alo$ time includes computing $\bl$.  
          Calculating $\lo$ takes orders of magnitude longer than $\alo$. }\label{fig:time}
\end{figure}



\subsection{Relevant work}
 The problem of estimating $\extra$ from $\mathcal{D}$ has been  studied for (at least) the past 50 years. Methods such as cross validation (CV) \cite{S74,G75}, Allen's PRESS statistic \cite{A74}, generalized cross validation (GCV) \cite{CW79,GHW79}, and bootstrap \cite{E83} have been proposed for this purpose. In the high dimensional setting, employing LO or bootstrap is computationally expensive and the less computationally complex approaches such as  5-fold (or 10-fold) CV suffer from high bias as illustrated in Figure \ref{fig:loovscv5}. 

 As for the computationally efficient approaches, extensions of Allen's PRESS \cite{A74}, and generalized cross validation (GCV) \cite{CW79,GHW79} to non-linear models and classifiers with ridge penalty are well known: smoothing splines for generalized linear models in \cite{OYR86}, spline estimation of generalized additive models \cite{B90}, ridge estimators in logistic regression in \cite{CH92}, smoothing splines with non-Gaussian data using various extensions of GCV in \cite{G92,XW96,GX01}, support vector machines \cite{OW00}, kernel logistic regression in \cite{CT08}, and Cox's proportional hazard model with a ridge penalty in \cite{MG13}. Moreover, leave-one-out approximations for posterior means of Bayesian models with Gaussian process priors using the Laplace approximation and Expectation Propagation were introduced in \cite{VMTSW16}, and extended in \cite{VGG17}. Despite the existence of this vast literature, the performance of such approximations in high-dimensional settings is unknown except for the straightforward linear ridge regression framework. Moreover, past heuristic approaches have only considered the ridge regularizer. The results of this paper include a much broader set of regularizers; examples include but are not limited to LASSO \cite{T96}, elastic net \cite{ZH05} and bridge \cite{FF93}, just to name a few.

 

 
More recently, a few  papers have studied the problem of estimating $\extra$ under high-dimensional settings \cite{mousavi2013asymptotic, OK16}. The approximate message passing framework introduced in \cite{MalekiThesis,DoMaMo09} was used in \cite{mousavi2013asymptotic}  to obtain an estimate of $\extra$ for LASSO linear regression. In another related paper, \cite{OK16} obtained similar results using approximations popular in statistical physics. The results of \cite{mousavi2013asymptotic} and \cite{OK16} are only valid for cases where the design matrix has IID entries and the empirical distribution of the regression coefficients converges weakly to a distribution with a bounded second moment. In this paper, our theoretical analysis includes correlated design matrices, and regularized estimators beyond  LASSO linear regression. 

In addition to these approaches, another  contribution has been to study GCV and $\extra$ for restricted least-squares estimators of submodels of the overall model without regularization \cite{BF83, leeb2008evaluation,L09}.  In \cite{leeb2008evaluation} it was  shown that a variant of GCV  converges  to $\extra$ uniformly over a collection of candidate models provided that there are not too many candidate models, ruling out complete subset selection. Moreover, since restricted least-squares estimators are studied, the conclusions  exclude the regularized problems considered in this paper.

Finally, it is worth mentioning that in another line of work,  strategies have been proposed  to obtain unbiased estimates of the in-sample error. In contrast to the out-of-sample error, the in-sample error is about the prediction  of new responses for the same explanatory variables as in the training data. The literature of in-sample error estimation is too vast to be reviewed here. Mallow's $C_p$ \cite{M73}, Akaike's Information Criterion (AIC) \cite{Ak74,HT89}, Stein's Unbiased Risk Estimate (SURE) \cite{S81,ZHT07,TT12} and Efron's Covariance Penalty \cite{E86}  are seminal examples of in-sample error estimators. When $n$ is much larger than $p$,  the in-sample prediction error is expected to be close to the out-of-sample prediction error. The problem is that in high-dimensional settings, where $n$ is of the same order as (or even smaller than) $p$, the in-sample and out-of-sample errors are different.

The rest of the paper is organized as follows. After introducing the notations, we first present the approximate leave-one-out formula ($\alo$) for twice differentiable regularizers in Section \ref{ssec:newtonmethod}. In Section \ref{ssec:non-diff}, we show how $\alo$ can be extended to nonsmooth regularizers such as LASSO using Theorem \ref{thm:lasso_approx} and Theorem \ref{thm:lassobounderror}. In Section \ref{sec:computations}, we compare the computational complexity and memory requirements of ALO and LO.  In Section \ref{sec:res}, we present Theorem \ref{theo:main}, illustrating with minor assumptions about the data generating process that $|\text{LO} - \text{ALO}| \rightarrow 0$ with overwhelming probability, when $n,p \rightarrow \infty$, where $p$ may be comparable with or even greater than $n$. The numerical examples in Section \ref{sec:numsim} study the statistical accuracy and computational efficiency of the approximate leave-one-out approach. To illustrate the accuracy and computational efficiency of $\alo$ we apply it to synthetic and real data in Section \ref{sec:numsim}. We generate synthetic data, and compare $\alo$ and $\lo$ for elastic-net linear regression in Section \ref{sec:num:linear}, LASSO logistic regression in Section \ref{sec:num:logistic}, and elastic-net Poisson regression in Section \ref{sec:num:poisson}. For real data we apply LASSO, elastic-net and ridge logistic regression to sonar returns from two undersea targets in Section \ref{sec:sonar}, and we apply LASSO Poisson regression to real recordings from spatially sensitive neurons (grid cells) in Section \ref{sec:num:grid}. Our synthetic and real data examples  cover various data shapes, that is $n > p$, $n=p$ and $n<p$. 
In Section \ref{sec:conc} we discuss directions for future work. Technical proofs are collected in Section \ref{sec:proof}, the appendix.




  \subsection{Notation}
  
 We first review the notations that will be used in the rest of the paper.    Let $\bm{x_i}^\top \in \R^{1 \times p}$ stand for the $i$th row of $\bm{X} \in \R^{n \times p}$. $\bm{y}_{/i} \in \R^{(n-1) \times 1}$ and $\XI \in \R^{(n-1) \times p}$ stand for $\bm{y}$ and $\bm{X}$, excluding the $i$th entry $y_i$ and the $i$th row $\bm{x_i}^\top$, respectively.  The vector $\bm{a} \odot \bm{b}$ stands for the entry-wise product of two vectors $\bm{a}$ and $\bm{b}$. For two vectors $\bm{a}$ and $\bm{b}$, we use $\bm{a} < \bm{b}$ to indicate element-wise inequalities. Moreover, $|\bm{a}|$ stands for the vector obtained by applying the element-wise absolute value to every element of $\bm{a}$. For a set $S \subset \{1,2, 3, \ldots, p\}$, let $\bm{X}_S$ stands for the submatrix of $\bm{X}$ restricted to \textit{columns} indexed by $S$. Likewise, we let $\bm{x}_{i,S} \in \R^{|S| \times 1}$ stand for for subvector of $\bm{x}_i$ restricted to the entries indexed by $S$. For a vector $\bm{a}$, depending on which notation is easier to read, we may use $[\bm{a}]_i$ or $a_i$ to denote the $i$th entry of $\bm{a}$. The diagonal matrix with elements of the vector $\bm{a}$ is referred to as $\text{diag}[\bm{a}]$.  Moreover, define 
\begin{eqnarray*}
\pd(y,z) &\triangleq& \frac{\partial \phi (y,z)  }{\partial z}, \quad \ld_i(\bm{\beta}) \triangleq \frac{\partial \ell (y_i|z)  }{\partial z} |_{z =  \bm{x_i}^\top \bm{\beta}}, \quad  \ldd_i(\bm{\beta}) \triangleq \frac{\partial ^2\ell (y_i|z)  }{\partial z^2} |_{z = \bm{x_i}^\top \bm{\beta}} \\
\bm{ \ld}_{/ i}(.) &\triangleq& [\ld_1(.),\cdots,\ld_{i-1}(.),\ld_{i+1}(.),\cdots,\ld_n(.)]^\top, \\
  \bm{\ldd}_{/ i}(.) &\triangleq&  [\ldd_1(.),\cdots,\ldd_{i-1}(.),\ldd_{i+1}(.),\cdots,\ldd_n(.)]^\top.
\end{eqnarray*}
The notation $\poly\log n$ denotes  polynomial of $\log n$ with a finite degree. Finally,  let $\sigma_{\max}(\bm{A})$ and $\sigma_{\min}(\bm{A})$ stand for the largest and smallest singular values of $\bm{A}$, respectively. 
%
 %
 %
 \section{Approximate leave-one-out}\label{ssec:aloformula}
 \subsection{Twice differentiable losses and regularizers}\label{ssec:newtonmethod}
 The leave-one-out cross validation estimate is defined through the following formula:
  \begin{eqnarray}
\lo &\triangleq& \frac{1}{n} \sum_{i=1}^n \phi(y_i, \bm{x}_i^\top \bli),\label{eq:loo}
\end{eqnarray}
\vspace{-0.5cm}
where 
\vspace{-.5cm}
\begin{eqnarray}\label{eq:optimization_lo}
\bli \triangleq  \underset{\bm{\beta} \in \R^p}{\argmin}  \Bigl \{ \sum_{j\neq i}  \ell ( y_j|\bm{x}_j^\top \bm{\beta} ) + \lambda r(\bm{\beta})  \Bigr \} \label{eq:bli},
\end{eqnarray}
is the leave-$i$-out estimate. If done naively, the calculation of $\lo$ asks for the optimization problem \eqref{eq:bli}  to be solved $n$ times, a computationally demanding task  when $p$ and $n$ are  large. To resolve this issue, we use the following simple strategy: Instead of solving \eqref{eq:bli} accurately, we use one step of the Newton method for solving \eqref{eq:bli} with initialization $\bl$. Note that this step requires both $\ell$ and $r$ to be twice differentiable. We will explain how this limitation can be lifted in the next section. The Newton step leads to the following simple approximation of $\bli$:\footnote{Note that in the rest of the paper for notational simplicity of our theoretical results we have assumed that $r(\bm{\beta}) = \sum_{i=1}^p r(\beta_i)$. However, the extension to non-separable regularizers is straightforward.}
\[
\boldsymbol{\tilde{\beta}}_{ \slash i} = \bl + \Big( \sum_{j\neq i} \bm{x_j} \bm{x_j}^\top \ldd ( y_j|\bm{x}_j^\top \bl ) + \lambda \diag[\bm{\rdd}(\bl)] \Big)^{-1} \bm{x_i} \ld ( y_i|\bm{x}_i^\top \bl ),
\]
 where $\bl$ is defined in \eqref{eq:ori_opt}. Note that $\sum_{j\neq i} \bm{x_j} \bm{x_j}^\top \ldd ( y_j|\bm{x}_j^\top \bl ) + \lambda \diag[ \bm{\rdd}(\bl)]$ is still dependent on the observation that is removed. Hence, the process of computing the inverse (or solving a linear equation) must be repeated $n$ times. Standard methods for calculating inverses (or solving linear equations) require cubic time and quadratic space (see Appendix C.3 in \cite{CONV04}), rendering them
impractical for high-dimensional applications when repeated $n$ times\footnote{A natural idea for reducing the computational burden involves exploiting structures (such as sparsity and banded-ness) of the involved matrices. However, in this paper we do not make any assumption regarding the structure of $\bm{X}$.}. We use the Woodburry lemma to reduce the computational cost:
\begin{eqnarray}
\Big( \sum_{j\neq i} \bm{x_j} \bm{x_j}^\top \ldd ( y_j|\bm{x}_j^\top \bl ) + \lambda \diag[{\bm \rdd}(\bl)] \Big)^{-1} =  \bm{J}^{-1}
+ \frac{\bm{J}^{-1} \bm{x_i} \ldd ( y_i|\bm{x}_i^\top \bl ) \bm{x_i}^\top \bm{J}^{-1}}{1- \bm{x_i}^\top \bm{J}^{-1} \bm{x_i} \ldd ( y_i|\bm{x}_i^\top \bl ) },
\end{eqnarray}
where $\bm{J} = ( \sum_{j=1}^n \bm{x_j} \bm{x_j}^\top \ldd ( y_j|\bm{x}_j^\top \bl )+\lambda \diag[{\bm \rdd}(\bl) ])$. 
Following this approach we define $\alo$ as
\begin{eqnarray}\label{eq:aloformulafinal}
\alo &\triangleq& \frac{1}{n} \sum_{i=1}^n \phi \left (y_i,  \bm{x_i}^\top  \boldsymbol{\tilde{\beta}}_{ \slash i}       \right)=  \frac{1}{n} \sum_{i=1}^n \phi \left (y_i,  \bm{x_i}^\top \bl +\left(\frac{\ld_i(\bl)}{\ldd_i(\bl)}\right) \left(  \frac{H_{ii}}{1 - H_{ii}} \right)     \right),
\end{eqnarray}
where
\begin{eqnarray}
 \bm{H} &\triangleq&  \bm{X} \left (\lambda \diag[\bm{\rdd}(\bl)] + \bm{X}^\top \diag[\bm{\ldd}(\bl)] \bm{X} \right)^{-1}  \bm{X}^\top \diag[\bm{\ldd}(\bm{\bl})]. 
 \end{eqnarray}
 Algorithm 1 summarizes how one should obtain an $\alo$ estimate of the $\extra$. We will show that under the high-dimensional settings one Newton step is sufficient for obtaining a good approximation of $\bli$, and the difference $|\alo-\lo|$ is small when either $n$ or both $n,p$ are large. However, before that we resolve the differentiability issue of the approach we discussed above. 

 \begin{algorithm}\label{alg:1}
\caption{Risk estimation with ALO for twice differentiable losses and regularizers}\label{euclid}
\textbf{Input.} $(\bm{x_1}, y_1), (\bm{x_2}, y_2), \ldots, (\bm{x_n}, y_n)$.  \\
\textbf{Output.} $\extra$ estimate.
\begin{enumerate}
\item Calculate  $\bl =  \underset{\bm{\beta} \in \R^p}{\argmin}  \Bigl \{ \sum_{i=1}^n  \ell ( y_i|\bm{x_i}^\top \bm{\beta} ) + \lambda r(\bm{\beta})  \Bigr \}.$

\item Obtain  $\bm{H} =  \bm{X} \left (\lambda \diag[\bm{\rdd}(\bl)] + \bm{X}^\top \diag[\bm{\ldd}(\bl)] \bm{X} \right)^{-1}  \bm{X}^\top \diag[\bm{\ldd}(\bm{\bl})].$ 

\item The estimate of $\extra$ is given by $\frac{1}{n} \sum_{i=1}^n \phi \left (y_i,  \bm{x_i}^\top \bl +\left(\frac{\ld_i(\bl)}{\ldd_i(\bl)}\right) \left(  \frac{H_{ii}}{1 - H_{ii}} \right)     \right)$.

\end{enumerate}
\end{algorithm}
 
 \subsection{Nonsmooth regularizers}\label{ssec:non-diff}
 
The Newton step, used in the derivation of $\alo$, requires the twice differentiability of the loss function and regularizer. However, in many modern applications non-smooth regularizers, such as LASSO, are preferable. In this section, we explain how $\alo$ can be used for non-smooth regularizers. We start with the $\ell_1$-regularizer, and then extend it to the other bridge estimators. A similar approach can be used for other non-smooth regularizers. Consider
 \begin{eqnarray}\label{eq:ori_optalpha}
\bl \triangleq  \underset{\bm{\beta} \in \R^p}{\argmin}  \Bigl \{ \sum_{i=1}^n  \ell ( y_i|\bm{x_i}^\top \bm{\beta} ) + \lambda \|\bm{\beta}\|_1  \Bigr \}. 
\end{eqnarray}
Let  $\bm{\hat g}$ be a  subgradient of $\| \bm{\beta} \|_1$ at $ \bl$, denoted by $\bm{\hat g} \in \partial \| \bl \|_1$. Then, the pair $(\bl, \bm{\hat g})$ must satisfy the zero-subgradient condition
\[
\sum_{i=1}^n \bm{x_i} \ld(y_i| \bm{x_i}^\top {\bl}) + \lambda \bm{\hat g}=0.
\]
As a starting point we use a smooth approximation of  the function $\|\bm{\beta}\|_1$ in our $\alo$ formula. For instance, we can use the following approximation  introduced in \cite{schmidt2007fast}:
\[
r^\alpha(\bm{\beta}) = \sum_{i=1}^p \frac{1}{\alpha} \Big( \log(1+ e^{\alpha \beta_i}) + \log(1+ e^{-\alpha \beta_i}) \Big). 
\] 
Since $\lim_{\alpha \rightarrow \infty} r^{\alpha} (\bm{\beta}) = \|\bm{\beta}\|_1$, we can use
\begin{eqnarray}
 \bl^{\alpha} &\triangleq&  \underset{\bm{\beta} \in \R^p}{\argmin}  \Bigl \{ \sum_{i=1}^n  \ell ( y_i|\bm{x_i}^\top \bm{\beta} ) + \lambda \sum_{i=1}^p r^{\alpha} (\beta_i)  \Bigr \}, \label{eq:smoothlasso}
 \end{eqnarray}
to obtain the following formula for $\alo$: 
\begin{eqnarray}
\alo^{\alpha} &\triangleq& \frac{1}{n} \sum_{i=1}^n \phi \left (y_i,  \bm{x_i}^\top \bl^{\alpha} +\left(\frac{\ld_i(\bl^{\alpha})}{\ldd_i(\bl^{\alpha})}\right) \left(  \frac{H^{\alpha}_{ii}}{1 - H^{\alpha}_{ii}} \right)     \right) \label{eq:alosmooth}
\end{eqnarray}
where $ \bm{H}^{\alpha} \triangleq  \bm{X} \left (\lambda \diag[\bm{\rdd}(\bl^{\alpha})] + \bm{X}^\top \diag[\bm{\ldd}(\bl^{\alpha})] \bm{X} \right)^{-1}  \bm{X}^\top \diag[\bm{\ldd}(\bl^{\alpha})].$ Note that $\| \bl^{\alpha} - \bl \|_2 \rightarrow 0$ as $\alpha \rightarrow \infty$, according to Lemma \ref{thm:firstbdlasso} in Section \ref{ssec:proofthmnonsmooth}. Therefore, we take the $\alpha \rightarrow \infty$ limit in \eqref{eq:alosmooth}, yielding  a simplification of $\alo^{\alpha}$ in this limit. To prove this claim, we denote the active set of $\bl$ with $S$, and we suppose the following: 

\begin{assumption} \label{A1}
$\bl$ is the unique global minimizer of \eqref{eq:bl}.  
\end{assumption}

\begin{assumption} \label{A3} 
$\bl^{\alpha}$ is the unique global minimizer of  \eqref{eq:smoothlasso} for every value of $\alpha$.
\end{assumption}

\begin{assumption} \label{A4}
$\ldd(y|\bm{x}^\top\bm{\beta})$ is a continuous function of $\bm{\beta}$.
\end{assumption}

\begin{assumption} \label{A2}
\label{ass:A.2} 
The strict dual feasibility condition $ \| \bm{\hat g}_{ S^c}  \|_{\infty} < 1$ holds.
\end{assumption}


%
%
%
%

\begin{theorem}\label{thm:lasso_approx}
If Assumptions \ref{A1},\ref{A3},\ref{A4} and \ref{A2}  hold, then 
\begin{equation}\label{eq:LASSO_ALO}
\lim_{\alpha \rightarrow \infty} \alo^{\alpha} = \frac{1}{n} \sum_{i=1}^n \phi \left (y_i,  \bm{x_i}^\top \bl +\left(\frac{\ld_i(\bl)}{\ldd_i(\bl)}\right) \left(  \frac{H_{ii}}{1 - H_{ii}} \right)     \right),
\end{equation}
where $ \bm{H}=  \bm{X}_S \left ( \bm{X}_S^\top \diag[\bm{\ldd} (\bl)]   \bm{X}_S  \right)^{-1} \bm{X}_S^\top \diag[\bm{\ldd} (\bl)].$
\end{theorem}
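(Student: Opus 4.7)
The plan is to pass to the limit $\alpha\to\infty$ inside the formula \eqref{eq:alosmooth} for $\alo^{\alpha}$. By Lemma \ref{thm:firstbdlasso}, $\bl^{\alpha}\to\bl$, and by Assumption \ref{A4} together with smoothness of $\phi$, the quantities $\ld_i(\bl^{\alpha})$, $\ldd_i(\bl^{\alpha})$, and $\phi(y_i,\cdot)$ all pass to the limit by continuity. The only ingredient whose limit requires real work is the hat matrix $\bm{H}^{\alpha}$, and it suffices to show $\bm{H}^{\alpha}\to\bm{X}_S(\bm{X}_S^\top\bm{W}\bm{X}_S)^{-1}\bm{X}_S^\top\bm{W}$, where I abbreviate $\bm{W}:=\diag[\bm{\ldd}(\bl)]$.

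The first step is to analyze $\bm{\rdd}(\bl^{\alpha})$ componentwise. A direct computation gives $(r^{\alpha})'(\beta)=\tanh(\alpha\beta/2)$ and $(r^{\alpha})''(\beta)=(\alpha/2)\,\mathrm{sech}^{2}(\alpha\beta/2)$. For any active index $j\in S$, $\hat\beta_j^{\alpha}\to\hat\beta_j\neq 0$, so $\alpha\hat\beta_j^{\alpha}\to\pm\infty$ and $(r^{\alpha})''(\hat\beta_j^{\alpha})$ decays exponentially. For any inactive index $j\in S^c$, the first-order condition for \eqref{eq:smoothlasso} reads
\[
\tanh\bigl(\alpha\hat\beta_j^{\alpha}/2\bigr) \;=\; -\frac{1}{\lambda}\sum_{i=1}^{n} x_{ij}\,\ld\bigl(y_i\mid\bm{x}_i^\top\bl^{\alpha}\bigr),
\]
and its right-hand side converges to the subgradient $\hat g_j$ coming from the KKT condition of the original LASSO problem \eqref{eq:ori_optalpha}. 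By strict dual feasibility (Assumption \ref{A2}), $|\hat g_j|<1$, so $\alpha\hat\beta_j^{\alpha}/2\to\mathrm{arctanh}(\hat g_j)$ stays bounded, and $(r^{\alpha})''(\hat\beta_j^{\alpha})$ therefore diverges linearly in $\alpha$, with leading order $(\alpha/2)(1-\hat g_j^{\,2})$. In short, $\lambda\diag[\bm{\rdd}(\bl^{\alpha})]$ vanishes on the $S$-block and blows up on the $S^c$-block.

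With this dichotomy in hand, I would write $\bm{M}_{\alpha}:=\lambda\diag[\bm{\rdd}(\bl^{\alpha})]+\bm{X}^\top\diag[\bm{\ldd}(\bl^{\alpha})]\bm{X}$ in block form indexed by $S,S^c$. The $(S,S)$ block converges to $\bm{X}_S^\top\bm{W}\bm{X}_S$, the off-diagonal blocks remain bounded, and the $(S^c,S^c)$ block is dominated by a divergent diagonal piece of order $\alpha$. The Schur-complement identity then delivers $(\bm{M}_{\alpha}^{-1})_{SS}\to(\bm{X}_S^\top\bm{W}\bm{X}_S)^{-1}$ while the three other blocks of $\bm{M}_{\alpha}^{-1}$ vanish at rate $O(1/\alpha)$. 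Expanding $\bm{H}^{\alpha}=\bm{X}\bm{M}_{\alpha}^{-1}\bm{X}^\top\diag[\bm{\ldd}(\bl^{\alpha})]$ into four terms and keeping only the surviving one yields exactly the $\bm{H}$ appearing in the statement, and continuity of $\phi$ then gives \eqref{eq:LASSO_ALO}.

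The main obstacle is making the block-inversion step rigorous, which requires two ingredients: invertibility of $\bm{X}_S^\top\bm{W}\bm{X}_S$ and a norm-level (not merely entrywise) decay of $((\bm{M}_{\alpha})_{S^cS^c})^{-1}$. The first follows by combining uniqueness of $\bl$ (Assumption \ref{A1}) with strict dual feasibility (Assumption \ref{A2}), which together force $\bm{X}_S$ to have full column rank, together with strict convexity of the loss giving positive diagonal $\bm{W}$; the second follows because the linear-in-$\alpha$ lower bound on $(r^{\alpha})''(\hat\beta_j^{\alpha})$ over $S^c$ dominates the bounded symmetric contribution $\bm{X}_{S^c}^\top\diag[\bm{\ldd}(\bl^{\alpha})]\bm{X}_{S^c}$, so $\|((\bm{M}_{\alpha})_{S^cS^c})^{-1}\|\le C/\alpha$ for all large $\alpha$.
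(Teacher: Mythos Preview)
Your proposal is correct and follows essentially the same approach as the paper: establish $\bl^{\alpha}\to\bl$ via Lemma~\ref{thm:firstbdlasso}, use the first-order condition together with strict dual feasibility to show that $\rdd_\alpha(\hat\beta_j^\alpha)\to 0$ on $S$ and $\to\infty$ on $S^c$, and then pass to the limit in $\bm{H}^\alpha$ via block-matrix (Schur-complement) inversion. Your use of the explicit $\tanh/\mathrm{sech}$ formulas and the continuity of $\mathrm{arctanh}$ on $(-1,1)$ is a slightly more direct route than the paper's contradiction argument in Theorem~\ref{thm:boundingcoeffs}, and your quantitative rate $(\alpha/2)(1-\hat g_j^{\,2})$ on $S^c$ is sharper than what the paper states, but the substance of the argument is identical.
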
 
The proof of this theorem is presented in Section \ref{ssec:proofthmnonsmooth}. For the rest of the paper,  the right hand side of \eqref{eq:LASSO_ALO} is the $\alo$ formula  we use as an approximation of $\lo$ for LASSO problems. In the simulation section, we show that the formula  we obtain in Theorem \ref{thm:lasso_approx} offers an accurate estimate of the out-of-sample prediction error. For instance, in the standard LASSO problem, where $\ell (u,v) = (u-v)^2/2$ and $r(\bm{\beta})=\|\bm{\beta} \|_1$, Theorem \ref{thm:lasso_approx} gives the following estimate of the out-of-sample prediction error:
\begin{eqnarray}
 \lim_{\alpha \rightarrow \infty} \alo^{\alpha} =  \frac{1}{n}\sum_{i=1}^n \frac{(y_i - \bm{x}_i^\top \bl)^2}{(1- H_{ii})^2},\label{eq:alo-lasso}
\end{eqnarray}
where $ \bm{H}=  \bm{X}_S \left ( \bm{X}_S^\top  \bm{X}_S  \right)^{-1} \bm{X}_S^\top$. Figure \ref{fig:extra-alo} compares this estimate with the oracle estimate of the out-of-sample prediction error on a LASSO example. More extensive simulations are reported in Section \ref{sec:numsim}.

\begin{figure}
\hspace{2.5cm}
\includegraphics[width=0.6\textwidth]{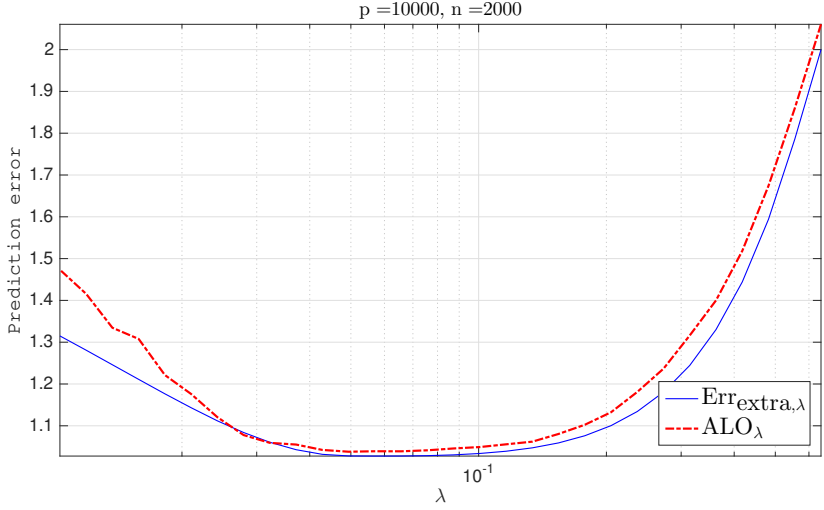} \caption{ Out-of-sample prediction  error  versus $\alo$. Data is $\bm{y} \sim \N(\bm{X \beta}^*,\sigma^2\bm{I})$ where $\sigma^2=1$ and $\bm{X} \in \R^{p \times n}$ with $p=10000$ and $n=2000$. The number of nonzero elements of the true $\bm{\beta}^*$ is set to $k=400$ and their values is set to 1. The rows $\bm{x_i}^\top$ of the predictor matrix are  generated randomly as $\N(0,\bm{\Sigma})$ with correlation structure ${\rm cor}(X_{ij},X_{ij'})=0.3$  for all $i=1,\ldots,n$ and $j,j'=1,\cdots,p$. The covariance matrix $\bm{\Sigma}$ is scaled such the signal variance $\var(\bm{x}^\top \bm{\beta}^*) =1$.  Out-of-sample test data is $y_{\rm new} \sim \N(\bm{x}_{\rm new}^\top \bm{\beta}^*, \sigma^2)$ where $\bm{x}_{\rm new} \sim \N(0,\bm{\Sigma})$. Out-of-sample error is calculated as $\E_{(y_{\rm new}, \bm{x}_{\rm new})} [  ( y_{\rm new}-\bm{x}_{\rm new}^\top \bl )^2 | \bm{y},\bm{X} ]=\sigma^2 +  \| \bm{\Sigma}^{1/2}(\bl-\bm{\beta}^* \|_2^2$ and $\alo$ is calculated using equation \eqref{eq:alo-lasso}.
}\label{fig:extra-alo}
\end{figure}

Note that Assumptions \ref{A1},\ref{A3} and \ref{A4} hold for most of the practical problems. For instance, to study the conditions under which Assumption \ref{A1} holds refer to \cite{tibshirani2013lasso}. Moreover, for $\ell (u,v) = (u-v)^2/2$, Assumption \ref{A1} is a consequence of Assumption \ref{A2} \cite{W09}.  Assumption  \ref{A2} also holds in many cases with probability one with respect to the randomness of the dataset \cite{W09,TT12}. Even if this assumption is violated in a specific problem (note that checking this assumption is straightforward), we can use the following theorem to evaluate the accuracy of the ALO formula in Theorem \ref{thm:lasso_approx}. 

\begin{theorem}\label{thm:lassobounderror}
Let $S$ and $T$ denote the active set of $\bl$, and the set of zero coefficients at which the subgradient vector is equal to $1$  or $-1$. Then, 
\begin{eqnarray*}\label{eq:subgradient}
&&\bm{x}_{i,S}^\top \left( \bm{X}_S^\top \diag[\bm{\ldd} (\bl)]   \bm{X}_S  \right)^{-1} \bm{x}_{i,S} \ldd_i(\bl)< \lim\inf_{\alpha \rightarrow \infty} H^{\alpha}_{ii}  \nonumber \\
&& \lim\sup_{\alpha \rightarrow \infty} H_{ii}^\alpha< \bm{x}_{i,S \cup T}^\top \left( \bm{X}_{S\cup T}^\top \diag[\bm{\ldd} (\bl)]   \bm{X}_{S \cup T}  \right)^{-1} \bm{x}_{i,S\cup T} \ldd_i(\bl)
\end{eqnarray*}
\end{theorem}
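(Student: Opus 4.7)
The plan is to analyze the entries of $D^\alpha := \lambda\,\diag[\bm{\rdd}(\bl^\alpha)]$ separately on the three-way partition $\{1,\ldots,p\} = S \sqcup T \sqcup V$, where $V = (S\cup T)^c$ indexes coordinates of strict dual feasibility; then to reduce $H^\alpha_{ii}$ to a quantity involving only the $A := S \cup T$ block via a Schur complement in the $V$-block; and finally to bound that quantity between the two displayed expressions using Loewner monotonicity of matrix inversion.

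The first ingredient is the asymptotic behavior of $D^\alpha_j$. The explicit forms $r^{\alpha\prime}(\beta) = \tanh(\alpha\beta/2)$ and $\rdd^\alpha(\beta) = \tfrac{\alpha}{2}\,\mathrm{sech}^2(\alpha\beta/2)$, combined with the smooth KKT identity $r^{\alpha\prime}(\bl^\alpha_j) = -\lambda^{-1}\sum_i x_{ij}\ld_i(\bl^\alpha) \to \hat g_j$ supplied by Lemma \ref{thm:firstbdlasso}, yield: for $j \in S$ one has $\bl^\alpha_j \to \bl_j \neq 0$ and so $D^\alpha_j$ decays exponentially to $0$; for $j \in V$ one has $|\hat g_j| < 1$ so $D^\alpha_j \sim \tfrac{\lambda\alpha}{2}(1-\hat g_j^2) \to \infty$; and for $j \in T$ one has $|\hat g_j| = 1$, and a first-order expansion of the KKT residual $\lambda\hat g_j + \sum_i x_{ij}\ld_i(\bl^\alpha)$ around $\bl$ pins $D^\alpha_j$ into a bounded subset of $(0,\infty)$ along any subsequence $\alpha \to \infty$.

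Next, writing $M^\alpha := D^\alpha + \bm{X}^\top \diag[\bm{\ldd}(\bl^\alpha)]\bm{X}$ and partitioning into the $A$- and $V$-blocks, the Schur identity reads
\[
\bm{x}_i^\top (M^\alpha)^{-1}\bm{x}_i = \bm{u}^\top \bigl(M^\alpha_{AA} - M^\alpha_{AV}(M^\alpha_{VV})^{-1}M^\alpha_{VA}\bigr)^{-1} \bm{u} + \bm{x}_{i,V}^\top (M^\alpha_{VV})^{-1} \bm{x}_{i,V},
\]
with $\bm{u} := \bm{x}_{i,A} - M^\alpha_{AV}(M^\alpha_{VV})^{-1}\bm{x}_{i,V}$. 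Since $M^\alpha_{VV} \succeq D^\alpha_V$ and $D^\alpha_V \to \infty$, the inverse $(M^\alpha_{VV})^{-1} \to 0$, so $\lim_{\alpha}[H^\alpha_{ii} - \bm{x}_{i,A}^\top (M^\alpha_{AA})^{-1}\bm{x}_{i,A}\,\ldd_i(\bl^\alpha)] = 0$. Because $D^\alpha_S \to 0$ and the Hessian piece $K^\alpha := \bm{X}^\top\diag[\bm{\ldd}(\bl^\alpha)]\bm{X}$ converges to $K := \bm{X}^\top\diag[\bm{\ldd}(\bl)]\bm{X}$, any subsequential limit of $M^\alpha_{AA}$ has the form $K_{AA} + \tilde D_T$ where $\tilde D_T$ is a diagonal matrix supported on $T$ with entries strictly in $(0,\infty)$.

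Two applications of Loewner monotonicity finish the argument. For the upper bound, $\tilde D_T \succ 0$ gives $(K_{AA}+\tilde D_T)^{-1} \prec K_{AA}^{-1}$ strictly, so $\limsup_\alpha H^\alpha_{ii} < \bm{x}_{i,A}^\top K_{AA}^{-1}\bm{x}_{i,A}\,\ldd_i(\bl)$. For the lower bound, Schur-complementing the $T$-block within $K_{AA}+\tilde D_T$ and using finiteness of $\tilde D_T$ yields $[(K_{AA}+\tilde D_T)^{-1}]_{SS} \succ K_{SS}^{-1}$, whence $\liminf_\alpha H^\alpha_{ii} > \bm{x}_{i,S}^\top K_{SS}^{-1}\bm{x}_{i,S}\,\ldd_i(\bl)$. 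The hard part will be the quantitative claim in the first step: pinning $\liminf_\alpha D^\alpha_j > 0$ and $\limsup_\alpha D^\alpha_j < \infty$ for $j \in T$ simultaneously requires showing that $\|\bl^\alpha - \bl\|$ and $\alpha\,\epsilon_\alpha$, with $\epsilon_\alpha = 1 - |r^{\alpha\prime}(\bl^\alpha_j)|$, have matching nontrivial orders, which goes beyond the qualitative convergence provided by Lemma \ref{thm:firstbdlasso} and is the technically delicate core of the proof.
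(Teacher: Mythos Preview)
Your three-way partition, the Schur-complement elimination of the $V$-block via $D^\alpha_V \to \infty$, and the use of $D^\alpha_S \to 0$ all match the paper's route. The divergence is in how you handle the $T$-block, and there you have chosen a harder path than necessary.

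You aim to show that $D^\alpha_j$ for $j \in T$ is asymptotically trapped in a compact subset of $(0,\infty)$, then pass to subsequential limits $\tilde D_T$ and invoke strict Loewner monotonicity. As your own last paragraph concedes, this requires a matching-orders estimate between $\|\bl^\alpha - \bl\|$ and $\alpha\bigl(1 - |r^{\alpha\prime}(\bl^\alpha_j)|\bigr)$ that Lemma~\ref{thm:firstbdlasso} does not supply; the paper in fact explicitly states that it ``cannot specify the limiting behavior'' of $\rdd^\alpha(\hat\beta^\alpha_j)$ on $T$. The paper's fix is much simpler and sidesteps the issue entirely. For \emph{every} $\alpha$ one has $0 \leq D^\alpha_j < \infty$ on $T$, and the quadratic form $\bm{v}^\top(M + \diag[\bm{\delta}])^{-1}\bm{v}$ is monotone decreasing in each $\delta_j \geq 0$, with limit as $\delta_j \to \infty$ equal to the quadratic form on the reduced matrix with row/column $j$ deleted (this is their Lemma~\ref{lem:matrix_inversion}). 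Applying this one $T$-coordinate at a time, replace $\diag[\bm{\rdd}^\alpha_T(\bl^\alpha)]$ by $0$ to get an upper bound for $H^\alpha_{ii}$, and send each $T$-entry to $+\infty$ (which deletes those rows and columns) to get a lower bound. Both sandwich expressions are valid for every $\alpha$, regardless of where $D^\alpha_T$ actually sits, so one simply lets $\alpha \to \infty$ in each bound and uses the already-established behavior on $S$ and $V$. The ``technically delicate core'' you identify is thus not needed at all.

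In short: keep your Schur-complement reduction and your treatment of $S$ and $V$, but replace the subsequential-limit-plus-compactness argument on $T$ with the elementary monotonicity sandwich obtained by setting the $T$-perturbation to $0$ and to $+\infty$ respectively.
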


This Theorem is proved in \ref{ssec:proofthmapprox}.  A simple implication of this theorem is that
\begin{equation}\label{alo:lasso:low}
\lim\sup_{\alpha \rightarrow \infty} \alo^{\alpha} \leq \frac{1}{n} \sum_{i=1}^n \phi \left (y_i,  \bm{x_i}^\top \bl +\left(\frac{\ld_i(\bl)}{\ldd_i(\bl)}\right) \left(  \frac{H^h_{ii}}{1 - H^h_{ii}} \right)     \right),
\end{equation}
and
\begin{equation}\label{alo:lasso:high}
\lim\inf_{\alpha \rightarrow \infty} \alo^{\alpha} \geq \frac{1}{n} \sum_{i=1}^n \phi \left (y_i,  \bm{x_i}^\top \bl +\left(\frac{\ld_i(\bl)}{\ldd_i(\bl)}\right) \left(  \frac{H^l_{ii}}{1 - H^l_{ii}} \right)     \right),
\end{equation}
where
\begin{eqnarray}
 \bm{H}^l&=&  \bm{X}_S \left ( \bm{X}_S^\top \diag[\bm{\ldd} (\bl)]   \bm{X}_S  \right)^{-1} \bm{X}_S^\top \diag[\bm{\ldd} (\bl)], \nonumber \\
  \bm{H}^h&=&  \bm{X}_{S \cup T} \left ( \bm{X}_{S \cup T} ^\top \diag[\bm{\ldd} (\bl)]   \bm{X}_{S \cup T}  \right)^{-1} \bm{X}_{S \cup T}^\top \diag[\bm{\ldd} (\bl)].
\end{eqnarray}
By comparing \eqref{alo:lasso:low} and \eqref{alo:lasso:high} we can evaluate the error in our simple formula of the risk, presented in Theorem \ref{thm:lasso_approx}. The approach we proposed above can be extended to other non-differentiable regularizers too. Below we consider two other popular classes of estimators: (i) bridge and (ii) elastic net, and show how we can derive ALO formulas for each estimator.

\noindent{\em Bridge estimators:} Consider the class of bridge estimators
 \begin{eqnarray}\label{eq:ori_optalpha}
\bl \triangleq  \underset{\bm{\beta} \in \R^p}{\argmin}  \Bigl \{ \sum_{i=1}^n  \ell ( y_i|\bm{x_i}^\top \bm{\beta} ) + \lambda \|\bm{\beta}\|_q^q  \Bigr \}, 
\end{eqnarray}
where $q$ is a number between $(1,2)$. Note that these regularizers are only one time differentiable at zero. Hence, the Newton method introduced in Section \ref{ssec:newtonmethod} is not directly applicable. One can argue intuitively that since the regularizer is differentiable at zero, none of the regression coefficients will be zero. Hence, the regularizer is locally twice differentiable and formula \eqref{eq:aloformulafinal} works well. While this argument is often correct, we can again use the idea introduced above for LASSO to obtain the following $\alo$ formula that can be used even when an estimate of $0$ is observed:
\begin{equation}\label{eq:aloformulabridge}
 \frac{1}{n} \sum_{i=1}^n \phi \left (y_i,  \bm{x_i}^\top \bl +\left(\frac{\ld_i(\bl)}{\ldd_i(\bl)}\right) \left(  \frac{H_{ii}}{1 - H_{ii}} \right)     \right) ,
\end{equation}
where if we define $S \triangleq \{i: \beta_i \neq 0\}$ and for $u \neq 0$, $\rdd^q (u) \triangleq q(q-1) |u|^{q-2}$, then
\begin{eqnarray}\label{eq:bridgeHalo}
 \bm{H}=  \bm{X}_S \left ( \bm{X}_S^\top \diag[\bm{\ldd}(\bl)]\bm{X}_S + \lambda {\rm diag}[\bm{ \rdd}^q_{S}(\bl) ]      \right)^{-1} \bm{X}_S^\top \diag[\bm{\ldd} (\bl)].
\end{eqnarray}
This formula is derived in Section \ref{ssec:bridgederivation}.

\noindent {\em Elastic-net:} Finally, we consider the  following elastic-net estimator
 \begin{eqnarray}\label{eq:ori_optalpha}
\bm{\hat{\beta}} \triangleq  \underset{\bm{\beta} \in \R^p}{\argmin}  \Bigl \{ \sum_{i=1}^n  \ell ( y_i|\bm{x_i}^\top \bm{\beta} ) +\lambda_1 \|\bm{\beta}\|_2^2+ \lambda_2 \|\bm{\beta}\|_1  \Bigr \}. 
\end{eqnarray}
Again by smoothing the $\ell_1$-regularizer (similar to what we did for LASSO) we obtain the following ${\rm ALO}$ formula for the out-of-sample predictor error:
\[
\frac{1}{n} \sum_{i=1}^n \phi \left (y_i,  \bm{x_i}^\top \bm{\hat{\beta}} +\left(\frac{\ld_i(\bm{\hat{\beta}})}{\ldd_i(\bm{\hat{\beta}})}\right) \left(  \frac{H_{ii}}{1 - H_{ii}} \right)     \right),
\]
where $S= \{ i : {\hat{\beta}}_i \neq 0\}$, and  
\begin{equation}
 \bm{H}=  \bm{X}_S \left ( \bm{X}_S^\top \diag[\bm{\ldd} (\bm{\hat{\beta}})]   \bm{X}_S + 2\lambda_1 \bm{I } \right)^{-1} \bm{X}_S^\top \diag[\bm{\ldd} (\bm{\hat{\beta}})]. \label{eq:elnet}
\end{equation}
We do not derive this formula, since it follows exactly the same lines as those of LASSO and bridge. Algorithm 2 summarizes all the calculations required for the calculation of $\alo$ for elastic-net.

\begin{algorithm}\label{alg:2}
\caption{Risk estimation with ALO for elastic-net regularizer}
\textbf{Input.} $(\bm{x_1}, y_1), (\bm{x_2}, y_2), \ldots, (\bm{x_n}, y_n)$.  \\
\textbf{Output.} $\extra$ estimate.
\begin{enumerate}
\item Calculate  $\bl =  \underset{\bm{\beta} \in \R^p}{\argmin}  \Bigl \{ \sum_{i=1}^n  \ell ( y_i|\bm{x_i}^\top \bm{\beta} ) + \lambda_1 \|\bm{\beta}\|_2^2+ \lambda_2 \|\bm{\beta}\|_1 \Bigr \}.$

\item Calculate $S= \{ i : {\hat{\beta}}_i \neq 0\}$.

\item Obtain  $\bm{H}= \bm{X}_S \left ( \bm{X}_S^\top \diag[\bm{\ldd} (\bm{\hat{\beta}})]   \bm{X}_S + 2\lambda_1 \bm{I } \right)^{-1} \bm{X}_S^\top \diag[\bm{\ldd} (\bm{\hat{\beta}})]$, where $\bm{X}_S$ only includes the columns of $\bm{X}$ that are in $S$.  

\item The estimate of $\extra$ is given by $\frac{1}{n} \sum_{i=1}^n \phi \left (y_i,  \bm{x_i}^\top \bl +\left(\frac{\ld_i(\bl)}{\ldd_i(\bl)}\right) \left(  \frac{H_{ii}}{1 - H_{ii}} \right)     \right)$.

\end{enumerate}
\end{algorithm}
 
\section{Computational complexity and memory requirements of ALO}\label{sec:computations}
 Counting the number of floating point operations algorithms require is a standard approach for comparing their computational complexities. In this section, we calculate and compare the number of operations required by $\alo$ and $\lo$. We first start with Algorithm 1 and then discuss Algorithm 2.

\subsubsection*{Algorithm 1} Before we start the calculations, we should warn the reader that in many cases the specific structure of the loss and/or the regularizer enables more efficient implementation of the formulas. However, here we consider the worst case scenario. Furthermore, the calculations below are concerned with the implementation of ALO and LO on a single computer, and we have not explored their parallel or distributed implementations.  

The first step of Algorithm 1 requires solving an optimization problem. Several different methods exist for solving this optimization problem. Here, we discuss the interior point method and the accelerated gradient descent algorithm. Suppose that our goal is to reach accuracy $\epsilon$. Then, interior point method requires $O(\log(1/\epsilon))$ iterations to reach this accuracy, while accelerated gradient descent requires $O(\frac{1}{\sqrt{\epsilon}})$ iterations \cite{nesterov2013introductory}. Furthermore, each iteration of the accelerated gradient descent requires $O(np)$ operations, while each iteration of the interior point method requires $O(p^3)$ operations. 

Regarding the memory usage of these two algorithms, note that in the accelerated gradient descent algorithm the memory is mainly used for storing matrix $\bm{X}$. Hence, the amount of memory that is required by this algorithm is $O(np)$. On the other hand, interior point method uses $O(p^3)$ of memory. 

The second step of Algorithm 1 is to calculate the matrix $\bm{H}$. This requires inverting the matrix $\left (\lambda \diag[\bm{\rdd}(\bl)] + \bm{X}^\top \diag[\bm{\ldd}(\bl)] \bm{X} \right)^{-1}$. In general, this inversion requires $O(p^3)$ (e.g. by using Cholesky factorization). However, if $n$ is much smaller than $p$, then one can use a better trick for performing the matrix inversion; suppose that both $\ell$ and $r$ are strongly convex at $\bl$ and define $\bm{\Gamma} \triangleq (\diag[\bm{\ldd}(\bl)])^{\frac{1}{2}}$, and $\bm{\Lambda} \triangleq \lambda {\rm diag}[\bm{ \rdd}(\bl)].$ Then, from the matrix inversion lemma we have
\begin{equation}\label{eq:inversiontrick1}
\bm{X}(\bm{X}^\top \bm{\Gamma}^2 \bm{X} + \bm{\Lambda})^{-1} \bm{X}^\top = \bm{X} \bm{\Lambda}^{-1}\bm{X}^\top -  \bm{X} \bm{\Lambda}^{-1} \bm{X}^\top \bm{\Gamma} (\bm{I} +  \bm{\Gamma} \bm{X}\bm{\Lambda}^{-1}  \bm{X}^\top \bm{\Gamma}  )^{-1} \bm{\Gamma} \bm{X}\bm{\Lambda}^{-1}  \bm{X}^\top. 
\end{equation}
The inversion $(\bm{I} +  \bm{\Gamma} \bm{X}\bm{\Lambda}^{-1}  \bm{X}^\top \bm{\Gamma}  )^{-1}$ requires $O(n^3)$ operations and $O(np)$ of memory (the main memory usage is for storing $\bm{X}$). Also, the other matrix-matrix multiplications require $O(n^2 p+n^3)$ operations. Hence, overall if we use the matrix inversion lemma, then the calculation of $\bm{H}$ requires $O(n^3+ n^2 p)$ operations. In summary,  the calculation of $\bm{H}$ requires $O(\min (p^3+n p^2, n^3+ n^2p))$. Also, the amount of memory that is required by the algorithm is $O(np)$. 
 The last step of $\alo$, i.e. Step 3 in Algorithm 1, requires only $O(np)$ operations. Hence, the calculations of $\alo$  in Algorithm 1 requires 
 \begin{enumerate}
 \item Through interior point method: $O(\min (p^3 \log (1/\epsilon)+p^3+n p^2, p^3 \log (1/\epsilon) + n^3+ n^2p))$
 \item Through accelerated gradient descent: $O(\min (np \frac{1}{\sqrt{\epsilon}} + p^3+n p^2, np \frac{1}{\sqrt{\epsilon}} + n^3+ n^2p))$
 \end{enumerate}
 Similarly, the calculation of the $\lo$ requires solving $n$ optimization problem of the form \eqref{eq:optimization_lo}. Hence, the number of floating point operations that are required for $\lo$ are:
  \begin{enumerate}
 \item Through interior point method: $O(np^3 \log (1/\epsilon))$.
 \item Through accelerated gradient descent: $O(n^2 p \frac{1}{\sqrt{\epsilon}})$.
 \end{enumerate}
  
  \subsubsection*{Algorithm 2}
  Note that in Algorithm 2, we have used the specific form of the regularizer and simplified the form of $\bm{H}$. Hence, this allows for faster calculation of $\bm{H}$ and equivalently faster calculation of the $\alo$ estimate. Again the first step of calculating $\alo$ is to solve the optimization problem. Solving this optimization problem by the interior point method or accelerated proximal gradient descent requires $O(p^3 \log (1/\epsilon))$ and $O(np\frac{1}{ \sqrt{\epsilon}})$ floating point operations respectively. 
  The next step is to calculate $\bm{H}$. If $\bl$ is $s$-sparse, i.e., has only $s$ non-zero coefficients, then the calculation of $\bm{H}$ requires $O(s^3 +ns^2)$ floating point operations. Also, the amount of memory required for this inversion is $O(s^2)$. Finally, the last step requires $O(np)$ operations. Hence, calculating an $\alo$ estimate of the risk requires:
  \begin{enumerate}
 \item Through interior point method: $O(p^3 \log (1/\epsilon) +s^3 +ns^2 + np)$.
 \item Through accelerated proximal gradient descent: $O( np \frac{1}{\sqrt{\epsilon}} +s^3 +ns^2 + np)$
 \end{enumerate}

The calculations of $\lo$ in the worst case is similar to what we had in the previous section:\footnote{It is known that after a finite number of iterations the estimates of proximal gradient descent becomes sparse, and hence the iterations require less operations. Hence, in practice the sparsity can reduce the computational complexity of calculating $\lo$ even though this gain is not captured in the worst case analysis of this section. }
  \begin{enumerate}
 \item Through interior point method: $O(np^3 \log (1/\epsilon))$.
 \item Through accelerated proximal gradient descent: $O(n^2 p \frac{1}{\sqrt{\epsilon}})$.
 \end{enumerate}
 
In this section, we used the number of floating point operations to compare the computational complexity of $\alo$ and $\lo$ . However, since this approach is based on the worst case scenarios and is not capable of capturing the constants, it is less accurate than comparing the timing of algorithms through simulations. Hence, Section \ref{sec:numsim}  compares the performance of $\alo$ and $\lo$ through simulations.   
 
 \subsection*{Memory usage}

First, we discuss Algorithm 1. We only consider the accelerated gradient descent algorithm. As discussed above, the amount of memory that is required for Step 1 of ALO is  $O(np)$ (the main memory usage is for storing matrix $X$). For the second step,   
 direct inversion of $\left (\lambda \diag[\bm{\rdd}(\bl)] + \bm{X}^\top \diag[\bm{\ldd}(\bl)] \bm{X} \right)^{-1}$ requires $O(p^2)$ of memory. However,  by using the formula derived in \eqref{eq:inversiontrick1} the memory usage reduces to $O(n^2)$ (for inverting $(\bm{I} +  \bm{\Gamma} \bm{X}\bm{\Lambda}^{-1}  \bm{X}^\top \bm{\Gamma}  )^{-1}$). Hence, the total amount of memory required for the second step of Algorithm 1 is $O(\min(np + n^2, np+p^2))$: $np$ for storing $\bm{X}$ and $n^2$ or $p^2$ for calculating  $\left (\lambda \diag[\bm{\rdd}(\bl)] + \bm{X}^\top \diag[\bm{\ldd}(\bl)] \bm{X} \right)^{-1}$. The last step of ALO requires negligible amount of memory.  Hence, the total amount of memory $\alo$ requires especially when $n < p$, is $O(np+n^2)$, which is the same as $O(np)$. Note that the amount of memory required by LO is also $O(np)$, since it requires to store $\bm{X}$. 
 
 The situation is even more favorable for ALO in Algorithm 2; all the memory requirements are the same as before, except that the amount of memory that is required for the calculation and storing of $\left ( \bm{X}_S^\top \diag[\bm{\ldd} (\bm{\hat{\beta}})]   \bm{X}_S + 2\lambda_1 \bm{I } \right)^{-1}$ is $O(s^2)$.


\section{Theoretical Results in High Dimensions} \label{sec:res}

\subsection{Assumptions} \label{sec:ass}
In this section, we introduce  assumptions later used in our theoretical results. The assumptions and  theoretical results that follow are presented for finite sample sizes. However, the final conclusions of this paper are focused on the high-dimensional asymptotic setting in which $n, p \rightarrow \infty$ and $n/p \rightarrow \delta_o$, where $\delta_o$ is a finite number bounded away from zero. Hence, if we write a constant as $c(n)$, it may be the case that the constant depends on both $n$ and $p$, but since $p \sim n/\delta_o$, we drop the dependance on $p$. We use this simplification for the sake of brevity and clarity of presentation. Since our major theorem involves finite sample sizes it is straightforward to go beyond this high-dimensional asymptotic setting and obtain more general results useful for other asymptotic settings.

\begin{assumption} \label{ass:0}
 The rows of $\bm{X} \in \R^{n \times p}$ are independent zero mean Gaussian vectors with covariance $\bm{\Sigma}$. Let $\rho_{\max}$  denote the largest eigenvalue of $\bm{\Sigma}$.
\end{assumption}

As we mentioned earlier, in our asymptotic setting, we assume that $n/p \rightarrow \delta_o$ for some $\delta_o$ bounded away from zero. Furthermore, we assume that the rows of $\bm{X}$ are scaled in a way that $\rho_{max} = \Theta(\frac{1}{n})$  to ensure that  $\bm{x}_i^\top \bm{\beta} = O_p(1)$ and $\bm{\beta}^\top \bm{\Sigma} \bm{\beta}= O(1)$, assuming that each $\beta_i$ is $O(1)$. Under this scaling the signal-to-noise ratio in each observation remains fixed as $n,p$ grow.\footnote{Furthermore, under this scaling of the optimal value of $\lambda$ will be $O_p(1)$ \cite{mousavi2013asymptotic}. }  For more information on this asymptotic setting and scaling, the reader may refer to \cite{karoui2013asymptotic, donoho2016high, DMM11, bayati2012lasso, weng2016overcoming,DW18}. 

\begin{assumption} \label{ass:1}
There exist finite constants $c_1(n)$ and $c_2(n)$, and $q_n \rightarrow 0$ all  functions of $n$,  such that with probability at least $1-q_n$ for all $i=1,\ldots,n$
\begin{eqnarray}
c_1(n) &>&  \| \bm{\ld}(\bl)\|_{\infty}, \label{eq:c1def}
 \\
 c_2(n) &>&  \sup_{t \in [0,1]} \frac{  \|  \bm{\ldd}_{/ i}( (1-t) \bli + t \bl ) - \bm{\ldd}_{/ i}(\bl  ) \|_2}{ \|   \bli - \bl   \|_2}. \label{eq:c2def}
 \\
 c_2(n) &>&   \sup_{t \in [0,1]} \frac{  \|  \bm{\rdd}( (1-t) \bli + t \bl ) - \bm{\rdd}(\bl  ) \|_2}{ \|   \bli - \bl   \|_2}.\label{eq:c3def}
 \end{eqnarray}
\end{assumption}

In what follows, for various regularizers and regression methods, by explicitly quantifying constants $c_1(n)$ and $c_2(n)$, we discuss conditions \eqref{eq:c1def}, \eqref{eq:c2def}, and \eqref{eq:c3def} in Assumption \ref{ass:1}. We consider the ridge regularizer in Lemma \ref{lem:ridgereg} and the smoothed-$\ell_1$ (and elastic-net) regularizer in Lemma \ref{lem:smoothLASSO}. Concerning various regression methods, we consider logistic (Lemma \ref{lem:logistic1}), robust regression (Lemma \ref{lem:psudoHuber}), least-squares  (Lemmas \ref{lem:ridge1} and \ref{cor:ridge}), and Poisson (Lemmas \ref{lem:poisson1} and \ref{lem:poisson2}) regression. The results below show that under mild assumptions, for the cases mentioned above, $c_1(n)$ and $c_2(n)$ are polynomial functions of $\log n$, a result that plays a key role in our main theoretical result presented in Section \ref{sec:main}.

\begin{lemma}\label{lem:ridgereg}
For the ridge regularizer $r(z) = z^2$, we have
\begin{eqnarray*}
 \sup_{t \in [0,1]} \frac{  \|  \bm{\rdd}( (1-t) \bli + t \bl ) - \bm{\rdd}(\bl  ) \|_2}{ \|   \bli - \bl   \|_2}=0.
\end{eqnarray*}
\end{lemma}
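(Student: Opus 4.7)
The claim is an essentially trivial consequence of the fact that $r(z)=z^2$ has a constant second derivative. The plan is to simply compute $\rdd$ in closed form, observe that it does not depend on its argument, and conclude that the numerator of the supremum is identically zero regardless of $t$, $\bli$, or $\bl$.

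Concretely, I would first note that $r(z)=z^2$ gives $\rd(z)=2z$ and $\rdd(z)=2$ for every $z\in\mathbb{R}$. Because the paper works with a separable regularizer $r(\bm{\beta})=\sum_{i=1}^p r(\beta_i)$ (as mentioned in the footnote in Section~\ref{ssec:newtonmethod}), the vector $\bm{\rdd}(\bm{\beta})$ is defined component-wise as $[\bm{\rdd}(\bm{\beta})]_j = \rdd(\beta_j) = 2$ for every $j=1,\dots,p$. In particular, $\bm{\rdd}(\bm{\beta}) = 2\cdot\mathbf{1}_p$ is the same constant vector regardless of what $\bm{\beta}$ is.

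Consequently, for any $t\in[0,1]$,
\begin{equation*}
\bm{\rdd}\bigl((1-t)\bli + t\bl\bigr) - \bm{\rdd}(\bl) \;=\; 2\cdot\mathbf{1}_p - 2\cdot\mathbf{1}_p \;=\; \bm{0},
\end{equation*}
so the numerator in the displayed ratio is zero for every $t\in[0,1]$. Assuming $\bli\neq\bl$ (so the denominator is nonzero and the ratio is well-defined), the supremum over $t\in[0,1]$ is $0$. If $\bli=\bl$ there is nothing to prove, since the expression inside the supremum is vacuously zero under the natural convention that $0/0$ is interpreted as the limit of the ratio along the ridge-regularized path, which is again $0$.

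\textbf{Where is the difficulty?} There is essentially none: the entire content of the lemma is the observation that the ridge regularizer has a globally constant Hessian, so the Lipschitz constant $c_2(n)$ in \eqref{eq:c3def} can be taken to be $0$. The lemma is included only to record that Assumption~\ref{ass:1}'s condition \eqref{eq:c3def} is satisfied trivially for ridge, which is important later when the authors need to certify that $c_2(n)$ grows only as $\poly\log n$ — for ridge it does not grow at all.
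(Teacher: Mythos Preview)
Your proof is correct and matches the paper's treatment exactly: the paper writes ``Due to simplicity we skip the proof,'' and your argument---that $\rdd(z)=2$ is constant, so the numerator vanishes identically---is precisely the trivial computation the authors have in mind.
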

Due to simplicity we skip the proof. As mentioned in Section \ref{ssec:non-diff}, a standard smooth approximation of the $\ell_1$-norm is given by
\begin{eqnarray*}
r^\alpha(z) = \sum_{i=1}^p \frac{1}{\alpha} \Big( \log(1+ e^{\alpha z}) + \log(1+ e^{-\alpha z}) \Big).
\end{eqnarray*} 
\begin{lemma}\label{lem:smoothLASSO}
For the smoothed-$\ell_1$ regularizer we have
\begin{eqnarray*}
 \sup_{t \in [0,1]} \frac{  \|  \bm{\rdd}( (1-t) \bli + t \bl ) - \bm{\rdd}(\bl  ) \|_2}{ \|   \bli - \bl   \|_2} \leq 4 \alpha^2.
\end{eqnarray*}
\end{lemma}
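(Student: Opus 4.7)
The regularizer is separable, $r^\alpha(\bm{\beta}) = \sum_{j=1}^p r^\alpha(\beta_j)$, so $\bm{\rdd}$ is the coordinate-wise application of the scalar second derivative $\rdd^\alpha$. Consequently
\[
\|\bm{\rdd}(\bm{u}) - \bm{\rdd}(\bm{v})\|_2^2 \;=\; \sum_{j=1}^p \bigl(\rdd^\alpha(u_j) - \rdd^\alpha(v_j)\bigr)^2,
\]
so a scalar Lipschitz bound $|\rdd^\alpha(a)-\rdd^\alpha(b)|\le L\,|a-b|$ immediately yields $\|\bm{\rdd}(\bm{u})-\bm{\rdd}(\bm{v})\|_2\le L\,\|\bm{u}-\bm{v}\|_2$. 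Hence the problem reduces to showing that $\rdd^\alpha$ is $L$-Lipschitz on $\R$ with $L\le 4\alpha^2$.

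The plan is to compute $\rdd^\alpha$ and $\rddd^\alpha$ in closed form and then bound $\rddd^\alpha$ uniformly. Writing $\sigma(u)=(1+e^{-u})^{-1}$, a short differentiation gives $\rd^\alpha(z) = 2\sigma(\alpha z) - 1$, hence
\[
\rdd^\alpha(z) \;=\; 2\alpha\,\sigma(\alpha z)\bigl(1-\sigma(\alpha z)\bigr),
\qquad
\rddd^\alpha(z) \;=\; 2\alpha^2\,\sigma(\alpha z)\bigl(1-\sigma(\alpha z)\bigr)\bigl(1-2\sigma(\alpha z)\bigr).
\]
Using the two elementary bounds $\sigma(u)(1-\sigma(u))\le 1/4$ and $|1-2\sigma(u)|\le 1$ valid for every $u\in\R$, one obtains $|\rddd^\alpha(z)|\le \alpha^2/2$ uniformly in $z$. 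This is well within the claimed $4\alpha^2$ and provides ample slack.

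The final step is the mean value theorem: for any $a,b\in\R$, $|\rdd^\alpha(a) - \rdd^\alpha(b)| \le \sup_z|\rddd^\alpha(z)|\cdot|a-b| \le 4\alpha^2 |a-b|$. Combining this coordinate-wise bound with the separability observation above gives $\|\bm{\rdd}(\bm{u}) - \bm{\rdd}(\bm{v})\|_2 \le 4\alpha^2\,\|\bm{u}-\bm{v}\|_2$. Specializing to $\bm{u}=(1-t)\bli + t\bl$ and $\bm{v}=\bl$ yields $\bm{u}-\bm{v} = (1-t)(\bli-\bl)$, so the ratio in the lemma is bounded by $(1-t)\cdot 4\alpha^2 \le 4\alpha^2$, uniformly in $t\in[0,1]$.

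There is no real obstacle: the argument is just a direct calculus bound made possible by the smoothness of the approximation $r^\alpha$. The only thing to be careful about is the separability reduction at the start, which relies on the footnote after equation~(5) stating that $r$ decomposes as $\sum_i r(\beta_i)$; this is precisely what lets a scalar Lipschitz constant transfer to the same constant for the vector map $\bm{\rdd}$ in the Euclidean norm.
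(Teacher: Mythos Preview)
Your proof is correct and follows essentially the same route as the paper: bound $|\rddd^\alpha(z)|$ uniformly, apply the mean value theorem coordinate-wise, and use separability to pass from the scalar Lipschitz constant to the vector $\ell_2$ bound. The only cosmetic difference is that the paper writes $\rddd^\alpha(z)=2\alpha^2(e^{-\alpha z}-e^{\alpha z})/(e^{\alpha z}+e^{-\alpha z}+2)^2$ and bounds it by $4\alpha^2 e^{-\alpha|z|}$, whereas your sigmoid-based bound $|\rddd^\alpha(z)|\le \alpha^2/2$ is actually sharper (though it does not capture the exponential decay, which is not needed here).
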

We present the proof of this result  in Section \ref{ssec:lem:smoothLASSO1}. Note that as a consequence of Lemma \ref{lem:smoothLASSO},  for the smoothed elastic-net regularizer, defined as $r(z) = \gamma z^2 + (1-\gamma) r^\alpha(z)$ for $\gamma \in [0,1]$, we have
\begin{eqnarray*}
 \sup_{t \in [0,1]} \frac{  \|  \bm{\rdd}( (1-t) \bli + t \bl ) - \bm{\rdd}(\bl  ) \|_2}{ \|   \bli - \bl   \|_2} \leq 4 (1-\gamma) \alpha^2.
\end{eqnarray*} 
\begin{lemma} \label{lem:logistic1}
In the generalized linear model family, for the negative logistic regression log-likelihood $\ell ( y|\bm{x}^\top \bm{\beta} )=-y\bm{x}^\top \bm{ \beta} +\log (1 + e^{\bm{x}^\top \bm{ \beta}})$, where  $y \in \{0,1\}$, we have
\begin{eqnarray*}
 \sup_{t \in [0,1]} \frac{  \|  \bm{\ldd}_{/ i}( (1-t) \bli + t \bl ) - \bm{\ldd}_{/ i}(\bl  ) \|_2}{ \|   \bli - \bl   \|_2} &\leq& \sqrt{\sigma_{\max} (\bm{X}^\top \bm{X})}, 
 \\
  \| \ld(\bm{\beta}) \|_{\infty}  &\leq& 1.
\end{eqnarray*}
\end{lemma}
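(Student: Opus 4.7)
The plan is to compute the first and second derivatives of the logistic log-likelihood explicitly as scalar functions of $\bm{x}^\top\bm{\beta}$, and then establish the two bounds by elementary arguments: a triangle inequality for the $\ell_\infty$ bound on $\bm{\ld}$, and a mean-value-theorem plus spectral-norm estimate for the Lipschitz-type bound on $\bm{\ldd}_{/i}$. No deep machinery is required; everything follows from the boundedness of the sigmoid $\sigma(z) = e^z/(1+e^z)$ and of its first and second derivatives.

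For the $\ell_\infty$ bound, I would write $\ld_i(\bm{\beta}) = -y_i + \sigma(\bm{x}_i^\top \bm{\beta})$. Since $y_i \in \{0,1\}$ and $\sigma(\cdot) \in (0,1)$, the triangle inequality immediately gives $|\ld_i(\bm{\beta})| \leq |y_i| + |\sigma(\bm{x}_i^\top\bm{\beta})| \leq 2$ uniformly in $\bm{\beta}$ and $i$, which is exactly the stated bound.

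For the Lipschitz-type bound, differentiate once more to obtain $\ldd_j(\bm{\beta}) = \sigma(\bm{x}_j^\top\bm{\beta})\bigl(1 - \sigma(\bm{x}_j^\top\bm{\beta})\bigr)$, and set $f(z) := \sigma(z)(1-\sigma(z))$. A routine calculation gives $f'(z) = \sigma(z)(1-\sigma(z))(1-2\sigma(z))$, hence $\sup_z |f'(z)| \leq 1$. The mean value theorem applied coordinatewise, together with $(1-t)\bli + t\bl - \bl = (1-t)(\bli - \bl)$, yields
\[
|\ldd_j((1-t)\bli + t\bl) - \ldd_j(\bl)| \leq (1-t)\,|\bm{x}_j^\top(\bli - \bl)| \quad \text{for every } j \neq i.
\]
Squaring, summing over $j \neq i$, and loosely extending the sum to all $j$ gives
\[
\|\bm{\ldd}_{/i}((1-t)\bli + t\bl) - \bm{\ldd}_{/i}(\bl)\|_2^2 \leq \|\bm{X}(\bli - \bl)\|_2^2 \leq \sigma_{\max}(\bm{X}^\top \bm{X})\,\|\bli - \bl\|_2^2,
\]
by the very definition of the largest eigenvalue of $\bm{X}^\top \bm{X}$. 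Dividing by $\|\bli - \bl\|_2^2$, taking a square root, and taking the supremum over $t \in [0,1]$ recovers the claimed inequality (the $\sqrt{\,\cdot\,}$ being absorbed into the stated right-hand side under the asymptotic scaling of Assumption \ref{ass:0}, where $\sigma_{\max}(\bm{X}^\top \bm{X}) = \Theta(1)$). There is no genuine obstacle in this argument; the only care needed is to apply the MVT coordinatewise before passing to the vector identity $\sum_j (\bm{x}_j^\top \bm{v})^2 = \|\bm{X}\bm{v}\|_2^2$ and invoking the spectral-norm estimate.
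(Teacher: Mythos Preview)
Your approach is essentially identical to the paper's: compute the explicit derivatives of the logistic loss, bound $|\ld_i|\le 2$ by the triangle inequality, and bound the $\ldd$-difference coordinatewise via the mean value theorem using a uniform bound on the third derivative, then pass to $\|\bm{X}\bm{v}\|_2^2\le \sigma_{\max}(\bm{X}^\top\bm{X})\|\bm{v}\|_2^2$. Your observation that the argument actually delivers $\sqrt{\sigma_{\max}(\bm{X}^\top\bm{X})}$ rather than $\sigma_{\max}(\bm{X}^\top\bm{X})$ is correct and is exactly what the paper's own proof concludes as well; the lemma as stated simply records the weaker bound, which suffices since $\sigma_{\max}(\bm{X}^\top\bm{X})=O(1)$ under the scaling of Assumption~\ref{ass:0}.
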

We present the proof of this result in Section \ref{ssec:prooflogistic1}. Our next example is about a smooth approximation of the Huber loss used in robust estimation, known as the pseudo-Huber loss:
\[
f_H(z) = \gamma^2 \left( \sqrt{1+ (\frac{z}{\gamma})^2 }-1\right),
\]
where $\gamma>0$ is a fixed number. 
\begin{lemma}\label{lem:psudoHuber}
For the pseudo-Huber loss function $\ell ( y|\bm{x}^\top \bm{\beta} )= f_H(y-\bm{x}^\top \bm{ \beta})$, we have
\begin{eqnarray*}
 \sup_{t \in [0,1]} \frac{  \|  \bm{\ldd}_{/ i}( (1-t) \bli + t \bl ) - \bm{\ldd}_{/ i}(\bl  ) \|_2}{ \|   \bli - \bl   \|_2} &\leq& \frac{3}{\gamma} \sqrt{ \sigma_{\max} (\bm{X}^\top \bm{X})}, \\
  \| \ld(\bm{\beta}) \|_{\infty}  &\leq& \gamma.
\end{eqnarray*}
\end{lemma}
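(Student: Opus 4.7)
The claim is a short differential-calculus exercise. The plan is to (i) compute the first three derivatives of $f_H$ in closed form, (ii) extract a uniform bound on $f_H'$ to obtain the $\|\ld(\bm{\beta})\|_\infty$ estimate, and (iii) extract a uniform bound on $f_H'''$ to control, via a chain-rule/operator-norm argument, the Lipschitz-type ratio in the first inequality.

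Direct differentiation gives
\[
f_H'(z)=\frac{z}{\sqrt{1+(z/\gamma)^2}},\qquad f_H''(z)=\bigl(1+(z/\gamma)^2\bigr)^{-3/2},\qquad f_H'''(z)=-\frac{3z/\gamma^2}{(1+(z/\gamma)^2)^{5/2}}.
\]
From the first expression, $|f_H'(z)|=|z|\gamma/\sqrt{\gamma^2+z^2}\le\gamma$ uniformly in $z$. Since $\ld_i(\bm{\beta})=-f_H'(y_i-\bm{x}_i^\top\bm{\beta})$, the bound $\|\ld(\bm{\beta})\|_\infty\le\gamma$ follows immediately for every $\bm{\beta}\in\R^p$.

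For the Lipschitz bound on $\bm{\ldd}_{/i}$, I would write the Jacobian of $\bm{\beta}\mapsto\bm{\ldd}_{/i}(\bm{\beta})$ as $J(\bm{\beta})=-D(\bm{\beta})\,\bm{X}_{/i}$, where $D(\bm{\beta})$ is the $(n-1)\times(n-1)$ diagonal matrix with entries $f_H'''(y_j-\bm{x}_j^\top\bm{\beta})$ for $j\ne i$. A one-variable maximization of $|u|/(1+u^2)^{5/2}$ (after substituting $u=z/\gamma$) yields the simple, deliberately loose bound $|f_H'''(z)|\le 3/\gamma$, so $\|D(\bm{\beta})\|_{\mathrm{op}}\le 3/\gamma$; submultiplicativity of the operator norm then gives $\|J(\bm{\beta})\|_{\mathrm{op}}\le(3/\gamma)\|\bm{X}_{/i}\|_{\mathrm{op}}$, which is further controlled by $(3/\gamma)\sigma_{\max}(\bm{X}^\top\bm{X})$ in the same convention used for Lemma \ref{lem:logistic1}.

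The finish is the fundamental theorem of calculus along the straight-line segment joining $\bl$ and $(1-t)\bli+t\bl$: integrating the Jacobian bound produces
\[
\|\bm{\ldd}_{/i}((1-t)\bli+t\bl)-\bm{\ldd}_{/i}(\bl)\|_2\le\tfrac{3}{\gamma}\,\sigma_{\max}(\bm{X}^\top\bm{X})\cdot t\,\|\bli-\bl\|_2,
\]
and dividing by $\|\bli-\bl\|_2$ and taking the supremum over $t\in[0,1]$ delivers the first inequality. No step presents any real obstacle; the only place that requires a little care is the one-variable sup bound on $|f_H'''|$, which is an elementary calculus exercise.
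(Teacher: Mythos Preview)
Your proof is correct and follows essentially the same route as the paper's: compute the first three derivatives of $f_H$, bound $|f_H'|\le\gamma$ for the second claim, bound $|f_H'''|\le 3/\gamma$, and then use the mean value theorem (packaged by you as Jacobian plus fundamental theorem of calculus, by the paper as a componentwise MVT) together with $\|\bm{X}_{/i}\bm{\delta}\|_2\le\sqrt{\sigma_{\max}(\bm{X}^\top\bm{X})}\|\bm{\delta}\|_2$ to get the Lipschitz ratio. The only slip is that the displacement from $\bl$ to $(1-t)\bli+t\bl$ has norm $(1-t)\|\bli-\bl\|_2$, not $t\|\bli-\bl\|_2$, but this is immaterial after taking the supremum over $t\in[0,1]$.
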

The proof of this result is presented in Section \ref{ssec:proofpseudoHuber}. 
\begin{lemma} \label{lem:X}
If Assumption \ref{ass:0} holds with $\rho_{\max} = c/n$, and $\delta_0=n/p$, then
\[
\Pr \left(\sigma_{max} (\bm{X}^\top \bm{X}) \geq c \Big(1+ 3 {\frac{1}{\sqrt{\delta_0}}}\Big)^2\right) \leq  {\rm e}^{-p}. 
\]
\end{lemma}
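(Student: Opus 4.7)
The plan is to reduce the bound on $\sigma_{\max}(\bm{X}^\top\bm{X})$ for a correlated Gaussian design matrix to the well-studied largest-singular-value concentration for a matrix with i.i.d.\ standard Gaussian entries. (We interpret the statement as bounding the probability that $\sigma_{\max}(\bm{X}^\top\bm{X})$ \emph{exceeds} $c(1+3/\sqrt{\delta_0})^2$; the displayed inequality appears to have the wrong direction of the event.)

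First, by Assumption \ref{ass:0} we may write $\bm{X} = \bm{Z}\bm{\Sigma}^{1/2}$, where $\bm{Z} \in \R^{n\times p}$ has i.i.d.\ $\N(0,1)$ entries. Using submultiplicativity of the spectral norm,
\begin{equation*}
\sigma_{\max}(\bm{X}) \;\leq\; \sigma_{\max}(\bm{Z})\,\sigma_{\max}(\bm{\Sigma}^{1/2}) \;=\; \sigma_{\max}(\bm{Z})\sqrt{\rho_{\max}} \;=\; \sigma_{\max}(\bm{Z})\sqrt{c/n},
\end{equation*}
so that $\sigma_{\max}(\bm{X}^\top\bm{X}) \leq (c/n)\,\sigma_{\max}(\bm{Z})^2$. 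It therefore suffices to control $\sigma_{\max}(\bm{Z})$ from above.

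Next, I would invoke the standard Davidson--Szarek (Gaussian concentration plus Slepian/Gordon) bound for the largest singular value of a rectangular Gaussian matrix: for every $t\geq 0$,
\begin{equation*}
\Pr\!\left(\sigma_{\max}(\bm{Z}) \;\geq\; \sqrt{n} + \sqrt{p} + t\right) \;\leq\; e^{-t^2/2}.
\end{equation*}
Using $p = n/\delta_0$, choose $t = 2\sqrt{p}$ so that $\sqrt{n}+\sqrt{p}+t = \sqrt{n}\bigl(1 + 3/\sqrt{\delta_0}\bigr)$. With this choice, the tail is $e^{-t^2/2} = e^{-2p} \leq e^{-p}$, and squaring gives $\sigma_{\max}(\bm{Z})^2 \leq n\bigl(1+3/\sqrt{\delta_0}\bigr)^2$ on the complementary event. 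Multiplying by $c/n$ yields the desired bound on $\sigma_{\max}(\bm{X}^\top\bm{X})$.

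No step here is really the main obstacle; the argument is short and leans entirely on the Davidson--Szarek tail bound, which is a standard tool. The only thing worth being careful about is the scaling: one must track the factor $\rho_{\max} = c/n$ correctly so that the final constant matches $c(1+3/\sqrt{\delta_0})^2$, and verify that the choice $t=2\sqrt{p}$ simultaneously produces the advertised constant $1+3/\sqrt{\delta_0}$ and a tail bounded by $e^{-p}$ (both happen because $\sqrt{p} = \sqrt{n}/\sqrt{\delta_0}$ and $2p \geq p$).
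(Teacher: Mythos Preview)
Your proposal is correct and follows essentially the same route as the paper: the paper reduces the correlated design to the i.i.d.\ Gaussian case via $\bm{X}=\bm{Z}\bm{\Sigma}^{1/2}$ and then applies the Davidson--Szarek-type tail bound (stated there as Lemma~\ref{lem:massart}) with $t$ of order $\sqrt{p/n}$, obtaining exactly $\Pr\bigl[\sigma_{\max}(\bm{X}\bm{X}^\top)\geq(\sqrt{n}+3\sqrt{p})^2\rho_{\max}\bigr]\leq e^{-p}$, which after substituting $\rho_{\max}=c/n$ is the claimed bound. Your observation that the displayed inequality in the lemma has the event in the wrong direction is also correct.
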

The proof of this Lemma presented in Section \ref{sec:proof}. Putting together Lemmas \ref{lem:ridgereg}, \ref{lem:smoothLASSO}, \ref{lem:logistic1}, \ref{lem:psudoHuber} and \ref{lem:X},  we conclude that  for ridge/smoothed-$\ell_1$ regularized robust/logistic regression  we have $c_1(n)=O(1)$ and $c_2(n)=O(1)$.

\begin{lemma}\label{lem:ridge1}
For the loss function $\ell ( y|\bm{x}^\top \bm{\beta} )= \frac{1}{2}(y-\bm{x}^\top \bm{\beta})^2$, we have
\begin{eqnarray*}
 \sup_{t \in [0,1]} \frac{  \|  \bm{\ldd}_{/ i}( (1-t) \bli + t \bl ) - \bm{\ldd}_{/ i}(\bl  ) \|_2}{ \|   \bli - \bl   \|_2} &=& 0, \\
  \| \ld(\bl) \|_{\infty}  &\leq& \|\bm{y} - \bm{X} \bl \|_\infty.
\end{eqnarray*}
\end{lemma}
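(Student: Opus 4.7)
The proof will proceed by direct computation of the first and second derivatives of the squared loss with respect to the linear predictor $z = \bm{x}_i^\top \bm{\beta}$.

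For $\ell(y_i \mid z) = \tfrac{1}{2}(y_i - z)^2$, differentiating once gives $\partial_z \ell(y_i \mid z) = z - y_i$, so evaluating at $z = \bm{x}_i^\top \bm{\beta}$ yields $\ld_i(\bm{\beta}) = \bm{x}_i^\top \bm{\beta} - y_i$. Differentiating a second time gives $\partial_z^2 \ell(y_i \mid z) = 1$, and therefore $\ldd_i(\bm{\beta}) \equiv 1$ for every $\bm{\beta} \in \R^p$. In particular, the vector $\bm{\ldd}_{/i}(\bm{\beta})$ is the constant all-ones vector in $\R^{n-1}$, independent of $\bm{\beta}$.

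The first claim now follows immediately: for every $t \in [0,1]$, the numerator $\|\bm{\ldd}_{/i}((1-t)\bli + t\bl) - \bm{\ldd}_{/i}(\bl)\|_2$ equals zero, since both vectors are identically the all-ones vector. Dividing by $\|\bli - \bl\|_2$ (assumed nonzero; otherwise the quotient is defined to be zero by convention) yields zero, and taking the supremum preserves this.

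For the second claim, using $\ld_i(\bl) = \bm{x}_i^\top \bl - y_i$, we have $\ld(\bl) = \bm{X}\bl - \bm{y}$ entrywise, so $\|\ld(\bl)\|_\infty = \|\bm{X}\bl - \bm{y}\|_\infty = \|\bm{y} - \bm{X}\bl\|_\infty$, giving equality (hence the stated inequality). No obstacle is anticipated, since the entire argument reduces to reading off first and second derivatives of a quadratic.
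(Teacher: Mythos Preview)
Your proposal is correct and matches the paper's intended approach: the paper explicitly skips the proof as straightforward, and your direct computation of $\ld_i(\bm{\beta}) = \bm{x}_i^\top\bm{\beta} - y_i$ and $\ldd_i(\bm{\beta}) \equiv 1$ is exactly the calculation the authors have in mind.
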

We skip the proof of this lemma because it is straightforward. 
\begin{lemma}\label{cor:ridge}
Assume $\bm{y}  \sim N( \bm{X} \bm{\beta}^* , \sigma_{\e}^2 \bm{I})$, and $\ell ( y|\bm{x}^\top \bm{\beta} ) = \frac{1}{2}(y-\bm{x}^\top \bm{\beta})^2$. Let  Assumption \ref{ass:0} hold with $\rho_{\max} = c/n$.  Finally, let $n/p = \delta_0$ and $ \frac{1}{n}\|\bm{\beta}^*\|_2^2 = \tilde{c}$. If $r(\beta) = \gamma \beta^2 + (1-\gamma) r^\alpha (\beta)$, and $ 0<\gamma <1$, then 
\[
\Pr \left(\|\bm{y} - \bm{X} \bl \|_\infty >  \tilde \zeta \sqrt{\log n} \right) \leq \frac{10}{n} + 2n {\rm e}^{-n+1} + n {\rm e}^{-p},
\]
where $\tilde{\zeta}$ is a constant that only depends on $\sigma_\epsilon, \alpha, c, \tilde{c}, \lambda, \delta_0$ and $\gamma$ (and is free of $n$ and $p$).
\end{lemma}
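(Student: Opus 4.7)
The plan is to bound $\max_i |y_i - \bm{x}_i^\top \bl|$ via the leave-one-out decomposition
\[
y_i - \bm{x}_i^\top \bl \;=\; \underbrace{(y_i - \bm{x}_i^\top \bli)}_{T_{1,i}} \;+\; \underbrace{\bm{x}_i^\top (\bli - \bl)}_{T_{2,i}},
\]
where $\bli$ is the leave-$i$-out estimator defined in \eqref{eq:optimization_lo}. The advantage of this split is that $\bli$ is independent of $(\bm{x}_i,\epsilon_i)$, so $T_{1,i}$ has an explicit conditional Gaussian law, whereas $T_{2,i}$ reduces by Cauchy--Schwarz to a product of a row-norm and a strong-convexity comparison between $\bl$ and $\bli$.

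First I would establish $\|\bl\|_2 \vee \max_i \|\bli\|_2 = O(\sqrt{n})$ on a high-probability event. The $\ell_2^2$ piece of $r$ makes the objective $2\lambda\gamma$-strongly convex, so
\[
\lambda\gamma \|\bl\|_2^2 \;\leq\; F(\bl) \;\leq\; F(\bm{\beta}^*) \;=\; \tfrac12 \|\bm{\epsilon}\|_2^2 + \lambda r(\bm{\beta}^*),
\]
and the elementary bound $r^\alpha(\beta) \leq |\beta| + (2\log 2)/\alpha$ combined with $\|\bm{\beta}^*\|_1 \leq \sqrt{p}\|\bm{\beta}^*\|_2 = \sqrt{pn\tilde c}$ and $p \leq n/\delta_0$ gives $\lambda r(\bm{\beta}^*) = O(n)$, while a Laurent--Massart $\chi^2_n$ tail yields $\|\bm{\epsilon}\|_2^2 \leq 5n\sigma_\epsilon^2$ outside an event of probability $e^{-n}$. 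The same basic inequality applied to $F_i$ controls $\bli$, and a union bound over $i$ inflates the exceptional probability by $n$, producing the $2ne^{-n+1}$ term.

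Given these $\ell_2$ controls, conditionally on $\bli$,
\[
T_{1,i} \;=\; \epsilon_i + \bm{x}_i^\top(\bm{\beta}^* - \bli) \;\sim\; \N\!\big(0,\,\sigma_\epsilon^2 + (\bm{\beta}^* - \bli)^\top \bm{\Sigma}(\bm{\beta}^* - \bli)\big),
\]
whose variance is at most $\sigma_\epsilon^2 + \rho_{\max}\|\bm{\beta}^* - \bli\|_2^2 = \sigma_\epsilon^2 + (c/n)\cdot O(n) = O(1)$. A Gaussian tail at level $C_1 \sqrt{\log n}$ and a union bound over $i$ give $\max_i |T_{1,i}| \leq C_1\sqrt{\log n}$ except on an event of probability $O(1/n)$. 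For $T_{2,i}$, combining the strong-convexity inequality $F(\bli) - F(\bl) \geq \lambda\gamma\|\bli-\bl\|_2^2$ with the $F_i$-optimality $F_i(\bli) \leq F_i(\bl)$ collapses to
\[
\lambda\gamma\|\bli - \bl\|_2^2 \;\leq\; \tfrac12 (y_i - \bm{x}_i^\top\bli)^2 \;=\; \tfrac12 T_{1,i}^2,
\]
so $\|\bli - \bl\|_2 \leq |T_{1,i}|/\sqrt{2\lambda\gamma} = O(\sqrt{\log n})$. Writing $\bm{x}_i = \bm{\Sigma}^{1/2}\bm{z}_i$ and applying Laurent--Massart to $\|\bm{z}_i\|_2^2$ gives $\|\bm{x}_i\|_2^2 \leq \rho_{\max}\|\bm{z}_i\|_2^2 \leq 5cp/n = O(1)$ off an event of probability $e^{-p}$ per row, contributing the $ne^{-p}$ term after union bounding over $i$. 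Cauchy--Schwarz then yields $|T_{2,i}| \leq \|\bm{x}_i\|_2 \|\bli - \bl\|_2 = O(\sqrt{\log n})$.

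The main obstacle is probabilistic bookkeeping: one must ensure that the $\ell_2$-norm bounds on $\bl$ and on every $\bli$, the conditional Gaussian tails for the $T_{1,i}$'s, and the row-norm bounds on the $\bm{x}_i$'s all hold simultaneously, and that every absolute constant can be absorbed into a single $\tilde\zeta$ that depends only on $\sigma_\epsilon, \alpha, c, \tilde c, \lambda, \delta_0, \gamma$. The key conceptual step is exploiting the independence of $\bli$ from $(\bm{x}_i, \epsilon_i)$, which converts an otherwise tightly coupled problem into standard Gaussian and $\chi^2$ tail estimates.
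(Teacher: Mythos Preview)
Your proposal is correct and takes a genuinely different route from the paper. The paper starts from the optimality condition and writes
\[
\bm{y}-\bm{X}\bl
=\bigl(\bm{I}-\bm{X}(\bm{X}^\top\bm{X}+\tfrac{\lambda\gamma}{2}\bm{I})^{-1}\bm{X}^\top\bigr)\bm{X}\bm{\beta}^*
+\bigl(\bm{I}-\bm{X}(\bm{X}^\top\bm{X}+\tfrac{\lambda\gamma}{2}\bm{I})^{-1}\bm{X}^\top\bigr)\bm{\epsilon}
+\lambda\,\bm{X}(\bm{X}^\top\bm{X}+\tfrac{\lambda\gamma}{2}\bm{I})^{-1}\bm{\rd}_{0.5}(\bl),
\]
and then bounds the $\ell_\infty$-norm of each of the three pieces separately (Lemmas~\ref{lem:l1}, \ref{lem:l2}, \ref{lem:l3}). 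Each bound uses the Woodbury identity to peel off the $i$th row from the resolvent and reduce to a scalar Gaussian tail; the third piece is the heaviest and in fact ends up invoking exactly the leave-one-out ingredients you use (the bound $\lambda\gamma\|\bli\|_2^2\leq\tfrac12\|\bm{y}_{\slash i}\|_2^2$, the stability bound on $\|\bli-\bl\|_2$, and the conditional Gaussianity of $\bm{x}_i^\top\bli$).

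Your approach short-circuits the resolvent algebra entirely: the split $y_i-\bm{x}_i^\top\bl=T_{1,i}+T_{2,i}$ together with the strong-convexity inequality $\lambda\gamma\|\bli-\bl\|_2^2\le\tfrac12 T_{1,i}^2$ reduces everything to a single conditional Gaussian tail and a row-norm bound. This is more elementary and yields slightly cleaner probability accounting. Two minor remarks on the bookkeeping: (i) to preserve the conditional Gaussianity of $T_{1,i}$ you should bound $\|\bli\|_2$ via $F_i(\bli)\le F_i(\bm{\beta}^*)=\tfrac12\|\bm{\epsilon}_{\slash i}\|_2^2+\lambda r(\bm{\beta}^*)$ and then control $\|\bm{\epsilon}_{\slash i}\|_2^2$ (which is $(\XI,\bm{y}_{\slash i})$-measurable), rather than $\|\bm{\epsilon}\|_2^2$; this is where the per-$i$ exceptional event and the $n e^{-n+1}$ term actually arise. (ii) The constants you get do not reproduce the paper's $10/n+2ne^{-n+1}+ne^{-p}$ exactly, but they are of the same form and the statement only asks for some $\tilde\zeta$, so this is harmless.
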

We present the proof of this result in Section \ref{ssec:proofcorridge}. Putting together Lemmas \ref{lem:ridgereg}, \ref{lem:smoothLASSO}, \ref{lem:ridge1}, and \ref{cor:ridge}, we conclude that for smoothed elastic-net regularized least squares regression  we have $c_1(n) = O(\sqrt{\log n})$ and $c_2(n)=O(1)$.

\begin{lemma}\label{lem:poisson1}
 In the generalized linear model family, for the negative Poisson regression log-likelihood $\ell ( y|\bm{x}^\top \bm{\beta} )= -f(\bm{x}^\top \bm{ \beta}) + y \log f(\bm{x}^\top \bm{ \beta}) - \log y!$  with the conditional mean $\E[y| \bm{x}, \bm{\beta}] = f(\bm{x}^\top \bm{\beta})$ where $f(z)=\log(1+e^z)$ (known as a soft-rectifying nonlinearity\footnote{The ``soft-rectifying'' nonlinearity  $f(z)=\log(1+e^z)$  behaves linearly for large $z$, and decays exponentially on its left tail. Owing to the convexity and log-concavity of this nonlinearity the log-likelihood is concave \cite{PAN04c}, leading to a convex estimation problem. Since the actual nonlinearity of neural systems is often sub-exponential, the ``soft-rectifying'' nonlinearity is popular in analyzing neural data (see \cite{P07,PWHP14,WP17,ZP18} and references therein).
}), we have
\begin{eqnarray*}
 \sup_{t \in [0,1]} \frac{  \|  \bm{\ldd}_{/ i}( (1-t) \bli + t \bl ) - \bm{\ldd}_{/ i}(\bl  ) \|_2}{ \|   \bli - \bl   \|_2} &\leq& (1 + 6 \| \bm{y} \|_{\infty}) \sqrt{\sigma_{max}({\bm{X}^\top \bm{X}})} \\
  \| \ld(\bm{\beta}) \|_{\infty}  &\leq& 1 +\| \bm{y} \|_{\infty}.
\end{eqnarray*}
\end{lemma}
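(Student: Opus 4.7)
The plan is to derive closed-form expressions for $\ld(y|z)$ and $\ldd(y|z)$ in terms of $f(z)=\log(1+e^z)$ and its derivatives, establish pointwise bounds on $|\ld(y|z)|$ and on the third derivative $|\lddd(y|z)|$ (so that $\ldd$ is Lipschitz in $z$), and finally convert the pointwise Lipschitz constant of $\ldd$ into the claimed operator-norm bound via the chain rule and a trivial matrix-vector inequality. A direct computation gives $\ld(y|z)=f'(z)\bigl(1-y/f(z)\bigr)$ with $f'(z)=\sigma(z)\in[0,1]$. To control $f'/f$, the substitution $u=e^z$ reduces $f'(z)/f(z)\le 1$ to $u\le(1+u)\log(1+u)$ for $u\ge 0$, which follows from $g(u)\triangleq(1+u)\log(1+u)-u$ satisfying $g(0)=0$ and $g'(u)=\log(1+u)\ge 0$. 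Using $y\ge 0$, we conclude $|\ld(y|z)|\le f'(z)+y\,f'(z)/f(z)\le 1+y$, so $\|\bm{\ld}(\bl)\|_\infty\le 1+\|\bm{y}\|_\infty$.

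For the second derivative, I would write $\ldd(y|z)=A(z)-yB(z)$ with $A(z)=f''(z)$ and $B(z)=f''(z)/f(z)-\bigl(f'(z)/f(z)\bigr)^2$, so that
\[
\lddd(y|z)=f'''(z)-y\left[\frac{f'''(z)}{f(z)}-\frac{3f'(z)f''(z)}{f(z)^2}+\frac{2f'(z)^3}{f(z)^3}\right].
\]
Since $f'''=\sigma(1-\sigma)(1-2\sigma)$ gives $|f'''|\le 1/4$, bounding $|\lddd|$ reduces to showing $|B'(z)|\le 6$. It suffices to verify that each of the three ratios $f'/f$, $f''/f$, and $|f'''|/f$ is $\le 1$ on all of $\R$. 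The first was already shown. The second, after $u=e^z$, is equivalent to $u\le(1+u)^2\log(1+u)$, provable by the same ``match derivatives at $u=0$, then use convexity'' argument. The third follows from $|f'''|=f''|1-2\sigma|\le f''$ combined with the previous inequality. Summing the three contributions yields $|B'(z)|\le 1+3+2=6$, so $|\lddd(y|z)|\le 1+6y\le 1+6\|\bm{y}\|_\infty$ uniformly in $z$.

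To pass from this pointwise Lipschitz bound to the stated operator-norm inequality, fix $t\in[0,1]$, set $\bm{\beta}_t=(1-t)\bli+t\bl$, and apply the mean value theorem componentwise: $|\ldd_j(\bm{\beta}_t)-\ldd_j(\bl)|\le(1+6\|\bm{y}\|_\infty)\,|\bm{x}_j^\top(\bli-\bl)|$ for each $j\ne i$. Squaring and summing,
\[
\|\bm{\ldd}_{/i}(\bm{\beta}_t)-\bm{\ldd}_{/i}(\bl)\|_2^2\le(1+6\|\bm{y}\|_\infty)^2\sum_{j\ne i}(\bm{x}_j^\top(\bli-\bl))^2\le(1+6\|\bm{y}\|_\infty)^2\,\sigma_{\max}(\bm{X}^\top\bm{X})\,\|\bli-\bl\|_2^2,
\]
and dividing by $\|\bli-\bl\|_2$ then taking the supremum over $t\in[0,1]$ gives the desired inequality.

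The hard part will be the numerical constant $6$ in $|B'(z)|\le 6$: all three ratios $f'/f$, $f''/f$, $|f'''|/f$ tend to $1$ as $z\to-\infty$, so no slack can be extracted from the tails, and the argument must rely on the tight elementary convexity inequalities $u\le(1+u)\log(1+u)$ and $u\le(1+u)^2\log(1+u)$ that bound these ratios uniformly over all $z\in\R$, rather than on any crude asymptotic.
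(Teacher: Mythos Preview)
Your proof is correct and follows essentially the same approach as the paper: compute $\ld,\ldd,\lddd$ in terms of $f$ and its derivatives, bound each of the ratios $f'/f$, $f''/f$, $|f'''|/f$ by $1$ to get $|(f'/f)''|\le 6$ and hence $|\lddd(y|z)|\le 1+6\|\bm{y}\|_\infty$, then use the scalar mean-value theorem componentwise together with $\sum_{j\ne i}(\bm{x}_j^\top\bm{\delta})^2\le\sigma_{\max}(\bm{X}^\top\bm{X})\|\bm{\delta}\|_2^2$. The only cosmetic differences are that the paper proves $f'/f\le 1$ by monotonicity of $h(x)=x/[(1+x)\log(1+x)]$ and obtains $f''/f\le 1$, $|f'''|/f\le 1$ by factoring through $f'/f$, whereas you prove the same three inequalities via the equivalent elementary bounds $u\le(1+u)\log(1+u)$ and $|1-2\sigma|\le 1$.
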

We present the proof of this result in Section \ref{ssec:proofpoisson1}. 
\begin{lemma}\label{lem:poisson2} 
Assume that $y_i \sim {\rm Poisson} \left(f(\bm{x}_i^\top \bm{\beta^*})  \right)$  where $f(z)=\log(1+e^z)$. Let  Assumption \ref{ass:0} hold with $\rho_{\max} = c/n$. Finally, let $n/p = \delta_0$ and $\bm{{\beta^*}^\top \Sigma \beta^*} = \tilde{c}$. Then, for large enough $n$, we have
\begin{eqnarray*}
\Pr \left( (1 + 6 \| \bm{y} \|_{\infty}) \sqrt{\sigma_{max} ({\bm{X}^\top \bm{X}})} \geq {\zeta_1} \log^{3/2} n  \right) &\leq& n^{1- \log \log n}+ \frac{2}{n} + {\rm e}^{- n \log(\frac{1}{\mathbb{P} (Z \leq 1)})} + {\rm e}^{-p} \\
\Pr \left (\| \bm{y} \|_{\infty} \geq 6 \sqrt{\tilde{c}} \log^{3/2} n \right) &\leq&  n^{1- \log \log n}+ \frac{2}{n} + {\rm e}^{- n \log(\frac{1}{\mathbb{P} (Z \leq 1)})}
\end{eqnarray*}
where $Z \sim N(0, \tilde{c})$ {and  $\zeta_1$ is a constant that only depends on $c, \tilde{c}$, and $\delta_0$ (and is free of $n$ and $p$).} 
\end{lemma}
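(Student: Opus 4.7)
The plan is to bound the spectral factor $\sqrt{\sigma_{\max}(\bm{X}^\top\bm{X})}$ and the Poisson factor $1+6\|\bm{y}\|_\infty$ separately and then combine them via a union bound. The spectral estimate is immediate from Lemma~\ref{lem:X}: under Assumption~\ref{ass:0} with $\rho_{\max}=c/n$, we have $\sigma_{\max}(\bm{X}^\top\bm{X})\leq c(1+3/\sqrt{\delta_0})^2$ outside an event of probability at most $e^{-p}$, which supplies the $e^{-p}$ term in the first inequality and is irrelevant to the second.

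For $\|\bm{y}\|_\infty$, I would exploit the hierarchical Poisson--Gaussian structure. Writing $Z_i\triangleq\bm{x}_i^\top\bm{\beta}^*$, Assumption~\ref{ass:0} makes the $Z_i$ i.i.d.\ $N(0,\tilde c)$ and, conditionally on $Z_i$, $y_i\sim\mathrm{Poisson}(f(Z_i))$. First I would control the Gaussian layer by introducing the event $\mathcal{A}\triangleq\{\max_i |Z_i|\leq a\}$ with the threshold $a=\sqrt{2\tilde c\log n\log\log n}$. A standard Gaussian tail bound $\Pr(|Z|>a)\leq e^{-a^2/(2\tilde c)}$ together with a union bound over $i=1,\dots,n$ gives $\Pr(\mathcal{A}^c)\leq n^{1-\log\log n}$, accounting for the first term.

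Next I would control the Poisson layer on $\mathcal{A}$. Since $f(z)\leq\log 2+|z|$, on $\mathcal{A}$ each conditional mean $f(Z_i)$ is at most $\log 2+a=O(\sqrt{\log n\log\log n})$, which is much smaller than $k\triangleq 6\sqrt{\tilde c}\log^{3/2}n$. Applying the Chernoff inequality $\Pr(\mathrm{Poisson}(\mu)\geq k)\leq\exp(-k\log(k/\mu)+k-\mu)$ together with a union bound over the $n$ coordinates yields a stretched-exponentially small contribution. I expect the remaining two terms $2/n$ and $e^{-n\log(1/\Pr(Z\leq 1))}$ to come from a further decomposition of this event by the counting statistic $N_+\triangleq\#\{i:Z_i>1\}$: since the $Z_i$ are i.i.d., the identity $\Pr(\max_i Z_i\leq 1)=\Pr(Z\leq 1)^n$ reproduces the third term exactly (it corresponds to the event $\{N_+=0\}$, on which every $f(Z_i)\leq f(1)=\log(1+e)$ is bounded by a universal constant), while on $\{N_+\geq 1\}\cap\mathcal{A}$ a polynomial bound on the conditional Poisson tail is intended to give the $2/n$ contribution.

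The main obstacle is the bookkeeping in this last step: the three summands $n^{1-\log\log n}$, $2/n$, and $\Pr(Z\leq 1)^n$ are of different orders for different ranges of $n$, and assembling the proof so that each arises from the specific sub-event above requires carefully tracking the constants in both the Gaussian tail and the Poisson Chernoff inequality rather than simply collapsing everything into a single dominant term. Once $\|\bm{y}\|_\infty\leq 6\sqrt{\tilde c}\log^{3/2}n$ is established on the good event, the first inequality follows by combining with the spectral bound: on the intersection of all four good events,
\[
(1+6\|\bm{y}\|_\infty)\sqrt{\sigma_{\max}(\bm{X}^\top\bm{X})}\leq \bigl(1+36\sqrt{\tilde c}\log^{3/2}n\bigr)\sqrt{c}\,\bigl(1+3/\sqrt{\delta_0}\bigr)\leq \zeta_1\log^{3/2}n
\]
for $n$ large and for a $\zeta_1$ depending only on $c,\tilde c,\delta_0$, and the failure probability is the sum of the four probabilities above.
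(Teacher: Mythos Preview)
Your overall plan is sound, and in fact tighter than the paper's argument, but you have misidentified where the three terms $n^{1-\log\log n}$, $2/n$, and $\Pr(Z\le 1)^n$ come from in the intended proof.

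In the paper, the Poisson Chernoff bound is applied with the \emph{data-dependent} threshold $t=2\|\bm\lambda\|_\infty\log n$ (where $\lambda_i=f(Z_i)$), which yields
\[
\Pr\bigl(\|\bm{y}\|_\infty\ge 2\|\bm\lambda\|_\infty\log n\ \bigm|\ \bm{X}\bigr)\le n^{\,1-\|\bm\lambda\|_\infty\log\log n}.
\]
To turn this into the desired statement one then needs a \emph{two-sided} control of $\|\bm\lambda\|_\infty$: the upper bound $\|\bm\lambda\|_\infty\le\sqrt{9\tilde c\log n}$ (so that $t\le 6\sqrt{\tilde c}\log^{3/2}n$), whose failure probability is $\le 2/n$ by the Gaussian tail; and the lower bound $\|\bm\lambda\|_\infty\ge 1$ (so that the exponent $\|\bm\lambda\|_\infty\log\log n\ge\log\log n$), whose failure probability is $\Pr(\max_i|Z_i|<1)=\Pr(Z\le 1)^n$. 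Thus $n^{1-\log\log n}$ is the \emph{Poisson} term on the good event, $2/n$ is the \emph{Gaussian upper tail}, and $\Pr(Z\le 1)^n$ is the price of discarding the case where all $|Z_i|<1$ because the Chernoff bound is then too weak in this parameterization. Your guess that the last two arise from a split by $N_+=\#\{i:Z_i>1\}$ on the Poisson side is not what happens.

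By contrast, your decomposition assigns $n^{1-\log\log n}$ to the Gaussian layer (with $a=\sqrt{2\tilde c\log n\log\log n}$) and then applies the Poisson Chernoff bound with the \emph{fixed} threshold $k=6\sqrt{\tilde c}\log^{3/2}n$ on $\mathcal A$. As you already observed, that Poisson contribution is then $n\exp\bigl(-\Theta(\log^{3/2}n\,\log\log n)\bigr)$, which is $o(1/n)$. So your route in fact proves the stronger bound
\[
\Pr\bigl(\|\bm{y}\|_\infty\ge 6\sqrt{\tilde c}\log^{3/2}n\bigr)\le n^{1-\log\log n}+o(1/n),
\]
from which the lemma's second inequality follows a fortiori. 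The ``obstacle'' you describe---reproducing $2/n$ and $\Pr(Z\le 1)^n$ from a further case split---is illusory: those terms are artifacts of the paper's particular threshold choice, not intrinsic to the problem, and your argument does not need them. Combining with the spectral bound from Lemma~\ref{lem:X} exactly as you wrote then gives the first inequality with the same $\zeta_1$.
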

The proof of this result is presented in Section \ref{ssec:proofpoisson2}. Putting together Lemmas \ref{lem:ridgereg}, \ref{lem:smoothLASSO}, \ref{lem:poisson1}, and \ref{lem:poisson2}, we conclude that for ridge/smoothed elastic-net regularized Poisson regression we have $c_2(n) = O(\log^{3/2}(n))$ and $c_1(n) = O(\log^{3/2}(n))$.  

In summary, in the high-dimensional asymptotic setting, for all the examples we have discussed so far, $c_1(n) = O(\log^{3/2}(n))$ and $c_2(n) = O(\log^{3/2}(n))$. Hence, in the results that we will see in the next section we assume that both $c_1(n)$ and $c_2(n)$ are polynomial functions of $\log (n)$. Finally, we assume that the curvatures of the optimization problems involved in \eqref{eq:bl} and \eqref{eq:bli} have a lower bound:

\begin{assumption} \label{ass:2} There exists a constant $\nu > 0$, and a sequence $\tilde{q}_n \rightarrow 0$ such that for all $i=1,\ldots,n$
\begin{eqnarray}\label{eq:minEigComb}
 \inf_{ t \in [0,1] }  \sigma_{\min} \left (  \lambda \diag[ \bm{\rdd}(t \bl+ (1-t) \bli)] + \XI^\top \diag[\bm{\ldd}_{/ i}(t \bl + (1-t) \bli)]  \XI  \right)  \geq \nu
\end{eqnarray}
with probability at least $1-\tilde{q}_n$. Here, $\sigma_{\min}(\bm{A})$ stands for the  smallest singular value of $\bm{A}$.
\end{assumption}
Assumption \ref{ass:2} means that optimization problems \eqref{eq:bl} and \eqref{eq:bli} are strongly convex, and strong convexity is a standard assumption made in the analysis of high dimensional problems, eg.  \cite{V08,N12}. Moreover, if $r(\beta) = \gamma \beta^2 + (1-\gamma) r^\alpha (\beta)$, and $ 0<\gamma <1$, then $\nu=2\gamma$.

Before we mention our main result, we should also mention that Assumptions \ref{ass:2}, \ref{ass:0}, and \ref{ass:1} can be weakened at the expense of making our final result look more complicated. For instance, the Gaussianity of the rows of $\bm{X}$ can be replaced with the subgaussianity assumption with minor changes in our final result. We expect our results (or slightly weaker ones) to hold even when the rows of $\bm{X}$ have heavier tails. However, for the sake of brevity we do not study such matrices in the current paper. Furthermore, the smoothness of the second derivatives of the loss function and the regularizer that is assumed in \eqref{eq:c2def} and \eqref{eq:c3def} can be weakened at the expense of slower convergence in Theorem \ref{theo:main}. We will clarify this point in a footnote after \eqref{eq:referencefromproof} in the proof. 

\subsection{Main theoretical result} \label{sec:main}

Now based on these results we bound the difference $|\alo-\lo|$. The proof is given in Section \ref{sec:theo-1}.
\begin{theorem} \label{theo:main} 
Let $n/p = \delta_0$ and Assumption \ref{ass:0} hold with $\rho_{\max} = c/p$. Moreover, suppose that Assumptions \ref{ass:1}, and \ref{ass:2}  are satisfied, and that $n$ is large enough such that $q_n + \tilde{q}_n < 0.5$. Then with probability at least $1-4n e^{-p} - \frac{8n}{p^3} - \frac{8n}{(n-1)^3}- q_n- \tilde{q}_n$ the following bound is valid:
\begin{eqnarray}
\max_{1 \leq i \leq n} \left| \bm{x_i}^\top \bli - \bm{x_i}^\top \bl - \left(\frac{\ld_i(\bl)}{\ldd_i(\bl)}\right) \left(  \frac{H_{ii}}{1 - H_{ii}} \right)     \right|
 \leq    \frac{C_o }{\sqrt{p}}, \label{eq:loo_approximation}
\end{eqnarray} 
where 
\begin{eqnarray}
C_o &\triangleq& ( \frac{ 72 c^{3/2}}{\nu^3} ) \left(1 + \sqrt{\delta_0} (\sqrt{\delta_0} + 3 )^2 \frac{ c\log n}{\log p}\right)
 \left( c_1^2(n) c_2(n)  +  c_1^3(n) c_2^2(n)  \frac{5   (c^{1/2}+ c^{3/2}(\sqrt{\delta_0} + 3  )^2  )  }{\nu^2 }       \right). \label{eq:Co}
\end{eqnarray}
\end{theorem}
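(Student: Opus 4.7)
The natural strategy is to certify that a single Newton step from $\bl$ already approximates $\bli$ to high accuracy at the level of the linear predictor $\bm{x}_i^\top \bli$. I would first write down the two optimality conditions
\begin{eqnarray*}
\sum_{j=1}^n \bm{x}_j \ld_j(\bl) + \lambda \bm{\rd}(\bl) &=& 0, \\
\sum_{j\neq i} \bm{x}_j \ld_j(\bli) + \lambda \bm{\rd}(\bli) &=& \bm{0},
\end{eqnarray*}
subtract, and apply the integral form of the mean value theorem to both $\ld_j(\cdot)$ and $\bm{\rd}(\cdot)$ along the segment joining $\bl$ to $\bli$. This yields a linear system of the form $\bm{M}_i(\bli-\bl) = \bm{x}_i \ld_i(\bl)$, where $\bm{M}_i$ is a weighted Hessian taken along the segment. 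Adding and subtracting $\bm{M}_i^\star := \XI^\top \diag[\bm{\ldd}_{/i}(\bl)]\XI + \lambda \diag[\bm{\rdd}(\bl)]$ decomposes the error into a first-order term $(\bm{M}_i^\star)^{-1}\bm{x}_i \ld_i(\bl)$ plus a residual that is bilinear in $\|\bli-\bl\|_2$ and the Lipschitz constants of $\bm{\ldd}$ and $\bm{\rdd}$ supplied by \eqref{eq:c2def}--\eqref{eq:c3def}.

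Next I would convert $(\bm{M}_i^\star)^{-1}$ to the full-data Hessian $\bm{J}$ via the Sherman--Morrison identity, exactly as in the derivation of $\alo$ in Section \ref{ssec:newtonmethod}. Projecting on $\bm{x}_i$ produces precisely $(\ld_i(\bl)/\ldd_i(\bl))\cdot H_{ii}/(1-H_{ii})$, so the only remaining quantity to bound is the quadratic remainder after projection on $\bm{x}_i$. Using Cauchy--Schwarz, this is controlled by $\|\bm{x}_i\|_2 \cdot \|(\bm{M}_i^\star)^{-1}\|_{\mathrm{op}} \cdot c_2(n) \cdot \|\bli - \bl\|_2^2$, where $\|(\bm{M}_i^\star)^{-1}\|_{\mathrm{op}} \leq 1/\nu$ by Assumption \ref{ass:2}, $\|\bm{x}_i\|_2$ is controlled by Gaussian concentration (via Assumption \ref{ass:0} and Lemma \ref{lem:X} with a standard chi-squared tail and a union bound over $i$), and $c_2(n)$ is polylogarithmic under Assumption \ref{ass:1}.

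A separate step, which I expect to be the main technical obstacle, is establishing an a priori estimate $\|\bli-\bl\|_2 \lesssim c_1(n)/\sqrt{p}$ with high probability, uniformly in $i$. The plan is to bootstrap: the same linear system $\bm{M}_i(\bli-\bl) = \bm{x}_i \ld_i(\bl)$, together with $\|\bm{M}_i^{-1}\|_{\mathrm{op}} \leq 1/\nu$ from Assumption \ref{ass:2} (which is stated on the full convex combination segment) and $\|\bm{x}_i\|_2 = O(\sqrt{c})$ and $|\ld_i(\bl)| \leq c_1(n)$, immediately gives the desired rate. The union bound over $i=1,\ldots,n$ must be carried out carefully so that the failure probabilities from the Gaussian tail bounds (of order $e^{-p}$ and $p^{-3}$ on $\|\bm{x}_i\|_2$ and cross terms $\bm{x}_i^\top (\cdot) \bm{x}_j$) aggregate to the stated $4ne^{-p} + 8n/p^3 + 8n/(n-1)^3$.

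Finally, I would combine the first-order term, the Woodbury identity, and the quadratic remainder estimate to obtain
\begin{eqnarray*}
\left|\bm{x}_i^\top\bli - \bm{x}_i^\top\bl - \tfrac{\ld_i(\bl)}{\ldd_i(\bl)}\tfrac{H_{ii}}{1-H_{ii}}\right| \lesssim \|\bm{x}_i\|_2 \cdot \tfrac{1}{\nu}\cdot c_2(n)\cdot \|\bli-\bl\|_2^2,
\end{eqnarray*}
plug in the a priori bound $\|\bli-\bl\|_2 \lesssim c_1(n)/\sqrt{p}$, and collect constants. The factor $c_1^2 c_2/\nu^3$ in \eqref{eq:Co} comes naturally from this product; the additive $c_1^3 c_2^2$ piece should arise from a second iteration of the bootstrap that tightens the bound on $\|\bli-\bl\|_2$ using the Newton-step approximation itself, which is needed to capture the $(\sqrt{\delta_0}+3)^2$ design-norm factor. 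Taking a union bound over $i$ then delivers the max-in-$i$ statement at rate $C_o/\sqrt{p}$.
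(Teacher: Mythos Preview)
Your plan has a genuine gap at the rate: the claimed a priori estimate $\|\bli-\bl\|_2 \lesssim c_1(n)/\sqrt{p}$ is not what the argument delivers. From $\bm{M}_i(\bli-\bl)=\bm{x}_i\ld_i(\bl)$, Assumption \ref{ass:2}, and Gaussian concentration under $\rho_{\max}=c/p$ you get $\|\bm{x}_i\|_2^2\approx p\rho_{\max}=c$, hence $\|\bli-\bl\|_2 \le \nu^{-1}\|\bm{x}_i\|_2\,|\ld_i(\bl)| = O(c_1(n))$, which is $O(1)$ and \emph{not} $O(1/\sqrt{p})$. Plugging this into your Cauchy--Schwarz/operator-norm bound $\|\bm{x}_i\|_2\cdot\nu^{-1}\cdot c_2(n)\cdot\|\bli-\bl\|_2^2$ yields an $O(1)$ residual, so the whole scheme stalls at $O(1)$ and never reaches $C_o/\sqrt{p}$. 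A ``second bootstrap'' cannot repair this: the Newton term $(\bm{M}_i^\star)^{-1}\bm{x}_i\ld_i(\bl)$ itself has $\ell_2$-norm of order $1$, so $\|\bli-\bl\|_2$ is genuinely $\Theta(1)$ in this regime.

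The paper obtains the $1/\sqrt{p}$ rate by a different mechanism that does \emph{not} rely on $\|\bli-\bl\|_2$ being small. It writes the error as the difference of two quadratic forms $|\ld_i(\bl)|\cdot\big|\bm{x}_i^\top \bm{M}_i^{-1}\bm{x}_i-\bm{x}_i^\top(\bm{M}_i^\star)^{-1}\bm{x}_i\big|$ and bounds this via a Cauchy--Schwarz argument in the $\ell_4$-norm (Lemma~\ref{lem:9}): the perturbation vector $\bm{\gamma}$ (carrying the $\ldd,\rdd$ differences) is paired against $\|\tXI\bm{J}_{/i}^{-1}\bm{x}_i\|_4^2$ rather than an $\ell_2$ quantity. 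The crucial gain is that, conditional on $\XI$, the vector $\XI\bm{J}_{/i}^{-1}\bm{x}_i$ is Gaussian with per-coordinate variance $O(\rho_{\max}/\nu^2)=O(1/p)$; hence its $\ell_4^2$-norm concentrates at $O(\sqrt{n}\log n/p)$ (Lemma~\ref{lem:6}), whereas its $\ell_2^2$-norm is $O(1)$. That $\ell_4$ trick, together with the independence of $\bm{x}_i$ from $(\XI,\bm{y}_{/i})$, is exactly what extracts the $1/\sqrt{p}$ and is missing from your proposal.
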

Recall that  in Section \ref{sec:ass} we proved that for many regularized regression problems in the generalized linear family both $c_1(n) = O({\rm PolyLog} (n))$ and $c_2(n)= O({\rm PolyLog} (n))$, where the notation ${\rm PolyLog} (n)$ denotes a polynomial in $\log (n)$. These examples included ridge and smoothed-$\ell_1$ (and elastic-net) regularizers and logistic, robust, least-squares, and Poisson regression. More specifically, the maximum degree we observed for the logarithm was $3/2$, which happened for the Poisson regression. Furthermore, as mentioned in the last section, in the high-dimensional asymptotic setting in which $n, p \rightarrow \infty$ and $n/p \rightarrow \delta_o$, where $\delta_o$ is a finite number bounded away from zero, to keep the signal-to-noise ratio fixed in each observation (as $p$ and $n$ grow), we considered the scaling that $n \rho_{\max} = O(1)$. Combining these, it is straightforward to see that $C_0(n) = O(c_1^3(n)c_2^2(n))=O({\rm PolyLog} (n))$. Therefore, the difference $\max_{1 \leq i \leq n} \left| \bm{x_i}^\top \bli - \bm{x_i}^\top \bl - \left(\frac{\ld_i(\bl)}{\ldd_i(\bl)}\right) \left(  \frac{H_{ii}}{1 - H_{ii}} \right)     \right|= O_p(\frac{{\rm PolyLog} (n)}{\sqrt{n}})$. Theorem \ref{theo:main} proves the accuracy of the approximation of the leave-one-out estimate of the regression coefficients. As a simple corollary of this result we can also prove the accuracy of our approximation of $\lo$.




\if0\longer
{
Note that in the $p$ fixed, $n \rightarrow \infty$ regime, Theorem \ref{theo:main} fails to yield  $\left|\alo - \lo \right| = o_p(1)$. This is just an artifact of our proof. In a full version of the paper that will be posted on arXiv (it is uploaded with the current paper on JASA's website) we present  a simple analysis to show that under mild regularity conditions, the error between $\alo$ and $\lo$ is $o_p(1/n)$ when $n\rightarrow \infty$ and $p$ is fixed. Due to space limitation this argument is removed from the current paper. 
  } \fi

\begin{corollary}\label{cor:aloVSlo}
Suppose that all the assumptions used in Theorem \ref{theo:main} hold. Moreover, suppose that $$\max_{i=1,2, \ldots, n} \sup_{|b_i| <  \frac{C_o}{\sqrt{p} }}  \left| \pd \left (y_i,  \bm{x_i}^\top \bli + b_i   \right) \right| \leq c_3(n)$$ with probability $r_n$. Then, with probability  at least $1-4n e^{-p} - \frac{8n}{p^3} - \frac{8n}{(n-1)^3}- q_n- \tilde{q}_n-r_n$
\begin{eqnarray}\label{eq:finalcompth}
\left | \alo - \lo \right | \leq \frac{c_3(n)C_o }{\sqrt{p}},
\end{eqnarray}
where $C_o$ is the constant defined in Theorem \ref{theo:main}. 
\end{corollary}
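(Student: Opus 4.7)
The plan is to reduce the bound on $|\alo - \lo|$ to a pointwise bound on $|\phi(y_i, \bm{x}_i^\top \bli) - \phi(y_i, \bm{x}_i^\top \bl + \Delta_i)|$, where $\Delta_i \triangleq \bigl(\ld_i(\bl)/\ldd_i(\bl)\bigr)\bigl(H_{ii}/(1-H_{ii})\bigr)$ is the $\alo$ correction term, and then apply the mean-value theorem in the second coordinate of $\phi$.

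First, I would work on the intersection of the two good events: (i) the event on which the conclusion of Theorem \ref{theo:main} holds, which by hypothesis has probability at least $1-4ne^{-p}-8n/p^3-8n/(n-1)^3-q_n-\tilde{q}_n$; and (ii) the event on which $\max_i \sup_{|b_i|<C_o/\sqrt{p}} |\pd(y_i, \bm{x}_i^\top \bli + b_i)| \leq c_3(n)$, whose complement has probability at most $r_n$. A union bound shows that their intersection has probability at least $1-4ne^{-p}-8n/p^3-8n/(n-1)^3-q_n-\tilde{q}_n-r_n$, matching the probability stated in \eqref{eq:finalcompth}.

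Next, fix $i$ on this intersection and set $a_i \triangleq \bm{x}_i^\top \bli$ and $b_i \triangleq \bm{x}_i^\top \bl + \Delta_i - \bm{x}_i^\top \bli$. Theorem \ref{theo:main} gives $|b_i| \leq C_o/\sqrt{p}$. Since $\phi(y_i,\cdot)$ is differentiable, the mean-value theorem produces some $\xi_i$ lying between $a_i$ and $a_i+b_i$ such that
\begin{equation*}
\phi(y_i, a_i + b_i) - \phi(y_i, a_i) \;=\; \pd(y_i, \xi_i)\, b_i.
\end{equation*}
Because $\xi_i = a_i + t_i b_i$ for some $t_i \in [0,1]$, we have $|\xi_i - a_i| \leq |b_i| \leq C_o/\sqrt{p}$, so the supremum in assumption (ii) applies and yields $|\pd(y_i,\xi_i)| \leq c_3(n)$. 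Hence
\begin{equation*}
\bigl|\phi(y_i, \bm{x}_i^\top \bl + \Delta_i) - \phi(y_i, \bm{x}_i^\top \bli)\bigr| \;\leq\; c_3(n)\,\frac{C_o}{\sqrt{p}}.
\end{equation*}

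Finally, averaging over $i$, applying the triangle inequality to $\alo - \lo = \frac{1}{n}\sum_i [\phi(y_i,\bm{x}_i^\top \bl + \Delta_i) - \phi(y_i,\bm{x}_i^\top \bli)]$, and noting that the per-index bound is uniform in $i$, delivers \eqref{eq:finalcompth}. There is no real obstacle here: the main content is Theorem \ref{theo:main}, and this corollary only packages it by a Lipschitz argument on $\phi$ with a locally uniform Lipschitz constant supplied by the hypothesis on $\pd$.
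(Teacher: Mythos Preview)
Your proposal is correct and follows essentially the same approach as the paper: define the discrepancy $\kappa_i = \bm{x_i}^\top \bl + \Delta_i - \bm{x_i}^\top \bli$ (your $b_i$), invoke Theorem~\ref{theo:main} to get $|\kappa_i|\le C_o/\sqrt{p}$, apply the mean-value theorem to $\phi(y_i,\cdot)$ so that the intermediate point lies within $C_o/\sqrt{p}$ of $\bm{x_i}^\top\bli$, bound $|\pd|$ by $c_3(n)$ via the hypothesis, and average. Your treatment of the probability via a union bound on the two complementary events is slightly more explicit than the paper's, but the argument is otherwise identical.
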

The proof of this result can be found in Section \ref{proof:corofMain}. 
As we discussed before, in all the examples we have seen so far $\frac{C_o}{\sqrt{p}}$ is $O(\frac{{\rm PolyLog} (n)}{\sqrt{n}})$. Hence, to obtain the convergence rate of $\alo$ to $\lo$ we only need to find an upper bound for $c_3(n)$. Note that usually the loss function $\ell$ that is used in the optimization problem is also used as the function $\phi$ to measure the prediction error. Hence, assuming $\phi(\cdot, \cdot) = \ell (\cdot, \cdot)$, we study the value of $c_3(n)$ for the examples we discussed in Section \ref{sec:ass}.

\begin{enumerate}
\item If $\phi$ is the loss function of Lemma \ref{lem:logistic1}, then $\left| \pd \left (y_i,  \bm{x_i}^\top \beta  \right) \right| \leq 2$, leading to $c_3(n)=2$.

\item If $\phi$ is the loss function of Lemma  \ref{lem:poisson1}, then $\left| \pd \left (y_i,  \bm{x_i}^\top \beta  \right) \right| \leq 1+ \|\bm{y}\|_\infty$. Furthermore, we proved in Lemma \ref{lem:poisson2}, that under the data generating mechanism described there, with high probability $\|\bm{y}\|_\infty < 6\sqrt{\tilde{c} \log^3 (n)}$, leading to $c_3(n)=1+6\sqrt{\tilde{c} \log^3 (n)}$. 

\item For the pseudo-Huber loss described in Lemma \ref{lem:psudoHuber}, we have $\left| \pd \left (y_i,  \bm{x_i}^\top \beta  \right) \right| \leq \gamma$, leading to $c_3(n)=\gamma$.

\item For the square loss  $\left| \pd \left (y_i,  \bm{x_i}^\top \bli + b_i   \right) \right| \leq |y_i - \bm{x_i}^\top \bli| + |b_i| \leq |y_i - \bm{x_i}^\top \bli|+ \frac{C_o}{\sqrt{p}}$. Hence, in order to obtain a proper upper bound we require more information about the estimate $\bli$. Suppose that our estimates are obtained from the optimization problem we discussed in Lemma \ref{cor:ridge}. Then, based on  \eqref{eq:upperyinfty} and \eqref{eq:xblithirdterm} in the proof of Lemma \ref{cor:ridge} in Appendix \ref{ssec:proofcorridge}
\[
\max_{i}  |y_i - \bm{x_i}^\top \bli| \leq \max_{i}  |y_i| + \max_{i} |\bm{x_i}^\top \bli| \leq 2 \sqrt{(c\tilde{c} + \sigma_\epsilon^2) \log n} + \sqrt{\frac{10c (c\tilde{c} + \sigma_\epsilon^2) \log n}{\lambda \gamma}}.
\]
with  probability at most $\frac{4}{n} + n{\rm e}^{-n+1}$, leading to
$
c_3(n) = 2 \sqrt{(c\tilde{c} + \sigma_\epsilon^2) \log n} + \sqrt{\frac{20c (c\tilde{c} + \sigma_\epsilon^2) \log n}{\lambda \gamma}}+ \frac{C_o}{\sqrt{p}}. 
$
\end{enumerate}
 
 In summary, in the high-dimensional asymptotic setting, for regularized regression methods introduced in Section \ref{sec:ass}, such as least-squares, logistic, Poisson and robust regression, with $r(\beta) = \gamma \beta^2 + (1-\gamma) r^\alpha (\beta)$, and $ 0<\gamma <1$, and assuming $\phi(\cdot, \cdot) = \ell (\cdot, \cdot)$, we have  $c_3(n) = O(\rm{PolyLog} (n))$, leading to  $\left | \alo - \lo \right | = O_p(\frac{\rm{PolyLog} (n)}{\sqrt{n}})$. In short, these examples show that $\alo$ offers a consistent estimate of $\lo$. 

\if1\longer
{
  Finally, note that in the $p$ fixed, $n \rightarrow \infty$ regime, Theorem \ref{theo:main} fails to yield  $\left|\alo - \lo \right| = o_p(1)$. This is just an artifact of our proof. In Theorem \ref{thm:alo_lo_largen}, presented in Section \ref{sec:largen_asymptot} we prove that under mild regularity conditions, error between $\alo$ and $\lo$ is $o_p(1/n)$ when $n\rightarrow \infty$ and $p$ is fixed. For the sake of brevity details are presented in Section \ref{sec:largen_asymptot}. 
} \fi

\section{Numerical Experiments}\label{sec:numsim}
\subsection{Summary}

To illustrate the accuracy and computational efficiency of $\alo$ we apply it to synthetic and real data. We generate synthetic data, and compare $\alo$ and $\lo$ for elastic-net linear regression in Section \ref{sec:num:linear}, LASSO logistic regression in Section \ref{sec:num:logistic}, and elastic-net Poisson regression in Section \ref{sec:num:poisson}. We should emphasize that our simulations are performed on a single personal computer, and we have not considered the impact of parallelization on the performance of $\alo$ and $\lo$. In other words, the simulation results reported for $\lo$ are based on its sequential implementation on a single personal computer. 
For real data, we apply LASSO, elastic-net and ridge logistic regression to sonar returns from two undersea targets in Section \ref{sec:sonar}, and we apply LASSO Poisson regression to real recordings from spatially sensitive neurons in Section \ref{sec:num:grid}. Our synthetic and real data examples  cover various data shapes  where $n > p$, $n=p$ and $n<p$. 


Figures \ref{fig:gaussian:elnet:cor}, \ref{fig:logistic:lasso:cor}, \ref{fig:poisson:elnet:cor}, \ref{fig:sonar}, and the middle-lower panel of Figure \ref{fig:T9C3}  reveal that $\alo$ offers a reasonably accurate estimate of $\lo$ for a large range of $\lambda$. These figures show that $\alo$  deteriorates for extremely small values of $\lambda$, specially when $p > n$. This is not a serious issue because the $\lambda$s minimizing  $\lo$ and $\alo$ tend to be far from those small values. 

The real data example in Section \ref{sec:sonar}, illustrating $\alo$ and $\lo$ in Figure \ref{fig:sonar}, is about  classifying sonar returns from two undersea targets using penalized logistic regression. The neuroscience example in Section \ref{sec:num:grid}  is about estimating an inhomogeneous spatial point process using an over-complete basis  from a sparsely sampled two-dimensional space. Given the spatial nature of the problem, the design matrix $\bm{X}$ is very sparse, which fails to satisfy the dense Gaussian design assumption we made in Theorem \ref{theo:main}. Nevertheless, the lower middle panel of Figure \ref{fig:T9C3} illustrates  the excellent performance of $\alo$ in approximating $\lo$ in an example where $p=10000$ and $n=3133$.

Figure  \ref{fig:time} compares the computational complexity(time) of a single fit, $\alo$ and $\lo$, as we increase $p$ while we keep the ratio $\frac{n}{p}$ fixed. We consider various data shapes, models, and penalties. Figure \ref{fig:linear_time}  shows time versus $p$ for elastic-net  linear regression when $\frac{n}{p}=5$. Figure \ref{fig:logistic_time}  shows time versus $p$ for LASSO logistic regression  when $\frac{n}{p}=1$. Figure \ref{fig:poisson_time}  shows time versus $p$ for elastic-net Poisson regression when $\frac{n}{p}=\frac{1}{10}$. Finally, the middle-lower panel of Figure \ref{fig:T9C3} shows that for the neuroscience example   $\alo$ takes 7 seconds in comparison to the 60428 seconds required by $\lo$.  All these numerical experiments illustrate the significant computational saving offered by $\alo$. As it pertains to the reported run times, all fittings in this paper were performed using a 3.1 GHz Intel Core i7 MacBook Pro with 16 GB of memory. All the codes for the figures presented in this paper are  available here  \texttt{https://github.com/RahnamaRad/ALO}.




\subsection{Simulations}\label{sec:simulation}
In all the examples in this section (\ref{sec:num:linear}, \ref{sec:num:logistic}, \ref{sec:num:poisson} and \ref{sec:time}), we let the true unknown parameter vector $\bm{\beta}^* \in \R^p$  to have $k =n/10$ non-zero coefficients. The $k$ non-zero coefficients are randomly selected, and their values are independently drawn from a zero mean unit variance Laplace distribution. The rows $\bm{x_1}^\top,\cdots,\bm{x_n}^\top $ of the design matrix $\bm{X}$ are independently drawn from $\N(0,\bm{\Sigma})$. We consider two correlation structures: 1) {\it Spiked}: $cor(X_{ij},X_{ij'})=0.5$, and 2) {\it Toeplitz}: $cor(X_{ij},X_{ij'})=0.9^{|j'-j|}$. $\bm{\Sigma}$ is scaled such that  the signal variance $\text{var}(\bm{x_i}^\top \bm{\beta}^*) = 1$ regardless of the problem dimension.  In this section, all the fittings and calculations of $\lo$ (and the  one standard error interval of $\lo$) were computed using the \texttt{glmnet} package in R \cite{FHT10}, and $\alo$ was computed using the \texttt{alocv} package in R \cite{ALOCV}.  

  
\subsubsection{Linear regression with elastic-net penalty}\label{sec:num:linear}
We set  $\ell ( y|\bm{x}^\top \bm{\beta} )=\frac{1}{2}(y-\bm{x}^\top \bm{\beta})^2$, $r(\bm{\beta})=\frac{ (1-\alpha)}{2} \| \bm{\beta} \|_2^2+ \alpha \|  \bm{\beta} \|_1$ and $\alpha=0.5$. We let the rows $\bm{x_1}^\top,\cdots,\bm{x_n}^\top$ of $\bm{X}$ to have  a {\it Spiked} covariance and to generate data, we sample  $\bm{y} \sim \N(\bm{X} \bm{\beta}^*, \bm{I}  )$. Moreover, $ \phi(y,\bm{x}^\top \bm{\beta}) = \left ( y - \bm{x}^\top \bm{\beta} \right)^2$ so that $\alo = \frac{1}{n} \sum_{i=1}^n \left( \frac{y_i - \bm{x_i}^\top \bl}{1 - H_{ii}} \right)^2$ with $ \bm{H}=   \bm{X}_S \left (  \bm{X}_S^\top  \bm{X}_S + \lambda (1-\alpha) \bm{I }  \right)^{-1}  \bm{X_S}^\top$. For various data shapes, that is  $\frac{n}{p} \in \{5,\ 1,\ \frac{1}{10} \}$, we depict results in Figure \ref{fig:gaussian:elnet:cor} where reported times  refer to the required time to fit the model, compute $\alo$ and $\lo$ for a sequence of 30 logarithmically spaced tuning parameters from $1$ to $100$. 




\begin{figure} \vspace{-2.5cm}
    \centering
    \begin{subfigure}[b]{0.3\textwidth}
        \includegraphics[width=\textwidth]{./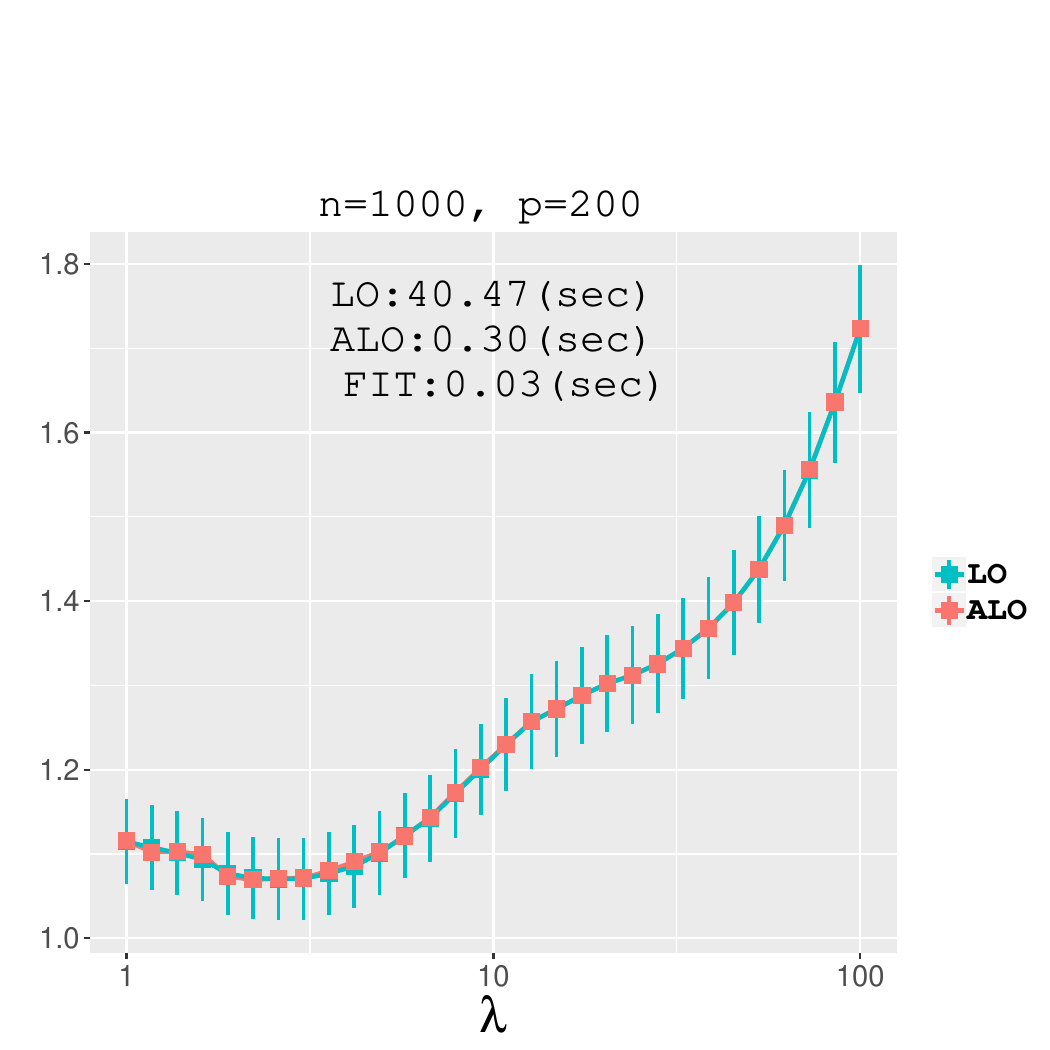}
        \caption{$n>p$}
    \end{subfigure}
    ~ 
    \begin{subfigure}[b]{0.3\textwidth}
        \includegraphics[width=\textwidth]{./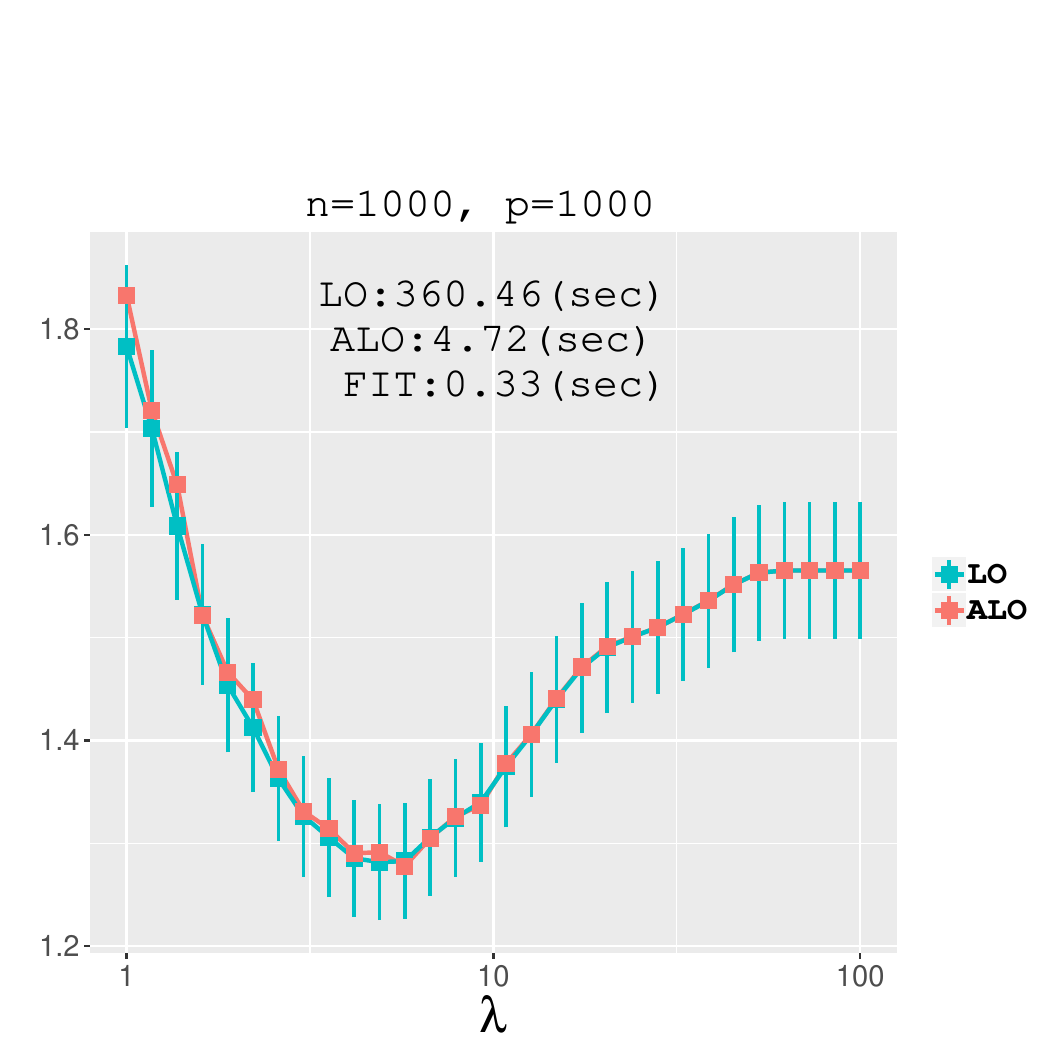}
        \caption{$n = p$}
    \end{subfigure}
    \begin{subfigure}[b]{0.3\textwidth}
        \vspace{1cm}
        \includegraphics[width=\textwidth]{./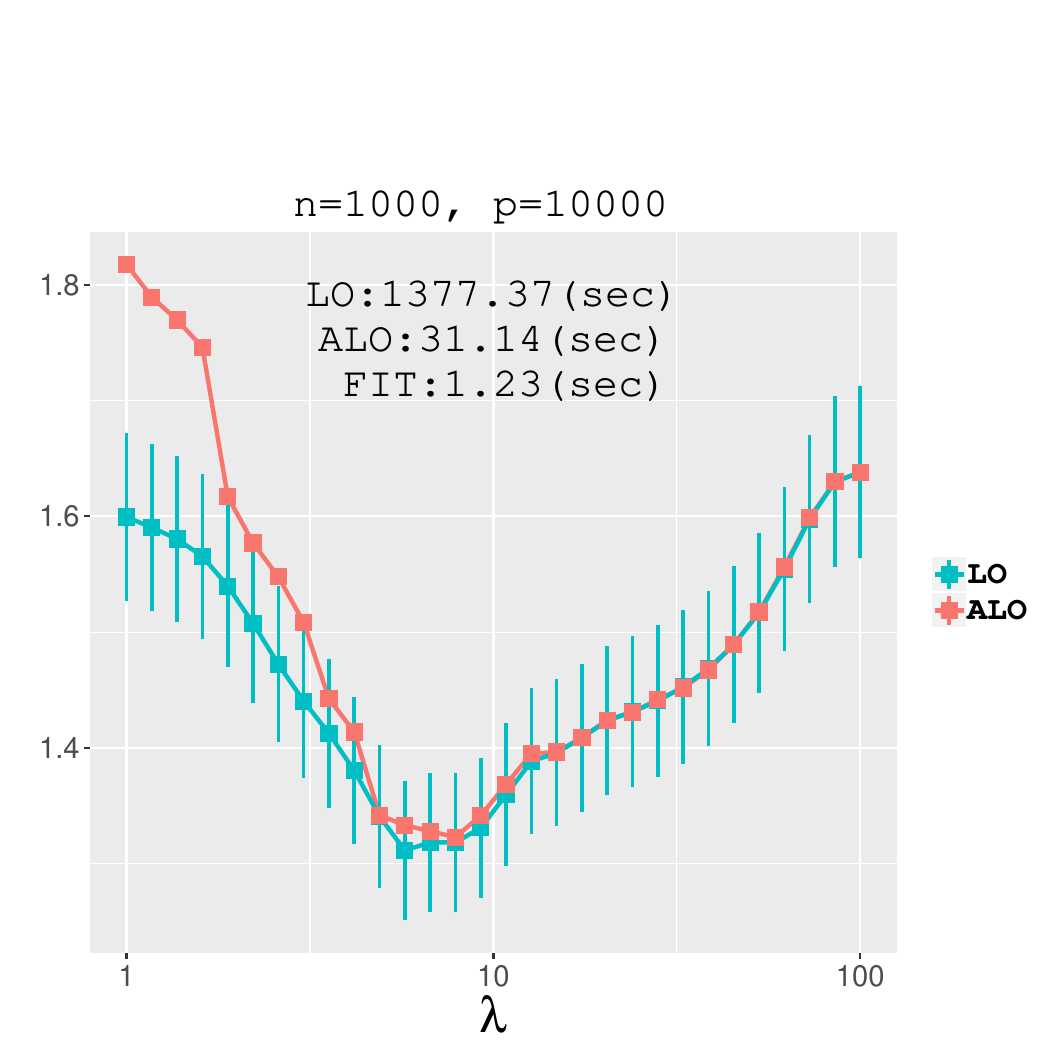}
        \caption{$n < p $}
    \end{subfigure}
    \caption{The $\alo$ and $\lo$ mean square error for elastic-net linear regression. The red error bars identify the one standard error interval of $\lo$.   
  }\label{fig:gaussian:elnet:cor}
\end{figure}

\subsubsection{Logistic regression with LASSO penalty}\label{sec:num:logistic}

\begin{figure} \vspace{-1.5cm}
    \centering
    \begin{subfigure}[b]{0.3\textwidth}
        \includegraphics[width=\textwidth]{./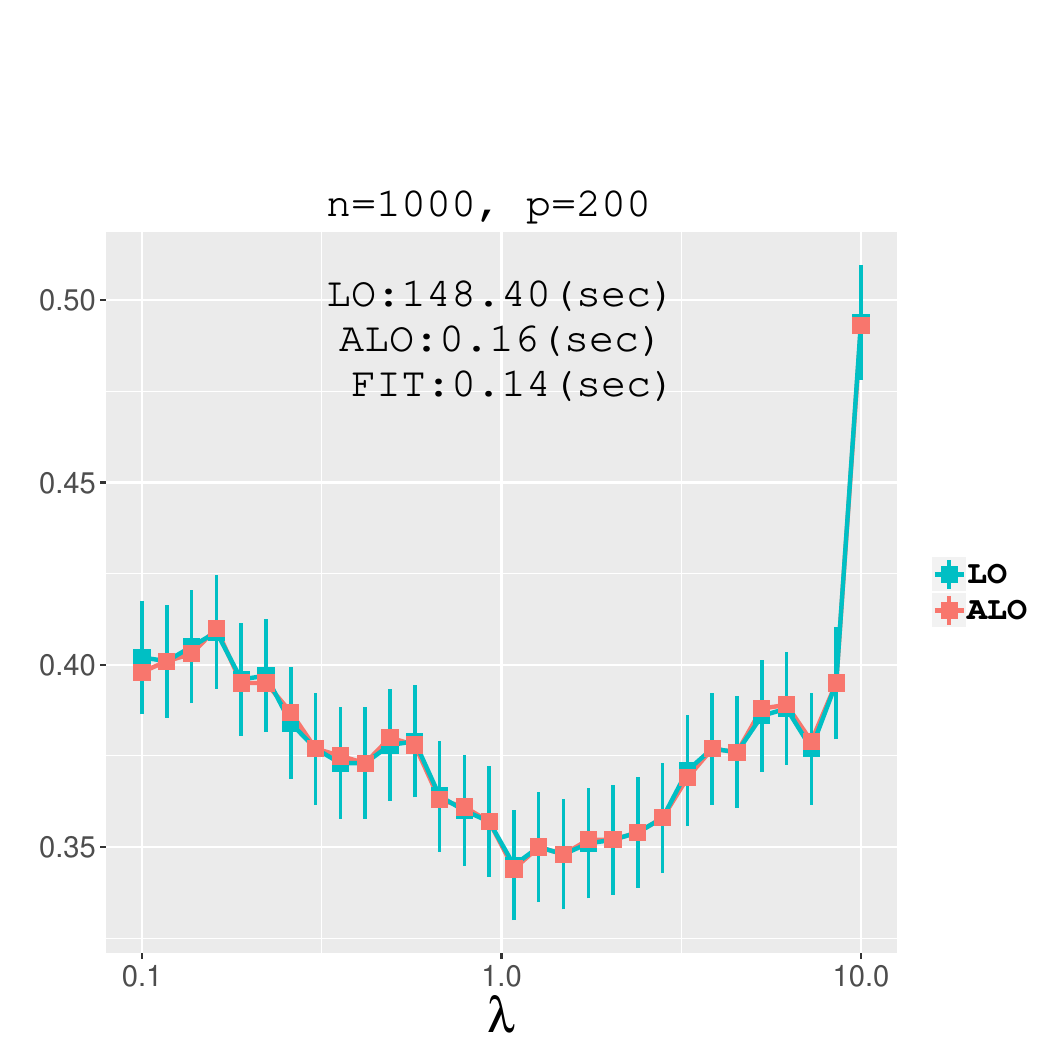}
        \caption{$n>p$}
    \end{subfigure}    
    \begin{subfigure}[b]{0.3\textwidth}
        \vspace{1cm}
        \includegraphics[width=\textwidth]{./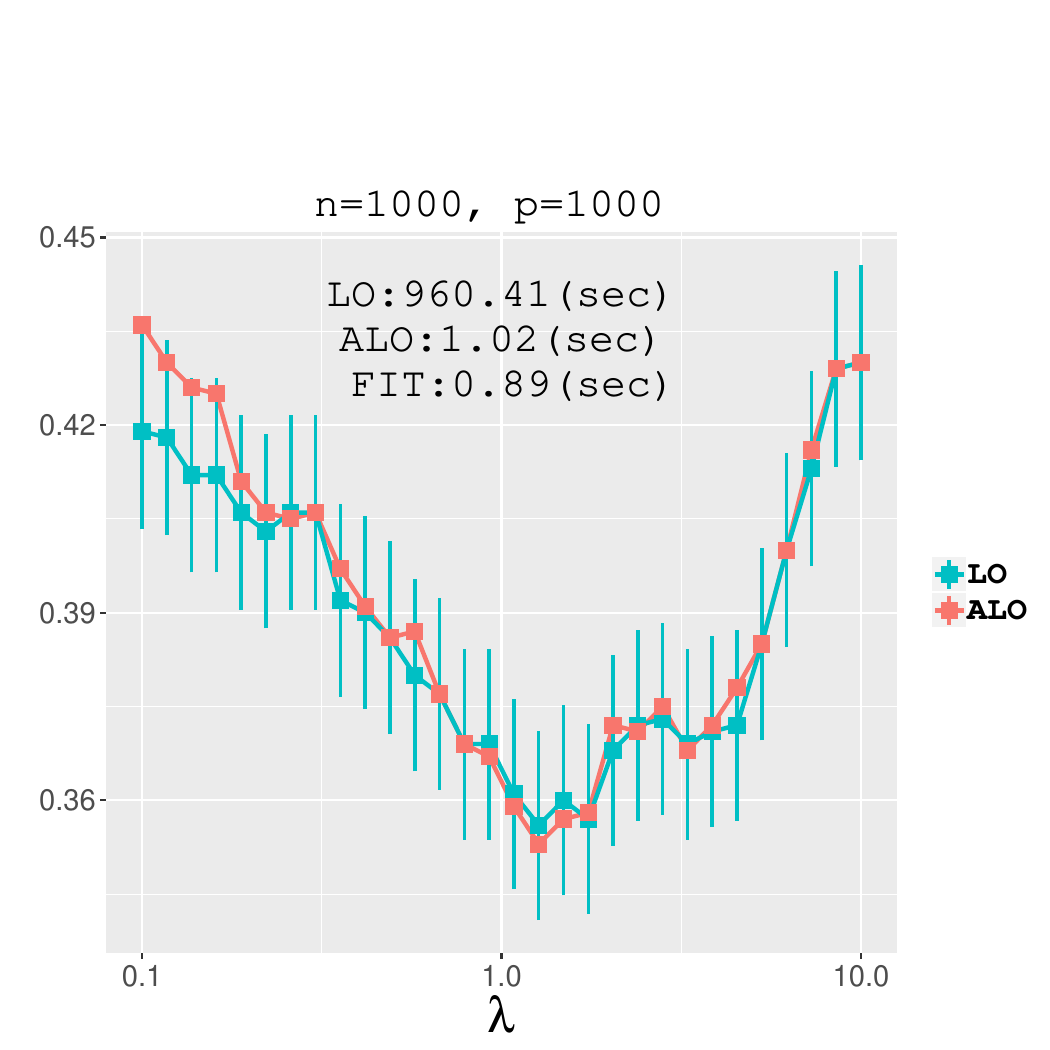}
        \caption{$n = p$}
    \end{subfigure}
        ~ 
    \begin{subfigure}[b]{0.3\textwidth}
        \includegraphics[width=\textwidth]{./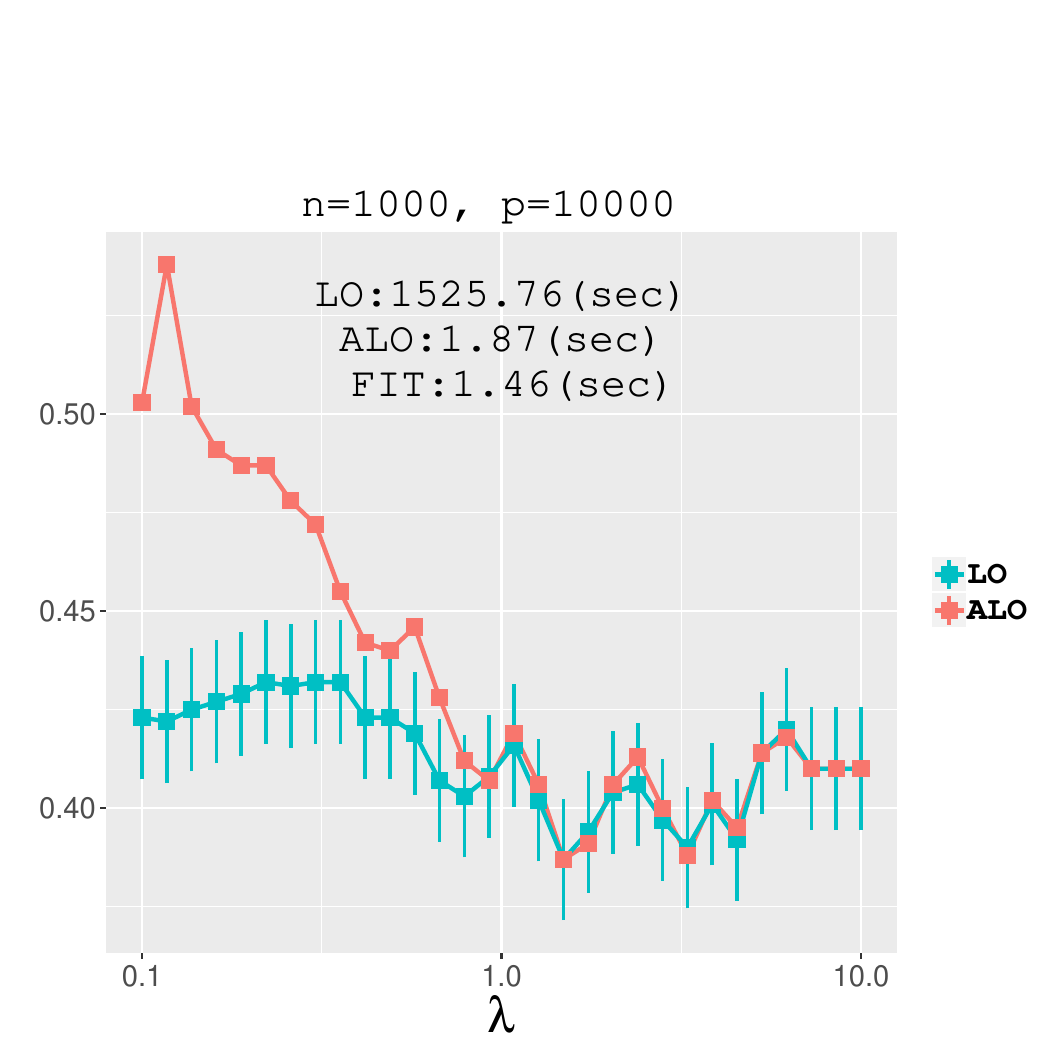}
        \caption{$ n < p$}
    \end{subfigure}
    \caption{The $\alo$ and $\lo$ misclassification errors (as a function of $\lambda$) for LASSO logistic regression. The red error bars identify the  one standard error interval of $\lo$. }\label{fig:logistic:lasso:cor}
\end{figure}


We set $\ell ( y|\bm{x}^\top \bm{\beta} )=-y\bm{x}^\top \bm{ \beta} +\log (1 + e^{\bm{x}^\top \bm{ \beta}})$ (the negative logistic log-likelihood) and $r(\bm{\beta})=\|  \bm{\beta} \|_1$. We let the rows $\bm{x_1}^\top,\cdots,\bm{x_n}^\top$ of $\bm{X}$ to have  a {\it Toeplitz} covariance and to generate data, we sample    $y_i \sim Binomial \left(\frac{e^{ \bm{x_i}^\top \bm{\beta}^*}}{1 + e^{\bm{x_i}^\top \bm{\beta}^*}}\right)$. We take the misclassification rate as our measure of error, and $1_{ \{ \bm{x}^\top \bm{\beta} > 0 \} }$ as prediction, where $1_{ \{ \cdot \}}$ is the indicator function, so that 
 \begin{eqnarray*}
 \alo = \frac{1}{n} \sum_{i=1}^n   \left |  y_i - 1_{ \{\bm{x_i}^\top \bm{\hat \beta} + \frac{\ld_i(\bm{\hat \beta})}{\ldd_i(\bm{\hat \beta})} \frac{H_{ii}}{1-H_{ii}}  > 0  \}} \right |
 \end{eqnarray*}  
 where $\bm{H}= \bm{X}_S \left ( \bm{X}_S^\top \diag[\bm{\ldd} (\bm{\hat{\beta}})]   \bm{X}_S  \right)^{-1} \bm{X}_S^\top \diag[\bm{\ldd} (\bm{\hat{\beta}})]
$, $\ld_i(\bl) =  \left(1 + e^{-\bm{x_i}^\top \bm{\hat \beta}}\right)^{-1} - y_i$ and $\ldd_i(\bl) = e^{\bm{x_i}^\top \bm{\hat \beta}}\left(1 + e^{\bm{x_i}^\top \bm{\hat \beta}} \right)^{-2}$. For various data shapes, that is  $\frac{n}{p} \in \{5,\ 1,\ \frac{1}{10} \}$, we depict results in Figure \ref{fig:logistic:lasso:cor}
where reported times  refer to the required time to fit the model, compute $\alo$ and $\lo$ for a sequence of 30 logarithmically spaced tuning parameters from $0.1$ to $10$.





\subsubsection{Poisson regression with elastic-net penalty}\label{sec:num:poisson}
We set $\ell ( y|\bm{x}^\top \bm{\beta} )=e^{y\bm{x}^\top \bm{ \beta}} - y \bm{x}^\top \bm{ \beta}$ (the negative Poisson log-likelihood), $r(\bm{\beta})=\frac{ (1-\alpha)}{2} \| \bm{\beta} \|_2^2+ \alpha \|  \bm{\beta} \|_1$ and $\alpha=0.5$. We let the rows $\bm{x_1}^\top,\cdots,\bm{x_n}^\top$ of $\bm{X}$ to have  a {\it Spiked} covariance and to generate data, we sample $y_i \sim Poisson \left(e^{ \bm{x_i}^\top \bm{\beta^*} }\right)$. We use the mean absolute error as our measure of error, and $e^{\bm{x}^\top \bm{\beta}}$ as prediction,  so that 
\begin{eqnarray*}
\alo &=& \frac{1}{n} \sum_{i=1}^n   \bigl |y_i - e^{\bm{x_i}^\top \bm{\hat \beta} + \frac{\ld_i(\bm{\hat \beta})}{\ldd_i(\bm{\hat \beta})} \frac{H_{ii}}{1-H_{ii}}} \bigr|
\end{eqnarray*}
  where $\bm{H}= \bm{X}_S \left ( \bm{X}_S^\top \diag[\bm{\ldd} (\bm{\hat{\beta}})]   \bm{X}_S + \lambda (1-\alpha) \bm{I } \right)^{-1} \bm{X}_S^\top \diag[\bm{\ldd} (\bm{\hat{\beta}})]$, $\ld_i(\bl) =  e^{\bm{x_i}^\top \bm{\hat \beta}} - y_i$, and $\ldd_i(\bl) = e^{\bm{x_i}^\top \bm{\hat \beta}}$.
For various data shapes, that is  $\frac{n}{p} \in \{5,\ 1,\ \frac{1}{10} \}$, we depict  results in Figure \ref{fig:poisson:elnet:cor} where reported times  refer to the required time to fit the model, compute $\alo$ and $\lo$ for a sequence of 30 logarithmically spaced tuning parameters from $1$ to $100$. 


%

\begin{figure} \vspace{-2cm}
    \centering
    \begin{subfigure}[b]{0.3\textwidth}
        \includegraphics[width=\textwidth]{./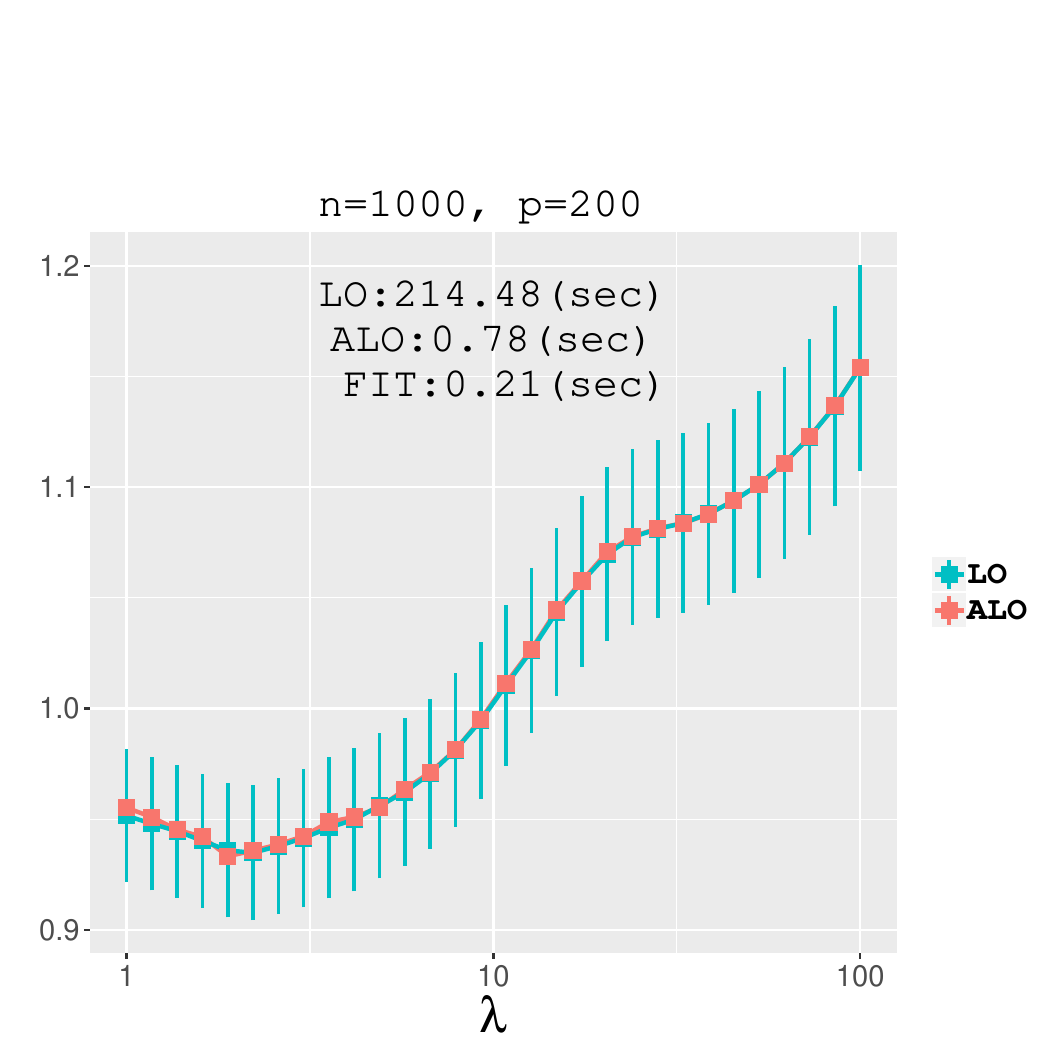}
        \caption{$n>p$}
    \end{subfigure}    
    \begin{subfigure}[b]{0.3\textwidth}
        \vspace{1cm}
        \includegraphics[width=\textwidth]{./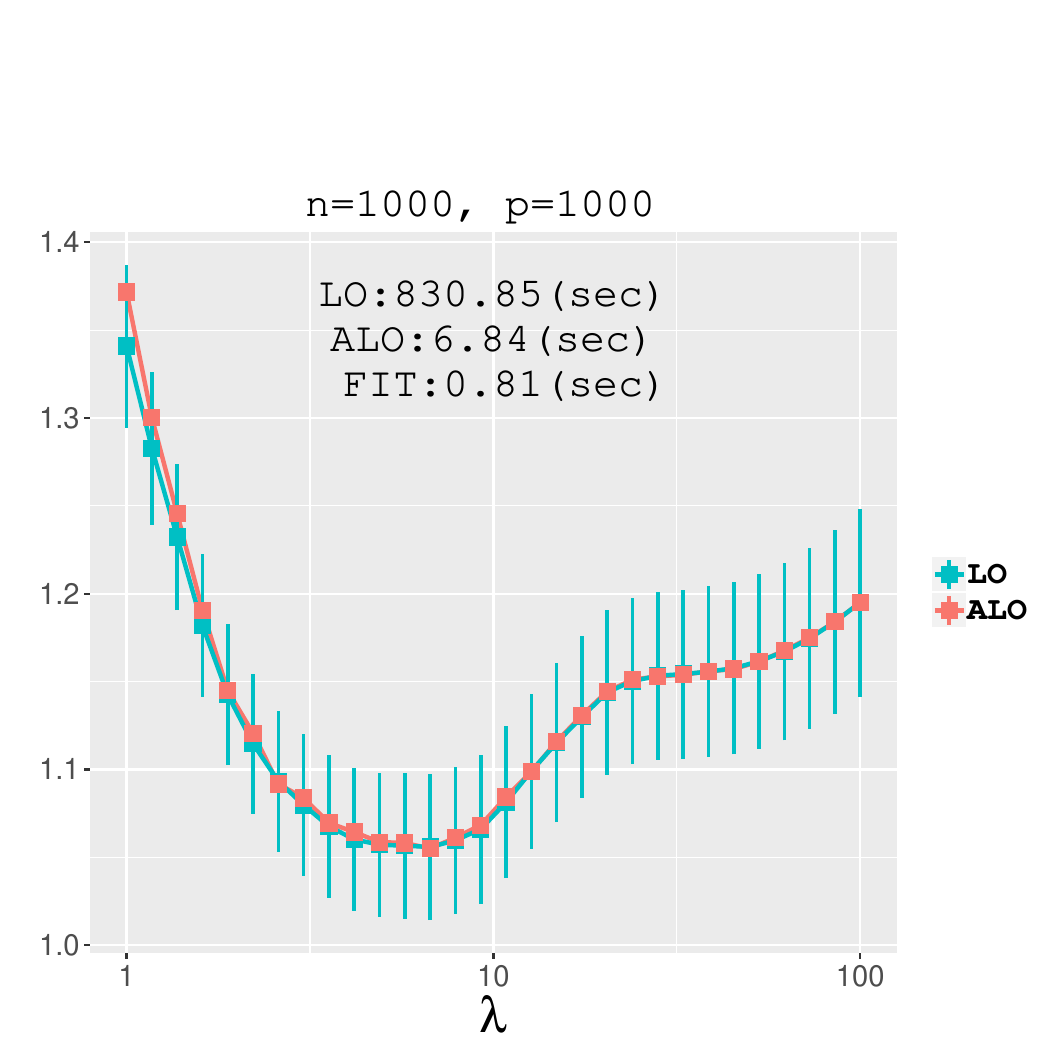}
        \caption{$n =p$}
    \end{subfigure}
        ~ 
    \begin{subfigure}[b]{0.3\textwidth}
        \includegraphics[width=\textwidth]{./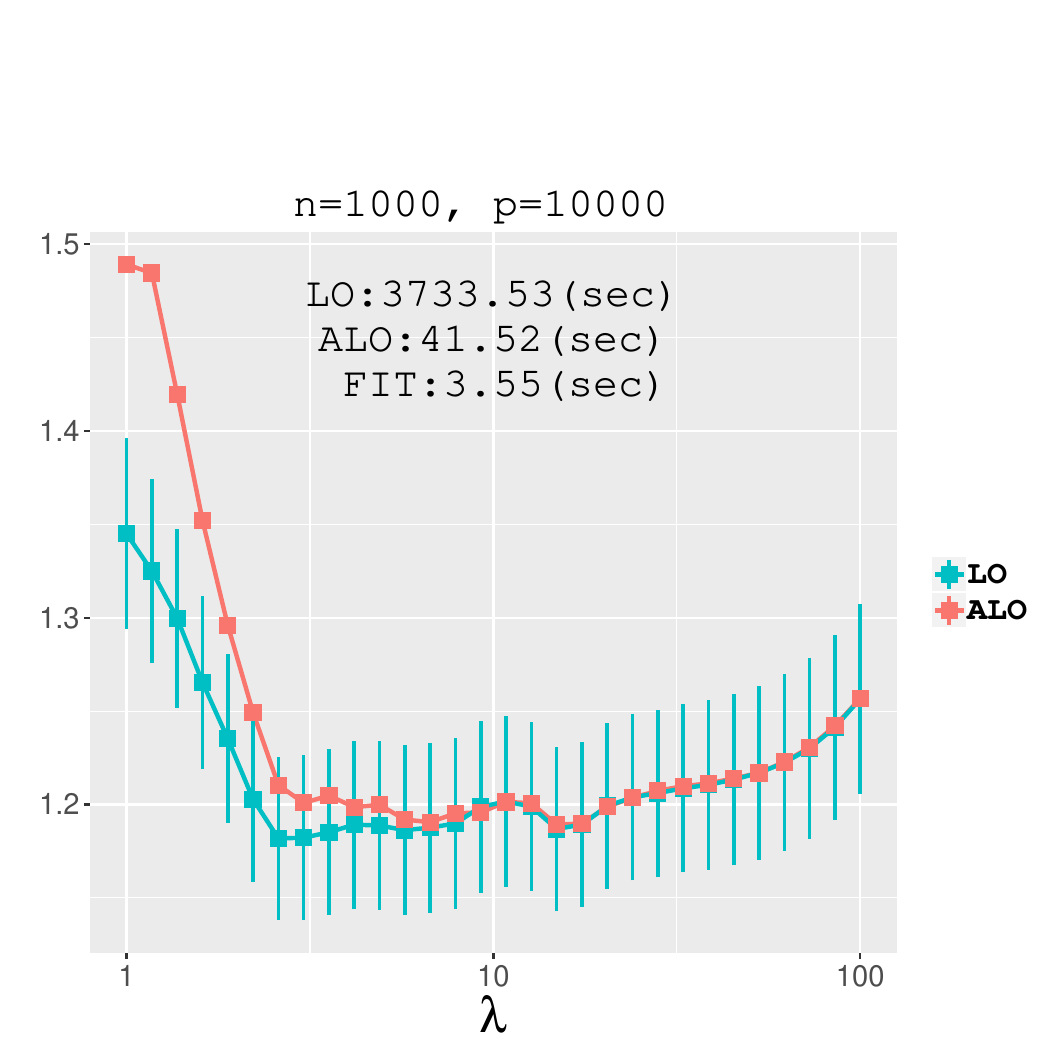}
        \caption{$ n < p$}
    \end{subfigure}
    \caption{The $\alo$ and $\lo$ mean absolute errors (as a function of $\lambda$) for elastic-net Poisson regression. The red error bars identify the  one standard error interval of $\lo$. }\label{fig:poisson:elnet:cor}
\end{figure}



\subsubsection{Timing simulations}\label{sec:time}
To compare the timing of $\alo$ with that of $\lo$, we consider the following scenarios:
\begin{itemize}
\item  Elastic-net linear regression, with rows of the design matrix having a  \textit{Spiked} covariance,  data generated as described in Sections \ref{sec:simulation} and \ref{sec:num:linear}, and considered  for a sequence of 10 logarithmically spaced tuning parameters from $1$ to $100$. We let $\frac{n}{p} = 5$.
\item LASSO logistic regression, with rows of the design matrix having a  \textit{Toeplitz} covariance,   data generated as described in Sections \ref{sec:simulation} and \ref{sec:num:logistic}, and considered  for a sequence of 10 logarithmically spaced tuning parameters from $0.1$ to $10$. We let $\frac{n}{p} = 1$.
\item Elastic-net Poisson regression, with rows of the design matrix having a  \textit{Spiked} covariance,  data generated as described in Sections \ref{sec:simulation} and \ref{sec:num:poisson}, and considered  for a sequence of 10 logarithmically spaced tuning parameters from $1$ to $100$. We let $\frac{n}{p} = \frac{1}{10}$.
\end{itemize}
The timings of a single fit,  ALO and  LO versus model complexity $p$ are
illustrated in Figure \ref{fig:time}. The reported timings are obtained by recording the time required to find a single fit  and LO using the \texttt{glmnet} package in R \cite{FHT10}, and to find ALO using  the \texttt{alocv} package in R \cite{ALOCV}, all along the tuning parameters above. This process is repeated 5 times to obtain the average timing.


\subsection{Real Data}

\subsubsection{Sonar data}\label{sec:sonar}

Here we use ridge, elastic-net and LASSO logistic regression to classify sonar returns collected from a metal cylinder and a
cylindrically shaped rock positioned on a sandy ocean floor. The data consists of a set of $n=208$ returns, 111 cylinder returns and 97 rock returns, and $p=60$ spectral features extracted from the returning signals \cite{GS88}. We use the misclassification rate as our measure of error. Numerical results comparing $\alo$ and $\lo$ for ridge, elastic-net and LASSO logistic regression are depicted in Figure \ref{fig:sonar}. The single fit and $\lo$ (and the  one standard error interval of $\lo$) were computed using the \texttt{glmnet} package in R \cite{FHT10}, and $\alo$ was computed using the \texttt{alocv} package in R \cite{ALOCV}. The values of the tuning parameters are a sequence of 30 logarithmically spaced tuning parameters between two value automatically selected  by the \texttt{glmnet} package.

\begin{figure} 
    \centering
    \begin{subfigure}[b]{0.3\textwidth}
        \includegraphics[width=\textwidth]{./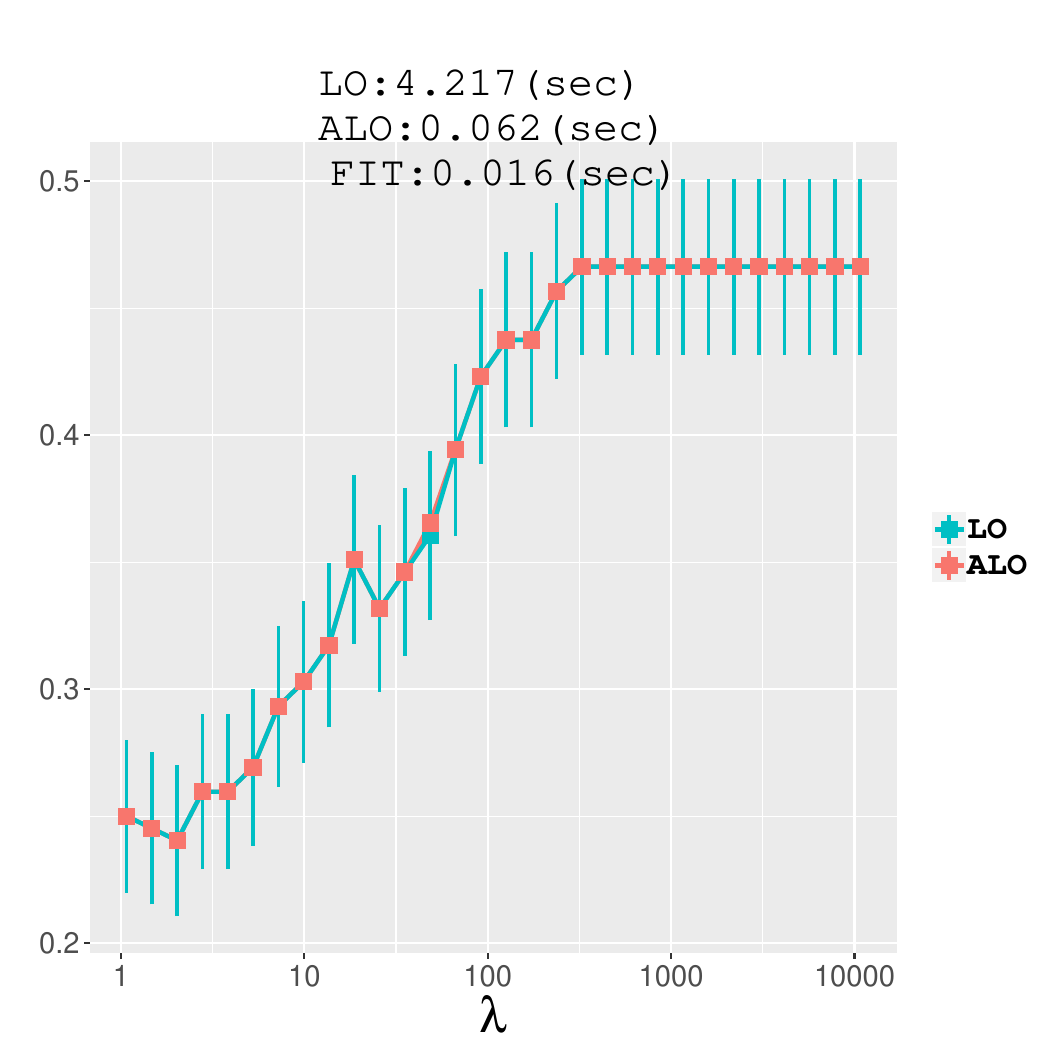}
        \caption{Ridge}
    \end{subfigure}    
    \begin{subfigure}[b]{0.3\textwidth}
        \vspace{1cm}
        \includegraphics[width=\textwidth]{./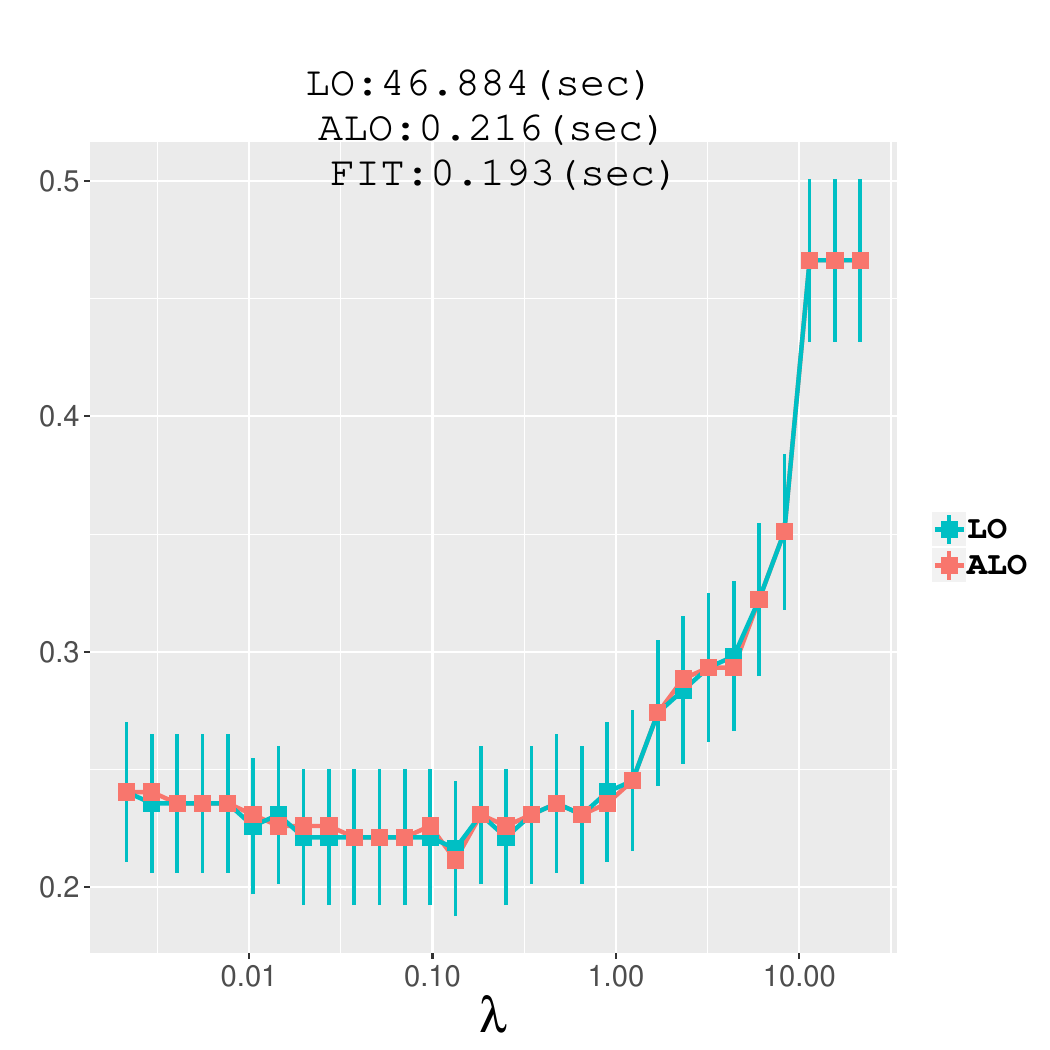}
        \caption{Elastic-net}
    \end{subfigure}
        ~ 
    \begin{subfigure}[b]{0.3\textwidth}
        \includegraphics[width=\textwidth]{./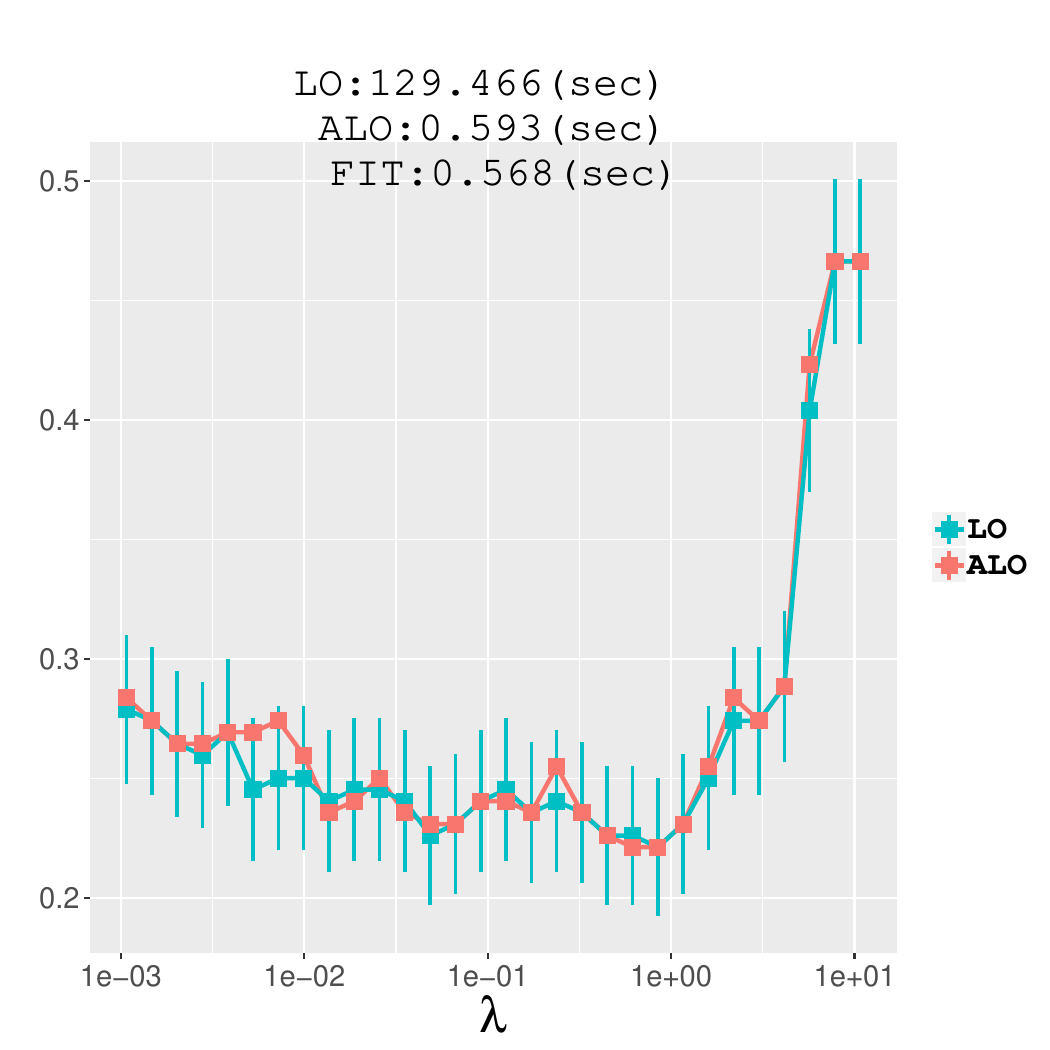}
        \caption{LASSO}
    \end{subfigure}
    \caption{The $\alo$ and $\lo$ deviances (as a function of $\lambda$) for penalized logistic regression applied to the sonar data (Section \ref{sec:sonar}) where $n=208$ and $p=60$. The red error bars identify the  one standard error interval of $\lo$. }\label{fig:sonar}
\end{figure}

\subsubsection{Spatial point process smoothing of grid cells: a neuroscience application}\label{sec:num:grid}

In this section, we compare $\alo$ with $\lo$ on a real dataset. This dataset includes electrical recordings  of single neurons in the entorhinal cortex, an area in the brain found to be particularly responsible for the navigation and perception of space in mammals \cite{MKM08}. The entorhinal cortex  is also one of the areas pathologically affected in the early stages of Alzheimer's disease, causing symptoms of spatial disorientation \cite{K14}. Moreover, the entorhinal cortex  provides input to another area, the Hippocampus, which is  involved in the cognition of space and the formation of episodic memory \cite{BM13}. 

Electrical recordings  of single neurons in the medial domain of the entorhinal cortex (MEC) of freely moving rodents have revealed spatially modulated neurons, called grid cells, firing action potentials only around the vertices of two dimensional hexagonal lattices covering the environment in which the animal navigates. The hexagonal firing pattern of a single grid cell is illustrated in the left panel of Figure \ref{fig:grid_example}. These grid cells can be categorized according to the orientation of their triangular grid, the wavelength (distance between the vertices ), and the phase (shift of the whole lattice). See the right panel of Figure \ref{fig:grid_example} for an illustration of the orientation and wavelength of a single grid cell. 
 

The data we analyze here consists of extra cellular recordings of several grid cells, and the simultaneously recorded location of the rat within a 300cm $\times$ 300cm box for roughly 20 minutes\footnote{The source of the data is \cite{SSSKMM12}. For a video of a single grid cell recorded in the MEC see the clip  {https://www.youtube.com/watch?v=i9GiLBXWAHI}.}. Since the number of spikes fired by a grid cell depends mainly on the location of the animal, regardless of the animal's speed and running direction \cite{HFMMM05}, it is reasonable to summarize this spatial dependency in terms of a rate map $\eta(\bm{r})$, where $\eta(\bm{r}) dt$ is the expected number of spikes emitted by the grid cell in a fixed time interval $dt$, given that the animal is located at position $\bm{r}$ during this time interval \cite{Kamiar08,PRHP14,DMR15}.  In other words, if the rat passes the same location again, we again expect the grid cell to fire at more or less the same rate\footnote{It is known that these rate maps can in some cases change with time but in most cases it is reasonable to assume them to be constant. Moreover, the two dimensional surface represented by  $\eta(\bm{r})$ is not the same for different grid cells.}, specifically according to a Poisson distribution with mean $\eta(\bm{r}) dt$. For each grid cell, the estimation of the rate map $\eta(\bm{r})$  is a first step toward  understanding the cortical circuitry underlying spatial cognition \cite{RRMM16}. Consequently, the estimation of firing fields without contamination from measurement noise or bias from overs-smoothing will help to clarify important questions about neuronal algorithms underlying navigation in real and mental spaces \cite{BM13}.

\begin{figure}
\vspace{-7cm}
\hspace{1.7cm}
        \includegraphics[scale=0.65]{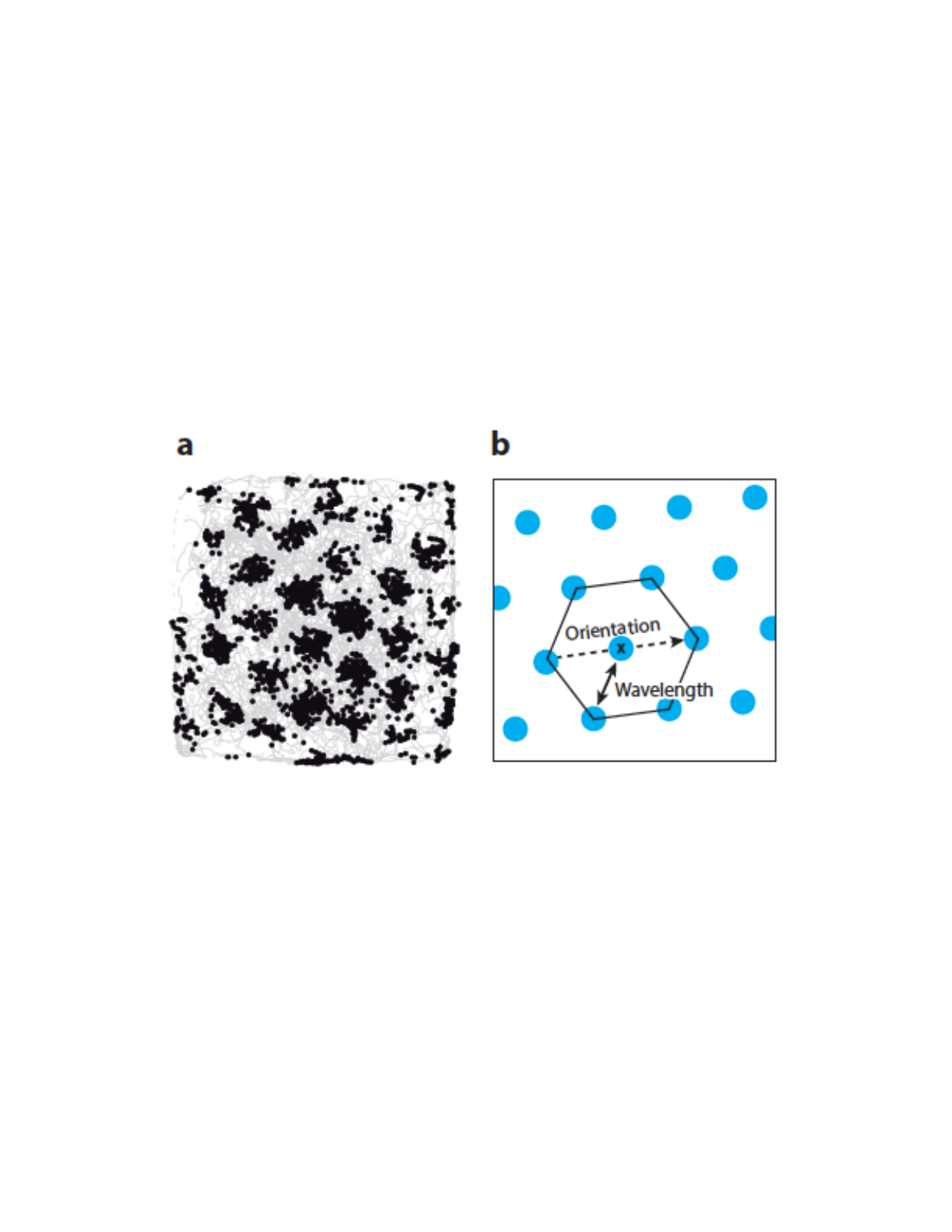}        \vspace{-6cm}
        \caption{Left: Spike locations (black) are superimposed on the animal's trajectory(grey). Firing fields are
areas covered by a cluster of action potentials. Right: The firing fields of a grid cell form a periodic triangular matrix tiling
the entire environment available to the animal. Figure is adapted from \cite{MMR14}.}\label{fig:grid_example}
\end{figure}

To be concrete, we discretize the two dimensional space into an $m \times m$ grid, and discretize time into bins with width $dt$. In this example, $dt$ is 0.4 seconds and $m$ is 50. The experiment is 1252.9 seconds long, and therefore we have $ \left \lceil{\frac{1252.9}{0.4}}\right \rceil  =3133$ time bins. In other words, $n=3133$. We use $y_i \in \{ 0,1,2,3,\cdots \}$ to denote the number of action potentials observed in  time interval $[(i-1)dt, i dt)$, where $i=1,\cdots,n$. Moreover, we use $\bm{r_i} \in \R^{m^2}$ to denote a vector composed of zeros except for a single +1 at the entry corresponding to the animal's location within the $m \times m$ grid during the time interval $[(i-1)dt, i dt)$. We assume a log-linear model $\log \eta(\bm{r}) = \bm{r}^\top \bm{z}$, relating the firing rate at location $\bm{r} \in \R^{m^2}$ to the latent vector $\bm{z}$ where the $m \times m$ latent spatial process responsible for the observed spiking activity is unraveled into $\bm{z} \in \R^{m^2}$. The firing rate can be written as $ \eta(\bm{r_i}) =\exp \left(\bm{ r_i}^\top \bm{z} \right)$.  Due to this notation, $\bm{r_i}^\top \bm{z}$ is the value of $\bm{z}$ at the animal's location during the time interval $[(i-1)dt, i dt)$. In this vein, the distribution of observed spiking activity can be written as
 \begin{eqnarray}
 p(y_i | \bm{r_i} ) &=& \frac{e^{-\eta(\bm{r_i})} \eta(\bm{r_i})^{y_i}}{y_i !}. \label{eq:poiss}
\end{eqnarray}
\begin{figure}\hspace{1cm}
        \includegraphics[scale=0.4]{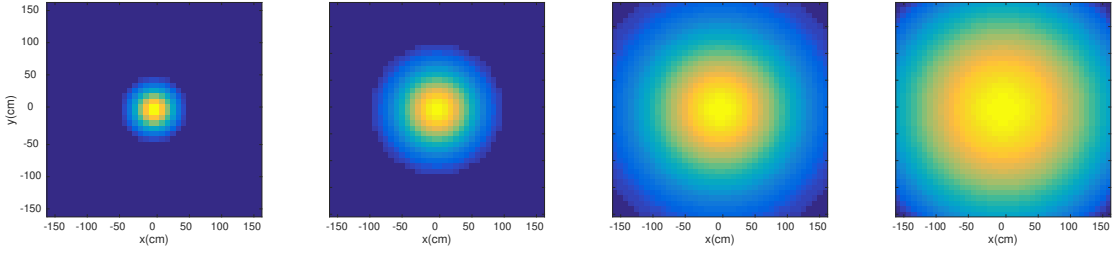}        
        \caption{The four truncated Gaussian bumps   }\label{fig:basis}
\end{figure}
As mentioned earlier, the main goal is to estimate the two dimensional rate map $\eta(\cdot)$, and a large body of work has addressed the problem of estimating a smooth rate map from neural data \cite{DGK01,GAO02,Kass05,Cunningham07,CZANNER05,CGRS09,PAFKRVVW10,Kamiar08,MGWKB11,PRHP14}. Here we employ an over-complete basis to account for the spatially localized sensitivity of grid cells. Since it is known that the rate map of any single grid cell consists of bumps of elevated firing rates,  located at various points in the two dimensional space, as illustrated in the left panel of Figure \ref{fig:grid_example}, it is reasonable to represent  $\bm{z}$ as a linear combination of $\{\bm{\psi_1},\ldots,\bm{\psi_p} \}$, an over-complete basis in $\R^p$\cite{BROW01,PRHP14,DMR15}. We compose the over-complete basis using truncated Gaussian bumps with various scales, distributed at all pixels. The four basic Gaussian bumps we use are depicted in Figure \ref{fig:basis}. Since we use four truncated Gaussian bumps for each pixel, in this example, we have a total of $p=4m^2=10000$ basis functions. We employ the  truncated Gaussian bumps $ e^{- \frac{1}{2\sigma^2} (u_x^2 + u_y^2)   }1_{ \left\{ \exp \left ( - \frac{1}{2\sigma^2} (u_x^2 + u_y^2)   \right) > 0.05 \right\} }$ where $u_x$ and $u_y$ are the horizontal and vertical coordinates. Define  $\bm{\Psi} \in \R^{m^2 \times p}$ as a matrix composed of columns $\{\bm{\psi_1},\ldots,\bm{\psi_p} \}$. Furthermore, define $ \bm{\tilde x_i} \in \R^p$ as  $ \bm{\tilde x_i} \triangleq \bm{\Psi}^\top \bm{r_i}$, and define $\bm{\tilde X} \in \R^{n \times p}$ as a matrix composed of rows $\{\bm{\tilde x_1}^\top, \ldots,\bm{\tilde x_n}^\top \}$. We normalize the columns of $\bm{\tilde X}$, calling the resulting matrix  $\bm{X}$. The columns of $\bm{X} \in \R^{n \times p}$ are unit normed. Formally, $\bm{X} = \bm{\tilde X \Gamma}^{-1}$ where $\bm{\Gamma} \in \R^{p \times p}$ is a diagonal matrix filled with the column-norms of $\bm{\tilde X}$. We use $\{\bm{x_1}^\top,\ldots, \bm{x_n}^\top \}$ to refer to the rows of $\bm{X}$, yielding $\eta(\bm{r_i}) = \exp\left( \bm{x_i}^\top \bm{\beta}\right)$. Note that due to the above mentioned rescaling, we have the following relationship between the latent map $\bm{z}$ and $\bm{\beta}$:
$\bm{z} = \bm{\Psi \Gamma \beta}$.
Sparsity of $\bm{\beta}$ refers to our prior understanding that the rate map of a grid cells consists of bumps of elevated firing rates,  located at various points in the two dimensional space, and therefore, our estimation problem is as follows:  
 \begin{eqnarray*}
 \bl &\triangleq&  \underset{\bm{\beta} \in \R^p}{\argmin}  \Bigl \{ \sum_{i=1}^n  \left [  \eta(\bm{r_i}) - y_i \log \eta(\bm{r_i})  \right]    + \lambda \|  \bm{\beta} \|_1  \Bigr \},
\\
  &=&  \underset{\bm{\beta} \in \R^p}{\argmin}  \Bigl \{  \sum_{i=1}^n \left [ \exp(\bm{x_i}^\top  \bm{\beta})- y_i \bm{x_i}^\top  \bm{\beta}   \right]   + \lambda \|  \bm{\beta} \|_1  \Bigr \}.
 \end{eqnarray*}
 Here we use the negative log-likelihood in equation \eqref{eq:poiss} as the cost function, that is, $\phi(y,\bm{x}^\top \bm{\beta})= y \bm{x}^\top  \bm{\beta}-\exp(\bm{x}^\top  \bm{\beta})+\log y!$. We remind the reader that we will use $\alo$ formula that was obtained in Theorem \ref{thm:lasso_approx}. Figures \ref{fig:T9C3}  illustrate that $\alo$ is reasonable approximation of $\lo$, allowing computationally efficient tuning of $\lambda$. To see the effect of $\lambda$ of the rate map, we also present the maps resulting from small and large values of $\lambda$, leading to under and over smooth rate maps, respectively. As it pertains to the reported run times, all fittings in this section were performed using the \texttt{glmnet} package \cite{QHFTS13} in MATLAB.


\begin{figure}\vspace{-0.8cm}
\hspace{1.5cm}
        \includegraphics[scale=0.4]{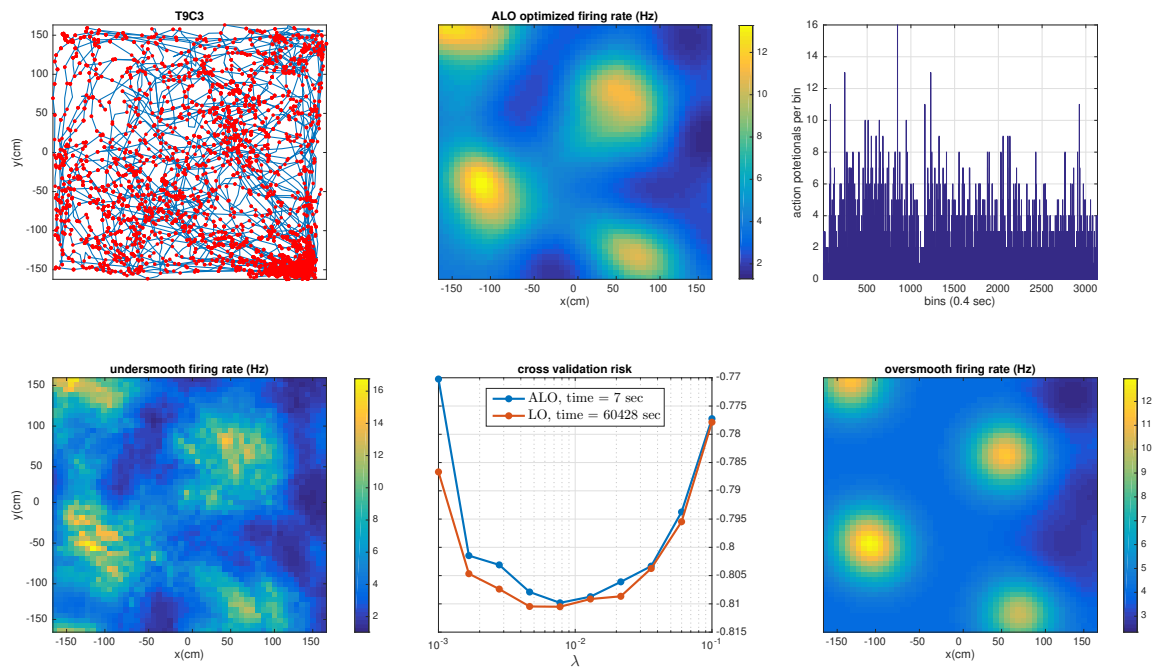}        
        \caption{Top left: Spike locations (red) are superimposed on the animal's trajectory(black). Firing fields are
areas covered by a cluster of action potentials. The firing fields of a grid cell form a periodic triangular matrix tiling
the entire environment available to the animal. Top middle: $\alo$-based firing rate. Top right: $\lo$-based firing rate.  Bottom left: $\lambda=0.001$-based firing rate. Bottom middle: $\alo$ and $\lo$ over a wide range of $\lambda$s. Bottom right: $\lambda=0.1$-based firing rate. }\label{fig:T9C3}
\end{figure}

\section{Concluding Remarks}\label{sec:conc}
Leave-one-out cross validation ($\lo$) is an intuitive and conceptually simple risk estimation technique. Despite its low bias in estimating the extra-sample prediction error, the high computational complexity of $\lo$ has limited its applications for high-dimensional problems. In this paper, by combining a single step of the Newton method  with  low-rank matrix identities, we obtained an approximate formula for $\lo$, called $\alo$. We  showed how $\alo$ can be applied to popular non-differentiable regularizers, such as LASSO. With the aid of theoretical results and numerical experiments, we showed that $\alo$  offers a computationally efficient and statistically accurate estimate of the extra-sample prediction error in high-dimensions.

Important directions for future work involve various approximations that further reduce the computational complexity. The computational bottleneck of $\alo$ is the inversion of the large generalized hat matrix $\bm{H}$. This can make the application of $\alo$ to ultra high dimensional problems computationally challenging. Since the diagonals of our $\bm{H}$ matrix can be represented as leverage scores of an augmented $\bm{X}$ matrix,  scalable methods to approximately compute the  leverage score may offer a promising avenue for future work. For example \cite{DMMW12} offers a randomized method to estimate the leverage scores.  However, the randomized algorithm presented in \cite{DMMW12}  applies to the $p \ll n $ case, making it challenging to apply these methods to high-dimensional settings where
$p$ is also very large. Nevertheless this is certainly a promising direction for speeding up $\alo$. 

In another line of work, the generalized cross-validation approach \cite{CW79,GHW79}  approximates the diagonal elements of $\bm{H}$  with $\tr(\bm{H})/n$. Computationally efficient randomized estimates of $\tr(\bm{H})$ can be produced without having any explicit calculations of this matrix \cite{DG91,WJGG95,G98,LWXGKK00}. The theoretical study of the additional errors introduced by these randomized approximations, and the scalable implementations of them is another promising avenue for future work.

\small
\bibliographystyle{apalike}
\bibliography{myrefs}


\newpage
\appendix

\section{Proofs (FOR ON-LINE PUBLICATION ONLY)}\label{sec:proof}

\subsection{Several concentration results for Gaussian random vectors and matrices}

In this section, we mention a few concentration results that will be used multiple times in the proofs of our main results. We standard and well-known Gaussian tail bound:

\begin{lemma}\label{lem:GaussianTail}
Let $Z\sim N(0,1)$. Further assume that $t>1$. Then,
\[
 P(Z >t) \leq \frac{1}{\sqrt{2 \pi}} {\rm e}^{- \frac{t^2}{2}}. 
 \]
\end{lemma}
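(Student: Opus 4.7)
The plan is to bound the tail integral $\int_t^\infty e^{-x^2/2}dx$ by multiplying the integrand by a factor that is at least one on the region of integration but makes the resulting integral elementary. The natural candidate is $x$ itself, since $x \geq t > 1$ throughout the range.

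Concretely, I would proceed as follows. First, write
\begin{equation*}
\P(Z > t) = \frac{1}{\sqrt{2\pi}} \int_t^\infty e^{-x^2/2}\,dx.
\end{equation*}
Then observe that for $x \geq t > 1$ we have $x > 1$, so $e^{-x^2/2} \leq x\, e^{-x^2/2}$ pointwise on $[t,\infty)$. Integrating this inequality gives
\begin{equation*}
\int_t^\infty e^{-x^2/2}\,dx \;\leq\; \int_t^\infty x\, e^{-x^2/2}\,dx \;=\; \Bigl[-e^{-x^2/2}\Bigr]_t^\infty \;=\; e^{-t^2/2},
\end{equation*}
where the antiderivative is exact since $\frac{d}{dx}(-e^{-x^2/2}) = x\, e^{-x^2/2}$. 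Substituting back into the expression for $\P(Z > t)$ yields the claimed bound.

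I do not anticipate any real obstacle; the only subtlety is the hypothesis $t > 1$, which is precisely what is needed to guarantee $x \geq 1$ on the tail and thereby license the multiplication by $x$. Note that a slightly sharper bound $\P(Z>t) \leq \frac{1}{t\sqrt{2\pi}} e^{-t^2/2}$ is available via the same idea by multiplying and dividing by $t$ instead (using $x/t \geq 1$), but the form stated in the lemma follows immediately from the argument above and is all that is needed elsewhere in the paper.
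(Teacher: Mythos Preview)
Your argument is correct and standard. The paper itself does not actually prove this lemma; it simply states it as a ``standard and well-known Gaussian tail bound'' and moves on, so there is no paper proof to compare against. Your derivation via $e^{-x^2/2} \leq x\,e^{-x^2/2}$ on $[t,\infty)$ for $t>1$ is exactly the classical one-line justification, and your remark about the sharper Mill's-ratio form $\frac{1}{t\sqrt{2\pi}}e^{-t^2/2}$ is also apt.
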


Our next lemma obtains a tail bound for the magnitude of a Gaussian random vector and the maximum eigenvalue of a Gaussian matrix. 

\begin{lemma}[Due to \cite{BLM13}] \label{lem:massart}
Let $\bm{x} \sim \N (0,\bm{\Sigma})$ with $\rho_{\max} \triangleq \sigma_{\max} \left( \bm{\Sigma} \right)$, where $\bm{\Sigma} \in \R^{p \times p}$ then
\begin{eqnarray}
\Pr \left [   \left \|  \bm{x} \right \|_2^2  >  5 p\rho_{\max}  \right] &\leq&  e^{-p}.
\end{eqnarray}
Furthermore, if $\bm{X} \in \R^{n \times p}$ is composed of independently distributed $\N (0,\frac{1}{n})$ entries, then
\begin{eqnarray}
\Pr \left [  \sqrt{\sigma_{\max}\left( \bm{X}^\top   \bm{X} \right)} \geq 1+ \sqrt{\frac{p}{n}} +t \right] &\leq& e^{-\frac{nt^2}{2}}.
\end{eqnarray}
\end{lemma}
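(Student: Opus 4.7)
For the first bound, the plan is to diagonalize. Writing the spectral decomposition $\bm{\Sigma} = \sum_{j=1}^{p} \lambda_j v_j v_j^\top$ and representing $\bm{x} = \bm{\Sigma}^{1/2}\bm{z}$ with $\bm{z}\sim N(0,\bm{I}_p)$, we have
\begin{equation*}
\|\bm{x}\|_2^2 \;=\; \bm{z}^\top \bm{\Sigma}\bm{z} \;=\; \sum_{j=1}^p \lambda_j Z_j^2,
\end{equation*}
a weighted sum of i.i.d.\ $\chi^2_1$ variables with weights $a_j=\lambda_j\ge 0$. The classical Laurent--Massart inequality (Theorem 2.1 in their 2000 paper, also in Boucheron--Lugosi--Massart) gives, for every $t>0$,
\begin{equation*}
\P\Bigl(\sum_j \lambda_j(Z_j^2-1) \ge 2\|a\|_2\sqrt{t} + 2\|a\|_\infty t\Bigr) \le e^{-t}.
\end{equation*}
Using $\sum_j \lambda_j = \tr(\bm{\Sigma}) \le p\rho_{\max}$, $\|a\|_\infty=\rho_{\max}$ and $\|a\|_2 \le \sqrt{p}\,\rho_{\max}$, I would set $t=p$, which yields an upper tail of the form $p\rho_{\max}+2p\rho_{\max}+2p\rho_{\max}=5p\rho_{\max}$ with probability at most $e^{-p}$. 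This step is essentially bookkeeping; the only subtle choice is picking $t=p$ so that the deviation absorbs into the stated constant $5$.

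For the second bound, the plan is the standard two-ingredient argument for the operator norm of a Gaussian matrix. Let $\widetilde{\bm{X}}=\sqrt{n}\,\bm{X}$, which has i.i.d.\ $N(0,1)$ entries, so that $\sqrt{\sigma_{\max}(\bm{X}^\top\bm{X})} = \sigma_{\max}(\widetilde{\bm{X}})/\sqrt{n}$. First, Gordon's min--max inequality (equivalently, Slepian--Fernique comparison) gives $\E[\sigma_{\max}(\widetilde{\bm{X}})] \le \sqrt{n}+\sqrt{p}$. Second, the map $\widetilde{\bm{X}}\mapsto \sigma_{\max}(\widetilde{\bm{X}})$ is $1$-Lipschitz with respect to the Frobenius (equivalently Euclidean) norm on $\R^{n\times p}$, so Gaussian concentration for Lipschitz functions (Tsirelson--Ibragimov--Sudakov, cf.\ Theorem 5.6 in \cite{BLM13}) yields
\begin{equation*}
\P\bigl(\sigma_{\max}(\widetilde{\bm{X}}) \ge \E[\sigma_{\max}(\widetilde{\bm{X}})] + s\bigr) \le e^{-s^2/2}
\end{equation*}
for every $s>0$. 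Choosing $s=\sqrt{n}\,t$ and dividing through by $\sqrt{n}$ gives exactly the stated bound $e^{-nt^2/2}$.

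Neither step presents a real obstacle: both are direct applications of results in \cite{BLM13}. The only care needed is to verify the $1$-Lipschitz constant of $\sigma_{\max}(\cdot)$ (a one-line triangle-inequality check, $|\sigma_{\max}(\bm{A})-\sigma_{\max}(\bm{B})| \le \sigma_{\max}(\bm{A}-\bm{B}) \le \|\bm{A}-\bm{B}\|_F$) and to keep track of the $\sqrt{n}$ rescaling when passing from $\widetilde{\bm{X}}$ back to $\bm{X}$. The lemma is quoted in the paper for reuse; no genuinely new argument is required.
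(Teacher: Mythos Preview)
Your proposal is correct. The paper does not actually supply a proof of this lemma: it is stated with the attribution ``Due to \cite{BLM13}'' and invoked as a black box, so there is no in-paper argument to compare against. Your two ingredients---the Laurent--Massart weighted $\chi^2$ tail with $t=p$ for the first inequality, and Gordon's expectation bound combined with Gaussian Lipschitz concentration for the second---are precisely the standard derivations one finds in that reference, and the arithmetic (in particular $\tr(\bm{\Sigma})+4p\rho_{\max}\le 5p\rho_{\max}$) checks out.
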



The above lemma shows how we can find a tail bound for the maximum singular value of an iid Gaussian matrix. Below we extend the result to Gaussian matrices whose columns are dependent on each other. Note that Lemma \ref{lem:7} is the same as Lemma \ref{lem:X} with  $ n \rho_{\max} =c$ and $\sqrt{\frac{p}{n}}=\frac{1}{\sqrt{\delta_0}}$. Hence we present the proof Lemma \ref{lem:7} which can be easily used to prove Lemma \ref{lem:X}.

\begin{lemma} \label{lem:7}
$\bm{X} \in \R^{n \times p}$ is composed of independently distributed $\N (0,\bm{\Sigma})$ rows, with $\rho_{\max} \triangleq \sigma_{\max} \left( \bm{\Sigma} \right)$, where $\bm{\Sigma} \in \R^{p \times p}$  then
\begin{eqnarray}
\Pr \left [  \sigma_{\max}\left( \bm{X   X}^\top\right) \geq \left(\sqrt{n} + 3\sqrt{p}\right)^2 \rho_{\max}  \right ] \leq e^{-p }.
\end{eqnarray}
\end{lemma}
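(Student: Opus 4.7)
The plan is to reduce the correlated Gaussian case to the iid case already handled by Lemma \ref{lem:massart}, and then choose the deviation parameter so the constant $3$ in $\sqrt{n}+3\sqrt{p}$ is exactly what makes the probability of failure at most $e^{-p}$.

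First, I would factor $\bm{\Sigma}=\bm{\Sigma}^{1/2}\bm{\Sigma}^{1/2}$ (or any square root) and write $\bm{X}=\bm{Z}\,\bm{\Sigma}^{1/2}$, where $\bm{Z}\in\R^{n\times p}$ has iid $\N(0,1)$ entries. This is just the standard whitening observation applied row-wise. By submultiplicativity of the operator norm and $\|\bm{\Sigma}^{1/2}\|_{\mathrm{op}}^{2}=\rho_{\max}$,
\[
\sigma_{\max}(\bm{X}\bm{X}^\top)=\|\bm{X}\|_{\mathrm{op}}^{2}\le \|\bm{Z}\|_{\mathrm{op}}^{2}\,\rho_{\max}=\rho_{\max}\,\sigma_{\max}(\bm{Z}\bm{Z}^\top).
\]
So it is enough to prove $\sigma_{\max}(\bm{Z}\bm{Z}^\top)\le(\sqrt{n}+3\sqrt{p})^{2}$ with probability at least $1-e^{-p}$.

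Second, I would rescale: $\bm{W}\triangleq\bm{Z}/\sqrt{n}$ has iid $\N(0,1/n)$ entries and fits the hypothesis of the second part of Lemma \ref{lem:massart}. Since $\sigma_{\max}(\bm{W}^\top\bm{W})=\sigma_{\max}(\bm{Z}^\top\bm{Z})/n=\sigma_{\max}(\bm{Z}\bm{Z}^\top)/n$, applying that lemma and writing $s=t\sqrt{n}$ gives
\[
\Pr\!\left[\sqrt{\sigma_{\max}(\bm{Z}\bm{Z}^\top)}\ge\sqrt{n}+\sqrt{p}+s\right]\le e^{-s^{2}/2},\qquad s>0.
\]
Choosing $s=2\sqrt{p}$ makes the bracketed bound exactly $\sqrt{n}+3\sqrt{p}$, and the tail becomes $e^{-2p}\le e^{-p}$. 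Combining this with the operator-norm inequality from the previous paragraph yields the claim.

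There is no substantive obstacle here — the proof is a short corollary of Lemma \ref{lem:massart} via the standard $\bm{\Sigma}^{1/2}$ trick; the only ``choice'' in the argument is the tuning $s=2\sqrt{p}$, designed to exchange a slightly looser constant (a $3$ in front of $\sqrt{p}$ instead of $1$) for a clean $e^{-p}$ tail that matches the use of this lemma elsewhere in the paper.
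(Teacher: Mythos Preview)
Your proof is correct and essentially matches the paper's: both reduce to the iid case via $\bm{X}=\bm{Z}\bm{\Sigma}^{1/2}$ (the paper phrases this as a change-of-variables in the variational form of $\sigma_{\max}$, you as submultiplicativity of the operator norm---these are the same inequality) and then invoke Lemma~\ref{lem:massart}. The only cosmetic difference is the tuning: the paper takes $t=\sqrt{2p/n}$ (tail exactly $e^{-p}$, then relaxes $1+\sqrt{2}$ to $3$), while you take $t=2\sqrt{p}/\sqrt{n}$ (threshold exactly $\sqrt{n}+3\sqrt{p}$, then relax $e^{-2p}$ to $e^{-p}$).
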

\begin{proof}
Since $\bm{X} \in \R^{n \times p}$ is composed of independently distributed $\N (0,\bm{\Sigma})$ rows, then
\begin{eqnarray}
\lefteqn{\Pr \left [  \sigma_{\max}\left( \bm{X   X}^\top\right) \geq \sigma_0 \right] = \Pr \left [ \sigma_{\max}\left( \bm{X}^\top   \bm{X}\right) \geq \sigma_0    \right]} \nonumber \\
&=& \Pr \left [ \max_{\| \bm{u} \|_2^2 \leq 1} \left \|\bm{ X u} \right \|_2^2 \geq \sigma_0 \right] = \Pr \left [ \max_{\|\bm{ u} \|_2^2 \leq 1} \left \| \bm{Z \Sigma}^{1/2} \bm{u} \right \|_2^2 \geq \sigma_0 \right] \nonumber
\\
&=& \Pr \left [ \max_{\| \bm{\Sigma}^{-1/2} \bm{u} \|_2^2 \leq 1} \left \| \bm{Z u} \right \|_2^2 \geq \sigma_0 \right] \leq \Pr \left [ \max_{   \| \frac{\bm{u}}{\sqrt{\rho_{\max} } }  \|_2^2  \leq 1  } \left \| \bm{Z u} \right \|_2^2 \geq \sigma_0 \right] \nonumber
\\
&=& \Pr \left [ \max_{   \| \bm{u} \|_2^2  \leq 1  } \left \| \bm{Z u} \right \|_2^2 \geq \frac{\sigma_0}{\rho_{\max}} \right]
= \Pr \left [\sqrt{ \sigma_{\max} \left ( \frac{\bm{Z}^\top\bm{ Z}}{n} \right)}  \geq \sqrt{\frac{\sigma_0}{n\rho_{\max}}} \right],
\end{eqnarray}
where $\bm{Z} \in \R^{n \times p}$ is composed of independently distributed $\N(0,1)$ entries. As a consequence of Lemma \ref{lem:massart}, and letting $\sigma_0 = n \rho_{\max} \left(1 + \sqrt{\frac{p}{n}} + t \right)^2$, we get
\begin{eqnarray}
\Pr \left [  \sigma_{\max}\left( \bm{X   X}^\top\right) \geq n \rho_{\max} \left(1 + \sqrt{\frac{p}{n}} + t \right)^2 \right ] \leq e^{-nt^2/2 }.\label{eq:123}
\end{eqnarray}
By substituting $t=\sqrt{\frac{2p}{n}}$ in \eqref{eq:123}, and noting that $3 > 1 + \sqrt{2}$, we get
\begin{eqnarray}
\Pr \left [  \sigma_{\max}\left( \bm{X   X}^\top\right) \geq n \rho_{\max} \left(1 + 3\sqrt{\frac{p}{n}} \right)^2 \right ] \leq e^{-p }. \nonumber
\end{eqnarray}
\end{proof}

\subsection{Proof of Theorem \ref{thm:lasso_approx} }\label{ssec:proofthmnonsmooth}

\subsubsection{Roadmap of the proof}
We first remind the reader that $r_\alpha(z) \triangleq \frac{1}{\alpha} ( \log(1+ e^{-\alpha z})+ \log (1+ e^{\alpha z}))$. Before we discuss the proof, let us mention the following definitions:
\begin{eqnarray}
h_\alpha(\bm{\beta}) &\triangleq& \sum_{i=1}^n \ell(y_i| \bm{x_i}^{\top} \bm{\beta}) + \lambda \sum_{i=1}^p r_\alpha ({\beta}_i), \ \ \ \ \ \ \ \ h(\bm{\beta}) \triangleq \sum_{i=1}^n \ell(y_i| \bm{x_i}^{\top} \bm{\beta}) + \lambda \sum_{i=1}^p |{\beta}_i|, \\
\bl^\alpha &\triangleq& \arg\min_{\bm{\beta}} h_\alpha(\bm{\beta}),  \ \ \ \ \ \ \ \ \ \ \ \ \  \ \ \ \ \ \ \ \ \ \ \ \ \ \ \ \ \hat{\bm{\beta}} \triangleq \arg\min_{\bm{\beta}} h(\bm{\beta}).
\end{eqnarray}

Note that according to Assumptions \ref{A1} and \ref{A3}, $\bl^\alpha$ and $\bl$ are unique.  We first mention a few structural properties of $r_\alpha(z)$ that will be used throughout our proof. Since the proofs of these results are straightforward, we skip them. 

\begin{lemma}\label{lem:1:uniform}
For any $\alpha>0$ we have $r_\alpha (z) \geq |z|$, and 
\[
\sup_z |r_\alpha (z) - |z| | \leq \frac{2 \log 2}{\alpha}. 
\]
In particular, as $\alpha \rightarrow \infty$, $r_\alpha(z)$ uniformly converges to $|z|$.
\end{lemma}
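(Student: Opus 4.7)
The plan is to reduce the problem to the case $z \geq 0$ via symmetry, and then produce a closed-form expression for $r_\alpha(z) - |z|$ that makes both claims (nonnegativity and the $O(1/\alpha)$ bound) transparent.

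First, I would observe the symmetry $r_\alpha(-z) = r_\alpha(z)$, which is immediate from the defining formula $r_\alpha(z) = \frac{1}{\alpha}\bigl(\log(1+e^{-\alpha z}) + \log(1+e^{\alpha z})\bigr)$, since swapping the sign of $z$ just swaps the two summands. Because $|z|$ is also even in $z$, it suffices to verify the inequalities for $z \geq 0$.

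Next, for $z \geq 0$ I would use the algebraic identity
\[
\log(1+e^{\alpha z}) \;=\; \log\!\bigl(e^{\alpha z}(1+e^{-\alpha z})\bigr) \;=\; \alpha z + \log(1+e^{-\alpha z}),
\]
to rewrite
\[
r_\alpha(z) \;=\; z + \frac{2}{\alpha}\log(1+e^{-\alpha z}).
\]
From this closed form two conclusions are immediate: (i) since $\log(1+e^{-\alpha z}) \geq 0$, we get $r_\alpha(z) \geq z = |z|$; and (ii) since $z \geq 0$ implies $e^{-\alpha z} \leq 1$, we have $\log(1+e^{-\alpha z}) \leq \log 2$, hence
\[
0 \;\leq\; r_\alpha(z) - |z| \;\leq\; \frac{2\log 2}{\alpha}.
\]

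Combined with the symmetry step this yields $\sup_z |r_\alpha(z) - |z|| \leq 2(\log 2)/\alpha$ and in particular uniform convergence $r_\alpha \to |\cdot|$ as $\alpha \to \infty$. (As a sanity check, the supremum is in fact attained at $z=0$, where $r_\alpha(0) = 2(\log 2)/\alpha$, so the bound is tight.) There is really no obstacle here; the only thing worth flagging is recognizing that the ``soft-absolute-value'' $r_\alpha$ decomposes cleanly as $|z|$ plus a manifestly small positive remainder, after which the lemma follows by inspection.
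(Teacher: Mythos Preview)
Your proof is correct; the paper itself omits the proof of this lemma as straightforward, so there is no alternative argument to compare against. Your symmetry-plus-closed-form approach is exactly the natural one, and your observation that the bound is attained at $z=0$ is a nice confirmation of tightness.
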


\begin{lemma}\label{lem:ralpha:limits}
$r_\alpha(z)$ is infinitely many times differentiable, and 
\begin{eqnarray}
\rd_\alpha(z) &=& \frac{{\rm e}^{\alpha z} - {\rm e}^{-\alpha z} }{{\rm e}^{\alpha z} + {\rm e}^{-\alpha z}+2} \nonumber \\
\rdd_\alpha(z) &=&   \frac{2 \alpha}{({\rm e}^{\alpha z} + {\rm e}^{-\alpha z}+2)}. 
\end{eqnarray}
Furthermore, if $|z_\alpha| < \frac{\zeta_1}{\alpha}$ for a constant $\zeta_1>0$, then $\lim_{\alpha \rightarrow \infty} \rdd_\alpha(z_\alpha) = +\infty$. Finally, if $|z_\alpha|> \zeta_2$ for a constant $\zeta_2>0$, then $\lim_{\alpha \rightarrow \infty} \rdd_\alpha(z_\alpha) =0$ and $\lim_{\alpha \rightarrow \infty} \rd_\alpha(z_\alpha) = 1$. 
\end{lemma}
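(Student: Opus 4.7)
The plan is to verify the two closed-form derivative formulas by direct computation, and then to read off the two asymptotic regimes by a scale analysis of the product $\alpha z_\alpha$. Infinite differentiability is immediate since $r_\alpha$ is a composition of $C^\infty$ functions (exponentials and $\log$ of a strictly positive smooth function), so I will not dwell on this point.

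For the derivative formulas, I would first differentiate $r_\alpha(z)=\alpha^{-1}\bigl(\log(1+e^{\alpha z})+\log(1+e^{-\alpha z})\bigr)$ term by term to obtain
\[
\rd_\alpha(z)=\frac{e^{\alpha z}}{1+e^{\alpha z}}-\frac{e^{-\alpha z}}{1+e^{-\alpha z}}.
\]
Placing both fractions over the common denominator $(1+e^{\alpha z})(1+e^{-\alpha z})=2+e^{\alpha z}+e^{-\alpha z}$, the $\pm 1$ terms cancel in the numerator and the stated formula for $\rd_\alpha$ falls out. Differentiating once more via the quotient rule, and writing $a=e^{\alpha z}$, $b=e^{-\alpha z}$ so that $ab=1$, the numerator simplifies to $\alpha\bigl[(a+b)(2+a+b)-(a-b)^2\bigr]=2\alpha(2+a+b)$; this cancels one factor of the denominator and yields $\rdd_\alpha(z)=2\alpha/(2+e^{\alpha z}+e^{-\alpha z})$, as claimed.

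For the asymptotics I would split on the size of $\alpha z_\alpha$. In the first regime, $|z_\alpha|<\zeta_1/\alpha$ implies $|\alpha z_\alpha|\leq\zeta_1$, so $2+e^{\alpha z_\alpha}+e^{-\alpha z_\alpha}\leq 2+2e^{\zeta_1}$ is uniformly bounded in $\alpha$; hence $\rdd_\alpha(z_\alpha)\geq 2\alpha/(2+2e^{\zeta_1})\to\infty$. In the second regime, $|z_\alpha|>\zeta_2$ gives $e^{\alpha|z_\alpha|}\geq e^{\alpha\zeta_2}\to\infty$, so the denominator of $\rdd_\alpha$ diverges exponentially while the numerator grows only linearly in $\alpha$, forcing $\rdd_\alpha(z_\alpha)\to 0$. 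For the limit of $\rd_\alpha(z_\alpha)$ I would divide numerator and denominator of the closed-form formula by $e^{\alpha|z_\alpha|}$; in the case $z_\alpha>\zeta_2$ the quotient becomes $(1-e^{-2\alpha z_\alpha})/(2e^{-\alpha z_\alpha}+1+e^{-2\alpha z_\alpha})\to 1$, and the symmetric case $z_\alpha<-\zeta_2$ follows from the odd symmetry $\rd_\alpha(-z)=-\rd_\alpha(z)$.

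There is no real obstacle here: the proof is a bookkeeping exercise. The only care needed is (i) the algebraic simplification that collapses the quotient-rule numerator into a multiple of the denominator, so that the factor of $\alpha$ in $\rdd_\alpha$ sits in the right place, and (ii) tracking which regime of $\alpha z_\alpha$ controls the behavior of the denominator $2+e^{\alpha z_\alpha}+e^{-\alpha z_\alpha}$.
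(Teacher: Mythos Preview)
Your proof is correct and is precisely the kind of direct verification the paper has in mind; in fact the paper does not supply a proof at all, stating only that ``the proofs of these results are straightforward'' and skipping them. Your computation of $\rd_\alpha$ and $\rdd_\alpha$ via the common denominator $(1+e^{\alpha z})(1+e^{-\alpha z})=2+e^{\alpha z}+e^{-\alpha z}$ and the quotient-rule simplification using $ab=1$ is clean, and your case split on the boundedness of $\alpha z_\alpha$ is the natural way to read off the two limits. One small remark: the lemma as stated writes $\lim_{\alpha\to\infty}\rd_\alpha(z_\alpha)=1$, but as your odd-symmetry observation shows, the limit is actually $\operatorname{sign}(z_\alpha)=\pm 1$ depending on the sign of $z_\alpha$; the paper uses this lemma only in contexts where the absolute value (or the specific sign) is what matters, so this imprecision is harmless, but you are right to flag both cases.
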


Now, we show the main steps for finding the following limit
\begin{eqnarray}
 \lim_{\alpha \rightarrow \infty} \bm{H}^{\alpha} &\triangleq&  \lim_{\alpha \rightarrow \infty}\bm{X} \left (\lambda \diag[\bm{\rdd}_\alpha(\bl^{\alpha})] + \bm{X}^\top \diag[\bm{\ldd}(\bl^{\alpha})] \bm{X} \right)^{-1}  \bm{X}^\top \diag[\bm{\ldd}(\bl^{\alpha})]. \nonumber 
 \end{eqnarray}
 In that vein, let
\begin{eqnarray}
\bm{A} &\triangleq& \bm{X}_{S^c}^\top {\rm diag} [\bm{\ldd} (\bl^{\alpha}) ]\bm{X}_{S^c} + {\rm diag}[ \bm{\rdd}^{\alpha}_{S^c}(\bl^{\alpha}) ], \ \ \ \ \ \bm{B} \triangleq \bm{X}_{S^c}^\top  {\rm diag} [\bm{\ldd}(\bl^{\alpha})] \bm{X}_{S}, \nonumber \\
\bm{C} &\triangleq&  \bm{X}_{S}^\top {\rm diag}[ \bm{\ldd}(\bl^{\alpha})] \bm{X}_{S} + {\rm diag} [\bm{\rdd}^{\alpha}_{S} (\bl^{\alpha})], \  \ \ \ \ \ \ \bm{D}\triangleq (\bm{C}- \bm{B}^\top \bm{A}^{-1}\bm{B})^{-1},
\end{eqnarray}
where $S=\{i: |\hat \beta_i| \neq 0 \}$. Based on Theorem \ref{thm:boundingcoeffs} in Section \ref{ssec:upregLASSO}, for large enough $\alpha$, there exist  fixed numbers $\zeta_1, \zeta_2 >0$ such that
\begin{eqnarray*}
\max_{i \in S^c} |\hat \beta_i^\alpha| < \frac{\zeta_1}{\alpha}, \text{ and }
\min_{i \in S} |\hat \beta_i^\alpha| > \zeta_2, 
\end{eqnarray*}
which with Lemma \ref{lem:ralpha:limits} implies  $\rdd_\alpha(\hat \beta_i^\alpha ) \rightarrow \infty$ for $i \in S^c$ and $\rdd_\alpha(\hat \beta_i^\alpha ) \rightarrow 0$ for $i \in S$,   as $\alpha \rightarrow \infty$. Since the diagonal elements of ${\rm diag}[ \bm{\rdd}^{\alpha}_{S^c}(\bl^{\alpha}) ]$ go off to infinity, $\bm{A}^{-1} \rightarrow 0$, as $\alpha \rightarrow \infty$. Furthermore, since the diagonal elements of ${\rm diag} [\bm{\rdd}^{\alpha}_{S} (\bl^{\alpha})]$ converge to zero, $\lim_{\alpha \rightarrow \infty} \bm{D} = (\bm{X}_S^\top {\rm diag}[ \bm{\ldd}(\bl^{\alpha})] \bm{X}_S)^{-1}$. Therefore, by using the following identity 
\begin{eqnarray}
\begin{bmatrix}
\bm{A} & \bm{B}\\
\bm{B}^\top & \bm{C} 
\end{bmatrix}^{-1} = \begin{bmatrix}
\bm{A}^{-1} + \bm{A}^{-1}\bm{B DB}^\top \bm{A}^{-1} & - \bm{A}^{-1}\bm{B D} \\
-\bm{D B}^\top \bm{A}^{-1} & \bm{D}
\end{bmatrix},
\end{eqnarray}
and noting that $\lim_{\alpha \rightarrow \infty} \bm{A}^{-1} + \bm{A}^{-1}\bm{B DB}^\top \bm{A}^{-1} =0$, $ \lim_{\alpha \rightarrow \infty}- \bm{A}^{-1}\bm{B D} =0$,  we obtain
\begin{eqnarray*}
\lim_{\alpha \rightarrow \infty }  \bm{H}^{\alpha} &=& \lim_{\alpha \rightarrow \infty}  \bm{X} \left (\lambda \diag[\bm{\rdd}(\bl^{\alpha})] + \bm{X}^\top \diag[\bm{\ldd}(\bl^{\alpha})] \bm{X} \right)^{-1}  \bm{X}^\top \diag[\bm{\ldd}(\bl^{\alpha})] \nonumber \\
&=& \bm{X}_S \left ( \bm{X}_S^\top \diag[\bm{\ldd} (\bl)]   \bm{X}_S  \right)^{-1} \bm{X}_S^\top \diag[\bm{\ldd} (\bl)].
\end{eqnarray*}

Note that in Lemma \ref{thm:firstbdlasso} in Section \ref{ssec:upperbetadiff} we prove that $\|\hat{\bm{\beta}}^\alpha - \hat{\bm{\beta}}\|_2 \rightarrow 0$ as $\alpha \rightarrow \infty$. Hence, from the continuity of the second derivative of $\ell$ (Assumption \ref{A4}) we have $\bm{\ldd}(\bl^{\alpha}) \rightarrow \bm{\ldd}(\bl)$ as $\alpha \rightarrow \infty$.



\subsubsection{Proof of $\|\hat{\bm{\beta}}^\alpha - \hat{\bm{\beta}}\|_2 \rightarrow 0$}\label{ssec:upperbetadiff}
\begin{lemma}\label{thm:firstbdlasso}
If Assumptions \ref{A1} and \ref{A3} hold, i.e. uniqueness of $\bl$ and $\bl^\alpha$, then $\lim_{\alpha \rightarrow \infty } \|\hat{\bm{\beta}}^\alpha - \hat{\bm{\beta}}\|_2 = 0$.
\end{lemma}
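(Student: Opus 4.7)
The plan is to exploit the uniform convergence $h_\alpha \to h$ together with the uniqueness assumptions to deduce convergence of minimizers by a standard compactness argument. First, Lemma \ref{lem:1:uniform} yields
\begin{equation*}
\sup_{\bm{\beta} \in \R^p} |h_\alpha(\bm{\beta}) - h(\bm{\beta})| \;\leq\; \frac{2\lambda p \log 2}{\alpha} \;\longrightarrow\; 0,
\end{equation*}
and the pointwise lower bound $r_\alpha(z) \geq |z|$ gives $h_\alpha \geq h$ on $\R^p$.

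The next step is to verify that the sequence $\{\hat{\bm{\beta}}^\alpha\}$ stays in a compact set for $\alpha$ large. Chaining $h \leq h_\alpha$ with the minimality of $\hat{\bm{\beta}}^\alpha$,
\begin{equation*}
h(\hat{\bm{\beta}}^\alpha) \;\leq\; h_\alpha(\hat{\bm{\beta}}^\alpha) \;\leq\; h_\alpha(\hat{\bm{\beta}}) \;\leq\; h(\hat{\bm{\beta}}) + \frac{2\lambda p \log 2}{\alpha},
\end{equation*}
so $\hat{\bm{\beta}}^\alpha$ eventually lies in the sub-level set $L_c \triangleq \{h \leq c\}$ for any fixed $c > h(\hat{\bm{\beta}})$. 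I would then argue that $L_c$ is bounded: it is closed and convex (since $h$, being convex and finite on $\R^p$, is continuous), and if it were unbounded its recession cone would be nontrivial, providing a direction $\bm{v} \neq \bm{0}$ with $\hat{\bm{\beta}} + t\bm{v} \in L_c$ for every $t \geq 0$. The restriction $g(t) \triangleq h(\hat{\bm{\beta}} + t\bm{v})$ is then convex on $[0,\infty)$, bounded above by $c$, and attains its minimum at $t = 0$; convexity plus boundedness forces $g$ to be non-increasing, and combined with $g(0) = \min h$ this gives $g \equiv h(\hat{\bm{\beta}})$, so every point of the ray minimizes $h$, contradicting Assumption \ref{A1}.

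With boundedness in hand, I would close the argument via subsequences. For any sequence $\alpha_k \to \infty$, extract a convergent subsequence (not relabelled) $\hat{\bm{\beta}}^{\alpha_k} \to \bm{\beta}^*$. Continuity of $h$ and the uniform-convergence bound give
\begin{equation*}
h(\bm{\beta}^*) \;=\; \lim_{k \to \infty} h(\hat{\bm{\beta}}^{\alpha_k}) \;\leq\; \lim_{k \to \infty} h_{\alpha_k}(\hat{\bm{\beta}}^{\alpha_k}) \;\leq\; \lim_{k \to \infty} \Big( h(\hat{\bm{\beta}}) + \tfrac{2 \lambda p \log 2}{\alpha_k} \Big) \;=\; h(\hat{\bm{\beta}}),
\end{equation*}
so $\bm{\beta}^*$ minimizes $h$ and hence $\bm{\beta}^* = \hat{\bm{\beta}}$ by Assumption \ref{A1}. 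Since every subsequential limit equals $\hat{\bm{\beta}}$, the full sequence converges, yielding $\|\hat{\bm{\beta}}^\alpha - \hat{\bm{\beta}}\|_2 \to 0$. The main obstacle is the boundedness step; without it the subsequence argument collapses, and it is precisely there that uniqueness of $\hat{\bm{\beta}}$ enters in an essential way.
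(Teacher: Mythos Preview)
Your proof is correct and follows the same overall architecture as the paper's: uniform approximation $|h_\alpha - h| \leq 2\lambda p\log 2/\alpha$, a compactness step to trap $\hat{\bm\beta}^\alpha$, and then a subsequence argument that pins every cluster point to $\hat{\bm\beta}$ via Assumption~\ref{A1}. The only substantive difference is in how boundedness is obtained. The paper bounds $\|\hat{\bm\beta}^{\alpha}\|_1$ directly by evaluating $h_\alpha$ at the origin, using $r_\alpha(z)\geq |z|$ and the nonnegativity of $\ell$ to get $\lambda\|\hat{\bm\beta}^\alpha\|_1 \leq h_\alpha(\hat{\bm\beta}^\alpha) \leq h_\alpha(\bm 0)$. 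You instead argue via sub-level sets of $h$: your chain $h(\hat{\bm\beta}^\alpha)\leq h(\hat{\bm\beta}) + 2\lambda p\log 2/\alpha$ places $\hat{\bm\beta}^\alpha$ in $L_c$, and then a recession-cone argument (using convexity of $h$ and uniqueness of its minimizer) shows $L_c$ is bounded. Your route is slightly more abstract but has the advantage of not needing $\ell\geq 0$; the paper's is more elementary but leans on that positivity. Either way the remainder of the argument is identical.
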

\textbf{Proof.}
First note that according to Lemma \ref{lem:1:uniform}, we have
\[
|h(\bm{\beta})-h_\alpha(\bm{\beta})| \leq \frac{2 p \log 2}{\alpha}. 
\]
Hence, we have
\begin{equation}\label{eq:lb_gga}
h_\alpha ({\bm{\hat \beta}}^\alpha) \geq h({\bm{\hat \beta}}^\alpha) - \frac{2 p \log 2}{\alpha} \geq h({\bm{\hat \beta}}) - \frac{2 p \log 2}{\alpha},
\end{equation}
and
\begin{equation}\label{eq:up_gga}
h_\alpha({\bm{\hat \beta}}) \leq h({\bm{\hat \beta}}) + \frac{2 p \log 2}{\alpha}.
\end{equation}

Suppose that $\|\bl^\alpha- \bl\|_2$ does not go to zero as $\alpha \rightarrow \infty$. Then, there exists an $\epsilon>0$ for which we can find a sequence $\alpha_1, \alpha_2, \ldots, $ such that
\begin{equation}\label{eq:assumpseq}
\|\bl^{\alpha_i}- \bl\|_2 > \epsilon. 
\end{equation}
According to Lemma \ref{lem:1:uniform}, we have 
\begin{equation}\label{eq:compactnesseq}
\lambda \|\bl^{\alpha_i} \|_1 \overset{(a)}{\leq} \lambda \sum_{j=1}^p r_{\alpha_i} (\hat \beta^{\alpha_i}_j) \overset{(b)}{\leq}  h_{\alpha_i} (\bl^{\alpha_i}) \overset{(c)}{\leq} h_{\alpha_i} (0) {=} \sum_{j=1}^n \ell(y_j|0)+ \frac{2p \log2}{\alpha_i}. 
\end{equation}
Note that Inequality (a) uses Lemma \ref{lem:1:uniform} which proves $|\hat \beta^{\alpha_i}_j| \leq r_{\alpha_i} (\hat \beta^{\alpha_i}_j)$. Inequality (b) is due to the fact that $h_\alpha(\bm{\beta}) = \sum_{i=1}^n \ell(y_i| \bm{x_i}^{\top} \bm{\beta}) + \sum_{i=1}^p r_\alpha (\beta_i)$ and we assume that the loss function returns positive numbers. Inequality (c) is due to the fact that $\bl^{\alpha_i}$ is the minimizer of $h_{\alpha_i}(\bm{\beta})$. 

According to \eqref{eq:compactnesseq} the sequence $\bl^{\alpha_1}, \bl^{\alpha_2}, \ldots$ belongs to a compact set, and hence has a converging subsequence,  called $\bl^{\tilde{\alpha}_1}, \bl^{\tilde{\alpha}_2}, \ldots$. Suppose that $\bl^{\tilde{\alpha}_1}, \bl^{\tilde{\alpha}_2}, \ldots$ converges to ${\bm{\tilde \beta}}$. Therefore, 
\begin{equation}\label{eq:convergalm}
h(\bl^{\tilde{\alpha}_j}) \overset{(d)}{\leq} h_{\tilde{\alpha}_j} (\bl^{\tilde{\alpha}_j}) + \frac{2p \log 2}{\tilde{\alpha}_j} \overset{(e)}{\leq} h_{\tilde{\alpha}_j} (\bl) + \frac{2p \log 2}{\tilde{\alpha}_j} \overset{(f)}{\leq} h (\bl) +  \frac{4p \log 2}{\tilde{\alpha}_j}. 
\end{equation}
Inequality (d) is due to \eqref{eq:lb_gga}. Inequality (e) is true because $\bl^{\tilde{\alpha}_j}$ is the minimizer of $h_{\tilde{\alpha}_j}(\bm{\beta})$, and finally Inequality (f) is due to \eqref{eq:up_gga}. By taking the limit $j \rightarrow \infty$ from both sides of \eqref{eq:convergalm}, we have 
\[
h({\bm{\tilde \beta}}) \leq h(\bl). 
\]
But ${\bm{\tilde \beta}}$ is different from $\bl$, according to \eqref{eq:assumpseq}, contradicting the uniqueness of $\bl$ in Assumption \ref{A1}. $\hfill \Box$. 
\subsubsection{Bounds for regression coefficients in smoothed LASSO}\label{ssec:upregLASSO}

\begin{theorem}\label{thm:boundingcoeffs}
Let $S$ denote the active set of $\bl$, i.e., the location of its non-zero coefficients. Under assumptions \ref{A1}, \ref{A3}, \ref{A4}, and \ref{A2}, there exists a fixed numbers $\zeta_1, \zeta_2>0$, such that for $\alpha$ large enough, we have
\begin{eqnarray*}
\max_{i \in S^c} | \hat \beta^{\alpha}_i| &<& \frac{\zeta_1}{\alpha} \\
\min_{i \in S} | \hat \beta^{\alpha}_i| &>& \zeta_2.
\end{eqnarray*}
 \end{theorem}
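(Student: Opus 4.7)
The plan is to prove the two bounds separately using the convergence $\bl^\alpha \to \bl$ from Lemma~\ref{thm:firstbdlasso} together with the optimality conditions for both problems. For the lower bound on $S$, the argument is essentially compactness: since $\|\bl^\alpha - \bl\|_2 \to 0$ as $\alpha \to \infty$, in particular each coordinate $\hat\beta^\alpha_i \to \hat\beta_i$, and for $i \in S$ the limit is nonzero by definition of $S$. Since $|S| \le p$ is finite and $\min_{i \in S} |\hat\beta_i| > 0$, for every sufficiently large $\alpha$ and every $i \in S$ we have $|\hat\beta^\alpha_i| \ge |\hat\beta_i|/2$, so one may take $\zeta_2 := \tfrac{1}{2}\min_{i \in S}|\hat\beta_i|$.

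For the upper bound on $S^c$, the key is the zero-gradient condition for the smoothed problem \eqref{eq:smoothlasso} evaluated at $\bl^\alpha$, restricted to a coordinate $i \in S^c$:
\begin{equation*}
\lambda\,\rd^\alpha(\hat\beta^\alpha_i) \;=\; -\sum_{j=1}^n [\bm{x_j}]_i\,\ld(y_j|\bm{x_j}^\top \bl^\alpha).
\end{equation*}
By the subgradient optimality condition for the LASSO at $\bl$ together with Assumption~\ref{A2}, the analogous quantity with $\bl$ in place of $\bl^\alpha$ equals $\lambda \hat g_i$ where $|\hat g_i| \le \|\hat g_{S^c}\|_\infty < 1$. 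Since Assumption~\ref{A4} makes $\ld$ continuous in $\bm{\beta}$, and $\bl^\alpha \to \bl$, the right-hand side converges to $\lambda \hat g_i$ coordinatewise, and hence $\rd^\alpha(\hat\beta^\alpha_i) \to \hat g_i$ for each $i \in S^c$. Choose any constant $c$ with $\|\hat g_{S^c}\|_\infty < c < 1$; then for all $\alpha$ sufficiently large (and all $i \in S^c$, since there are finitely many) we have $|\rd^\alpha(\hat\beta^\alpha_i)| \le c$.

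To convert this into a bound on $|\hat\beta^\alpha_i|$ itself, I would use the explicit representation from Lemma~\ref{lem:ralpha:limits}, which after simplification gives $\rd^\alpha(z) = \tanh(\alpha z/2)$. The inequality $|\tanh(\alpha \hat\beta^\alpha_i/2)| \le c$ inverts monotonically to $|\hat\beta^\alpha_i| \le (2/\alpha)\,\tanh^{-1}(c)$, so one may take $\zeta_1 := 2\tanh^{-1}(c) = \log\tfrac{1+c}{1-c}$, which is a finite positive constant independent of $\alpha$.

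The main obstacle I anticipate is ensuring that $\rd^\alpha(\hat\beta^\alpha_i)$ truly converges to the correct KKT multiplier $\hat g_i$ rather than drifting with $\alpha$. This requires both the continuity of $\ld$ given by Assumption~\ref{A4} and the uniqueness of $\bl$ (Assumption~\ref{A1}), which together pin down the subgradient $\hat g$ on $S^c$ unambiguously via the LASSO KKT equations. Once this convergence is in hand, the strict inequality $|\hat g_i| < 1$ supplied by Assumption~\ref{A2} provides the necessary uniform slack $c < 1$, and the explicit form of $\rd^\alpha$ does the rest of the work via a single one-dimensional inversion.
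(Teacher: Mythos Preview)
Your proposal is correct and follows essentially the same approach as the paper: both use the first-order optimality conditions for the smoothed and original problems together with Lemma~\ref{thm:firstbdlasso} to show $\rd_\alpha(\hat\beta^\alpha_i) \to \hat g_i$, then invoke Assumption~\ref{A2} to get the $S^c$ bound, and use coordinatewise convergence for the $S$ bound. The only notable difference is that for $S^c$ the paper argues by contradiction (if $\alpha\hat\beta^\alpha_i$ were unbounded along a subsequence then $\rd_\alpha(\hat\beta^\alpha_i)\to\pm 1$, contradicting $|\hat g_i|<1$), whereas you invert the explicit identity $\rd_\alpha(z)=\tanh(\alpha z/2)$ to obtain the constructive bound $\zeta_1 = \log\frac{1+c}{1-c}$; your version is slightly cleaner and yields an explicit constant, but the underlying idea is identical.
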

 \textbf{Proof.} The optimality conditions
 \begin{eqnarray}
 \sum_{i=1}^n \bm{x_i} \ld(y_i| \bm{x_i}^\top \bl^\alpha) + \lambda \bm{\rd}_\alpha(\bl^\alpha) &=&0,
\\
 \sum_{i=1}^n \bm{x_i} \ld(y_i| \bm{x_i}^\top \bl^\alpha) + \lambda \bm{\hat g} &=&0,
 \end{eqnarray}
lead to
  \begin{eqnarray}\label{eq:boundingsubgrad}
\left  \| \lambda \bm{\rd}_\alpha({\bm{\hat \beta}}^\alpha)- \lambda \bm{\hat g} \right \|_2 = \left \|-\sum_{i=1}^n \bm{x_i} \ld(y_i| \bm{x_i}^\top {\bm{\hat \beta}}^\alpha) + \sum_{i=1}^n \bm{x_i} \ld(y_i| \bm{x_i}^\top {\bm{\hat \beta}}) \right \|_2. 
 \end{eqnarray}
 We know $\|\bl^\alpha-\bl\|_2 \rightarrow 0$ from Lemma \ref{thm:firstbdlasso}. And since $\ell$ is twice differentiable (Assumption \ref{A4}), we can argue that $  \|-\sum \bm{x_i} \ld(y_i| \bm{x_i}^\top {\bm{\hat \beta}}^\alpha) + \sum \bm{x_i} \ld(y_i| \bm{x_i}^\top {\bm{\hat \beta}}) \|_2 \rightarrow 0$ as $\alpha \rightarrow \infty$. Hence, 
 \begin{equation}\label{eq:inftynorm}
  \| \lambda \bm{\rd}_\alpha({\bm{\hat \beta}}^\alpha)- \lambda \bm{\hat g} \|_\infty \leq  \| \lambda \bm{\rd}_\alpha({\bm{\hat \beta}}^\alpha)- \lambda \bm{\hat g}\|_2 \rightarrow 0,
 \end{equation}
 as $\alpha \rightarrow \infty$. This shows that for every $i \in S^c$, $|\alpha \hat \beta^{\alpha}_i|$ should remain bounded as $\alpha \rightarrow \infty$. Suppose that this is not true. Then we find a subsequence that $\alpha_j \hat \beta^{\alpha_j}_i \rightarrow \infty$ as $j \rightarrow \infty$. Then 
 \[
\lim_{j \rightarrow \infty} \rd_{\alpha_j}(\hat \beta^{\alpha_j}_i)= \lim_{j \rightarrow \infty} \frac{e^{\alpha_j \hat \beta^{\alpha_j}_i} -  e^{-\alpha_j \hat \beta^{\alpha_j}_i }}{e^{\alpha_j \hat \beta^{\alpha_j}_i} +  e^{-\alpha_j \hat \beta^{\alpha_j}_i }+2}=1. 
 \]
  If we combine this with Assumption \ref{A2}, we conclude that $\| \lambda \bm{\rd}_\alpha({\bm{\hat \beta}}^\alpha)- \lambda \bm{\hat g} \|_\infty$ will be a constant due to the assumption  $\sup_{i \in S^c} |\hat g_i| < 1$. This is in contradiction with \eqref{eq:inftynorm}. Hence, we have proved that for every $i \in S^c$, $|\alpha \hat \beta^{\alpha}_i| $ remains bounded.

Next, we show that $\min_{i \in S} | \hat \beta^{\alpha}_i|$ is bounded away from zero in the limit $\alpha \rightarrow \infty$. Define $\min_{i \in S} | \hat \beta_i|= \gamma>0$. Lemma \ref{thm:firstbdlasso}  implies $\max_{i \in S} |\hat \beta_i^\alpha- \hat \beta_i | \rightarrow 0$, and therefore, for $\alpha$ large enough, we have
\[
\max_{i \in S} |\hat \beta_i^\alpha - \hat \beta_i| < \gamma/2, 
\]
leading to
\[
\min_{i \in S} | \hat \beta_i^\alpha | > \min_{i \in S} | \hat \beta_i |- \max_{i \in S} | \hat \beta_i^\alpha - \hat \beta_i | > \zeta_2 \triangleq \gamma/2.
\] 

  \hfill $\Box$

\subsection{Proof of Theorem \ref{thm:lassobounderror}}\label{ssec:proofthmapprox}

 The following lemma plays a critical role in our proof of Theorem \ref{thm:lassobounderror}. 
\begin{lemma}\label{lem:matrix_inversion}
Consider a class of symmetric positive definite matrices of the form
\begin{eqnarray}
\bm{\Gamma}_\delta \triangleq 
\begin{bmatrix}
a+ \delta & \bm{b}^\top \\
\bm{b} & \bm{C}
\end{bmatrix},
\end{eqnarray}
where $a>0$, $\delta \geq 0$ and $\bm{C} \in \mathbb{R}^{n-1 \times n-1}$. Then, for any vector $\bm{v} \in \mathbb{R}^n$ we have
\[
\lim_{\delta \rightarrow \infty}\bm{v}^\top \bm{\Gamma}_\delta^{-1} \bm{v}  \leq \bm{v}^\top \bm{\Gamma}_\delta^{-1} \bm{v} \leq \bm{v}^\top \bm{\Gamma}_0^{-1} \bm{v}.
\]
Furthermore, if we define $\bm{v}_{\slash 1} \triangleq (v_2, v_3, \ldots, v_n)^{\top}$, then 
$\lim_{\delta \rightarrow \infty}\bm{v}^\top \bm{\Gamma}_\delta^{-1} \bm{v} = \bm{v}_{\slash 1}^\top \bm{C}^{-1} \bm{v}_{\slash 1}$.
\end{lemma}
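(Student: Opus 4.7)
\textbf{Proof proposal for Lemma \ref{lem:matrix_inversion}.}

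The plan is to prove the two claims separately: monotonicity of $\bm{v}^\top \bm{\Gamma}_\delta^{-1} \bm{v}$ in $\delta$, and explicit evaluation of the limit via block matrix inversion. Both steps rely only on standard linear algebra; the only mildly delicate point is that $\bm{C}$ must itself be positive definite, which I would record at the outset by noting that $\bm{C}$ is a principal submatrix of the positive definite matrix $\bm{\Gamma}_0$ and is therefore invertible.

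For the monotonicity bound, I would observe that $\bm{\Gamma}_\delta = \bm{\Gamma}_0 + \delta\, \bm{e}_1 \bm{e}_1^\top$, where $\bm{e}_1$ is the first standard basis vector in $\mathbb{R}^n$. Hence, for $0 \leq \delta_1 \leq \delta_2$, the difference $\bm{\Gamma}_{\delta_2} - \bm{\Gamma}_{\delta_1} = (\delta_2 - \delta_1)\bm{e}_1 \bm{e}_1^\top$ is positive semidefinite, so $\bm{\Gamma}_{\delta_1} \preceq \bm{\Gamma}_{\delta_2}$. The operator-monotone decreasing property of matrix inversion on positive definite matrices yields $\bm{\Gamma}_{\delta_2}^{-1} \preceq \bm{\Gamma}_{\delta_1}^{-1}$, and conjugating by $\bm{v}$ gives $\bm{v}^\top \bm{\Gamma}_{\delta_2}^{-1}\bm{v} \leq \bm{v}^\top \bm{\Gamma}_{\delta_1}^{-1}\bm{v}$. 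Applied with $\delta_1 = 0$, this establishes the upper bound $\bm{v}^\top \bm{\Gamma}_\delta^{-1}\bm{v} \leq \bm{v}^\top \bm{\Gamma}_0^{-1}\bm{v}$; applied with $\delta_1 = \delta$ and $\delta_2 \to \infty$ (and using that the quadratic form is bounded below by $0$), it shows the limit exists and is $\leq \bm{v}^\top \bm{\Gamma}_\delta^{-1}\bm{v}$.

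For the explicit limit, I would use the block inversion formula. Since $\bm{C}$ is positive definite and $a+\delta - \bm{b}^\top \bm{C}^{-1}\bm{b} > 0$ (as the Schur complement of $\bm{C}$ in the positive definite $\bm{\Gamma}_\delta$), one has
\begin{equation*}
\bm{\Gamma}_\delta^{-1} = \begin{bmatrix} s_\delta^{-1} & -s_\delta^{-1}\bm{b}^\top \bm{C}^{-1} \\ -s_\delta^{-1}\bm{C}^{-1}\bm{b} & \bm{C}^{-1} + s_\delta^{-1}\bm{C}^{-1}\bm{b}\bm{b}^\top \bm{C}^{-1}\end{bmatrix},
\end{equation*}
where $s_\delta \triangleq a + \delta - \bm{b}^\top \bm{C}^{-1}\bm{b}$. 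As $\delta \to \infty$, $s_\delta \to \infty$ and $s_\delta^{-1} \to 0$, so entrywise (and hence in operator norm) we have $\bm{\Gamma}_\delta^{-1} \to \mathrm{diag}(0, \bm{C}^{-1})$ embedded in the obvious block structure. Contracting with $\bm{v} = (v_1, \bm{v}_{/1}^\top)^\top$ gives $\lim_{\delta \to \infty} \bm{v}^\top \bm{\Gamma}_\delta^{-1}\bm{v} = \bm{v}_{/1}^\top \bm{C}^{-1} \bm{v}_{/1}$, as claimed.

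I do not anticipate a genuine obstacle: both the PSD ordering argument and the Schur complement calculation are standard. The main thing to be careful about is to justify that $\bm{C}$ is invertible (using the assumption that the whole matrix is positive definite) and to verify $s_\delta > 0$ so that the block inversion formula is applicable for every $\delta \geq 0$. An alternative route that would produce the same conclusion in one line is Sherman--Morrison applied to $\bm{\Gamma}_\delta = \bm{\Gamma}_0 + \delta \bm{e}_1 \bm{e}_1^\top$, yielding $\bm{\Gamma}_\delta^{-1} = \bm{\Gamma}_0^{-1} - \tfrac{\delta}{1 + \delta\, \bm{e}_1^\top \bm{\Gamma}_0^{-1}\bm{e}_1}\bm{\Gamma}_0^{-1}\bm{e}_1\bm{e}_1^\top \bm{\Gamma}_0^{-1}$, from which the $\delta \to \infty$ limit is immediate; I would include this as a short alternative derivation if space permits.
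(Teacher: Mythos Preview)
Your proposal is correct and uses essentially the same block-inversion/Schur-complement idea as the paper. The only cosmetic difference is that the paper collapses both claims into a single explicit formula, $\bm{v}^\top \bm{\Gamma}_\delta^{-1}\bm{v} = \bm{v}_{/1}^\top \bm{C}^{-1}\bm{v}_{/1} + \kappa^{-1}(v_1 - \bm{b}^\top \bm{C}^{-1}\bm{v}_{/1})^2$ with $\kappa = a+\delta - \bm{b}^\top \bm{C}^{-1}\bm{b}$, from which both monotonicity in $\delta$ and the limit are read off at once, whereas you argue monotonicity separately via the Loewner order; either route is fine.
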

\textbf{Proof:}
Define $\kappa \triangleq a+\delta - \bm{b}^\top \bm{C}^{-1}\bm{b}$. Note that since the matrix $\bm{\Gamma}_\delta$ is always positive definite, for any value of $\delta$, $\kappa >0$. 
By using the formulas for the inverse of block matrices we have
\begin{eqnarray}
\bm{\Gamma}^{-1}_\delta = 
\begin{bmatrix}
\frac{1}{\kappa} & -\frac{\bm{b}^\top \bm{C}^{-1}}{k}\\
 - \frac{\bm{C}^{-1} \bm{b}}{\kappa} &  \frac{\bm{C}^{-1} \bm{bb}^\top \bm{C}^{-1}}{\kappa}+ \bm{C}^{-1}
\end{bmatrix}.
\end{eqnarray}
Define $\bm{v}_{\slash 1} \triangleq (v_2, v_3, \ldots, v_n)^{\top}$. 
\begin{eqnarray}
\bm{v}^\top \bm{\Gamma}^{-1}_\delta \bm{v}  &=& \frac{v_1^2}{\kappa} + \bm{v}_{\slash 1}^\top \bm{C}^{-1} \bm{v}_{\slash 1} + \frac{\bm{v}_{\slash 1}^\top \bm{C}^{-1} \bm{bb}^\top \bm{C}^{-1}\bm{v}_{\slash 1}}{\kappa} - 2\frac{\bm{v}_{\slash 1}^\top \bm{C}^{-1}\bm{b} v_1}{\kappa}  \nonumber \\
&=&  \bm{v}_{\slash 1}^\top \bm{C}^{-1} \bm{v}_{\slash 1} + \frac{1}{\kappa} (v_{1} -\bm{b}^\top \bm{C}^{-1}\bm{v}_{\slash 1} )^2.
\end{eqnarray}
Lemma \ref{lem:matrix_inversion} follows from the monotonicity of $\bm{v}^\top \bm{\Gamma}^{-1}_\delta \bm{v}$ in terms of $\kappa$. 
\hfill $\Box$

\begin{proof}[Proof of Theorem \ref{thm:lassobounderror}]
Before we start the proof, let us emphasize on the following facts that will be used later in the proof. 
\begin{enumerate}
\item Consider an index $i \in (S \cup T)^{c}$. We know that $\hat \beta_i =0$ and the subgradient $|\hat g_i| < 1$. Hence, according to the proof of Theorem \ref{thm:boundingcoeffs} we have $\alpha \hat \beta^{\alpha}_i < \zeta$. Therefore, according to Lemma \ref{lem:ralpha:limits} we have  $\rdd ({{\hat \beta}}^{\alpha}_i) \rightarrow \infty$ as $\alpha \rightarrow \infty$.

\item Consider $i \in S$. Then by definition $\hat \beta_i \neq 0$. Similar to the proof of Theorem \ref{thm:lasso_approx}, we have $\rdd ({{\hat \beta}}^{\alpha}_i) \rightarrow 0$ as $\alpha \rightarrow \infty$.
\end{enumerate}
Hence, we already know the limiting behavior of  $\rdd ({{\hat \beta}}^{\alpha}_i)$ for $i \in S$ and $i \in (S \cup T)^c$ as $\alpha \rightarrow \infty$. The only remaining index set is $T$. Unfortunately, for $i \in T$ we can not specify the limiting behavior of $\rdd ({{\hat \beta}}^{\alpha}_i)$. Hence, our goal is to use Lemma \ref{lem:matrix_inversion} to get around this issue.  
Set $U\triangleq (S \cup T)^c$ and define the matrices
\begin{eqnarray}
\bm{\tilde{A}}^{\alpha} &\triangleq& \bm{X}_{S}^\top {\rm diag} [\bm{\ldd} (\bl^{\alpha}) ]\bm{X}_{S} + {\rm diag}[ \bm{\rdd}^{\alpha}_{S}(\bl^{\alpha}) ],  \ \ \ \ \ 
\bm{\tilde{B}}^{\alpha} \triangleq \bm{X}_{T}^\top {\rm diag} [\bm{\ldd} (\bl^{\alpha}) ]\bm{X}_{T} , \nonumber \\
\bm{\tilde{C}}^\alpha &\triangleq& \bm{X}_{U}^\top {\rm diag} [\bm{\ldd} (\bl^{\alpha}) ]\bm{X}_{U} + {\rm diag}[ \bm{\rdd}^{\alpha}_{U}(\bl^{\alpha}) ],  \ \ \ \ \ 
\bm{\tilde{D}}^\alpha \triangleq \bm{X}_{S}^\top  {\rm diag} [\bm{\ldd}(\bl^{\alpha})] \bm{X}_{T}, \nonumber \\
\bm{\tilde{E}}^\alpha &\triangleq& \bm{X}_{S}^\top  {\rm diag} [\bm{\ldd}(\bl^{\alpha})] \bm{X}_{U},  \ \ \ \ \ \ \ \ \  \ \ \ \ \ \ \ \  \ \ \ \ \ \ \ \ 
\bm{\tilde{F}}^\alpha \triangleq \bm{X}_{T}^\top  {\rm diag} [\bm{\ldd}(\bl^{\alpha})] \bm{X}_{U}.
\end{eqnarray}
Given this notation we have
\begin{eqnarray}
H_{ii}^{\alpha} = \bm{x}_{i}^\top\begin{bmatrix}
\bm{\tilde{A}}^\alpha & \bm{\tilde{D}}^\alpha  & \bm{\tilde{E}}^\alpha \\
{(\bm{\tilde{D}}^\alpha) }^\top & \bm{\tilde{B}}^\alpha + {\rm diag}[ \bm{\rdd}^{\alpha}_{T}(\bl^{\alpha}) ] & \bm{\tilde{F}}^\alpha  \\
(\bm{\tilde{E}}^{\alpha})^\top  & (\bm{\tilde{F}}^{\alpha})^\top  & \bm{\tilde{C}}^{\alpha} 
\end{bmatrix}^{-1}
\bm{x}_{i} \ldd_i(\bl^{\alpha}).
\end{eqnarray}
Here each element of ${\rm diag}[ \bm{\rdd}^{\alpha}_{T}(\bl^{\alpha})]$ may converge to any number in the range $[0, \infty]$. Hence we use Lemma \ref{lem:matrix_inversion} to find upper and lower bounds for $H_{ii}^{\alpha}$. According to Lemma \ref{lem:matrix_inversion} we have 
\begin{eqnarray}\label{eq:upperbd:lasso}
H_{ii}^{\alpha} \leq \bm{x}_{i}^\top\begin{bmatrix}
\bm{\tilde{A}}^\alpha & \bm{\tilde{D}}^\alpha  & \bm{\tilde{E}}^\alpha \\
{(\bm{\tilde{D}}^\alpha) }^\top & \bm{\tilde{B}}^\alpha & \bm{\tilde{F}}^\alpha  \\
(\bm{\tilde{E}}^{\alpha})^\top  & (\bm{\tilde{F}}^{\alpha})^\top  & \bm{\tilde{C}}^{\alpha} 
\end{bmatrix}^{-1}
\bm{x}_{i} \ldd_i(\bl^{\alpha}),
\end{eqnarray}
and 
\begin{eqnarray}\label{eq:lowerbd:lasso}
H_{ii}^{\alpha} &\geq& \lim_{\delta_{|T| \rightarrow \infty}}\ldots \lim_{\delta_1 \rightarrow \infty} \bm{x}_{i}^\top\begin{bmatrix}
\bm{\tilde{A}}^\alpha & \bm{\tilde{D}}^\alpha  & \bm{\tilde{E}}^\alpha \\
{(\bm{\tilde{D}}^\alpha) }^\top & \bm{\tilde{B}}^\alpha + {\rm diag}[ \delta_1, \delta_2, \ldots, \delta_{|T|} ] & \bm{\tilde{F}}^\alpha  \\
(\bm{\tilde{E}}^{\alpha})^\top  & (\bm{\tilde{F}}^{\alpha})^\top  & \bm{\tilde{C}}^{\alpha} 
\end{bmatrix}^{-1}
\bm{x}_{i} \ldd_i(\bl^{\alpha}) \nonumber \\
&=& \bm{x}_{i, , S\cup U}^\top\begin{bmatrix}
\bm{\tilde{A}}^{\alpha} & \bm{\tilde{E}}^{\alpha} \\
(\bm{\tilde{E}}^{\alpha})^\top  & \bm{\tilde{C}}^{\alpha} 
\end{bmatrix}^{-1}
\bm{x}_{i, S \cup U} \ldd_i(\bl^{\alpha}).
\end{eqnarray}
The rest of the proof is similar to the proof of Theorem \ref{thm:lasso_approx}; we take the limit $\alpha \rightarrow \infty$ from both sides of \eqref{eq:upperbd:lasso} and \eqref{eq:lowerbd:lasso}, and then use the block matrix inversion formulas (similar to those used in the proof of Theorem \ref{thm:lasso_approx}) and the fact that $(\bm{\tilde{A}}^{\alpha})^{-1} \rightarrow 0$ as $\alpha \rightarrow \infty$ to complete the proof.
\end{proof}

\subsection{Derivation of \eqref{eq:bridgeHalo}} \label{ssec:bridgederivation}

\subsubsection{Roadmap of the derivations}\label{ssec:bridgeroadmap}
The goal of this section is to derive the $\alo$ formula, presented in \eqref{eq:bridgeHalo}, for the following class of bridge estimators:
 \begin{eqnarray}\label{eq:ori_optalpha}
\bl \triangleq  \underset{\bm{\beta} \in \R^p}{\argmin}  \Bigl \{ \sum_{i=1}^n  \ell ( y_i|\bm{x_i}^\top \bm{\beta} ) + \lambda \|\bm{\beta}\|_q^q  \Bigr \}, 
\end{eqnarray}
where $q \in (1,2)$.  Since $\|\bm{\beta}\|_q^q$ is not twice differentiable at zero, similar to what we did for LASSO, we first consider a smoothed version of the bridge regularizer:
\begin{equation}\label{eq:smoothingffunc}
    r^q_\gamma(z) = \frac{1}{\gamma}\int |u|^q \psi((z - u)/\gamma) du,
\end{equation}
where $\psi$ satisfies the following conditions:
\begin{description}
    \item[(i)] $\psi$ has  a compact support, i.e.,  $\mathrm{supp}(\psi) = [-1, 1]$. Also, $\psi(w) \geq 0$ for every $w$. 
    \item[(ii)] $\int \psi(w)dw = 1$ and $\psi(0) > 0$;
    \item[(iii)] $\psi$ is infinitely many times smooth and symmetric around 0 on $\mathbb{R}$;
\end{description}
The two important properties of $r^q_\gamma(z)$ are
\begin{enumerate}
\item $r^q_\gamma(z)$ is infinitely many times differentiable for any nonzero value of $\gamma$. 
\item $|r^q_\gamma(z)-|z|^q | \rightarrow 0$ as $\gamma \rightarrow 0$.  This claim will be proved in Lemma \ref{lem:approxerrorbridge} below. 
\end{enumerate}
Hence, instead of finding the $\alo$ formula directly for \eqref{eq:ori_optalpha}, we start with 
\begin{equation}\label{eq:bridgesmoothed}
 \bl^\gamma  \triangleq \arg\min_{\bm{\beta}}\sum_{i=1}^n \ell(y_i| \bm{x_i}^\top \bm{\beta}) + \lambda \sum_{i=1}^p r^q_\gamma ({\beta}_i). 
\end{equation}
Given that both the loss function and the regularizer are smooth in \eqref{eq:bridgesmoothed}, we can use \eqref{eq:aloformulafinal} to obtain the following formula as the estimate of the out-of-sample prediction error of $ \bl^\gamma$:
\begin{eqnarray}\label{eq:aloformulagamma}
\alo^{\gamma} &\triangleq& \frac{1}{n} \sum_{i=1}^n \phi \left (y_i,  \bm{x_i}^\top \bl^\gamma +\left(\frac{\ld_i(\bl^\gamma)}{\ldd_i(\bl^\gamma)}\right) \left(  \frac{H^{\gamma}_{ii}}{1 - H^{\gamma}_{ii}} \right)     \right), 
\end{eqnarray}
where 
\begin{eqnarray}
 \bm{H}^{\gamma} &\triangleq&  \bm{X} \left (\lambda \diag[\bm{\rdd}^q_\gamma(\bl^{\gamma})] + \bm{X}^\top \diag[\bm{\ldd}(\bl^{\gamma})] \bm{X} \right)^{-1}  \bm{X}^\top \diag[\bm{\ldd}(\bl^{\gamma})].
  \end{eqnarray}

  Note that we are interested in $\alo^{\gamma}$ for large values of $\gamma$. Hence, as suggested for the LASSO problem in Section \ref{ssec:non-diff},  we calculate $\lim_{\gamma \rightarrow 0} \alo^{\gamma}$. In Section \ref{proof:thmbridge} we prove the following theorem:
 
 \begin{theorem}\label{thm:aloapproxbridge}
 If the loss function is twice continuously differentiable with respect to its second argument, and the optimization problem in \eqref{eq:bridgesmoothed} has a unique solution for every $\gamma$, then 
 \begin{eqnarray}\label{eq:aloformulagamma}
\lim_{\gamma \rightarrow 0} \alo^{\gamma} &\triangleq& \frac{1}{n} \sum_{i=1}^n \phi \left (y_i,  \bm{x_i}^\top \bl +\left(\frac{\ld_i(\bl)}{\ldd_i(\bl)}\right) \left(  \frac{H_{ii}}{1 - H_{ii}} \right)     \right),
\end{eqnarray}
where
\begin{eqnarray}
\bm{H} \triangleq \bm{X}_S \left ( \bm{X}_S^\top \diag[\bm{\ldd} (\bl)]   \bm{X}_S + \lambda \diag[\bm{\rdd}^q_{S} (\bm{\hat{\beta}})]\right)^{-1} \bm{X}_S^\top \diag[\bm{\ldd} (\bl)],
  \end{eqnarray}
 $S$ is the active set of $\bl$, and ${\rdd}^q (u) = q (q-1) |u|^{q-2}$.

 \end{theorem}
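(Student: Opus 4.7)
My plan mirrors the proof of Theorem \ref{thm:lasso_approx}: apply the smooth ALO formula \eqref{eq:aloformulafinal} to the mollified problem \eqref{eq:bridgesmoothed}, then push $\gamma \to 0$. The first step is to show $\bl^\gamma \to \bl$. Writing $r^q_\gamma(z) - |z|^q = \int(|z - \gamma w|^q - |z|^q)\psi(w)\,dw$ and using the fact that $|\cdot|^q$ is $q$-H\"older continuous on compacts yields $\sup_z|r^q_\gamma(z) - |z|^q| = O(\gamma^q)$, hence $|h_\gamma(\bm{\beta}) - h(\bm{\beta})| = O(\gamma^q)$ uniformly in $\bm{\beta}$. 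The sublevel-set and uniqueness argument used in Lemma \ref{thm:firstbdlasso} then transfers essentially verbatim to give $\|\bl^\gamma - \bl\|_2 \to 0$.

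Next I would identify $\lim_{\gamma \to 0}\rdd^q_\gamma(\hat\beta_i^\gamma)$ separately on $S$ and on $S^c$. Differentiating under the integral (valid because $q > 1$ and $\psi$ is compactly supported and smooth) gives
\begin{equation*}
\rdd^q_\gamma(z) = q(q-1)\int|z - \gamma w|^{q-2}\psi(w)\,dw.
\end{equation*}
For $i \in S$, $\hat\beta_i^\gamma$ stays uniformly bounded away from $0$, so dominated convergence yields $\rdd^q_\gamma(\hat\beta_i^\gamma) \to q(q-1)|\hat\beta_i|^{q-2}$, which is finite. For $i \in S^c$, step one gives $\hat\beta_i^\gamma \to 0$, and I would split into two regimes. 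If $\hat\beta_i^\gamma/\gamma$ is bounded along a subsequence, the substitution $u = (\hat\beta_i^\gamma - \gamma w)/\gamma$ gives $\rdd^q_\gamma(\hat\beta_i^\gamma) = q(q-1)\gamma^{q-2}\int|u|^{q-2}\psi(\hat\beta_i^\gamma/\gamma - u)\,du$, which diverges like $\gamma^{q-2}$ since the remaining integral is bounded below by a positive constant. If instead $\hat\beta_i^\gamma/\gamma \to \infty$, then for small $\gamma$ the integrand satisfies $|\hat\beta_i^\gamma - \gamma w|^{q-2} \geq c|\hat\beta_i^\gamma|^{q-2}$ uniformly on $\mathrm{supp}(\psi)$, so $\rdd^q_\gamma(\hat\beta_i^\gamma) \gtrsim |\hat\beta_i^\gamma|^{q-2} \to \infty$ because $q - 2 < 0$. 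In either case $\rdd^q_\gamma(\hat\beta_i^\gamma) \to \infty$ for $i \in S^c$.

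The final step reuses the block-matrix identity from the proof of Theorem \ref{thm:lasso_approx}. Partition $\lambda\diag[\bm{\rdd}^q_\gamma(\bl^\gamma)] + \bm{X}^\top\diag[\bm{\ldd}(\bl^\gamma)]\bm{X}$ along $(S^c, S)$; the $S^c$-block has a diagonal that blows up by step two, so its inverse tends to $0$ entrywise, while the $S$-block converges to $\bm{X}_S^\top\diag[\bm{\ldd}(\bl)]\bm{X}_S + \lambda\diag[\bm{\rdd}^q_S(\bl)]$ by continuity of $\ldd$ together with the finite limit on $S$. The block-inverse formula then kills every block except the $S \times S$ Schur complement, giving the claimed expression for $\bm{H}$, and continuity of $\phi$, $\ld$, $\ldd$ closes the argument. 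The main obstacle will be the $S^c$ analysis in step two: unlike the sigmoidal smoothing in Theorem \ref{thm:lasso_approx}, where Theorem \ref{thm:boundingcoeffs} pinned down the rate $\alpha\hat\beta_i^\alpha$ via the strict dual feasibility assumption, here no rate on $\hat\beta_i^\gamma \to 0$ is automatic, and the divergence of $\rdd^q_\gamma(\hat\beta_i^\gamma)$ must be extracted purely from the integrability singularity of $|z|^{q-2}$ at $z = 0$ — which is what the two-regime split above accomplishes.
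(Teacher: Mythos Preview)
Your three-step outline---(i) $\bl^\gamma \to \bl$ via uniform approximation and the Lemma~\ref{thm:firstbdlasso} argument, (ii) $\rdd^q_\gamma(\hat\beta_i^\gamma) \to \infty$ on $S^c$ and $\to q(q-1)|\hat\beta_i|^{q-2}$ on $S$, (iii) block-inverse reduction---is exactly the paper's route. One small slip: for $q>1$ the map $|\cdot|^q$ is Lipschitz on compacts (not $q$-H\"older), so the approximation rate in step~(i) is $O(\gamma)$, not $O(\gamma^q)$; this is harmless, and the coercivity bound $|z|^q \leq r^q_\gamma(z)$ (Jensen, using symmetry of $\psi$) confines $\bl^\gamma$ to a compact, which is what you need before invoking the compact-set estimate. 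Where you do differ is in the mechanics of step~(ii): the paper works with the integrated-by-parts form $\rdd^q_\gamma(z) = \int q|z-u|^{q-1}\mathrm{sign}(z-u)\,\gamma^{-2}\dot\psi(u/\gamma)\,du$ and splits into three cases ($z_\gamma/\gamma \to \infty$, $\to c\in(0,1)$, $\to 0$), whereas your direct formula $\rdd^q_\gamma(z)=q(q-1)\int|z-\gamma w|^{q-2}\psi(w)\,dw$ with a two-regime split (bounded vs.\ unbounded $\hat\beta_i^\gamma/\gamma$) is more economical and avoids the sign bookkeeping.
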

  
  Proof of this Theorem is presented in Section \ref{proof:thmbridge}. We will first prove in Lemma \ref{lem:convbridge} that 
\[
\|\bl^\gamma-\bl\|_2 \rightarrow 0. 
\]
Since, $\|\bl^\gamma- \bl\|_2 \rightarrow 0$, and $\ld$ and $\ldd$ functions are continuous, it is straightforward to prove that as $\gamma \rightarrow 0$ 
\[
\ld_i(\bl^\gamma) \rightarrow \ld_i(\bl), \ \ \ \ \ \ \ldd_i(\bl^\gamma) \rightarrow \ldd_i(\bl).
\]
Hence, the final remaining challenge in proving Theorem \ref{thm:aloapproxbridge}  is to calculate $\lim_{\gamma \rightarrow 0} H^{\gamma}_{ii}$. In Section \ref{proof:thmbridge} we prove that
  \begin{eqnarray*}
\lim_{\gamma \rightarrow 0 }  \bm{H}^{\gamma} &=&  \bm{X}_S \left ( \bm{X}_S^\top \diag[\bm{\ldd} (\bl)]   \bm{X}_S + \lambda \diag[\bm{\rdd}^q_S (\bm{\hat{\beta}})\right)^{-1} \bm{X}_S^\top \diag[\bm{\ldd} (\bl)].
\end{eqnarray*}

\subsubsection{Basic properties of $r^q_\gamma(\cdot)$}

\begin{lemma}\label{lem:approxerrorbridge}
The smoothed regularizer $r^q_\gamma(\cdot)$ satisfies
\[
\sup_{|z|<M}|r^q_\gamma(z)-|z|^q | \leq q(M+\gamma)^{q-1} \gamma. 
\]
\end{lemma}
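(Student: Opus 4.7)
The plan is to reduce the bound to an elementary one-variable calculation by first putting the convolution into a form where the mollification parameter only affects the argument of $|\cdot|^q$. Substituting $w = (z-u)/\gamma$ in the definition \eqref{eq:smoothingffunc} turns the integral into
\begin{equation*}
r^q_\gamma(z) = \int_{-1}^{1} |z - \gamma w|^q \, \psi(w)\, dw,
\end{equation*}
using that $\mathrm{supp}(\psi) = [-1,1]$. Since $\int \psi = 1$, the constant $|z|^q$ can be pulled inside the integral, giving
\begin{equation*}
r^q_\gamma(z) - |z|^q = \int_{-1}^{1} \bigl(|z-\gamma w|^q - |z|^q\bigr)\, \psi(w)\, dw.
\end{equation*}

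Next I would bound the integrand pointwise for $|z| \leq M$ and $|w| \leq 1$. The map $t \mapsto t^q$ is $C^1$ on $[0,\infty)$ with derivative $q t^{q-1}$, which is increasing for $q \in (1,2)$. By the mean value theorem applied to $t \mapsto t^q$ between $|z|$ and $|z-\gamma w|$, and using that both points lie in $[0, M+\gamma]$,
\begin{equation*}
\bigl| |z-\gamma w|^q - |z|^q \bigr| \leq q (M+\gamma)^{q-1} \bigl| |z-\gamma w| - |z| \bigr| \leq q (M+\gamma)^{q-1} \gamma |w| \leq q(M+\gamma)^{q-1} \gamma,
\end{equation*}
where the middle inequality is the reverse triangle inequality.

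Combining the two displays and using $\psi \geq 0$ with $\int \psi = 1$ yields
\begin{equation*}
|r^q_\gamma(z) - |z|^q| \leq q (M+\gamma)^{q-1} \gamma \int_{-1}^1 \psi(w)\, dw = q(M+\gamma)^{q-1} \gamma,
\end{equation*}
uniformly over $|z| \leq M$. There is no serious obstacle here: the only step that requires care is ensuring the derivative bound $q t^{q-1}$ is applied on an interval containing both $|z|$ and $|z-\gamma w|$, which is handled by the uniform envelope $M+\gamma$ since $q-1 > 0$.
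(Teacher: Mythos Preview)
Your proof is correct and follows essentially the same approach as the paper: write the difference as an integral of $|z-\gamma w|^q - |z|^q$ against the mollifier, bound the integrand pointwise via the mean value theorem for $t\mapsto t^q$, and use that the derivative $qt^{q-1}$ is bounded by $q(M+\gamma)^{q-1}$ on the relevant interval. The only cosmetic difference is that the paper first argues (via convexity of $t\mapsto t^q$) that the worst case over $u$ is $u=-\gamma$, whereas you go directly through the reverse triangle inequality $\bigl||z-\gamma w|-|z|\bigr|\le \gamma|w|$; your route is slightly cleaner but equivalent in spirit.
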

\begin{proof}
According to the symmetry, we only consider $z\geq 0$. We have
\begin{eqnarray}
|r^q_\gamma(z)- |z|^q| &=& \frac{1}{\gamma} \left| \int_{-\gamma}^\gamma (|z-u|^q - |z|^q)\psi\left(\frac{u}{\gamma}\right) du \right| \nonumber \\
&\overset{(a)}{\leq}& (z+\gamma)^q - z^q \overset{(b)}{=} q \tilde{z}^{q-1} \gamma \leq q(M+\gamma)^{q-1} \gamma. 
\end{eqnarray}
Note that inequality (a) is due to the fact that since $z>0$, the difference between $|z-u|^q - |z|^q$ is maximized when $u= - \gamma$. In other words,
\[
||z-u|^q - |z|^q| \leq (z+\gamma)^q - z^q,  \ \ \ \forall u \in [-\gamma, \gamma].
\]
Furthermore, equality (b) is a result of the mean value theorem and $\tilde{z} \in (z, z+\gamma)$.  
\end{proof}
  
  \subsubsection{Proof of Theorem \ref{thm:aloapproxbridge}}\label{proof:thmbridge}

Consider the following definitions:
\begin{eqnarray}\label{eq:defsmoothellq}
h^q_\gamma(\bm{\beta}) &\triangleq& \sum_{i=1}^n \ell(y_i| \bm{x_i}^{\top} \bm{\beta}) + \lambda \sum_{i=1}^p r^q_\gamma ({\beta}_i), \ \ \ \ \ \ \ \ h^q(\bm{\beta}) \triangleq \sum_{i=1}^n \ell(y_i| \bm{x_i}^{\top} \bm{\beta}) + \lambda \sum_{i=1}^p |{\beta}_i|^q, 
\end{eqnarray}

As discussed in Lemma \ref{lem:approxerrorbridge}, the difference $|r^q_\gamma(z)- |z|^q| $ is bounded by the maximum value that $z$ takes. Our first lemma shows that $\sup_{\gamma \in (0,1]}\| \bl^\gamma\|_\infty <M$. Hence, according to Lemma \ref{lem:approxerrorbridge} the discrepancy between $|r^q_\gamma(\hat \beta^\gamma_{i})- |\hat \beta^\gamma_i|^q|$ goes to zero as $\gamma \rightarrow 0$. 

\begin{lemma}
There exists an $M< \infty$ such that $\sup_{\gamma \in [0,1]}\| \bl^\gamma\|_\infty <M$ and $\|\bl\|_\infty<M$. 
\end{lemma}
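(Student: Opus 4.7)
My plan is to exploit the fact that $\bl^\gamma$ minimizes $h^q_\gamma$, so comparing its objective value to the objective at $\bm{0}$ yields a uniform bound (in $\gamma\in[0,1]$) on $\lambda\sum_i r^q_\gamma(\hat\beta^\gamma_i)$. First I would rewrite
$$r^q_\gamma(z) \;=\; \int_{-1}^{1} |z-\gamma w|^q \, \psi(w)\, dw,$$
which is just the change of variables $w=(z-u)/\gamma$ in the definition \eqref{eq:smoothingffunc}. Plugging in $z=0$ gives $r^q_\gamma(0)=\gamma^q \int|w|^q \psi(w)\, dw \le K$ for some absolute constant $K$ whenever $\gamma\in[0,1]$. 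Hence the bound $h^q_\gamma(\bl^\gamma)\le h^q_\gamma(\bm{0})$, together with the non-negativity of the loss (standard for the regression settings under consideration; if not, one subtracts $\min_z \ell(y_i|z)$ which is finite under Assumption~\ref{A4}), yields
$$\lambda\sum_{i=1}^p r^q_\gamma(\hat\beta^\gamma_i) \;\le\; \sum_{i=1}^n \ell(y_i|0) + \lambda p K \;\triangleq\; C_1,$$
uniformly in $\gamma\in[0,1]$.

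Next I would pass from this bound on the sum of the $r^q_\gamma$-values to a pointwise $\ell_\infty$ bound on the coordinates. The key observation is that since $\psi\ge 0$ is supported on $[-1,1]$, for any $|z|\ge \gamma$ and any $w\in[-1,1]$ we have $|z-\gamma w|\ge |z|-\gamma$, and therefore $r^q_\gamma(z)\ge (|z|-\gamma)^q$. Applying this coordinatewise to $\bl^\gamma$ gives, for every index $i$ with $|\hat\beta^\gamma_i|\ge\gamma$,
$$(|\hat\beta^\gamma_i|-\gamma)^q \;\le\; r^q_\gamma(\hat\beta^\gamma_i) \;\le\; C_1/\lambda,$$
so $|\hat\beta^\gamma_i|\le \gamma + (C_1/\lambda)^{1/q}\le 1+(C_1/\lambda)^{1/q}$. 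Indices with $|\hat\beta^\gamma_i|<\gamma\le 1$ trivially satisfy the same bound, so $\sup_{\gamma\in[0,1]}\|\bl^\gamma\|_\infty\le M$ with $M\triangleq 1+(C_1/\lambda)^{1/q}$.

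The bound for $\bl$ itself is even simpler: by the same optimality comparison $h^q(\bl)\le h^q(\bm{0})$ in \eqref{eq:defsmoothellq}, together with $\ell\ge 0$, one gets $\lambda\sum_i |\hat\beta_i|^q\le \sum_i \ell(y_i|0)$, which immediately yields $\|\bl\|_\infty\le [\sum_i\ell(y_i|0)/\lambda]^{1/q}$. Taking the maximum of the two constants produces a single $M$ that works in both conclusions. I do not foresee any serious obstacle here; the only step that requires a little care is the pointwise-from-sum deduction, which is why I singled out the two-sided comparison $r^q_\gamma(z)\ge (|z|-\gamma)^q$ as the load-bearing inequality. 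Note that we only need $\gamma\in[0,1]$ and not $\gamma\to 0$, so uniformity is automatic once both constants above are written out.
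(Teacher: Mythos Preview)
Your argument is correct and follows essentially the same approach as the paper: compare $h^q_\gamma(\bl^\gamma)$ with $h^q_\gamma(\bm{0})$ and deduce a uniform bound on the regularizer term. The paper only gives a one-line sketch (``if $\|\bl^\gamma\|_\infty\to\infty$ then $h^q_\gamma(\bl^\gamma)\to\infty$, contradicting boundedness of $h^q_\gamma(\bm{0})$''), whereas you supply the explicit quantitative step --- the inequality $r^q_\gamma(z)\ge(|z|-\gamma)^q$ for $|z|\ge\gamma$ --- that converts the bound on $\sum_i r^q_\gamma(\hat\beta^\gamma_i)$ into a coordinatewise $\ell_\infty$ bound; this is exactly the detail the paper leaves implicit.
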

 Here we only present the sketch of the proof, and skip the straightforward details. If $\| \bl^\gamma\|_\infty \rightarrow \infty$, then $h^q_\gamma(\bl^\gamma) \rightarrow \infty$. Hence, since $h^q_\gamma(\bm{0})$ is bounded, $\| \bl^\gamma\|_\infty$ cannot go off to infinity.   
 We can now use this lemma to prove that as $\gamma \rightarrow 0$,  $\|\bl^\gamma-\bl\|_2 \rightarrow 0$.

\begin{lemma}\label{lem:convbridge}
If the optimization problems in \eqref{eq:defsmoothellq} have unique solutions, then as $\gamma \rightarrow 0$
\[
\|\bl^\gamma-\bl\|_2 \rightarrow 0. 
\]
\end{lemma}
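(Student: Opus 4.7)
The plan is to mirror, almost verbatim, the argument used for the LASSO case in Lemma \ref{thm:firstbdlasso}, replacing the uniform approximation bound of Lemma \ref{lem:1:uniform} with the one provided by Lemma \ref{lem:approxerrorbridge}, and exploiting the uniform boundedness $\sup_{\gamma \in (0,1]}\|\bl^\gamma\|_\infty < M$ and $\|\bl\|_\infty < M$ guaranteed by the preceding lemma. The crux is to propagate the pointwise closeness of $r^q_\gamma$ and $|\cdot|^q$ on the bounded region where the minimizers live into closeness of the minimizers themselves, which is where uniqueness of $\bl$ (Assumption \ref{A1}-type) is essential.

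First I would record the quantitative uniform closeness implied by Lemma \ref{lem:approxerrorbridge}: on the $\ell_\infty$-ball of radius $M$, we have $|h^q_\gamma(\bm{\beta}) - h^q(\bm{\beta})| \leq \lambda p\, q (M+\gamma)^{q-1}\gamma \triangleq \epsilon_\gamma$, with $\epsilon_\gamma\to 0$ as $\gamma\to 0$. Since both $\bl$ and $\bl^\gamma$ lie in this ball, this immediately yields the two inequalities
\begin{equation*}
h^q_\gamma(\bl^\gamma) \geq h^q(\bl^\gamma) - \epsilon_\gamma \geq h^q(\bl) - \epsilon_\gamma, \qquad h^q_\gamma(\bl) \leq h^q(\bl) + \epsilon_\gamma,
\end{equation*}
in direct analogy with equations \eqref{eq:lb_gga} and \eqref{eq:up_gga} in the LASSO proof.

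Next, I would argue by contradiction. Suppose $\|\bl^\gamma-\bl\|_2$ does not go to zero; then there exist $\varepsilon>0$ and a sequence $\gamma_j \downarrow 0$ with $\|\bl^{\gamma_j}-\bl\|_2 > \varepsilon$. The $\ell_\infty$-uniform bound $\|\bl^{\gamma_j}\|_\infty \leq M$ keeps this sequence inside a compact subset of $\mathbb{R}^p$, so by Bolzano-Weierstrass we can extract a further subsequence $\bl^{\tilde\gamma_j}$ converging to some limit $\tilde{\bm\beta}$ with $\|\tilde{\bm\beta}-\bl\|_2 \geq \varepsilon$. Combining the two inequalities above with the defining property $h^q_{\tilde\gamma_j}(\bl^{\tilde\gamma_j}) \leq h^q_{\tilde\gamma_j}(\bl)$ gives
\begin{equation*}
h^q(\bl^{\tilde\gamma_j}) \leq h^q_{\tilde\gamma_j}(\bl^{\tilde\gamma_j}) + \epsilon_{\tilde\gamma_j} \leq h^q_{\tilde\gamma_j}(\bl) + \epsilon_{\tilde\gamma_j} \leq h^q(\bl) + 2\epsilon_{\tilde\gamma_j}.
\end{equation*}
Passing to the limit $j\to\infty$, using continuity of $h^q$ (which holds since $q>1$ makes $|\cdot|^q$ continuous, and $\ell$ is twice continuously differentiable by hypothesis) and $\epsilon_{\tilde\gamma_j}\to 0$, we obtain $h^q(\tilde{\bm\beta}) \leq h^q(\bl)$. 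But $\tilde{\bm\beta}\neq \bl$, contradicting the assumed uniqueness of the minimizer of $h^q$.

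The main obstacle I would flag is really just the bookkeeping around the uniform bound $M$: one must verify that the previous (unnamed) boundedness lemma indeed gives a single $M$ that works simultaneously for $\bl$ and for the entire family $\{\bl^\gamma\}_{\gamma \in (0,1]}$, so that Lemma \ref{lem:approxerrorbridge} can be invoked with the \emph{same} $M$ in all of the displays above. Beyond this, the proof is a direct transplant of the LASSO argument and no new ideas are required; in particular, the differentiability and convexity of $r^q_\gamma$ play no role here — only the uniform closeness of the smoothed objective to the target objective on a common compact set matters.
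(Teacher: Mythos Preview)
Your proposal is correct and follows essentially the same approach as the paper, which explicitly defers the proof by stating that it ``is similar to the proof of Lemma \ref{thm:firstbdlasso}, and is hence skipped here.'' You have faithfully filled in the omitted details: the uniform approximation bound from Lemma \ref{lem:approxerrorbridge} replaces that of Lemma \ref{lem:1:uniform}, and the compactness step is supplied by the preceding boundedness lemma rather than by the ad hoc $\ell_1$-bound used in the LASSO case---which is exactly the adaptation the paper intends.
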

The proof of this lemma is similar to the proof of Lemma \ref{thm:firstbdlasso}, and is hence skipped here. As mentioned in Section \ref{ssec:bridgeroadmap}, the main step in proving Theorem \ref{thm:aloapproxbridge} is to find the limit of $\lim_{\gamma \rightarrow 0} \bm{H_\gamma}$. The main step in this calculation is to calculate $\lim_{\gamma \rightarrow 0} \bm{\rdd}(\bl^{\gamma})$. The following lemma shows how this limit can be calculated.

\begin{lemma}\label{lem:behaviorinfinity} Let $z_\gamma$ denote a function of $\gamma$. 
If  $z_\gamma \rightarrow 0$ as $\gamma \rightarrow 0$, then
\[
\lim_{\gamma \rightarrow 0} \rdd_\gamma^q (z_\gamma) =\infty. 
\]
\end{lemma}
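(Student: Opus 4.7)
The plan is to rewrite $r^q_\gamma$ via a change of variables that puts the $\gamma$-dependence inside the singular factor, differentiate twice to obtain an explicit integral representation of $\rdd^q_\gamma$, and then control that integral by a single crude pointwise bound that already blows up as $\gamma \to 0$.

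First, in the definition $r^q_\gamma(z) = \frac{1}{\gamma}\int |u|^q \psi((z-u)/\gamma)\,du$ I would substitute $w = (z-u)/\gamma$, which yields the convolution-against-$\psi$ form
\begin{equation*}
r^q_\gamma(z) = \int |z - \gamma w|^q \, \psi(w)\, dw.
\end{equation*}
Since $q \in (1,2)$, the map $z \mapsto |z|^q$ is continuously differentiable on $\mathbb{R}$ with derivative $q\,\mathrm{sgn}(z)|z|^{q-1}$, and its distributional second derivative $q(q-1)|z|^{q-2}$ is locally integrable (because $q - 2 > -1$). The mollifier $\psi$ is smooth and compactly supported on $[-1,1]$, so one can differentiate twice under the integral sign (placing derivatives on $\psi$ if one prefers to avoid touching the singularity, then undoing an integration by parts) and obtain
\begin{equation*}
\rdd^q_\gamma(z) \;=\; q(q-1)\int |z - \gamma w|^{q-2}\, \psi(w)\, dw.
\end{equation*}
This is the central formula; everything else is a one-line estimate.

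Next, I would use the triangle inequality together with the compact support $\mathrm{supp}(\psi)=[-1,1]$: for every $w$ in this support,
\begin{equation*}
|z_\gamma - \gamma w| \;\leq\; |z_\gamma| + \gamma|w| \;\leq\; |z_\gamma| + \gamma.
\end{equation*}
Since $q - 2 < 0$, the function $t \mapsto t^{q-2}$ is decreasing on $(0,\infty)$, so the inequality flips once we raise both sides to the power $q-2$, giving $|z_\gamma - \gamma w|^{q-2} \geq (|z_\gamma| + \gamma)^{q-2}$ pointwise on $\mathrm{supp}(\psi)$. Plugging this lower bound back into the integral representation of $\rdd^q_\gamma(z_\gamma)$ and using $\int \psi(w)\,dw = 1$ yields
\begin{equation*}
\rdd^q_\gamma(z_\gamma) \;\geq\; q(q-1)\,(|z_\gamma| + \gamma)^{q-2}.
\end{equation*}

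Finally, since $z_\gamma \to 0$ and $\gamma \to 0$ by hypothesis, $|z_\gamma| + \gamma \to 0^+$, and because the exponent $q - 2$ is negative the right-hand side tends to $+\infty$. Combined with $q(q-1) > 0$, this forces $\rdd^q_\gamma(z_\gamma) \to \infty$, which is the claim. The only mildly technical step is justifying differentiation under the integral for $\rdd^q_\gamma$ at points where the integrand has the $|z - \gamma w|^{q-2}$ singularity; I expect this to be routine given the local integrability of $|\cdot|^{q-2}$ (from $q > 1$) and the $C^\infty_c$ smoothness of $\psi$, but it is the one place where a careful dominated-convergence argument (or, alternatively, shifting two derivatives onto $\psi$ via integration by parts and then undoing them) is needed to make the formula for $\rdd^q_\gamma$ fully rigorous.
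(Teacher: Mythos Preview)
Your argument is correct and genuinely simpler than the paper's. The key difference is the representation you use for $\rdd^q_\gamma$. The paper stops at the intermediate formula
\[
\rdd^q_\gamma(z) = \int q\,|z-u|^{q-1}\,\mathrm{sgn}(z-u)\,\frac{1}{\gamma^2}\psd\!\left(\frac{u}{\gamma}\right)du,
\]
whose integrand changes sign, and then argues in three separate regimes ($z_\gamma/\gamma \to c\geq 1$ or $\infty$; $z_\gamma/\gamma \to c\in(0,1)$; $z_\gamma/\gamma \to 0$), each requiring its own manipulation (a mean-value step, splitting the domain of integration, etc.). You instead integrate by parts one step further to reach the nonnegative representation $\rdd^q_\gamma(z)=q(q-1)\int |z-\gamma w|^{q-2}\psi(w)\,dw$, after which the single pointwise bound $|z_\gamma-\gamma w|\leq |z_\gamma|+\gamma$ and the monotonicity of $t\mapsto t^{q-2}$ give the uniform lower bound $q(q-1)(|z_\gamma|+\gamma)^{q-2}\to\infty$ in one stroke, with no case analysis. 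The only place to be careful, which you already flagged, is the justification of the second-derivative formula; pushing both derivatives onto $\psi$ first and then undoing two integrations by parts (using that $t\mapsto \mathrm{sgn}(t)|t|^{q-1}$ is absolutely continuous with locally integrable derivative $(q-1)|t|^{q-2}$ for $q>1$) makes this rigorous exactly as you suggest.
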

\begin{proof}
Without loss of generality we consider the case $z_\gamma \geq 0$. We consider three different cases. Each case has a slightly different proof strategy.  
\begin{enumerate}
\item Case I: $\frac{z_\gamma}{\gamma} \rightarrow \infty$ or $\frac{z_\gamma}{\gamma} \rightarrow c \geq 1$.
\item Case II: $\frac{z_\gamma}{\gamma} \rightarrow c$, where $c \in (0,1)$.  
\item Case III: $\frac{z_\gamma}{\gamma} \rightarrow 0$. 
\end{enumerate}

It is straightforward to show that
\begin{equation}\label{eq:rddellqder}
\rdd_\gamma^q(z_\gamma) = \int_{-\infty}^\infty q|z_\gamma -u|^{q-1}  {\rm sign}(z_\gamma -u) \frac{1}{\gamma^2}\psd \left({\frac{u}{\gamma}} \right) du.   
\end{equation}
Note that $\psd \left({\frac{u}{\gamma}} \right) =0$ for $u$ outside the interval $[-\gamma, \gamma]$. Now we consider the three cases we described above.  

Case I: We assume that for large enough values of $\gamma$, $z_\gamma > \gamma$. Clearly, this holds when $z_\gamma /\gamma \rightarrow c>1$. However, it may be violated when $\frac{z_\gamma}{\gamma} \rightarrow 1$. But, this special case can be handled with a similar approach and is hence skipped. We have
\begin{eqnarray}
|\rdd_\gamma^q(z_\gamma)| &=&  |\int_{-\gamma}^0 q(z_\gamma -u)^{q-1}  \frac{1}{\gamma^2}\psd \left({\frac{u}{\gamma}} \right) du  + \int_{0}^\gamma q(z_\gamma -u)^{q-1}  \frac{1}{\gamma^2}\psd \left({\frac{u}{\gamma}} \right) du| \nonumber \\
&=& |\int_{0}^\gamma q[(z_\gamma -u)^{q-1}-(z_\gamma +u)^{q-1} ] \frac{1}{\gamma^2}\psd \left({\frac{u}{\gamma}} \right) dt |\nonumber \\
&\overset{(a)}{=}& 2q(q-1) |\int_{0}^\gamma  \tilde{z}_u^{q-2} u  \frac{1}{\gamma^2}\psd \left({\frac{u}{\gamma}} \right) du| \nonumber \\
&\overset{(b)}{\geq}&  2q(q-1) (z_\gamma+\gamma) ^{q-2} |\int_{0}^\gamma  u  \frac{1}{\gamma^2}\psd \left({\frac{u}{\gamma}} \right) du | \nonumber \\
&\overset{(c)}{=}& q(q-1) (z_\gamma+\gamma) ^{q-2}. 
\end{eqnarray}
Equality (a) is due to the mean-value theorem. To obtain (b) we used the fact that  $\tilde{z}_u \in [z_\gamma-\gamma, z_\gamma+\gamma]$, and that $z_\gamma - \gamma >0$ (hence $\tilde{z}_u >0$). The last equality is the result of integration by parts. Note that since $z_\gamma \rightarrow 0$ as $\gamma \rightarrow 0$, $q(q-1) (z_\gamma+\gamma) ^{q-2} \rightarrow \infty$.  \\

Case II:  $\frac{z_\gamma}{\gamma} \rightarrow c$, where $c \in (0,1)$. For large enough values of $\gamma$, we know that $z_\gamma< \gamma$. Hence, according to \eqref{eq:rddellqder} we have
\begin{eqnarray}\label{eq:caseIIell_q}
\rdd_\gamma^q(z_\gamma)&=& -\int_0^\gamma q(z_\gamma +u)^{q-1}  \frac{1}{\gamma^2}\psd \left({\frac{u}{\gamma}} \right) du + \int_{0}^{z_\gamma} q(z_\gamma -u)^{q-1}  \frac{1}{\gamma^2}\psd \left({\frac{u}{\gamma}} \right) du  \nonumber \\
&&- \int_{z_\gamma}^\gamma q|z_\gamma -u|^{q-1}  \frac{1}{\gamma^2}\psd \left({\frac{u}{\gamma}} \right) du \nonumber \\
&\geq& -\int_{z_\gamma}^\gamma q(z_\gamma +u)^{q-1}  \frac{1}{\gamma^2}\psd \left({\frac{u}{\gamma}} \right) du \nonumber \\
&\geq& - q(2z_\gamma )^{q-1} \int_{z_\gamma}^\gamma   \frac{1}{\gamma^2}\psd \left({\frac{u}{\gamma}} \right) du = \frac{q(2z_\gamma )^{q-1}}{\gamma} \psi\left(\frac{z_\gamma}{\gamma}\right). 
\end{eqnarray}
It is straightforward to confirm that $\frac{q(2z_\gamma )^{q-1}}{\gamma} \psi(\frac{z_\gamma}{\gamma}) \rightarrow \infty$. \\

Case III: First note that since $z_\gamma/\gamma \rightarrow 0$, for large enough $\gamma$, $z_\gamma<\gamma/2$. Similar to the derivation in \eqref{eq:caseIIell_q}, we have
\begin{eqnarray}
\rdd_\gamma^q(z_\gamma)&\geq& - \int_{z_\gamma}^\gamma q|z_\gamma -u|^{q-1}  \frac{1}{\gamma^2}\psd \left({\frac{u}{\gamma}} \right) du \nonumber \\
&\geq& -\int_{z_\gamma}^\gamma q(z_\gamma +u)^{q-1}  \frac{1}{\gamma^2}\psd \left({\frac{u}{\gamma}} \right) du \nonumber \\
&\geq& -\int_{\gamma/2}^\gamma q(z_\gamma +u)^{q-1}  \frac{1}{\gamma^2}\psd \left({\frac{u}{\gamma}} \right) du \nonumber \\
&\geq& q(z_\gamma +\frac{\gamma}{2})^{q-1} \frac{r(0.5)}{\gamma}. 
\end{eqnarray}
Again, it is straightforward to see that the last expression goes to $\infty$ as $\gamma \rightarrow 0$. 
\end{proof}

  We remind the reader that our goal is to show that 
    \begin{eqnarray*}
\lim_{\gamma \rightarrow 0 }  \bm{H}^{\gamma} &=&  \bm{X}_S \left ( \bm{X}_S^\top \diag[\bm{\ldd} (\bl)]   \bm{X}_S + \lambda \diag [\bm{\rdd}^q_S (\bm{\hat{\beta}})]\right)^{-1} \bm{X}_S^\top \diag[\bm{\ldd} (\bl)],
\end{eqnarray*}
where $S$ denotes the active set of $\bl$. Since, $\|\bl^\gamma- \bl\|_2 \rightarrow 0$, and $\ld$ and $\ldd$ functions are continuous, it is straightforward to prove that 
\[
\ld_i(\bl^\gamma) \rightarrow \ld_i(\bl), \ \ \ \ \ \ \ldd_i(\bl^\gamma) \rightarrow \ldd_i(\bl). 
\]
Let $S$ denote the set of indices of the non-zero elements of $\bl$, and define
\begin{eqnarray}
\bm{A} &=& \bm{X}_{S^c}^\top {\rm diag} [\bm{\ldd} (\bl^{\gamma}) ]\bm{X}_{S^c} + {\rm diag}[ \bm{\rdd}^{q}_{\gamma,S^c}(\bl^{\gamma}) ], \ \ \ \ \ \bm{B} = \bm{X}_{S^c}^\top  {\rm diag} [\bm{\ldd}(\bl^{\gamma})] \bm{X}_{S}, \nonumber \\
\bm{C} &=&  \bm{X}_{S}^\top {\rm diag}[ \bm{\ldd}(\bl^{\gamma})] \bm{X}_{S} + {\rm diag} [\bm{\rdd}^{q}_{\gamma,S} (\bl^{\gamma})].
\end{eqnarray}
Also define $\bm{D}\triangleq (\bm{C}- \bm{B}^\top \bm{A}^{-1}\bm{B})^{-1}$. According to Lemmas \ref{lem:convbridge} and \ref{lem:behaviorinfinity}, the diagonal elements of ${\rm diag}[ \bm{\rdd}^{q}_{\gamma,S^c}(\bl^{\gamma}) ]$ go off to infinity. Hence, it is straightforward to show that $\bm{A}^{-1} \rightarrow 0$, as $\gamma \rightarrow 0$. By using the following identity 
\begin{eqnarray}
\begin{bmatrix}
\bm{A} & \bm{B}\\
\bm{B}^\top & \bm{C} 
\end{bmatrix}^{-1} = \begin{bmatrix}
\bm{A}^{-1} + \bm{A}^{-1}\bm{B DB}^\top \bm{A}^{-1} & - \bm{A}^{-1}\bm{B D} \\
-\bm{D B}^\top \bm{A}^{-1} & \bm{D}
\end{bmatrix},
\end{eqnarray}
and noting that $\lim_{\gamma \rightarrow 0} \bm{A}^{-1} + \bm{A}^{-1}\bm{B DB}^\top \bm{A}^{-1} =0$, $ \lim_{\gamma \rightarrow 0}- \bm{A}^{-1}\bm{B D} =0$, and $\lim_{\gamma \rightarrow 0} \bm{D} = (\bm{X}_S^\top {\rm diag}[ \bm{\ldd}(\bl)] \bm{X}_S)^{-1}$ we obtain
\begin{eqnarray*}
\lim_{\gamma \rightarrow 0 }  \bm{H}^{\gamma} &=&  \bm{X}_S \left ( \bm{X}_S^\top \diag[\bm{\ldd} (\bl)]   \bm{X}_S + \lambda \diag[ \bm{\rdd}^q_{S} (\bm{\hat{\beta}}_{\lambda})\right)^{-1} \bm{X}_S^\top \diag[\bm{\ldd} (\bl)].
\end{eqnarray*}


\subsection{Proofs of the Lemmas of Section \ref{sec:res}}

\subsubsection{Proof of Lemma \ref{lem:logistic1} } \label{ssec:prooflogistic1}

Since 
\begin{eqnarray*}
\ld_i(\bm{\beta}) &=& -y_i + \frac{e^{\bm{x_i}^\top \bm{ \beta}}}{1+e^{\bm{x_i}^\top \bm{ \beta}}}, \quad \ldd_i(\bm{\beta}) = \frac{e^{\bm{x_i}^\top \bm{ \beta}}}{(1+e^{\bm{x_i}^\top \bm{ \beta}} )^2},  \quad \lddd_i(\bm{\beta}) = \frac{e^{\bm{x_i}^\top \bm{ \beta}} (1-e^{\bm{x_i}^\top \bm{ \beta}} ) }{(1+e^{\bm{x_i}^\top \bm{ \beta}} )^3}  
\end{eqnarray*}
using simple algebra it is straightforward to show that for any $\bm{\beta}$, we have
\begin{eqnarray*}
 \| \ld(\bm{\beta}) \|_{\infty}  \leq 1, \quad \| \ldd(\bm{\beta}) \|_{\infty}  \leq 1/4, \quad  \| \lddd(\bm{\beta}) \|_{\infty}  \leq 1/10
\end{eqnarray*}
Therefore,
\begin{eqnarray*}
 \|  \bm{\ldd}_{/i}( \bm{\beta + \delta}) - \bm{\ldd}_{/ i}(\bm{\beta}  ) \|_2 &\leq&  \|  \bm{\ldd}( \bm{\beta+\delta}) - \bm{\ldd}(\bm{\beta}  ) \|_2 = \sqrt{\sum_i \left(\ldd(\beta_i+\delta_i) - \ldd(\beta_i) \right)^2 } \\
 &=&\sqrt{\sum_i \lddd(\beta_i+\epsilon_i)^2 (\bm{x}_i^T\bm{\delta})^2 } \quad \mbox{using the mean-value Theorem where} \quad \epsilon_i \in [0,\delta_i] \\
 &\leq& \sqrt{\bm{\delta}^\top \bm{X}^\top \bm{X} \bm{\delta}} \leq \sqrt{\sigma_{max}({\bm{X}^\top \bm{X}})} \|\bm{\delta}\|_2. 
\end{eqnarray*}
Finally, based on the inequality above, we have
\begin{eqnarray*}
 \sup_{t \in [0,1]} \frac{  \|  \bm{\ldd}_{/ i}( (1-t) \bli + t \bl ) - \bm{\ldd}_{/ i}(\bl  ) \|_2}{ \|   \bli - \bl   \|_2} &\leq&  \sup_{t \in [0,1]} \frac{  (1-t) \| \bli -\bl  \|_2  \sqrt{\sigma_{max}({\bm{X}^\top \bm{X}})}}{ \|   \bli - \bl   \|_2} \leq \sqrt{\sigma_{max}({\bm{X}^\top \bm{X}})}.
\end{eqnarray*}
The last statement of the Theorem is a direct result of Lemma \ref{lem:7}.

\subsubsection{Proof of Lemma \ref{lem:poisson1}}\label{ssec:proofpoisson1}

Since 
\begin{eqnarray*}
\ld_i(\bm{\beta}) &=& f'(\bm{x}_i^\top \bm{\beta})- y_i f'(\bm{x}_i^\top \bm{\beta})/f(\bm{x}_i^\top \bm{\beta}), \\ 
\ldd_i(\bm{\beta}) &=& f''(\bm{x}_i^\top \bm{\beta})- y_i (f'/f)'(\bm{x}_i^\top \bm{\beta}), 
\\ 
\lddd_i(\bm{\beta}) &=&  f'''(\bm{x}_i^\top \bm{\beta})- y_i (f'/f)''(\bm{x}_i^\top \bm{\beta}), 
\end{eqnarray*}
where
\begin{eqnarray}
f'(z)=\frac{e^z}{1+e^z}, \quad f''(z)=\frac{e^z}{(1+e^z)^2}, \quad f'''(z)=\frac{e^z(1-e^z)}{(1+e^z)^3}. \label{eq:f}
\end{eqnarray}
Concerning equations \eqref{eq:f}, the following inequalities hold
\begin{eqnarray*}
f'(z) \leq 1, \quad f''(z)\leq 1/4, \quad f'''(z)\leq 1/10.
\end{eqnarray*}
For any  $x>0$, consider the function
\begin{eqnarray*}
h(x):= \frac{x}{(1+x)\log(1+x)}.
\end{eqnarray*}
It is straightforward to check that $h(x)$ is a decreasing function of $x>0$ and that $\lim_{x \rightarrow 0} h(x) =1$. Hence, by simply using $x=e^z$, for any $z$, we have 
\begin{eqnarray*}
f'(z)/f(z) \leq 1 \quad \mbox{ leading to } \quad \| \ld(\bm{\beta}) \|_{\infty}  \leq 1 +\| \bm{y} \|_{\infty}.
\end{eqnarray*}
Moreover, 
\begin{eqnarray*}
f''/f &=& f'/f \times 1/(1+e^z) \leq 1\\
f'''/f &=&  f''/f \times (1-e^z)/(1+e^z)\leq 1.
\end{eqnarray*}
Since 
\begin{eqnarray*}
(f'/f)'' &=& f'''/f + 2 f'^3/f^3 - 3 f' f''/f^2 \quad \mbox{ leading to } \quad  |(f'/f)''| \leq 6.
\end{eqnarray*}
Therefore, $\| \lddd(\bm{\beta}) \|_{\infty}  \leq 1 + 6 \| \bm{y} \|_{\infty}$, leading to
\begin{eqnarray*}
 \|  \bm{\ldd}_{/i}( \bm{\beta + \delta}) - \bm{\ldd}_{/ i}(\bm{\beta}  ) \|_2 &\leq&  \|  \bm{\ldd}( \bm{\beta+\delta}) - \bm{\ldd}(\bm{\beta}  ) \|_2 = \sqrt{\sum_i \left(\ldd(\beta_i+\delta_i) - \ldd(\beta_i) \right)^2 } \\
 &=&\sqrt{\sum_i \lddd(\beta_i+\epsilon_i)^2 \delta_i^2 } \quad \mbox{using the mean-value Theorem where} \quad \epsilon_i \in [0,\delta_i] \\
 &\leq& (1 + 6 \| \bm{y} \|_{\infty}) \sqrt{\sigma_{\max} (\bm{X}^\top \bm{X})}\| \bm{\delta} \|_2.
\end{eqnarray*}
Finally, based on the inequality above, we have
\begin{eqnarray*}
 \sup_{t \in [0,1]} \frac{  \|  \bm{\ldd}_{/ i}( (1-t) \bli + t \bl ) - \bm{\ldd}_{/ i}(\bl  ) \|_2}{ \|   \bli - \bl   \|_2} &\leq& (1 + 6 \| \bm{y} \|_{\infty}) \sqrt{ \sigma_{\max} (\bm{X}^\top \bm{X})} \sup_{t \in [0,1]} \frac{  (1-t) \| \bli -\bl  \|_2}{ \|   \bli - \bl   \|_2} \nonumber \\
 &\leq& (1 + 6 \| \bm{y} \|_{\infty} ) \sqrt{ \sigma_{\max} (\bm{X}^\top \bm{X})}.
\end{eqnarray*}

\subsubsection{Proof of Lemma \ref{lem:poisson2}}\label{ssec:proofpoisson2}
 We prove Lemma \ref{lem:poisson2} using the following inequality (for large enough $n$ such that $\sqrt{\tilde{c}} \log^{3/2}n > 1$) 
\begin{eqnarray*}
\Pr \left( (1 + 6 \| \bm{y} \|_{\infty}) \sqrt{\sigma_{max} ({\bm{X}^\top \bm{X}})} \geq \zeta_1 \log^{3/2} n  \right)  &\leq&
\Pr \left( \|\bm{y}\|_\infty \geq (2 \log n) \sqrt{9\tilde{c} \log n}  \right) \\
&+&
\Pr \left( \sigma_{\max} (\bm{X}^\top \bm{X}) \geq c (1+  \frac{3}{\sqrt{\delta_0}})^2  \right),
\end{eqnarray*}
where $\zeta_1=37 \sqrt{c \tilde{c}} \Big(1+ \frac{3}{\sqrt{\delta_0}} \Big).$ Note that according to Lemma \ref{lem:7}, we have 
\[
\Pr \left[\sigma_{\max} (\bm{X}^\top \bm{X}) \geq c (1+  \frac{3}{\sqrt{\delta_0}})^2\right] \leq {\rm e}^{-p}. 
\]
In the next step, we bound $\|\bm{y}\|_\infty$. We have
\begin{eqnarray}
\Pr (\|\bm{y}\|_\infty \geq t \ | \ \bm{X}) \leq \sum_{i=1}^n \Pr(y_i \geq t \ | \ \bm{X}).
\end{eqnarray}
For $t> \|\bm{\lambda}\|_\infty$, set $\gamma_i = \log \left( \frac{t}{\lambda_i} \right)$. We have
\[
\Pr (y_i \geq t  \ | \ \bm{X}) = \Pr ({\rm e}^{\gamma_i y_i} \geq {\rm e}^{\gamma_i t} \ | \bm{X} ) \overset{(a)}{\leq} {\rm e}^{- \gamma_i t} \mathbb{E} ({\rm e}^{\gamma_i y_i}) \overset{(b)}{=}  {\rm e}^{- \gamma_i t} {\rm e}^{\lambda_i({\rm e}^\gamma_i -1)} = {\rm e}^{-t \log \frac{t}{\lambda_i} +t - \lambda_i},
\]
where (a) is a result of Markov's inequality, and (b) uses the formula for the moment generating function of a Poisson random variable. It is straightforward to see that ${\rm e}^{-t \log \frac{t}{\lambda_i} +t - \lambda_i} \leq {\rm e}^{-t \log \frac{t}{\|\bm{\lambda}\|_\infty} +t - \|\bm{\lambda\|_\infty}}$. Hence,
\[
\Pr (\|\bm{y}\|_\infty \geq t \ | \ \bm{X}) \leq n {\rm e}^{-t \log \frac{t}{\|\bm{\lambda}\|_\infty} +t - \|\bm{\lambda\|_\infty}}. 
\]
If we set $t =2 \|\bm{\lambda}\|_\infty \log n$, then (for large enough $n$ such that $\log \log n >2$) we will have
\[
\Pr (\|\bm{y}\|_\infty \geq 2 \|\bm{\lambda}\|_\infty \log n  \ | \ \bm{X}) \leq n^{1- \|\bm{\lambda}\|_\infty \log\log n }. 
\]
Define the event
\[
\mathcal{E}_p = \{ 1 \leq \|\bm{\lambda} \|_\infty \leq \sqrt{9 \tilde{c} \log n} \},
\]
and the set 
\[
\mathcal{X}_p = \{ \bm{X}\ | \  1 \leq \|\bm{\lambda} \|_\infty \leq \sqrt{9 \tilde{c} \log n} \}.
\]
First note that 
\[
\lambda_i = \log (1+ {\rm e}^{\bm{x}_i^\top \bm{\beta^*}}) \leq \log (1+ {\rm e}^{|\bm{x}_i^\top \bm{\beta^*}| }) \leq \log ({\rm e}^{|\bm{x}_i^\top \bm{\beta^*}| }+ {\rm e}^{|\bm{x}_i^\top \bm{\beta^*}| }) \leq \log 2 + |\bm{x}_i^ \top \bm{\beta^*}|. 
\]
If we define the event
\[
\tilde{\mathcal{E}}_p = \{ \log 2+ \max_i  |\bm{x}_i^ \top \bm{\beta^*}| \leq \sqrt{9 \tilde{c} \log n} \}  \cap \{ \max_i  |\bm{x}_i^ \top \bm{\beta^*}| \geq 1\},
\]
then $\tilde{\mathcal{E}}_p \subset \mathcal{E}_p$, leading to 
\begin{eqnarray*}
\Pr \left( \mathcal{E}_p^c \right) &\leq& \Pr \left( \tilde{ \mathcal{E}}_p^c \right)
 \leq \Pr \left( \{ \log 2+ \max_i  |\bm{x}_i^ \top \bm{\beta^*}| \leq \sqrt{9 \tilde{c} \log n} \}^{c}  \right)+\Pr (\{ \max_i  |\bm{x}_i^ \top \bm{\beta^*}| \geq 1\}^c).
\end{eqnarray*}
Note that (for large enough $n$)
\begin{eqnarray}
\lefteqn{\Pr \left( \{ \log 2+ \max_i  |\bm{x}_i^ \top \bm{\beta^*}| \leq \sqrt{9 \tilde{c} \log n} \}^{c}  \right)
\leq n \Pr \left(   |\bm{x}_i^ \top \bm{\beta^*}| \geq \sqrt{9 \tilde{c} \log n} -1   \right) } \nonumber \\
&\leq& n \Pr \left(   |\bm{x}_i^ \top \bm{\beta^*}| \geq 2\sqrt{ \tilde{c} \log n}    \right) \overset{(c)}{\leq} 2 n {\rm e}^{-2 \log n} \leq \frac{2}{n},
\end{eqnarray}
where (c) is a direct consequence of the Gaussian tail bound in Lemma \ref{lem:GaussianTail} and the fact that $\bm{x}_i^ \top \bm{\beta^*} \sim \N(0, \tilde{c})$. Furthermore, if $Z \sim N(0, \tilde{c})$, then
\begin{eqnarray}
\Pr (\{ \max_i  |\bm{x}_i^ \top \bm{\beta^*}| \geq 1\}^c) = \left(  \Pr(|\bm{x}_i^ \top \bm{\beta^*}| < 1) \right)^n = (P(Z < 1))^n. 
\end{eqnarray}

We now have
\begin{eqnarray}
\lefteqn{\Pr (\|\bm{y}\|_\infty \geq (2 \log n) \sqrt{9\tilde{c} \log n} ) } \nonumber \\
&=& \int_{\bm{X} \in \mathcal{X}_p} \Pr (\|\bm{y}\|_\infty \geq (2 \log n) \sqrt{9\tilde{c} \log n}  \ | \  \bm{X}) dp_{\bm{X}} +  \int_{\bm{X} \in \mathcal{X}^c_p} \Pr (\|\bm{y}\|_\infty \geq (2 \log n) \sqrt{9 \tilde{c} \log n}  \ | \  \bm{X}) dp_{\bm{X}} \nonumber \\
&\leq&  \int_{\bm{X} \in \mathcal{X}_p} \Pr (\|\bm{y}\|_\infty \geq 2 \log n \|\bm{\lambda}\|_\infty  \ | \  \bm{X}) dp_{\bm{X}} + \Pr(\mathcal{E}^c_p)  \nonumber \\
&\leq&  \int_{\bm{X} \in \mathcal{X}_p}  n^{1- \|\bm{\lambda}\|_\infty \log\log n } dp_{\bm{X}} + \Pr \left( \tilde{ \mathcal{E}}_p^c \right)  \\
&\leq&  \int_{\bm{X} \in \mathcal{X}_p}  n^{1- \log\log n } dp_{\bm{X}} + \Pr \left( \{ \log 2+ \max_i  |\bm{x}_i^ \top \bm{\beta^*}| \leq \sqrt{9 \tilde{c} \log n} \}^{c}  \right)+\Pr (\{ \max_i  |\bm{x}_i^ \top \bm{\beta^*}| \geq 1\}^c)
  \nonumber \\
&\leq& n^{1- \log\log n } + \frac{2}{n} + {\rm e}^{-n \log (\frac{1}{P(Z \leq 1)}) }, \nonumber
\end{eqnarray}
where $Z \sim N(0, \tilde{c})$.

\subsubsection{Proof of Lemma \ref{lem:psudoHuber}} \label{ssec:proofpseudoHuber}

Note that
\begin{eqnarray}
f'(a) &=& \frac{a}{\sqrt{1+ \left(\frac{a}{\gamma} \right)^2}} \leq \gamma, \nonumber \\
f''(a) &=& \left({1+ \left(\frac{a}{\gamma} \right)^2}\right)^{\frac{3}{2}}, \nonumber \\
f'''(a) &=& \frac{-3a\gamma^3}{(a^2+\gamma^2)^{\frac{5}{2}}}.
\end{eqnarray}
Note that $|f'''(a)| \leq \frac{3}{\gamma}$. To see this consider the following two cases:
\begin{itemize}
\item Case I, $|a|  \leq \gamma$: 
\[
|f'''(a)| \leq \frac{3 \gamma^4}{\gamma^5} \leq \frac{3}{\gamma}.
\]

\item Case II, $|a| > \gamma$:
\[
|f'''(a)| \leq \frac{3 |a| \gamma^3}{|a|^5} \leq \frac{3\gamma^3}{a^4|} \leq \frac{3}{\gamma}.  
\]
\end{itemize}

Therefore,
\begin{eqnarray*}
 \|  \bm{\ldd}_{/i}( \bm{\beta + \delta}) - \bm{\ldd}_{/ i}(\bm{\beta}  ) \|_2 &\leq&  \|  \bm{\ldd}( \bm{\beta+\delta}) - \bm{\ldd}(\bm{\beta}  ) \|_2 = \sqrt{\sum_i \left(\ldd(\beta_i+\delta_i) - \ldd(\beta_i) \right)^2 } \\
 &=&\sqrt{\sum_i \lddd(\beta_i+\epsilon_i)^2 (\bm{x}_i^\top\bm{\delta})^2 } \quad \mbox{using the mean-value Theorem where} \quad \epsilon_i \in [0,\delta_i] \\
 &\leq&\frac{3}{\gamma} \sqrt{\bm{\delta}^\top \bm{X}^\top \bm{X} \bm{\delta}} \leq \frac{3}{\gamma} \sqrt{\sigma_{max}({\bm{X}^\top \bm{X}})} \|\bm{\delta}\|_2. 
\end{eqnarray*}
Finally, based on the inequality above, we have
\begin{eqnarray*}
 \sup_{t \in [0,1]} \frac{  \|  \bm{\ldd}_{/ i}( (1-t) \bli + t \bl ) - \bm{\ldd}_{/ i}(\bl  ) \|_2 }{ \|   \bli - \bl   \|_2} &\leq&  \sup_{t \in [0,1]} \frac{  (1-t) \| \bli -\bl  \|_2 \sqrt{\sigma_{max}({\bm{X}^\top \bm{X}})}}{ \|   \bli - \bl   \|_2} \leq \sqrt{\sigma_{max}({\bm{X}^\top \bm{X}})}.
\end{eqnarray*}

\subsubsection{Proof of Lemma \ref{cor:ridge}} \label{ssec:proofcorridge}

Define
\begin{eqnarray}
\bl = \arg\min_{\bm{\beta}} f(\bm{\beta}) =  \arg\min_{\bm{\beta}}  \sum_{j=1}^n \frac{(y_j - \bm{x}_j^\top \bm{\beta})^2}{{2}} + \lambda \sum_{j=1}^p r(\beta_i), \nonumber \\
\bli = \arg\min_{\bm{\beta}} f_{\slash i}(\bm{\beta}) =  \arg\min_{\bm{\beta}}  \sum_{j=1, j \neq i}^n \frac{(y_j - \bm{x}_j^\top \bm{\beta})^2}{{2}} + \lambda \sum_{j=1}^p r(\beta_i)
\end{eqnarray}
Furthermore, define $r_{0.5} (\beta) = \frac{\gamma}{2} \beta^2 + (1-\gamma) r^\alpha (\beta)$. Since $\bm{y}=\bm{X\beta^*} + \bm{\e}$, where $\bm{\e} \sim \N(0,\bm{I} \sigma_{\e}^2)$, the optimality conditions yield
\begin{eqnarray}
\bm{y}- \bm{X}\bl &=&  (\bm{I}- \bm{X}(\bm{X}^\top \bm{X}+ \frac{\lambda \gamma}{2} \bm{I})^{-1} \bm{X}^{\top})\bm{y} {+} \lambda   \bm{X}(\bm{X}^\top \bm{X}+ \frac{\lambda \gamma}{2} \bm{I})^{-1}\bm{\rd}_{0.5}(\bl) \nonumber  \\
&=&  (\bm{I}- \bm{X}(\bm{X}^\top \bm{X}+ \frac{\lambda \gamma}{2} \bm{I})^{-1} \bm{X}^{\top})\bm{X \beta^*} +  (\bm{I}- \bm{X}(\bm{X}^\top \bm{X}+ \frac{\lambda \gamma}{2} \bm{I})^{-1} \bm{X}^{\top})\bm{\epsilon} \nonumber \\
&& + {\lambda}   \bm{X}(\bm{X}^\top \bm{X}+ \frac{\lambda \gamma}{2} \bm{I})^{-1}\bm{\rd}_{0.5}(\bl).\label{eq:three}
\end{eqnarray}
We bound $\|\bm{y}- \bm{X}\bl \|_\infty$ and complete the proof of Lemma \ref{cor:ridge}, by separately bounding the infinity norm of each of the three terms in \ref{eq:three} using  Lemma \ref{lem:l1}, \ref{lem:l2} and \ref{lem:l3}, and defining 
\begin{eqnarray}  
\tilde \zeta = 2 \sqrt{c \tilde{c}} + 2 \sigma_\epsilon + \lambda \bar{\zeta} \label{eq:zz}, 
\end{eqnarray}
where $\bar{\zeta}$ is introduced in Lemma \ref{lem:l3}.

\begin{lemma} \label{lem:l1}
Under the assumptions of Lemma \ref{cor:ridge} we have 
\[
\Pr ( \|(\bm{I}- \bm{X}(\bm{X}^\top \bm{X}+ \frac{\lambda \gamma}{2} \bm{I})^{-1} \bm{X}^{\top}) \bm{X}\bm{\beta^*} \|_\infty > 2\sqrt{c \tilde{c} { \log n}}) \leq \frac{2}{n}. 
\]

\end{lemma}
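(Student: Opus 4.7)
The plan is to express each entry of the target vector as a scalar involving only the $i$-th row of $\bm{X}$ coupled with a matrix that does not depend on that row, and then condition to get a Gaussian tail bound. Let $\mu \triangleq \lambda\gamma/2 > 0$ and $\bm{M} \triangleq \bm{I} - \bm{X}(\bm{X}^\top\bm{X} + \mu\bm{I})^{-1}\bm{X}^\top$. First I would use the algebraic identity $(\bm{X}^\top\bm{X}+\mu\bm{I})^{-1}\bm{X}^\top\bm{X} = \bm{I} - \mu(\bm{X}^\top\bm{X}+\mu\bm{I})^{-1}$ to rewrite
\[
[\bm{M}\bm{X}\bm{\beta}^*]_i = \mu\, \bm{x}_i^\top (\bm{X}^\top\bm{X} + \mu\bm{I})^{-1}\bm{\beta}^*.
\]

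Next I would decouple $\bm{x}_i$ from the inverse via a rank-one leave-one-out update. Writing $\bm{X}^\top\bm{X} + \mu\bm{I} = \bm{A}_{-i} + \bm{x}_i\bm{x}_i^\top$ with $\bm{A}_{-i} \triangleq \bm{X}_{\slash i}^\top \bm{X}_{\slash i} + \mu\bm{I}$ (which is independent of $\bm{x}_i$ by the row-independence in Assumption \ref{ass:0}), the Sherman--Morrison formula collapses the expression to
\[
[\bm{M}\bm{X}\bm{\beta}^*]_i \;=\; \frac{\mu\, \bm{x}_i^\top \bm{A}_{-i}^{-1}\bm{\beta}^*}{1 + \bm{x}_i^\top \bm{A}_{-i}^{-1}\bm{x}_i}.
\]
Since $\bm{A}_{-i} \succeq \mu\bm{I}$, we have $\bm{x}_i^\top\bm{A}_{-i}^{-1}\bm{x}_i \geq 0$, so the denominator is at least $1$ and we obtain the clean bound $|[\bm{M}\bm{X}\bm{\beta}^*]_i| \leq \mu\, |\bm{x}_i^\top \bm{A}_{-i}^{-1}\bm{\beta}^*|$.

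Finally I would apply a conditional Gaussian tail bound. Conditional on $\bm{A}_{-i}$, the scalar $\mu\, \bm{x}_i^\top \bm{A}_{-i}^{-1}\bm{\beta}^*$ is zero-mean Gaussian with variance $\mu^2\, \bm{\beta}^{*\top}\bm{A}_{-i}^{-1}\bm{\Sigma}\bm{A}_{-i}^{-1}\bm{\beta}^*$. Using $\bm{\Sigma}\preceq \rho_{\max}\bm{I} = (c/n)\bm{I}$ together with $\bm{A}_{-i}^{-2}\preceq \mu^{-2}\bm{I}$ and $\|\bm{\beta}^*\|_2^2 = n\tilde{c}$, this variance is bounded by $\mu^2 \cdot (c/n) \cdot (n\tilde{c}/\mu^2) = c\tilde{c}$. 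Lemma \ref{lem:GaussianTail} then gives
\[
\Pr\!\Bigl(|[\bm{M}\bm{X}\bm{\beta}^*]_i| > 2\sqrt{c\tilde{c}\log n}\;\big|\; \bm{A}_{-i}\Bigr) \;\leq\; 2\exp(-2\log n) \;=\; 2/n^2,
\]
and after taking expectation over $\bm{A}_{-i}$ and a union bound over $i=1,\ldots,n$ we obtain the claimed $2/n$.

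The main obstacle is the first two steps: bounding $\|\bm{M}\bm{X}\bm{\beta}^*\|_\infty$ directly by $\|\bm{M}\bm{X}\bm{\beta}^*\|_2$ only gives $O(\sqrt{n})$, far weaker than the required $O(\sqrt{\log n})$. The crux is recognizing that the Woodbury-style simplification plus the Sherman--Morrison rank-one removal isolate a single Gaussian quadratic form whose variance is uniformly controlled by the covariance scaling, at which point the tail bound and union bound become routine.
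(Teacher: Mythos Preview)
Your proof is correct and follows essentially the same approach as the paper: rewrite the vector as $\mu\bm{X}(\bm{X}^\top\bm{X}+\mu\bm{I})^{-1}\bm{\beta}^*$, apply Sherman--Morrison (the paper's ``matrix inversion lemma'') to decouple $\bm{x}_i$ from the leave-$i$-out inverse, bound the resulting conditional Gaussian variance by $c\tilde c$, and finish with a Gaussian tail bound plus union bound. The only differences are cosmetic notation ($\mu,\bm{A}_{-i}$ versus the paper's $\bm{D}_i$) and that you bound the variance of the full quantity $\mu\,\bm{x}_i^\top\bm{A}_{-i}^{-1}\bm{\beta}^*$ directly, whereas the paper bounds the variance of $\bm{x}_i^\top\bm{D}_i\bm{\beta}^*$ and absorbs the factor $\mu$ when choosing the threshold $t$.
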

\begin{proof}
First note that
\begin{equation}\label{eq:biasfirstexp1}
 (\bm{I}- \bm{X}(\bm{X}^\top \bm{X}+ \frac{\lambda \gamma}{2} \bm{I})^{-1} \bm{X}^{\top}) \bm{X}\bm{\beta^*} =  \frac{\lambda \gamma}{2} \bm{X}(\bm{X}^\top \bm{X}+ \frac{\lambda \gamma}{2} \bm{I})^{-1} \bm{\beta^*}. 
\end{equation}
Define $\bm{D}_i = (\XI^\top \XI +\frac{\lambda \gamma}{2} \bm{I})^{-1}$. According to the matrix inversion lemma we have
\begin{eqnarray}\label{eq:upperbias1}
\bm{x}_i^\top (\bm{X}^\top \bm{X}+ \frac{\lambda \gamma}{2} \bm{I})^{-1} \bm{\beta^*} = \bm{x}_i^\top \bm{D}_i \bm{\beta^*} - \frac{{\bm x}_i^\top \bm{D}_i \bm{x}_i \bm{x}^{\top}_i \bm{D}_i \bm{\beta^*} }{1+ \bm{x}_i^\top \bm{D}_i \bm{x_i} } = \frac{ \bm{x}^{\top}_i \bm{D}_i \bm{\beta^*} }{1+ \bm{x}_i^\top \bm{D}_i \bm{x_i} }.
\end{eqnarray}
Note that conditioned on $\XI$ the distribution of $\bm{x}_i^\top \bm{D}_i \bm{\beta^*}$ is a zero mean Gaussian random variable with variance $v_i =  \| \bm{\Sigma}^{1/2} \bm{D}_i \bm{\beta^*}\|_2^2 \leq \frac{4 \rho_{\max}}{ \lambda^2 \gamma^2} \|\bm{\beta^*}\|_2^2$. Hence, \eqref{eq:upperbias1} and the Gaussian tail bound, i.e. Lemma \ref{lem:GaussianTail}, lead to 
\begin{eqnarray}
\Pr (|\bm{x}_i^\top (\bm{X}^\top \bm{X}+\frac{ \lambda \gamma}{2} \bm{I})^{-1} \bm{\beta^*} | > t \ |  \ \XI) \leq \Pr (|\bm{x}^{\top}_i \bm{D}_i \bm{\beta^*}| > t \ | \ \XI) \leq 2 {\rm e}^{-\frac{t^2}{2 \| \Sigma^{1/2} \bm{D}_i \bm{\beta^*}\|_2^2}} \leq 2 {\rm e}^{- \frac{\lambda^2 \gamma^2 t^2}{8 \rho_{\max} \|\beta^*\|_2^2 }}.
\end{eqnarray}
 Hence, by marginalizing $\XI$, we get
\[
\Pr (|\bm{x}_i^\top (\bm{X}^\top \bm{X}+ \frac{\lambda \gamma}{2} \bm{I})^{-1} \bm{\beta^*} | > t ) \leq 2 {\rm e}^{- \frac{\lambda^2 \gamma^2 t^2}{8 \rho_{\max} \|\beta\|_2^2 }} = 2 {\rm e}^{- \frac{\lambda^2 \gamma^2 t^2}{8 c \tilde{c} }}.
\]
By setting $t= \frac{4\sqrt{c \tilde{c} { \log n}}}{\lambda \gamma}$ we have
\[
\Pr (|\bm{x}_i^\top (\bm{X}^\top \bm{X}+ \frac{\lambda \gamma}{2} \bm{I})^{-1} \bm{\beta^*} | > \frac{4\sqrt{c \tilde{c} { \log n}}}{\lambda \gamma} ) \leq \frac{2}{n^2}. 
\]
This combined with a union bound and \eqref{eq:biasfirstexp1} proves that
\begin{equation*}\label{eq:ridge_case1}
\Pr \left(\| (\bm{I}- \bm{X}(\bm{X}^\top \bm{X}+ \frac{\lambda \gamma}{2}\bm{I})^{-1} \bm{X}^{\top}) \bm{X}\bm{\beta^*} )\|_\infty > 2\sqrt{c \tilde{c}  { \log n}}\right) \leq \frac{2}{n}. 
\end{equation*}

\end{proof}

\begin{lemma}\label{lem:l2}
If $\bm{\e} \sim \N(0,\bm{I} \sigma_{\e}^2)$, then
\[
\Pr \left[ \| (\bm{I}- \bm{X}(\bm{X}^\top \bm{X}+ \frac{\lambda \gamma}{2} \bm{I})^{-1} \bm{X}^{\top}) \bm{\epsilon}\|_\infty \geq 2 \sigma_\epsilon \sqrt{\log n}  \right] \leq \frac{2}{n}.
\]
\end{lemma}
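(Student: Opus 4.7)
My plan is to observe that $\bm{M} \triangleq \bm{I}- \bm{X}(\bm{X}^\top \bm{X}+ \frac{\lambda \gamma}{2} \bm{I})^{-1} \bm{X}^{\top}$ is symmetric and has all eigenvalues in $[0,1]$. Indeed, if $\bm{X}=\bm{U}\bm{D}\bm{V}^\top$ is a thin SVD, then $\bm{X}(\bm{X}^\top \bm{X}+ \frac{\lambda \gamma}{2} \bm{I})^{-1} \bm{X}^{\top} = \bm{U}\,\mathrm{diag}\!\left(\tfrac{d_i^2}{d_i^2 + \lambda\gamma/2}\right)\bm{U}^\top$, whose eigenvalues lie in $[0,1)$. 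Hence $\bm{M}^2 \preceq \bm{M} \preceq \bm{I}$, which implies $[\bm{M}^2]_{ii} \leq 1$ for every $i$.

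Conditioning on $\bm{X}$, the vector $\bm{M}\bm{\epsilon}$ is Gaussian with mean zero and covariance $\sigma_\epsilon^2 \bm{M}^2$ (since $\bm{M}$ is symmetric). Therefore each coordinate $[\bm{M}\bm{\epsilon}]_i$ is a zero-mean Gaussian with variance $\sigma_\epsilon^2 [\bm{M}^2]_{ii} \le \sigma_\epsilon^2$. Applying Lemma \ref{lem:GaussianTail} coordinate-wise gives, for any $t>\sigma_\epsilon$,
\begin{equation*}
\Pr\!\left(|[\bm{M}\bm{\epsilon}]_i| > t \,\big|\, \bm{X}\right) \le 2\,\exp\!\left(-\frac{t^2}{2\sigma_\epsilon^2}\right).
\end{equation*}
Since this bound is uniform in $\bm{X}$, it also holds unconditionally.

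Choosing $t = 2\sigma_\epsilon \sqrt{\log n}$ gives $2\exp(-2\log n) = 2/n^2$ per coordinate. A union bound over the $n$ coordinates then yields
\begin{equation*}
\Pr\!\left(\|\bm{M}\bm{\epsilon}\|_\infty \ge 2\sigma_\epsilon \sqrt{\log n}\right) \le \frac{2}{n},
\end{equation*}
which is the claim. There is no real obstacle here; the only substantive step is the contraction property $\bm{M}^2 \preceq \bm{I}$, which bypasses the need to compute $[\bm{M}^2]_{ii}$ explicitly.
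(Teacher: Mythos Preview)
Your proof is correct and follows essentially the same approach as the paper: both condition on $\bm{X}$, observe that each coordinate of $\bm{M}\bm{\epsilon}$ is a centered Gaussian with variance at most $\sigma_\epsilon^2$, apply the Gaussian tail bound at $t=2\sigma_\epsilon\sqrt{\log n}$, and finish with a union bound over the $n$ coordinates. The only cosmetic difference is that the paper expands $\bm{M}^2$ explicitly to read off that $[\bm{M}^2]_{ii}\le 1$, whereas you obtain the same fact more cleanly from the spectral inequality $\bm{M}^2\preceq \bm{I}$ via the SVD.
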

\begin{proof}
Note that conditioned on $\bm{X}$, the distribution of  $\bm{v} = (\bm{I}- \bm{X}(\bm{X}^\top \bm{X}+ \frac{\lambda \gamma}{2} \bm{I})^{-1} \bm{X}^{\top}) \bm{\epsilon}$  is multivariate Gaussian with mean zero and covariance matrix ${ \sigma_{\e}^2}(\bm{I}- \bm{X}(\bm{X}^\top \bm{X}+ \frac{\lambda \gamma}{2} \bm{I})^{-1} \bm{X}^{\top})^2$. We have
 \begin{eqnarray}
 (\bm{I}- \bm{X}(\bm{X}^\top \bm{X}+ \frac{\lambda \gamma}{2}\bm{I})^{-1} \bm{X}^{\top})^2 = \bm{I} -  \bm{X}(\bm{X}^\top \bm{X}+ \frac{\lambda \gamma}{2} \bm{I})^{-1} \bm{X}^{\top}- \frac{\lambda \gamma}{2}  \bm{X}(\bm{X}^\top \bm{X}+ \frac{\lambda \gamma}{2} \bm{I})^{-2} \bm{X}^{\top}.
 \end{eqnarray}
We define $\sigma_i^2(\bm{X})= \left(1 - \bm{x}_i^\top (\bm{X}^\top \bm{X}+ \frac{\lambda \gamma}{2}\bm{I})^{-1}\bm{x}_i - \frac{\lambda \gamma}{2} \bm{x}_i^\top (\bm{X}^\top \bm{X}+ \frac{\lambda \gamma}{2} \bm{I})^{-2}\bm{x}_i\right) \sigma_\epsilon^2$. Clearly $\sigma_i^2 (\bm{X}) \leq \sigma_\epsilon^2$, hence,
\begin{eqnarray}
\Pr (\|\bm{v} \|_\infty > t \ | \ \bm{X}) \leq \sum_{i=1}^n \Pr (|v_i| > t  \ | \ \bm{X}) \leq \sum_{i=1}^n 2 {\rm e}^{-\frac{t^2}{2  \sigma_i^2(X)}}= 2n {\rm e}^{-\frac{t^2}{2  \sigma_{\epsilon}^2}}. 
\end{eqnarray}
Hence, by setting $t = 2 \sigma_\epsilon \sqrt{\log n}$, we have 
\begin{equation*}\label{eq:ridge_case2}
\Pr (\|\bm{v} \|_\infty > t \ | \ \bm{X}) \leq \frac{2}{n}. 
\end{equation*}
\end{proof}

\begin{lemma} \label{lem:l3}
Under the assumptions of Lemma \ref{cor:ridge} we have
\begin{eqnarray}
\Pr \left[ \|\bm{X} (\bm{X}^\top \bm{X}+ \frac{\lambda \gamma}{2} \bm{I})^{-1} \bm{\rd}_{0.5}(\bl)\|_\infty > \bar{\zeta} \sqrt{\log n} \right] \leq \frac{6}{n} + 2n{\rm e}^{-n +1}+n {\rm e}^{-p},
\end{eqnarray}
where 
\begin{eqnarray}\label{eq:zeta_barzeta}
\bar{\zeta} &=&  \frac{5c}{\lambda^2 \gamma \delta_0} \Big(1+ \frac{\alpha (1- \gamma)}{\gamma}\Big) \left( 2\sqrt{ (c \tilde{c} + \sigma_\epsilon^2)} + \sqrt{\frac{10c (c \tilde{c} + \sigma_\epsilon^2) }{\lambda \gamma}}\right)  +  \sqrt{20 \zeta (c \tilde{c} + \sigma_\epsilon^2)}, \nonumber \\
\zeta &=& \frac{2 c}{\lambda^3 \gamma^2} \left(1 + \frac{\alpha (1- \gamma)}{2\gamma}\right).
\end{eqnarray}
\end{lemma}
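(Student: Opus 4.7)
The plan is to bound $|\bm{x}_i^\top(\bm{X}^\top\bm{X}+\tfrac{\lambda\gamma}{2}\bm{I})^{-1}\bm{\rd}_{0.5}(\bl)|$ entrywise and then take a union bound over $i=1,\ldots,n$. The core difficulty is that $\bm{\rd}_{0.5}(\bl)$ depends on $\bm{x}_i$, so one cannot directly condition on $\bm{\Sigma}^{-1/2}\bm{x}_i$ being Gaussian and apply the tail bound as in Lemma \ref{lem:l1}. I will resolve this by introducing the leave-one-out estimator $\bli$ (which is independent of $\bm{x}_i$ conditional on $\XI,\yi$) and controlling the perturbation $\bl-\bli$.

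First I apply the Woodbury identity with $\bm{D}_i\triangleq(\XI^\top\XI+\tfrac{\lambda\gamma}{2}\bm{I})^{-1}$ to obtain
\begin{equation*}
\bm{x}_i^\top\Big(\bm{X}^\top\bm{X}+\tfrac{\lambda\gamma}{2}\bm{I}\Big)^{-1}\bm{\rd}_{0.5}(\bl)
= \frac{\bm{x}_i^\top\bm{D}_i\bm{\rd}_{0.5}(\bli)}{1+\bm{x}_i^\top\bm{D}_i\bm{x}_i}
+\frac{\bm{x}_i^\top\bm{D}_i\bigl(\bm{\rd}_{0.5}(\bl)-\bm{\rd}_{0.5}(\bli)\bigr)}{1+\bm{x}_i^\top\bm{D}_i\bm{x}_i}.
\end{equation*}
The denominator is at least one, so it suffices to bound each numerator. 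For the first numerator, conditional on $(\XI,\yi)$ the vector $\bm{x}_i$ is independent of $\bli$, so $\bm{x}_i^\top\bm{D}_i\bm{\rd}_{0.5}(\bli)$ is zero-mean Gaussian with conditional variance at most $\|\bm{D}_i\bm{\Sigma}^{1/2}\|_{\rm op}^2\|\bm{\rd}_{0.5}(\bli)\|_2^2\le \tfrac{4\rho_{\max}}{\lambda^2\gamma^2}\|\bm{\rd}_{0.5}(\bli)\|_2^2$. A Gaussian tail bound (Lemma \ref{lem:GaussianTail}) then gives a deviation bound of order $\sqrt{\log n}$ times the conditional standard deviation, after which I marginalize over the event that $\|\bm{\rd}_{0.5}(\bli)\|_2^2$ is suitably controlled. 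Since $|\rd^\alpha|\le 1$, I have $\|\bm{\rd}_{0.5}(\bli)\|_2\le \gamma\|\bli\|_2+(1-\gamma)\sqrt{p}$, and $\|\bli\|_2$ is bounded via the optimality of $\bli$ by $O(\|\yi\|_2/\sqrt{\lambda\gamma})$. The required bound on $\|\yi\|_2^2$ of the order $n(c\tilde c+\sigma_\epsilon^2)$ is a standard $\chi^2$ concentration (where the contribution $n{\rm e}^{-n+1}$ in the probability budget originates).

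For the second numerator I use $|\bm{x}_i^\top\bm{D}_i(\bm{\rd}_{0.5}(\bl)-\bm{\rd}_{0.5}(\bli))|\le \|\bm{x}_i\|_2 \cdot\|\bm{D}_i\|_{\rm op}\cdot L\cdot\|\bl-\bli\|_2$, where $L=\gamma+(1-\gamma)\alpha/2$ is the Lipschitz constant of $\rd_{0.5}$ obtained from the uniform bound $\rdd^\alpha\le\alpha/2$. Lemma \ref{lem:massart} controls $\|\bm{x}_i\|_2^2\le 5p\rho_{\max}$ with probability $1-{\rm e}^{-p}$, and $\|\bm{D}_i\|_{\rm op}\le 2/(\lambda\gamma)$. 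It remains to control $\|\bl-\bli\|_2$. Using the one-step Newton identity already derived in Section \ref{ssec:newtonmethod} for the squared loss, $\bl-\bli$ equals $(\XI^\top\XI+\lambda\,{\rm diag}[\bm{\rdd}_{0.5}(\cdot)])^{-1}\bm{x}_i(y_i-\bm{x}_i^\top\bli)$, whose $\ell_2$-norm is at most $\tfrac{2}{\lambda\gamma}\|\bm{x}_i\|_2\cdot|y_i-\bm{x}_i^\top\bli|$. The factor $|y_i-\bm{x}_i^\top\bli|$ is bounded, again using Gaussianity of $y_i$ conditional on the leave-one-out data and a Gaussian tail bound giving an extra $\sqrt{\log n}$ factor, by $2\sqrt{c\tilde c+\sigma_\epsilon^2}\sqrt{\log n}+$ lower-order correction, which is exactly the quantity appearing in the first bracket of $\bar\zeta$ in \eqref{eq:zeta_barzeta}.

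Assembling the two contributions through a union bound over $i=1,\ldots,n$, and tracking the constants through the above chain, yields a bound of the form $\bar\zeta\sqrt{\log n}$ whose constant matches \eqref{eq:zeta_barzeta}: the first summand of $\bar\zeta$ arises from the perturbation term (hence the factor $1+\alpha(1-\gamma)/\gamma$ from the Lipschitz constant $L/\gamma$), and the second summand $\sqrt{20\zeta(c\tilde c+\sigma_\epsilon^2)}$ arises from the Gaussian-tail control of the leave-one-out term using the variance proxy $\zeta=\tfrac{2c}{\lambda^3\gamma^2}(1+\tfrac{\alpha(1-\gamma)}{2\gamma})$. The failure probability $\tfrac{6}{n}+2n{\rm e}^{-n+1}+n{\rm e}^{-p}$ decomposes as: $\tfrac{2}{n}$ each from the two Gaussian tail bounds (leave-one-out term and $y_i-\bm{x}_i^\top\bli$ term) plus $\tfrac{2}{n}$ slack, $2n{\rm e}^{-n+1}$ from the $\chi^2$ concentration on $\|\yi\|_2$, and $n{\rm e}^{-p}$ from the $\|\bm{x}_i\|_2$ bound. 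I expect the main obstacle to be the bookkeeping of constants so that they line up exactly with the stated $\bar\zeta$; conceptually, the subtle step is establishing an $O(\sqrt{\log n}/\sqrt{n})$-type bound on $\|\bl-\bli\|_2$ without a circular dependence on $\|\bm{x}_i^\top\bl\|$.
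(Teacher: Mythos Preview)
Your proposal is essentially the paper's proof: same Woodbury reduction to $\bm{x}_i^\top\bm{D}_i\bm{\rd}_{0.5}(\bl)/(1+\bm{x}_i^\top\bm{D}_i\bm{x}_i)$, same leave-one-out split into $\bm{x}_i^\top\bm{D}_i\bm{\rd}_{0.5}(\bli)$ plus the perturbation $\bm{x}_i^\top\bm{D}_i(\bm{\rd}_{0.5}(\bl)-\bm{\rd}_{0.5}(\bli))$, same conditional-Gaussian tail for the first piece, same Cauchy--Schwarz/Lipschitz/strong-convexity chain for the second, and the same probability budget.

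One inconsistency to fix: the bound $\|\bm{\rd}_{0.5}(\bli)\|_2\le\gamma\|\bli\|_2+(1-\gamma)\sqrt{p}$ you derive from $|\rd^\alpha|\le 1$ does \emph{not} produce the variance proxy $\zeta$ you quote at the end --- the $\sqrt{p}$ term would give a $\delta_0$-dependent constant unrelated to $c\tilde c+\sigma_\epsilon^2$. To recover $\zeta=\tfrac{2c}{\lambda^3\gamma^2}\bigl(1+\tfrac{\alpha(1-\gamma)}{2\gamma}\bigr)$ you must instead use the Lipschitz bound $\|\bm{\rd}_{0.5}(\bli)\|_2\le L\|\bli\|_2$ with $L=\gamma+\tfrac{\alpha(1-\gamma)}{2}$ (the same $L$ you already invoke for the perturbation term), followed by $\|\bli\|_2^2\le\|\yi\|_2^2/(2\lambda\gamma)$ from optimality at $\bm{0}$; this is exactly what the paper does.
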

\begin{proof}
Since $f_{\slash i} (\bli) \leq f_{\slash i} (\bm{0})$, we have
\begin{equation}\label{eq:upperbli}
{2}\lambda \gamma \|\bli\|_2^2 \leq \|\bm{y}_{\slash i}\|_2^2. 
\end{equation}
Furthermore, due to $\rdd_{0.5} (\beta) \leq  \gamma + \frac{\alpha  (1- \gamma)}{2}$,  $\rd_{0.5}(0)=0$ (see Lemma \ref{lem:ralpha:limits}), and  \eqref{eq:upperbli}, we have
\begin{equation}\label{eq:rdBound1}
\| \bm{\rd}_{0.5} (\bli)\|_2^2 \leq  \left( \gamma + \frac{\alpha  (1- \gamma)}{2}\right) \|\bli\|_2^2 \leq \left(\frac{1}{{2\lambda}} + \frac{\alpha (1- \gamma)}{{4\lambda} \gamma}\right) \|\bm{y}_{\slash i}\|_2^2.
\end{equation}
The first order optimality condition yields
\[
\bm{X}^\top \bm{X} (\bli - \bl) + \lambda \bm{\rd} (\bli) - \lambda \bm{\rd} (\bl) = - \bm{x}_i (y_i - \bm{x}_i^\top \bli).  
\]
Since the minimum eigenvalue of the Hessian of $\bm{r}(\bm{\beta})$ is $2 \gamma$, {therefore the minimum eigenvalue of $\bm{X}^\top \bm{X} + \lambda \diag[\bm{\rdd}(\bm{\beta})]$ (for all $\bm{\beta}$) is greater than $2\lambda \gamma$,} leading to  
\[
\|\bli- \bl\|_2 \leq \frac{|y_i -  \bm{x}_i^\top \bli|}{{ 2 }\lambda \gamma} \|\bm{x}_i\|_2. 
\]
This together with $\rdd_{0.5} (\beta) \leq  \gamma + \frac{\alpha  (1- \gamma)}{2}$ yields 
\[
\| \bm{\rd}_{0.5}(\bli) - \bm{\rd}_{0.5} (\bl)\|_2 \leq \left( \gamma+ \frac{\alpha  (1- \gamma)}{2}\right) \|\bli - \bl\|_2 \leq \left(\frac{1}{{2\lambda}} + \frac{\alpha (1- \gamma)}{{4\lambda} \gamma}\right) |y_i -  \bm{x}_i^\top \bli| \|\bm{x}_i\|_2.
\]
Define $\bm{D}_i = (\XI^\top \XI + \frac{\lambda \gamma}{2} \bm{I})^{-1}$. According to the matrix inversion lemma we have 
\begin{eqnarray}\label{eq:secondmainterm1}
\bm{x}_i^\top (\bm{X}^\top \bm{X}+ \frac{\lambda \gamma}{2} \bm{I})^{-1} \bm{\rd}_{0.5}(\bl) = \bm{x}_i^\top \bm{D}_i \bm{\rd}_{0.5} (\bl) - \frac{\bm{x}^\top_i \bm{D}_i \bm{x}_i \bm{x}_i^\top \bm{D}_i \bm{\rd}{}_{0.5} (\bl)  }{1+ \bm{x}_i^\top \bm{D}_i \bm{x}_i} = \frac{ \bm{x}_i^\top \bm{D}_i \bm{\rd}_{0.5} (\bl)}{1+ \bm{x}_i^\top \bm{D}_i \bm{x}_i}. 
\end{eqnarray}
Furthermore, we have
\begin{equation}\label{eq:x_iD_rdbreak}
 |\bm{x}_i^\top \bm{D}_i \bm{\rd}_{0.5} (\bl)| \leq |\bm{x}_i^\top \bm{D}_i \bm{\rd}_{0.5}(\bli)|+ |\bm{x}_i^\top \bm{D}_i (\bm{\rd}_{0.5}(\bl)- \bm{\rd}_{0.5}(\bli))|.
\end{equation}
First note that, since the maximum eigenvalue of $\bm{D}_i $ is $\frac{\lambda \gamma}{2}$ we have
\begin{eqnarray}\label{eq:DiTwoDifferenceUpper1}
\lefteqn{ |\bm{x}_i^\top \bm{D}_i (\bm{\rd}_{0.5}(\bl)- \bm{\rd}_{0.5}(\bli))|} \nonumber \\&\leq& \frac{2}{\lambda \gamma} \|\bm{x}_i \|_2 \|\bm{\rd}_{0.5}(\bl)- \bm{\rd}_{0.5}(\bli)\|_2 
 \leq  \frac{1}{\lambda^{ 2} \gamma} \|\bm{x}_i \|^2_2 \left(1 + \frac{\alpha (1- \gamma)}{2 \gamma}\right)  |y_i -  \bm{x}_i^\top \bli| \nonumber \\
 &\leq& \frac{1}{\lambda^2 \gamma} \left(1 + \frac{\alpha (1- \gamma)}{ 2\gamma} \right)   \|\bm{x}_i \|^2_2 (|y_i| + |\bm{x}_i^\top \bli| ).
\end{eqnarray}
Furthermore, we have
\begin{enumerate}
\item Due to Lemma \ref{lem:massart}, $\Pr (\|\bm{x}_i\|_2^2 > 5 p \rho_{\max}) \leq {\rm e}^{-p}$, leading to
\begin{equation}\label{eq:boundx_i}
\Pr (\|\bm{x}_i\|_2^2 > \frac{5c}{\delta_0} ) \leq {\rm e}^{-p}. 
\end{equation}

\item Note that $y_i \sim N(0, \bm{\beta}^\top \bm{\Sigma} \bm{\beta} + \sigma_\epsilon^2)$. Furthermore, $\bm{\beta}^\top \bm{\Sigma} \bm{\beta} + \sigma_\epsilon^2 \leq \rho_{\max} \bm{\beta}^\top \bm{\beta}+ \sigma_\epsilon^2 \leq c \tilde{c} + \sigma_\epsilon^2$. Hence, using the Gaussian tail bound, i.e. Lemma \ref{lem:GaussianTail}, we have  
\begin{equation}
\Pr(|y_i|> t) \leq 2 {\rm e}^{- \frac{t^2}{2 (c \tilde{c} + \sigma_\epsilon^2)}}. 
\end{equation}
Hence,
\begin{equation}\label{eq:upperyinfty}
\Pr(|y_i|> 2 \sqrt{(c\tilde{c} + \sigma_\epsilon^2) \log n} ) \leq \frac{2}{n^2}. 
\end{equation}
\item Given  $\XI, \bm{y}_{\slash i}$, the distribution of $\bm{x}_i^\top \bli$ is $N(0, \bli^\top \bm{\Sigma} \bli)$. Furthermore, $\bli^\top \bm{\Sigma} \bli \leq \frac{c \bli^\top \bli}{n} \leq \frac{c \|\bm{y}_{\slash i}\|_2^2}{{\color{magenta}2} n\lambda \gamma}$, where the last inequality is due to \eqref{eq:upperbli}. Hence, we have
\begin{eqnarray}\label{boundx_ibli}
 \Pr (| \bm{x}_i^\top \bli| > t | \XI, \bm{y}_{\slash i} ) \leq 2 {\rm e}^{- \frac{n \lambda \gamma t^2}{ c \|\bm{y}_{\slash i}\|_2^2}}.  
 \end{eqnarray}
According to Lemma  \ref{lem:massart} since $y_i \overset{i.i.d.}{\sim} N(0, \bm{\beta}^\top \bm{\Sigma} \bm{\beta} + \sigma_\epsilon^2)$, and $\bm{\beta}^\top  \bm{\Sigma} \bm{\beta} + \sigma_\epsilon^2 \leq c \tilde{c}+ \sigma_\epsilon^2$,  we have
\begin{equation}\label{eq:boundonnormy}
\Pr ( \|\bm{y}_{\slash i}\|_2^2 > 5 (n-1) (c\tilde{c} + \sigma_\epsilon^2)) \leq {\rm e}^{-n+1}. 
\end{equation}
Let $B$ denote the event that $\|\bm{y}_{\slash i}\|_2^2 \leq 5 (n-1) (c\tilde{c} + \sigma_\epsilon^2)$. Then, combining \eqref{boundx_ibli} and \eqref{eq:boundonnormy}, we have 
\[
 \Pr (| \bm{x}_i^\top \bli| > t) \leq  \Pr (| \bm{x}_i^\top \bli| > t \ | \ B) + \Pr(B^c) \leq 2 {\rm e}^{-\frac{\lambda \gamma t^2}{5c (c \tilde{c} + \sigma_\epsilon^2)}}+ {\rm e}^{-n+1}. 
\]
Hence,
\begin{equation}\label{eq:xblithirdterm}
 \Pr \left[ | \bm{x}_i^\top \bli| > \sqrt{\frac{10 c(c \tilde{c}+ \sigma_\epsilon^2 ) }{\lambda \gamma} \log n} \right] \leq \frac{2}{n^2}+{\rm e}^{-n+1}. 
\end{equation}

\end{enumerate}
By combining \eqref{eq:boundx_i}, \eqref{eq:upperyinfty}, \eqref{eq:xblithirdterm}, and \eqref{eq:DiTwoDifferenceUpper1} we conclude that

\begin{eqnarray}\label{eq:x_iDidiff1}
&\Pr  & \left[ |\bm{x}_i^\top D_i (\bm{\rd}_{0.5}(\bl)- \bm{\rd}_{0.5}(\bli))| > \frac{5c}{\lambda^{ 2} \gamma \delta_0} \Big(1+ \frac{\alpha (1- \gamma)}{{ 2}\gamma}\Big) \left( 2\sqrt{ (c \tilde{c} + \sigma_\epsilon^2)} + \sqrt{\frac{10c (c \tilde{c} + \sigma_\epsilon^2) }{\lambda \gamma}} \right)  \sqrt{ \log n}\right] \nonumber \\
&\leq & \frac{4}{n^2}+ {\rm e}^{-n +1} + {\rm e}^{-p}. 
\end{eqnarray}

Next, we compute an upper bound on $ |\bm{x}_i^\top \bm{D}_i \bm{\rd}_{0.5}(\bli)|$. Since $\bm{x}_i$ is independent of $\bm{y}_{\slash i}$ and $\XI$, we conclude that given  $\XI$ and $\bm{y}_{\slash i}$, $\bm{x}_i^\top \bm{D}_i \bm{\rd}_{0.5}(\bli)$ is a Gaussian random variable with mean zero and variance 
$$\|\bm{\Sigma}^{1/2} \bm{D}_i \bm{\rd}_{0.5}(\bli)\|^2_2 \leq \frac{4 \rho_{\max}}{\lambda^2 \gamma^2}\| \bm{\rd}_{0.5}(\bli)\|^2_2 \leq  \frac{{2} \rho_{\max}}{\lambda^{3} \gamma^2} \left(1 + \frac{\alpha (1- \gamma)}{2\gamma}\right) \|\bm{y}_{\slash i}\|_2^2 = \frac{\zeta \|\bm{y}_{\slash i}\|_2^2}{n},$$
where $\zeta = \frac{2 c}{\lambda^3 \gamma^2} \left(1 + \frac{\alpha (1- \gamma)}{2\gamma}\right)$, and the second inequality is due to \eqref{eq:rdBound1}. Hence,
\[
\mathbb{P} (|\bm{x}_i^\top \bm{D}_i \bm{\rd}_{0.5}(\bli)| > t  \ | \ \XI, \bm{y}_{\slash i} ) \leq 2 {\rm e}^{-\frac{nt^2}{2\zeta \|\bm{y}_{\slash i}\|_2^2}}. 
\]
Considering the event $B$ of $\|\bm{y}_{\slash i}\|_2^2 \leq 5 (n-1) (c\tilde{c} + \sigma_\epsilon^2)$, we have
\begin{eqnarray}
\Pr \left (|\bm{x}_i^\top \bm{D}_i \bm{\rd}_{0.5}(\bli)| > t \right) &\leq& \Pr \left (|\bm{x}_i^\top \bm{D}_i \bm{\rd}_{0.5}(\bli)| > t  \big | B \right)  + \Pr(B^c) \leq
2 {\rm e}^{-\frac{t^2}{10\zeta (c\tilde{c} + \sigma_\epsilon^2)}} +   {\rm e}^{-n+1}.
\end{eqnarray}
Hence,
\begin{eqnarray}\label{eq:xiDirdbli1}
\Pr \left (|\bm{x}_i^\top \bm{D}_i \bm{\rd}_{0.5}(\bli)| >   \sqrt{ 20 \zeta (c \tilde{c} + \sigma_\epsilon^2) \log n} \right) 
& \leq& \frac{2}{n^2}+   {\rm e}^{-n+1}. 
\end{eqnarray}
By combining \eqref{eq:secondmainterm1}, \eqref{eq:x_iD_rdbreak}, \eqref{eq:x_iDidiff1}, and \ref{eq:xiDirdbli1} we conclude that if 
\[
\bar{\zeta} = \frac{5c}{\lambda^2 \gamma \delta_0} \Big(1+ \frac{\alpha (1- \gamma)}{\gamma}\Big) \left( 2\sqrt{ (c \tilde{c} + \sigma_\epsilon^2)} + \sqrt{\frac{10c (c \tilde{c} + \sigma_\epsilon^2) }{\lambda \gamma}}\right)  +  \sqrt{20 \zeta (c \tilde{c} + \sigma_\epsilon^2)},
\]
then
\begin{eqnarray}
\lefteqn{\Pr \left[ |\bm{x}_i^\top (\bm{X}^\top \bm{X}+ \frac{\lambda \gamma}{2} \bm{I})^{-1} \bm{\rd}_{0.5}(\bl)| > \bar{\zeta} \sqrt{\log n} \right]} \nonumber\\ 
&\leq& \Pr \left[ |\bm{x}_i^\top {\bm D}_i \bm{\rd}_{0.5} (\bl) |\geq   \bar{\zeta} \sqrt{\log n} \right] \leq \frac{6}{n^2} + 2{\rm e}^{-n +1}+ {\rm e}^{-p}.
\end{eqnarray}
Hence,
\[
\Pr \left[ \|\bm{X} (\bm{X}^\top \bm{X}+ \frac{\lambda \gamma}{2} \bm{I})^{-1} \bm{\rd}_{0.5}(\bl\|_\infty| > \bar{\zeta} \sqrt{\log n} \right] \leq \frac{6}{n} + 2n{\rm e}^{-n +1}+n  {\rm e}^{-p}.
\]
\end{proof}

\subsubsection{Proof of Lemma \ref{lem:smoothLASSO}}\label{ssec:lem:smoothLASSO1}

Since $r^\alpha(z)=\alpha^{-1} \log \left(e^{\alpha z} + e^{-\alpha z} + 2 \right)$, we have $e^{\alpha r^\alpha(z)} = e^{\alpha z} + e^{-\alpha z} + 2$, and because of Lemma \ref{lem:1:uniform}, $ e^{\alpha z} + e^{-\alpha z} + 2 \geq e^{\alpha |z|}$. Moreover, $\rddd^{\alpha}(z) = 2 \alpha^2 (e^{-\alpha z} - e^{\alpha z})/(e^{\alpha z} + e^{-\alpha z} + 2)^2 \leq 2 \alpha^2 (e^{-\alpha z} - e^{\alpha z})/e^{2 \alpha |z|} \leq 4 \alpha^2 e^{-\alpha |z|}$. The next step is
\begin{eqnarray*}
\frac{ \|  \bm{\rdd}^\alpha(  \bm{\beta + \delta} ) - \bm{\rdd}^\alpha(\bm{\beta}) \|_2}{\| \bm{\delta} \|_2} &=& \frac{\sqrt{\sum_i \left(\rdd^\alpha(\beta_i+\delta_i) - \rdd^\alpha(\beta_i) \right)^2 } }{\| \bm{\delta} \|_2}  \\
 &=&\frac{\sqrt{\sum_i \rddd^\alpha(\beta_i+\epsilon_i)^2 \delta_i^2 } }{\| \bm{\delta} \|_2}  \quad \mbox{using the mean-value Theorem where} \quad \epsilon_i \in [0,\delta_i] \\
  &=&\frac{4\alpha^2 \sqrt{\sum_i  \delta_i^2 e^{-2\alpha |\beta_i+\epsilon_i|} }  }{\| \bm{\delta} \|_2} 
  \\
 &\leq&4\alpha^2.
\end{eqnarray*}

\subsection{Proof of Theorem  \ref{theo:main}} \label{sec:theo-1}
We first present lemmas necessary for the proof of Theorem \ref{theo:main}. Lemmas are proved in section \ref{sec:lem:proof}.

\begin{lemma} \label{lem:5}
Let $\bm{X} \in \R^{m \times p}$ be a matrix with $m > p =\text{rank}(\bm{X})$. Moreover, let $\bm{D} \in \R^{m \times m}$ and $\bm{D+\Gamma} \in \R^{m \times m}$ be  diagonal matrices with positive elements, then
\begin{eqnarray*}
\lefteqn{\left ( \bm{X}^\top \bm{D X} \right)^{-1} - \left ( \bm{X}^\top (\bm{D+\Gamma}) \bm{X}  \right)^{-1}}\nonumber \\ &=& \bm{A}^{-1}  \bm{X}^\top  \bm{\Gamma}  \bm{X A}^{-1}-
\bm{A}^{-1} \bm{X}^\top \bm{\Gamma X}   \left( \bm{X}^\top (\bm{D} + \bm{\Gamma}) \bm{X} \right)^{-1} \bm{X}^\top \bm{\Gamma   X A}^{-1},
\end{eqnarray*}
where $\bm{A} \triangleq \bm{X}^\top \bm{D X}  $.
\end{lemma}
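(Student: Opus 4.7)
The plan is to view the claim as a Woodbury-type resolvent identity and verify it by direct matrix algebra. I will write $\bm{B} \triangleq \bm{X}^\top (\bm{D}+\bm{\Gamma})\bm{X}$, so that the statement reads
\begin{equation*}
\bm{A}^{-1} - \bm{B}^{-1} = \bm{A}^{-1}\bm{X}^\top \bm{\Gamma}\bm{X}\bm{A}^{-1} - \bm{A}^{-1}\bm{X}^\top \bm{\Gamma}\bm{X}\bm{B}^{-1} \bm{X}^\top \bm{\Gamma}\bm{X}\bm{A}^{-1}.
\end{equation*}
The decisive structural fact, which holds precisely because $\bm{D}$ and $\bm{\Gamma}$ are diagonal (hence commute with nothing but still add cleanly), is that $\bm{B}-\bm{A} = \bm{X}^\top \bm{\Gamma}\bm{X}$. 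Invertibility of both $\bm{A}$ and $\bm{B}$ follows from $\rank(\bm{X})=p$ together with positivity of the entries of $\bm{D}$ and $\bm{D}+\bm{\Gamma}$, so all inverses above are well defined.

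To verify the identity I would multiply both sides on the left by $\bm{A}$ and show that each side collapses to $\bm{X}^\top \bm{\Gamma}\bm{X}\bm{B}^{-1}$. On the left one has $\bm{I} - \bm{A}\bm{B}^{-1} = (\bm{B}-\bm{A})\bm{B}^{-1} = \bm{X}^\top \bm{\Gamma}\bm{X}\bm{B}^{-1}$. On the right, factoring $\bm{X}^\top \bm{\Gamma}\bm{X}$ on the left and $\bm{A}^{-1}$ on the right reduces the claim to $\bm{I} - \bm{B}^{-1}\bm{X}^\top \bm{\Gamma}\bm{X} = \bm{B}^{-1}\bm{A}$, which is immediate from $\bm{X}^\top \bm{\Gamma}\bm{X} = \bm{B}-\bm{A}$. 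Matching the two simplified expressions completes the argument.

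I do not foresee any genuine obstacle; the whole proof is a three-line computation. A more conceptual alternative would be to apply the Sherman--Morrison--Woodbury formula to the factorization $\bm{B} = \bm{A} + (\bm{X}^\top \bm{\Gamma}^{1/2})(\bm{\Gamma}^{1/2}\bm{X})$ and then rearrange, but the direct verification above avoids introducing $\bm{\Gamma}^{1/2}$, keeps every intermediate matrix of size $p\times p$, and delivers exactly the form in which the identity is used later in the paper to reduce the cost of repeatedly inverting modified Gram matrices.
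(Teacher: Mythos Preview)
Your proof is correct, but it takes a different route from the paper. The paper restricts to the set $Q$ of indices where $\bm{\Gamma}$ has nonzero diagonal entries, writes $\bm{X}^\top \bm{\Gamma}\bm{X} = \bm{X}_{Q,:}^\top \bm{\tilde\Gamma}\bm{X}_{Q,:}$ with $\bm{\tilde\Gamma}$ invertible, and then applies the Woodbury identity twice: once to $(\bm{A}+\bm{X}_{Q,:}^\top \bm{\tilde\Gamma}\bm{X}_{Q,:})^{-1}$, and then again to the middle factor $(\bm{\tilde\Gamma}^{-1}+\bm{X}_{Q,:}\bm{A}^{-1}\bm{X}_{Q,:}^\top)^{-1}$. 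Your argument instead is a direct resolvent-identity verification: using $\bm{B}-\bm{A}=\bm{X}^\top\bm{\Gamma}\bm{X}$ and checking that both sides equal $\bm{X}^\top\bm{\Gamma}\bm{X}\,\bm{B}^{-1}$ after left-multiplying by $\bm{A}$. Your approach is shorter, avoids the detour through the restricted index set $Q$, and never needs $\bm{\Gamma}$ (or any submatrix of it) to be invertible; it only requires invertibility of $\bm{A}$ and $\bm{B}$, which you correctly justify. The paper's double-Woodbury derivation has the advantage of \emph{discovering} the formula rather than merely verifying it, but once the identity is stated your check is the cleaner confirmation.
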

\begin{lemma} \label{lem:9}
Assume that $\bm{X}^\top (\bm{D + \Gamma}) \bm{X} $ and $\bm{X}^\top \bm{D X} $ are positive definite, and 
define:
\begin{eqnarray}
\bm{\Gamma} &\triangleq& \diag(\bm{\gamma}), \\
\bar \omega_{\max} &\triangleq& \sigma_{\max}\left( \bm{X   X}^\top\right), \label{lem:9-1}   \\  
\nu_{\min} &\triangleq& \sigma_{\min}\left( \bm{X}^\top (\bm{D} + \bm{\Gamma}) \bm{X}  \right), \label{lem:9-2} \\
\bm{A} &\triangleq& \bm{X}^\top \bm{D X}. 
\end{eqnarray}
Then,
\begin{eqnarray}
\left | \bm{z}^\top \left ( \bm{X}^\top (\bm{D}+\bm{\Gamma}) \bm{X} \right)^{-1} \bm{z}- \bm{z}^\top \left ( \bm{X}^\top\bm{ D X}  \right)^{-1} \bm{z} \right| \leq 
 \left(  \| \bm{\gamma} \|_2   +\left( \frac{\bar \omega_{\max}}{\nu_{\min}} \right) \| \bm{\gamma} \|_4^2    \right) \left \| \bm{X A}^{-1} \bm{z} \right  \|_4^2.
\end{eqnarray}
\end{lemma}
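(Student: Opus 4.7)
The plan is to apply Lemma \ref{lem:5} to the difference of the two quadratic forms, giving two terms to bound separately, and then handle each using Cauchy–Schwarz in a way that produces the two contributions on the right-hand side.

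First I would use Lemma \ref{lem:5} to write
\[
\bm{z}^\top (\bm{X}^\top \bm{DX})^{-1}\bm{z} - \bm{z}^\top (\bm{X}^\top (\bm{D}+\bm{\Gamma}) \bm{X})^{-1}\bm{z} \;=\; T_1 - T_2,
\]
where $T_1 \triangleq \bm{z}^\top \bm{A}^{-1}\bm{X}^\top \bm{\Gamma X A}^{-1}\bm{z}$ and $T_2 \triangleq \bm{z}^\top \bm{A}^{-1}\bm{X}^\top \bm{\Gamma X}\,(\bm{X}^\top (\bm{D}+\bm{\Gamma})\bm{X})^{-1}\,\bm{X}^\top \bm{\Gamma X A}^{-1}\bm{z}$. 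Setting $\bm{u} \triangleq \bm{X A}^{-1}\bm{z}$, both terms become expressions in $\bm{u}$ and $\bm{\gamma}$, which makes the bookkeeping easy. Triangle inequality reduces the task to bounding $|T_1|$ and $|T_2|$ separately.

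For $T_1$, note $T_1 = \bm{u}^\top \bm{\Gamma}\bm{u} = \sum_i \gamma_i u_i^2$, so Cauchy–Schwarz between the vectors $(\gamma_i)$ and $(u_i^2)$ yields $|T_1| \leq \|\bm{\gamma}\|_2 \,\|\bm{u}\|_4^2$, which is the first summand of the claimed bound. For $T_2$, let $\bm{w} \triangleq \bm{\Gamma u}$, so that $T_2 = \bm{w}^\top \bm{X}\,(\bm{X}^\top(\bm{D}+\bm{\Gamma})\bm{X})^{-1}\bm{X}^\top \bm{w}$; bounding this by the operator norm of $\bm{X}\,(\bm{X}^\top(\bm{D}+\bm{\Gamma})\bm{X})^{-1}\bm{X}^\top$ gives
\[
|T_2| \leq \frac{\sigma_{\max}(\bm{XX}^\top)}{\sigma_{\min}(\bm{X}^\top(\bm{D}+\bm{\Gamma})\bm{X})}\,\|\bm{w}\|_2^2 = \frac{\bar\omega_{\max}}{\nu_{\min}}\,\|\bm{w}\|_2^2,
\]
where the operator-norm bound uses $\sigma_{\max}(ABC)\leq \sigma_{\max}(A)\sigma_{\max}(B)\sigma_{\max}(C)$. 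A second application of Cauchy–Schwarz to $\|\bm{w}\|_2^2 = \sum_i \gamma_i^2 u_i^2$, pairing $(\gamma_i^2)$ against $(u_i^2)$, gives $\|\bm{w}\|_2^2 \leq \|\bm{\gamma}\|_4^2 \,\|\bm{u}\|_4^2$, producing the second summand. Adding the two bounds finishes the proof.

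There is no real obstacle here: the decomposition from Lemma \ref{lem:5} is explicitly engineered so that the symmetric ``first-order'' part $T_1$ only involves $\bm{\Gamma}$ linearly, while the ``second-order'' correction $T_2$ carries two factors of $\bm{\Gamma}$ and naturally absorbs a $\bar\omega_{\max}/\nu_{\min}$ factor. The only delicate point is the choice of pairings in the two Cauchy–Schwarz steps (matching $\gamma_i$ with $u_i^2$ for the first, and $\gamma_i^2$ with $u_i^2$ for the second); these are exactly what produce $\|\bm{\gamma}\|_2$ and $\|\bm{\gamma}\|_4^2$ respectively and keep a common factor $\|\bm{X A}^{-1}\bm{z}\|_4^2$ out front.
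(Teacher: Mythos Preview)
Your proposal is correct and follows essentially the same route as the paper: apply Lemma~\ref{lem:5} to split the difference into $T_1-T_2$, bound $|T_1|$ via Cauchy--Schwarz on $\sum_i \gamma_i u_i^2$, and bound $T_2$ by first extracting the factor $\bar\omega_{\max}/\nu_{\min}$ from the inner quadratic form and then applying Cauchy--Schwarz to $\sum_i \gamma_i^2 u_i^2$. The only cosmetic difference is that the paper writes the $T_2$ step as bounding $(\bm{X}^\top\bm{w})^\top(\bm{X}^\top(\bm{D}+\bm{\Gamma})\bm{X})^{-1}(\bm{X}^\top\bm{w})$ directly rather than invoking sub-multiplicativity of operator norms, but the content is identical.
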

\begin{lemma}\label{lem:3} {Let $S$ denote the event that  \eqref{eq:c1def}, \eqref{eq:c2def}, \eqref{eq:c3def}, and \eqref{eq:minEigComb} hold.}  If $S$ holds, then
\begin{eqnarray}
\left | \bm{x_i}^\top\bm{ \Delta}^*_{/i} -  \left( \frac{\ld_i(\bl)}{\ldd_i(\bl)} \right) \frac{  H_{ii}}{1 -  H_{ii}}  \right | \leq \bar C_i \left ( \left \|   \XI \bm{J}_{/i}^{-1}  \bm{x_i}  \right\|_4^2 + \left \|   \bm{J}_{/i}^{-1}  \bm{x_i}  \right\|_4^2  \right),   \nonumber
\end{eqnarray}
where
\begin{eqnarray*}
\bm{ \Delta}^*_{/ i} &\triangleq& \bli-  \bl, \\
\bm{H} &\triangleq& \bm{X} \left( \lambda \diag[\bm{\rdd}(\bl)] +\bm{X}^\top \diag[\bm{\ldd}(\bl)] \bm{X} \right)^{-1} \bm{X}^\top \diag[\bm{\ldd(\bl)}],
\\
\bm{J}_{/i} &\triangleq& \lambda \diag[\bm{ \rdd}(\bli)] +\XI^\top \diag[\bm{\ldd}_{/ i}(\bli)] \XI,
\\
\bar C_i &\triangleq& 4\left \| \bm{x_i} \right \|_2 \left( \frac{c_1^2(n) c_2(n) }{\nu} \right)  \left( 1 +  \frac{2 c_1(n)c_2(n) (1+\omega_{\max,i})}{\nu^2}  \left \| \bm{x_i} \right \|_2     \right),
\end{eqnarray*}
and $c_1(n)$ and $c_2(n)$ are defined in Assumption \ref{ass:1},  $\nu$ is defined in Assumption \ref{ass:2}, and $\omega_{\max,i} \triangleq \sigma_{\max} \left(\XI \XI^\top \right) $.
\end{lemma}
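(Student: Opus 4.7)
The plan is to prove Lemma 3 by combining first-order optimality with a Taylor expansion, then extracting the ALO formula via Lemma 9 and the Woodbury identity. First I would subtract the zero-gradient conditions for $\bl$ and $\bli$ to obtain
\[
\bm{x_i}\,\ld_i(\bl) \;=\; \sum_{j\neq i}\bm{x_j}\bigl[\ld_j(\bli)-\ld_j(\bl)\bigr] \;+\; \lambda\bigl[\bm{\rd}(\bli)-\bm{\rd}(\bl)\bigr].
\]
Using the integral mean-value identity with reference point $\bli$, each difference $\ld_j(\bli)-\ld_j(\bl)$ splits into the leading term $\ldd_j(\bli)\,\bm{x_j}^\top\bm{\Delta}^*_{/i}$ plus a remainder $(\bm{x_j}^\top\bm{\Delta}^*_{/i})\,u_j$ with $u_j=\int_0^1[\ldd_j((1-t)\bli+t\bl)-\ldd_j(\bli)]\,dt$, and analogously for $\bm{\rd}$. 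This yields the exact identity $\bm{J}_{/i}\bm{\Delta}^*_{/i}=\bm{x_i}\ld_i(\bl)-\bm{e}$, where $\bm{e}$ is the second-order remainder. Inverting $\bm{J}_{/i}$ (well-defined by Assumption \ref{ass:2}) and pairing with $\bm{x_i}^\top$ on the left gives
\[
\bm{x_i}^\top\bm{\Delta}^*_{/i} \;=\; \ld_i(\bl)\,\bm{x_i}^\top\bm{J}_{/i}^{-1}\bm{x_i} \;-\; \bm{x_i}^\top\bm{J}_{/i}^{-1}\bm{e}.
\]

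Next I would control the remainder term by a triple Hölder inequality with exponents $(4,4,2)$. Writing the loss-piece of $\bm{x_i}^\top\bm{J}_{/i}^{-1}\bm{e}$ as $\sum_{j\neq i}(\bm{x_i}^\top\bm{J}_{/i}^{-1}\bm{x_j})(\bm{x_j}^\top\bm{\Delta}^*_{/i})u_j$, Hölder gives the upper bound $\|\XI\bm{J}_{/i}^{-1}\bm{x_i}\|_4\,\|\XI\bm{\Delta}^*_{/i}\|_4\,\|\bm{u}\|_2$. Assumption \ref{ass:1} (with the triangle inequality around $\bl$) yields $\|\bm{u}\|_2\leq 2c_2(n)\|\bm{\Delta}^*_{/i}\|_2$, while the leading-order identity plus Assumption \ref{ass:2} gives the a priori bound $\|\bm{\Delta}^*_{/i}\|_2\leq c_1(n)\|\bm{x_i}\|_2/\nu$ up to a small correction. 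A bootstrap-type step then replaces $\|\XI\bm{\Delta}^*_{/i}\|_4$ by $c_1(n)\|\XI\bm{J}_{/i}^{-1}\bm{x_i}\|_4$ plus a second-order correction, producing the term $\|\XI\bm{J}_{/i}^{-1}\bm{x_i}\|_4^2$ multiplied by $c_1^2(n)c_2(n)\|\bm{x_i}\|_2/\nu$. The identical argument applied to the regularizer remainder, with $\XI$ replaced by the identity, produces the $\|\bm{J}_{/i}^{-1}\bm{x_i}\|_4^2$ contribution, accounting for the additive form of the bound.

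The final step is to replace $\ld_i(\bl)\,\bm{x_i}^\top\bm{J}_{/i}^{-1}\bm{x_i}$ by $\bigl(\ld_i(\bl)/\ldd_i(\bl)\bigr)H_{ii}/(1-H_{ii})$. A direct application of Woodbury to $\bm{J}=\tilde{\bm{J}}_{/i}+\bm{x_i}\ldd_i(\bl)\bm{x_i}^\top$, where $\tilde{\bm{J}}_{/i}$ is the same as $\bm{J}_{/i}$ but evaluated at $\bl$, shows that $\bigl(\ld_i(\bl)/\ldd_i(\bl)\bigr)H_{ii}/(1-H_{ii})$ equals $\ld_i(\bl)\bm{x_i}^\top\tilde{\bm{J}}_{/i}^{-1}\bm{x_i}$. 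To compare $\bm{J}_{/i}^{-1}$ and $\tilde{\bm{J}}_{/i}^{-1}$, I would apply Lemma \ref{lem:9} to the augmented design $\bar{\XI}=[\XI^\top,\sqrt{\lambda}\bm{I}]^\top$ with diagonal weights $\diag[\bm{\ldd}_{/i}(\bli),\bm{\rdd}(\bli)]$ and perturbation weights given by the Lipschitz gaps; Assumption \ref{ass:1} bounds $\|\bm{\gamma}\|_2\leq 2c_2(n)\|\bm{\Delta}^*_{/i}\|_2$, and the resulting $\|\bar{\XI}\bm{J}_{/i}^{-1}\bm{x_i}\|_4^2\leq\|\XI\bm{J}_{/i}^{-1}\bm{x_i}\|_4^2+\lambda\|\bm{J}_{/i}^{-1}\bm{x_i}\|_4^2$ fits the required output form.

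The main obstacle is the bookkeeping: the constant $\bar C_i$ has a product structure $(1+2c_1c_2(1+\omega_{\max,i})\|\bm{x_i}\|_2/\nu^2)$, which signals a nested application of the perturbation argument—once to control the Taylor remainder with the crude bound on $\|\bm{\Delta}^*_{/i}\|_2$, and once more to replace $\|\XI\bm{\Delta}^*_{/i}\|_4$ by its first-order surrogate. Care is needed to keep the self-referential bound on $\|\bm{\Delta}^*_{/i}\|_2$ (which appears on both sides of the implied inequality) from blowing up; the lower eigenvalue bound $\nu$ from Assumption \ref{ass:2}, taken uniformly along the segment $(1-t)\bl+t\bli$, makes the bootstrap self-consistent and delivers the stated constant.
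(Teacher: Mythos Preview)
Your overall architecture is right—optimality conditions, integral mean-value expansion around $\bli$, the Woodbury identification $\ld_i(\bl)\,\bm{x_i}^\top\tilde{\bm{J}}_{/i}^{-1}\bm{x_i}=(\ld_i/\ldd_i)H_{ii}/(1-H_{ii})$, and the use of Lemma~\ref{lem:9} on the augmented design $\tXI=[\XI^\top,\bm{I}]^\top$ all match the paper. The a~priori bound $\|\bm{\Delta}^*_{/i}\|_2\le c_1(n)\|\bm{x_i}\|_2/\nu$ is also obtained the same way.

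The gap is in your second paragraph. When you isolate the Taylor remainder $\bm{e}$ and bound $\bm{x_i}^\top\bm{J}_{/i}^{-1}\bm{e}$ by the triple H\"older estimate $\|\XI\bm{J}_{/i}^{-1}\bm{x_i}\|_4\,\|\XI\bm{\Delta}^*_{/i}\|_4\,\|\bm{u}\|_2$, you introduce the quantity $\|\XI\bm{\Delta}^*_{/i}\|_4$, which you then try to replace by $c_1(n)\|\XI\bm{J}_{/i}^{-1}\bm{x_i}\|_4$ via a ``bootstrap''. This substitution does not close: writing $\bm{\Delta}^*_{/i}=\ld_i(\bl)\bm{J}_{/i}^{-1}\bm{x_i}-\bm{J}_{/i}^{-1}\bm{e}$ leaves you with a correction $\|\XI\bm{J}_{/i}^{-1}\bm{e}\|_4$ that is again self-referential in $\bm{\Delta}^*_{/i}$, and there is no smallness parameter that lets you absorb it. If instead you fall back on $\|\XI\bm{\Delta}^*_{/i}\|_4\le\|\XI\bm{\Delta}^*_{/i}\|_2\le\sqrt{\omega_{\max,i}}\,\|\bm{\Delta}^*_{/i}\|_2$, the resulting bound scales with $\|\XI\bm{J}_{/i}^{-1}\bm{x_i}\|_4$ to the \emph{first} power, not squared, and you will not recover the stated form of the inequality.

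The paper avoids this entirely by never separating out $\bm{e}$. It uses the exact representation
\[
\bm{x_i}^\top\bm{\Delta}^*_{/i}=\ld_i(\bl)\,\bm{x_i}^\top\Big(\textstyle\int_0^1\bm{J}_{/i}(\bl+t\bm{\Delta}^*_{/i})\,dt\Big)^{-1}\bm{x_i},
\]
so that \emph{both} $\bm{x_i}^\top\bm{\Delta}^*_{/i}$ and the ALO expression are of the form $\ld_i(\bl)\,\bm{x_i}^\top(\,\cdot\,)^{-1}\bm{x_i}$. With $\bm{J}_{/i}(\bli)$ as a pivot, Lemma~\ref{lem:9} is applied \emph{twice}: once to compare the integrated Jacobian with $\bm{J}_{/i}(\bli)$, and once to compare $\bm{J}_{/i}(\bli)$ with $\bm{J}_{/i}(\bl)$. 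Because Lemma~\ref{lem:9} (via the Woodbury identity in Lemma~\ref{lem:5}) produces the symmetric term $\bm{z}^\top\bm{A}^{-1}\tXI^\top\bm{\Gamma}\tXI\bm{A}^{-1}\bm{z}=\sum_j\gamma_j(\tXI\bm{A}^{-1}\bm{z})_j^2$, a single Cauchy--Schwarz step yields $\|\bm{\gamma}\|_2\|\tXI\bm{J}_{/i}^{-1}\bm{x_i}\|_4^2$ directly, with $\|\bm{\gamma}\|_2\le 2c_2(n)\|\bm{\Delta}^*_{/i}\|_2$. No $\|\XI\bm{\Delta}^*_{/i}\|_4$ ever appears, and hence no bootstrap is needed. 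The fix to your argument is simply to drop the explicit remainder $\bm{e}$ and apply Lemma~\ref{lem:9} to the integrated Jacobian as well.
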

\begin{lemma} \label{lem:6}
Let $\bm{x} \sim \N (0,\bm{\Sigma})$ with $\rho_{\max} \triangleq \sigma_{\max} \left( \bm{\Sigma} \right)$, where $\bm{\Sigma} \in \R^{p \times p}$ then
\begin{eqnarray}
\Pr \left [   \left \|  \bm{x} \right \|_4^2  >  2(1+c) \rho_{\max} \sqrt{p} \log p \right] &\leq&  \frac{2}{p^c}.
\end{eqnarray}
Moreover, if 
\begin{eqnarray}
\omega_{\max} &\triangleq& \sigma_{\max}\left( \bm{X   X}^\top\right), \label{lem:9-1}   \\  
\nu_{\min} &\triangleq& \sigma_{\min}\left( \bm{J}  \right)\label{lem:9-2},
\end{eqnarray}
where $\bm{x}$ is independent of the symmetric matrix $\bm{J} \in \R^{p \times p}$ and $\bm{X} \in \R^{m \times p}$, then
\begin{eqnarray}
\Pr \left [  \left \|   \bm{J}^{-1} \bm{x} \right \|_4^2  >  2(1+c) \left(\frac{\rho_{\max}}{  \nu_{\min}^2 } \right)  \sqrt{p} \log p \right]&<&\frac{2}{p^c},\\
\Pr \left [  \left \|  \bm{X J}^{-1} \bm{x} \right \|_4^2  >  2(1+c) \left(\frac{\rho_{\max}}{  \nu_{\min}^2 } \omega_{\max}\right)  \sqrt{m} \log m \right]&<&\frac{2}{m^c}.\label{eq:lem:6}
\end{eqnarray}
%
%
\end{lemma}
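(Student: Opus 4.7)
The plan is to reduce all three statements to a single Gaussian tail-plus-union-bound argument, using the elementary inequality
\[
\|\bm{v}\|_4^2 \;=\; \sqrt{\textstyle\sum_i v_i^4} \;\leq\; \sqrt{p\,\max_i v_i^4} \;=\; \sqrt{p}\,\|\bm{v}\|_\infty^2,
\]
which converts bounding $\|\bm{v}\|_4^2$ into bounding $\|\bm{v}\|_\infty^2$ of a Gaussian vector.

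For the first inequality, since $\bm{x}\sim\N(0,\bm{\Sigma})$, each coordinate satisfies $x_i\sim\N(0,\Sigma_{ii})$ with $\Sigma_{ii}\le\rho_{\max}$. Lemma~\ref{lem:GaussianTail} together with a union bound over the $p$ coordinates gives
\[
\Pr\!\left(\|\bm{x}\|_\infty^2 > 2(1+c)\rho_{\max}\log p\right) \;\leq\; 2p\cdot p^{-(1+c)} \;=\; \tfrac{2}{p^c}.
\]
Combining this with the sandwich $\|\bm{x}\|_4^2\le\sqrt{p}\,\|\bm{x}\|_\infty^2$ yields the stated bound.

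For the second inequality, I would condition on $\bm{J}$ (which is independent of $\bm{x}$). Then $\bm{J}^{-1}\bm{x}$ is a zero-mean Gaussian vector in $\mathbb{R}^p$ with covariance $\bm{J}^{-1}\bm{\Sigma}\bm{J}^{-1}$, whose largest eigenvalue is at most $\rho_{\max}/\nu_{\min}^2$. Applying the first inequality with $\rho_{\max}$ replaced by this bound on the conditional covariance's spectral norm gives the result; the bound is uniform in $\bm{J}$, so marginalizing is trivial. Similarly, for the third inequality, conditioning on $(\bm{X},\bm{J})$ makes $\bm{XJ}^{-1}\bm{x}$ a zero-mean Gaussian in $\mathbb{R}^m$ with covariance $\bm{XJ}^{-1}\bm{\Sigma J}^{-1}\bm{X}^\top$, whose largest eigenvalue is bounded by $\omega_{\max}\rho_{\max}/\nu_{\min}^2$. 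Applying the first bound in dimension $m$ (so a union bound of size $m$, producing the $\sqrt{m}\log m$ factor and the $2/m^c$ tail) completes the argument.

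There is essentially no obstacle here: the lemma is a textbook-style concentration result, and the only thing to be careful about is the conditioning in parts two and three—specifically, verifying that the Gaussian tail bound used in part one applies \emph{pointwise} in $\bm{J}$ and $\bm{X}$, so that the resulting probability bound (depending only on deterministic scalars $\rho_{\max}/\nu_{\min}^2$ and $\omega_{\max}\rho_{\max}/\nu_{\min}^2$) holds unconditionally once we take the outer expectation. Because $\nu_{\min}$ and $\omega_{\max}$ are treated as fixed bounds on eigenvalues (not random quantities to be estimated here), no additional tail bounds on the spectrum of $\bm{J}$ or $\bm{XX}^\top$ are needed within this lemma; those will be imported from Lemma~\ref{lem:7} and Assumption~\ref{ass:2} when Lemma~\ref{lem:6} is used inside the proof of Theorem~\ref{theo:main}.
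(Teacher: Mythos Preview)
Your proposal is correct and follows essentially the same route as the paper's proof: reduce $\|\cdot\|_4^2$ to $\sqrt{p}\,\|\cdot\|_\infty^2$, control $\|\cdot\|_\infty$ by a Gaussian tail bound plus union bound over coordinates, and then handle $\bm{J}^{-1}\bm{x}$ and $\bm{XJ}^{-1}\bm{x}$ by observing (conditionally on $\bm{J},\bm{X}$) that they are Gaussian with covariance operator norms bounded by $\rho_{\max}/\nu_{\min}^2$ and $\omega_{\max}\rho_{\max}/\nu_{\min}^2$. The paper's own proof is essentially a terser version of what you wrote, so there is nothing substantive to add.
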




\begin{proof} [Proof of Theorem \ref{theo:main}]
{Let $S$ denote the event that  \eqref{eq:c1def}, \eqref{eq:c2def}, \eqref{eq:c3def}, and \eqref{eq:minEigComb} hold.}  Furthermore, define the following events:
\begin{eqnarray}
G &\triangleq& \left \{ \max_{1 \leq i \leq n}\left | \bm{x_i}^\top \bm{\Delta}_{/i}^*  -\left( \frac{\ld_i(\bl)}{\ldd_i(\bl)} \right) \frac{  H_{ii}}{1 -  H_{ii}} \right | >  C \frac{\log p}{\sqrt{p} }  \right\}, \label{eq:G} \\
E_i &\triangleq&   \left \{ \left | \bm{x_i}^\top \bm{\Delta}_{/i}^*  -\left( \frac{\ld_i(\bl)}{\ldd_i(\bl)} \right) \frac{  H_{ii}}{1 -  H_{ii}} \right | >  C \frac{\log p}{\sqrt{p} }  \right\}, \label{eq:C} \\
\tilde{E}_i &\triangleq&  \left \{ \bar C_i \left( \left \|   \XI \bm{J}_{/i}^{-1}  \bm{x_i}  \right\|_4^2 + \left \|   \bm{J}_{/i}^{-1}  \bm{x_i}  \right\|_4^2 \right )   >   C \frac{\log p}{\sqrt{p} }  \right\}, \\
F_i &\triangleq&      \left \{ \bar C_i \left( \left \|   \XI \bm{J}_{/i}^{-1}  \bm{x_i}  \right\|_4^2 + \left \|   \bm{J}_{/i}^{-1}  \bm{x_i}  \right\|_4^2 \right )   >  \bar C_i C_i \sqrt{p} \log p  \right\}, \\
K_i &\triangleq& \left \{ \frac{C}{\sqrt{p}} \geq  \bar C_i C_i \sqrt{p} \right \},
\\
W_i &\triangleq& \left \{ \| \bm{x_i} \|_2^2 > 5 p \rho_{\max} \right \}  \cup  \left \{  \omega_{\max}  > \left(\sqrt{n} + 3 \sqrt{p} \right)^2\rho_{\max}  \right \},
\end{eqnarray}
where $C$ in \eqref{eq:C} is a positive constant (defined later in \eqref{eq:CC}), and
\begin{eqnarray}
C_i &\triangleq& 2 (1+c) \left(\frac{\rho_{\max}}{\nu^2} \right) \left(1 + \omega_{\max} \sqrt{\frac{n-1}{p}} \frac{\log (n-1)}{\log p} \right), \label{eq:Ci}\\
\bar C_i &\triangleq& 4 \left \| \bm{x_i} \right \|_2 \left( \frac{c_1^2(n) c_2(n) }{\nu} \right)  \left( 1 +  \frac{ 2c_1(n)c_2(n) (1+\omega_{\max})}{\nu^2}  \left \| \bm{x_i} \right \|_2     \right), \\
\omega_{\max} &\triangleq& \sigma_{\max} \left( \bm{X X}^\top \right).
\end{eqnarray}
The variable $c$ in \eqref{eq:Ci} is later set to 3, but for now all we need to know is that it is a positive constant. Due to Lemma \ref{lem:3}, {if the event $S$ holds}, then for every $i$ we have
$$
\bar C_i \left( \left \|   \XI \bm{J}_{/i}^{-1}  \bm{x_i}  \right\|_4^2 + \left \|   \bm{J}_{/i}^{-1}  \bm{x_i}  \right\|_4^2 \right ) \geq\left | \bm{x_i}^\top \bm{\Delta}_{/i}^*  -\left( \frac{\ld_i(\bl)}{\ldd_i(\bl)} \right) \frac{  H_{ii}}{1 -  H_{ii}} \right |.
$$
{Since $\Pr[S^c] \leq q_n+ \tilde{q}_n$, we have}
\begin{eqnarray}\label{eq:mainEevent}
\Pr[G] &\leq& \Pr [G |  S] + \Pr[S^c] \leq  \Pr \left[ \max_{1 \leq i \leq n} \bar C_i \left( \left \|   \XI \bm{J}_{/i}^{-1}  \bm{x_i}  \right\|_4^2 + \left \|   \bm{J}_{/i}^{-1}  \bm{x_i}  \right\|_4^2 \right ) >  C \frac{\log p}{\sqrt{p} }  \ | \ S \right] + q_n + \tilde{q}_n \nonumber \\
&\leq& \frac{1}{1- q_n- \tilde{q}_n}  \Pr \left[ \max_{1 \leq i \leq n} \bar C_i \left( \left \|   \XI \bm{J}_{/i}^{-1}  \bm{x_i}  \right\|_4^2 + \left \|   \bm{J}_{/i}^{-1}  \bm{x_i}  \right\|_4^2 \right ) >  C \frac{\log p}{\sqrt{p} }  \right] + q_n + \tilde{q}_n \nonumber \\
&\leq& \frac{1}{1- q_n- \tilde{q}_n} \sum_{i=1}^n \Pr[\tilde{E}_i] + q_n + \tilde{q}_n. 
\end{eqnarray}
Hence, we now obtain an upper bound for $\Pr[\tilde{E}_i]$;
\begin{eqnarray}
\Pr[\tilde{E}_i]  &\stackrel{}{\leq}&
 \Pr [\tilde{E}_i | K_i  ] + \Pr [ K_i^c  ], \nonumber
 \\
&{\leq}& \Pr [F_i | K_i]  + \Pr [ K_i^c  ] \leq  \frac{\Pr(F_i)}{\Pr(K_i)} + \Pr [ K_i^c  ] \nonumber \\
&\leq& \frac{\Pr \left [ \left \|   \XI \bm{J}_{/i}^{-1}  \bm{x_i}  \right\|_4^2 > 2 (1+c) \left(\frac{\rho_{\max}   }{\nu^2} \omega_{\max} \right) \sqrt{n-1} \log (n-1) \right] }{\Pr(K_i)}\nonumber \\&+& \frac{\Pr \left [ \left \|    \bm{J}_{/i}^{-1}  \bm{x_i}  \right\|_4^2 > 2 (1+c) \left(\frac{\rho_{\max}}{\nu^2} \right) \sqrt{p} \log p \right] }{\Pr(K_i)} + \Pr [ K_i^c  ]
\nonumber \\
&\stackrel{1}{\leq}&  \left(\frac{2}{(n-1)^c}+ \frac{2}{p^c}\right) \frac{1}{\Pr(K_i)}+ \Pr [ K_i^c  ],
\end{eqnarray}
where $\stackrel{1}{\leq}$ is due to  Inequality \eqref{eq:lem:6} from Lemma \ref{lem:6}. To bound  $\Pr [ K_i^c  ]$ we define
\begin{eqnarray}
C &\triangleq&
32 \sqrt{5 } \left( \frac{c_1^2(n) c_2(n) (p\rho_{\max})^{3/2}}{\nu^3} \right) \left(1 + \left(\sqrt{\frac{n}{p}} + 3 \right)^2p \rho_{\max} \sqrt{\frac{n-1}{p}} \frac{\log (n-1)}{\log p}\right)\nonumber \\ &\times&
 \Bigg( 1 +  \frac{2 c_1(n)c_2(n)\sqrt{5 } \left (1+ \Big(\sqrt{\frac{n}{p}} + 3  \Big)^2 p \rho_{\max}  \right)  \sqrt{ p \rho_{\max}} }{\nu^2 }       \Bigg) \label{eq:CC}
\end{eqnarray}
obtained by setting $c=3$, and computing $p \bar C_i C_i  $ after putting $\sqrt{5 p \rho_{\max}}$ and $ \left(\sqrt{n} + 3 \sqrt{p} \right)^2\rho_{\max}$, bounds in event $W_i$, into  $\left \| \bm{x_i} \right \|_2 $ and $\omega_{\max}$, respectively. Next,
\begin{eqnarray}
\Pr [ K_i^c  ] &=& \Pr \left[ \frac{C}{p} <  \bar C_i C_i \right ] \leq \Pr \left[ C <  p \bar C_i C_i \right  | W_i^c] + \Pr[W_i] \nonumber
\\
&=& \Pr \left [ C < C \right] +\Pr[W_i] = \Pr[W_i]. \nonumber
\end{eqnarray}
The term $\Pr[W_i]$ is exponentially small because $\bm{x_i}$ is $\N(0,\bm{\Sigma})$ with $\rho_{\max}=\sigma_{\max} \left(\bm{\Sigma} \right)$, leading to
\begin{eqnarray}
\Pr[W_i] &\leq& \Pr \left[ \| \bm{x_i} \|_2^2 > 5 p \rho_{\max}   \right] + \Pr \left[  \sigma_{\max} \left( \bm{X X}^\top \right)  > \left(\sqrt{n} + 3 \sqrt{p} \right)^2\rho_{\max}  \right] \leq 2e^{-p} ,
\end{eqnarray}
due to Lemma \ref{lem:massart} and Lemma \ref{lem:7}. In summary, since for $p\geq 1$ we have $\frac{1}{1- {\rm e}^{-p}} < 2$, for $c=3$ we obtain
\[
\Pr[\tilde{E}_i] \leq \frac{4}{(n-1)^3}+\frac{4}{p^3}+2e^{-p}.
\]
This combined with \eqref{eq:mainEevent} leas to 
\begin{eqnarray}
\Pr \left [ \max_{1 \leq i \leq n} \left | \bm{x_i}^\top \bm{\Delta}_{/i}^*  -\left( \frac{\ld_i(\bl)}{\ldd_i(\bl)} \right) \frac{  H_{ii}}{1 -  H_{ii}} \right | >  C \frac{\log p}{\sqrt{p} }  \right ] &\leq & \left(\frac{4n}{(n-1)^3}+\frac{4n}{p^3}+2ne^{-p} \right) \frac{1}{1-q_n - \tilde{q}_n} + q_n + \tilde{q}_n \nonumber \\
&\leq& \frac{8n}{(n-1)^3}+\frac{8n}{p^3}+4ne^{-p}  + q_n + \tilde{q}_n,
\end{eqnarray}
where the last inequality is due to the assumption that $q_n+ \tilde{q}_n \leq 0.5$. Hence, Inequality \eqref{eq:loo_approximation} in Theorem \ref{theo:main} follows. Note that in the presentation of Theorem \ref{theo:main}, we replaced respectively $32\sqrt{5}$ and $2\sqrt{3}$ with the upper-bounds 72 and 5, and we replaced $\sqrt{\frac{n-1}{p}}\frac{\log(n-1)}{\log p}$ with the upper bound $\sqrt{\frac{n}{p}}\frac{\log n}{\log p}$. We also used $\delta_0$ to denote $n/p$.

\end{proof}

\subsection{Proofs of lemmas    \ref{lem:5}, \ref{lem:9}, \ref{lem:3}, \ref{lem:6}  and \ref{lem:7}  \label{sec:lem:proof}}

\begin{proof} [Proof of Lemma \ref{lem:5}]
Let $Q \triangleq \{i: \Gamma_{ii} \neq 0  \}$. Moreover, let $\bm{X}_{Q,:} \in \R^{|Q| \times p}$ stand for the sub-matrix of $\bm{X}$ restricted to the rows indexed by $Q$, and let $\bm{\tilde \Gamma} \in \R^{|Q| \times |Q|}$ be the diagonal matrix with the diagonal elements of $\bm{\Gamma}$ indexed by $Q$. Then, $\bm{X}^\top \bm{\Gamma} \bm{X} = \bm{X}_{Q,:}^\top \bm{\tilde \Gamma} \bm{X}_{Q,:} $, and in turn,  the Woodbury inversion lemma yields
\begin{eqnarray} 
(\bm{X}^\top \bm{D X} + \bm{X}^\top \bm{\Gamma X})^{-1} 
&=& (\bm{A}  + \bm{X}_{Q,:}^\top \bm{\tilde \Gamma} \bm{X}_{Q,:})^{-1} \nonumber \\
&=& \bm{A}^{-1} - {\bm A}^{-1} \bm{X}_{Q,:}^\top (\bm{\tilde \Gamma}^{-1} + \bm{X}_{Q,:} \bm{A}^{-1} \bm{X}^\top_{Q,:} )^{-1} \bm{X}_{Q,:}{\bm A}^{-1}. \label{eq:firstlemmafirststep}
\end{eqnarray}
Using the Woodbury  lemma again we obtain
\begin{eqnarray}
(\bm{\tilde \Gamma}^{-1} + \bm{X}_{Q,:} \bm{A}^{-1} \bm{X}_{Q,:}^\top )^{-1} 
&=& 
\bm{ \tilde \Gamma} - \bm{ \tilde \Gamma} \bm{X}_{Q,:} (\bm{ A+ X}_{Q,:}^\top \bm{\tilde \Gamma X}_{Q,:}    )^{-1} \bm{ X}_{Q,:}^\top \bm{\tilde \Gamma}\nonumber
\\
&=& \bm{ \tilde \Gamma} - \bm{ \tilde \Gamma} \bm{X}_{Q,:} ({\bm  X}^\top (\bm{ \Gamma+ D) X}    )^{-1} \bm{ X}_{Q,:}^\top \bm{\tilde \Gamma}.\label{eq:firstlemmasecondstep}
\end{eqnarray}
Hence, by using \eqref{eq:firstlemmafirststep} and \eqref{eq:firstlemmasecondstep} we have
\begin{eqnarray*}
(\bm{X}^\top \bm{D X})^{-1} - (\bm{X}^\top \bm{D X} + \bm{X}^\top \bm{\Gamma X})^{-1} 
=  {\bm A^{-1}} \bm{X}^\top \left( \bm{  \Gamma} - \bm{  \Gamma X} (\bm{  X^\top ( \Gamma+ D) X    })^{-1} {\bm X^\top \Gamma} \right) \bm{X}{\bm A^{-1}}. \ \ \ \ \ \ \ \ \ \ \ \ \ \ \ \ \ \ \ \ \  \ \ \ \ \ 
\end{eqnarray*}
\end{proof}

\begin{proof} [Proof of Lemma \ref{lem:9}]
Let $\bm{A} \triangleq \bm{X}^\top \bm{D X}$, then
\begin{eqnarray}
\lefteqn{\left |\bm{z}^\top \left ( \bm{X}^\top (\bm{D+\Gamma}) \bm{X} \right)^{-1} \bm{z}- \bm{z}^\top \left ( \bm{X}^\top \bm{D X}  \right)^{-1} \bm{z} \right| } \nonumber \\
&\stackrel{1}{=}& 
\left |\bm{z}^\top \left ( \bm{A}^{-1}  \bm{X}^\top  \bm{\Gamma  X A}^{-1}- \bm{A}^{-1} \bm{X}^\top\bm{ \Gamma X }  \left( \bm{X}^\top (\bm{D + \Gamma}) \bm{X} \right)^{-1} \bm{X}^\top \bm{\Gamma   X A}^{-1} \right)  \bm{z} \right| \nonumber
\\
&\stackrel{2}{\leq}& \left |\bm{z}^\top \bm{ A}^{-1}  \bm{X}^\top  \bm{\Gamma  X A}^{-1} \bm{z} \right  | + 
\bm{z}^\top \bm{A}^{-1} \bm{X}^\top \bm{\Gamma X}   \left( \bm{X}^\top (\bm{D + \Gamma}) \bm{X} \right)^{-1}\bm{ X}^\top \bm{\Gamma}   \bm{X A}^{-1} \bm{ z} \nonumber 
\\&\stackrel{3}{\leq}&  \| \bm{\gamma} \|_2   \left \| \bm{X A}^{-1} \bm{z} \right  \|_4^2 + 
\bm{z}^\top\bm{ A}^{-1} \bm{X}^\top \bm{\Gamma X }  \left( \bm{X}^\top (\bm{D + \Gamma}) \bm{X} \right)^{-1} \bm{X}^\top \bm{\Gamma   X A}^{-1}  \bm{z}  \nonumber
\\&\stackrel{4}{\leq}&  \| \bm{\gamma} \|_2   \left \| \bm{X A}^{-1} \bm{z} \right  \|_4^2 +\left( \frac{\bar \omega_{\max}}{\nu_{\min}} \right)
\bm{z}^\top \bm{A}^{-1} \bm{X}^\top \bm{\Gamma}^2   \bm{X A}^{-1} \bm{ z } \nonumber
\\&\stackrel{5}{\leq}&  \| \bm{\gamma} \|_2   \left \| \bm{X A}^{-1} \bm{z} \right  \|_4^2 +\left( \frac{\bar \omega_{\max}}{\nu_{\min}} \right)
\| \bm{\gamma} \|_4^2   \left \| \bm{X A}^{-1} z \right  \|_4^2  \nonumber
\\&=& \left(  \| \bm{\gamma} \|_2   +\left( \frac{\bar \omega_{\max}}{\nu_{\min}} \right)
\| \bm{\gamma} \|_4^2    \right) \left \| \bm{X A}^{-1} \bm{z} \right  \|_4^2, 
\end{eqnarray}
where $\stackrel{1}{=}$ is due to Lemma \ref{lem:5}, $\stackrel{2}{\leq}$ is due to the triangle inequality, and the fact that $\bm{X}^\top (\bm{D} + \bm{\Gamma}) \bm{X} $ is positive definite, and $\stackrel{3}{\leq}$ and $\stackrel{5}{\leq}$ are due to Cauchy-Schwartz inequality:
\begin{eqnarray*}
\bm{x}^\top \diag[\bm{\gamma}] \bm{x} &=& \sum_{i=1}^n x_i^2 \gamma_i \leq \sqrt{ \| \bm{x} \|_4^4 \| \bm{\gamma} \|_2^2 } = \| \bm{x} \|_4^2 \| \bm{\gamma} \|_2.
\end{eqnarray*}
Finally, $\stackrel{4}{\leq}$ is due to \eqref{lem:9-1} and \eqref{lem:9-2}.
\end{proof}

\begin{proof} [Proof of Lemma \ref{lem:3}]
Define the approximate leave-$i$-out perturbation vector as
\begin{eqnarray}
\bm{\hat \Delta}_{/i} &\triangleq& \ld_i(\bl)  [ \bm{J}_{/i} (\bli - \bm{\Delta^*}_{/i})]^{-1}  \bm{x_i},   \label{eq:D1} %
\end{eqnarray}
where the exact leave-$i$-out perturbation vector is given by
\begin{eqnarray}
\bm{\Delta^*}_{/i} &\triangleq& \bli -\bl.
\end{eqnarray}
Woodbury lemma yields:
\begin{eqnarray}
\bm{x_i}^\top \bm{\hat \Delta}_{/i}  &=& \ld_i(\bl) \bm{x_i}^\top\Bigl (\lambda \diag[\bm{ \rdd}(\bli -  \bm{\Delta}_{/i}^* )] +\XI^\top \diag[\ldd_{/ i}(\bli - \bm{ \Delta}_{/i}^* )] \XI  \Bigr )^{-1} \bm{x_i} \nonumber
\\
&=& \ld_i(\bl) \bm{x_i}^\top\Bigl (\lambda \diag[\bm{\rdd}(\bl )] +\XI^\top \diag[\bm{\ldd}_{/ i}(\bl)] \XI  \Bigr )^{-1} \bm{x_i}  \nonumber \\
&=& \ld_i(\bl) \bm{x_i}^\top\Bigl (\lambda \diag[\bm{\rdd}(\bl )] +\bm{X}^\top \diag[\bm{\ldd}(\bl)] \bm{X} - \bm{x_i} \bm{x_i}^\top   \ldd_i(\bl) \Bigr )^{-1} \bm{x_i}  \nonumber
\\&=& \left( \frac{\ld_i(\bl)}{\ldd_i(\bl)} \right) \frac{ \ldd_i(\bl)    \bm{x_i}^\top\Bigl (\lambda \diag[\bm{ \rdd}(\bl )] +\bm{X}^\top \diag[\bm{\ldd}(\bl)] \bm{X}  \Bigr )^{-1} \bm{x_i}    }{1 -  \ldd_i(\bl)    \bm{x_i}^\top\Bigl (\lambda \diag[\bm{\rdd}(\bl )] +\bm{X}^\top \diag[\bm{\ldd}(\bl)] \bm{X} \Bigr )^{-1} \bm{x_i}  } 
\nonumber \\&=& \left( \frac{\ld_i(\bl)}{\ldd_i(\bl)} \right) \frac{  H_{ii}}{1 -  H_{ii}},
\end{eqnarray}
where $\bm{H} \triangleq \bm{X} \left ( \lambda \diag [\bm{ \rdd}(\bm{\bl})] + \bm{X}^\top \diag[\bm{\ldd}(\bl)] \bm{X} \right)^{-1} \bm{X}^\top \diag[\bm{\ldd}(\bl)]$. Define 
\begin{eqnarray}
\bm{f}_{/i} (\bm{\theta}) &\triangleq&  \lambda \bm{ \rd}( \bm{\theta} )+ \XI^\top \bm{\ld}_{/ i}(\bm{\theta}). 
\end{eqnarray}
The leave-one-out estimate, $\bli = \bl + \bm{\Delta^*}_{/i}$,  satisfies $\bm{f}_{/i} (\bm{\Delta^*}_{/ i}) = 0$. The multivariate mean-value Theorem yields
\begin{eqnarray}
0 &=& \bm{f}_{/i} (\bl + \bm{\Delta^*}_{/i})  = \bm{f}_{/i} (\bl) + \left(\int_0^1 \bm{J}_{/i} (\bl + t \bm{\Delta^*}_{/i}) dt \right)\bm{\Delta^*}_{/i}
\end{eqnarray}
where the Jacobean is
\begin{eqnarray}
\bm{J}_{/i} (\bm{\theta}) &=&  \lambda \diag[\bm{ \rdd}(\bm{\theta})]  +  \XI^\top \diag[\bm{\ldd}_{/ i}( \bm{\theta})] \XI.  
\end{eqnarray}
Moreover, $\bl$ satisfies  
\begin{eqnarray*}
0 &=& \lambda\bm{  \rd}(\bl)+ \bm{X}^\top\bm{\ld} (\bl)  = \bm{f}_{/i} (\bl) + \ld_i(\bl) \bm{x_i}.
\end{eqnarray*}
We get
\begin{eqnarray*}
\ld_i(\bl) \bm{x_i} &=&- \left(\int_0^1 \bm{J}_{/i} (\bl + t \bm{\Delta^*}_{/i}) dt \right)\bm{\Delta^*}_{/i},
\end{eqnarray*}
so that
\begin{eqnarray}
\bm{\Delta^*}_{/i} &=&- \ld_i(\bl)   \left( \int_0^1 \bm{J}_{/i} (\bl + t \bm{\Delta^*}_{/i}) dt \right)^{-1} \bm{x_i},   \label{eq:D} 
\end{eqnarray}
leading to the following inequality
\begin{eqnarray}
\left \| \bm{ \Delta}^*_{/ i} \right \|_2 \leq  \left( \frac{|\ld_i(\bl)|}{\nu} \right) \left \| \bm{x_i} \right \|_2, \label{eq:bounded}
\end{eqnarray} 
as a consequence of Assumption \ref{ass:2}. Next, we look at the part of $\bm{\Delta^*}_{i,\lambda}$ dependent on $\bm{x_i}$,  so we rewrite 
\eqref{eq:D} as
\begin{eqnarray}
\bm{\Delta^*}_{/ i} &=&  - \ld_i(\bl)   \left( \int_0^1 \bm{J}_{/i} (\bli - (1-t) \bm{\Delta^*}_{/ i}) dt \right)^{-1} \bm{x_i}.   \label{eq:D2} %
\end{eqnarray}
Let us  rewrite the Jacobean in a more compact form:
\begin{eqnarray}
\bm{J}_{/i}(\bm{\theta}) = \tXI^\top \bm{D}_{/i}(\bm{\theta}) \tXI,
\end{eqnarray}
where
\begin{eqnarray} \label{eq:tXI}
 \tXI \triangleq \begin{bmatrix} \XI \\ \bm{I} 
\end{bmatrix} \in \R^{(n-1+p) \times p},  \ \ \ \ \ 
\bm{D}_{/i}(\bm{\theta}) \triangleq 
\diag\begin{bmatrix} 
\bm{\ldd}_{/ i}(\bm{\theta} )  
\\ 
\lambda \bm{\rdd}(\bm{\theta} )
\end{bmatrix}
\in \R^{(n-1+p) \times (n-1+p)}.
\end{eqnarray}
Define
\begin{eqnarray}
\bm{\gamma}_{\bm{\delta}/i}(\bm{\theta}) &\triangleq&  
\begin{bmatrix} 
\bm{\ldd}_{/ i}(\bm{\theta} + \bm{\delta} )  - \bm{\ldd}_{/ i}(\bm{\theta}  )  
\\ 
\lambda (\bm{\rdd}(\bm{\theta} + \bm{\delta}) - \bm{\rdd}(\bm{\theta} ))
\end{bmatrix}
\end{eqnarray}
so that 
\begin{eqnarray}
\bm{J}_{/i}(\bm{\theta} + \bm{\delta}) &=& \bm{J}_{/i}(\bm{\theta} ) + \tXI^\top \diag\left[\bm{\gamma}_{\bm{\delta}/i}(\bm{\theta} )\right] \tXI   , \label{eq:A}
\end{eqnarray}
Note that  $\bm{J}_{/i}(\bm{\theta + \delta})$ is  positive definite for all $t \in[0,1]$,  $\bm{\theta} = \bli$ and $\bm{\delta}=- (1-t) \bm{\Delta}_{/ i}^*$, due to Assumption \ref{ass:2}. The last  steps of the proof are as follows:
\begin{eqnarray}
\lefteqn{\left | \bm{x_i}^\top \bm{\Delta}^*_{/ i}  - \bm{x_i}^\top \bm{\hat \Delta}_{/ i}  \right | = |\ld_i(\bl)|
\left | \bm{x_i}^\top \left( \int_0^1 \bm{J}_{/i}(\bli - (1-t) \bm{\Delta}_{/ i}^*) dt \right)^{-1} \bm{x_i}   - \bm{x_i}^\top \bm{J}_{/i}^{-1}(\bli - \bm{\Delta}_{/ i}^*) \bm{x_i}  \right  |} \nonumber
\\
&\leq&
|\ld_i(\bl)| \left | \bm{x_i}^\top \left( \int_0^1 \bm{J}_{/i}(\bli - (1-t) \bm{\Delta}_{/ i}^*) dt \right)^{-1} \bm{x_i}   - \bm{x_i}^\top \bm{J}_{/i}^{-1}(\bli) \bm{x_i}  \right  | \nonumber \\ &+&
|\ld_i(\bl)| \left | \bm{x_i}^\top \bm{J}_{/i}^{-1}(\bli ) \bm{x_i}   - \bm{x_i}^\top \bm{J}_{/i}^{-1}(\bli - \bm{\Delta}_{/ i}^*) \bm{x_i}  \right  | \nonumber
\\
&\stackrel{0}{\leq}&|\ld_i(\bl)| \left | \bm{x_i}^\top  \bm{J}_{/i}^{-1}(\bli) \bm{x_i}- \bm{x_i}^\top  \left ( \bm{J}_{/i}(\bli) + \tXI^\top 
 \diag\left[ \int_0^1 \bm{\gamma}_{-(1-t)\bm{\Delta}_{/ i}^* / i}(\bli) dt \right] \tXI   \right)^{-1} \bm{x_i}   \right  | \nonumber
\\
&+&
|\ld_i(\bl)| \left |\bm{x_i}^\top  \bm{J}_{/i}^{-1}(\bli) \bm{x_i} -  \bm{x_i}^\top  \left ( \bm{J}_{/i}(\bli) +  \tXI^\top  \diag\left[\bm{\gamma}_{-\bm{\Delta}_{/ i}^*/i}(\bli) \right]\tXI   \right)^{-1} \bm{x_i}  \right  |  \nonumber
\\
&\stackrel{1}{\leq}&
|\ld_i(\bl)| \left(\left \|  \int_0^1 \bm{\gamma}_{-(1-t)\bm{\Delta}_{/ i}^* / i}(\bli) dt \right \|_2 
+ \left(  \frac{\bar \omega_{\max,i}}{\nu} \right)  \left \|  \int_0^1 \bm{\gamma}_{-(1-t)\bm{\Delta}_{/ i}^* / i}(\bli) dt   \right \|_4^2\right)
\left \| \tXI \bm{J}_{/ i}^{-1}(\bli) \bm{x_i} \right\|_4^2 \nonumber \\
&+&
|\ld_i(\bl)| \left(\left \| \bm{\gamma}_{-\bm{\Delta}_{/ i}^*/i}(\bli)   \right \|_2 
+ \left(  \frac{\bar \omega_{\max,i}}{\nu} \right)  \left \| \bm{\gamma}_{-\bm{\Delta}_{/ i}^*/i}(\bli)    \right \|_4^2\right)
\left \| \tXI \bm{J}_{/ i}^{-1}(\bli) \bm{x_i} \right\|_4^2 \nonumber \\
&\stackrel{2}{\leq}&
|\ld_i(\bl)| \left(\left \|  \int_0^1 \bm{\gamma}_{-(1-t)\bm{\Delta}_{i,\lambda}^* / i}(\bli) dt \right \|_2 
+ \left(  \frac{\bar \omega_{\max,i}}{\nu} \right)  \left \|  \int_0^1 \bm{\gamma}_{-(1-t)\bm{\Delta}_{/ i}^* / i}(\bli) dt   \right \|_2^2\right)
\left \| \tXI \bm{J}_{/ i}^{-1}(\bli) \bm{x_i} \right\|_4^2 \nonumber \\
&+&
|\ld_i(\bl)| \left(\left \| \bm{\gamma}_{-\bm{\Delta}_{i,\lambda}^*/i}(\bli)   \right \|_2 
+ \left(  \frac{\bar \omega_{\max,i}}{\nu} \right)  \left \| \bm{\gamma}_{-\bm{\Delta}_{/ i}^*/i}(\bli)    \right \|_2^2\right)
\left \| \tXI \bm{J}_{/ i}^{-1}(\bli) \bm{x_i} \right\|_4^2 \nonumber \\
&\stackrel{3}{\leq}&  4 c_1(n)c_2(n)  \left \| \bm{\Delta}^*_{/ i} \right \|_2 \left(  1 + 2 c_2(n) \left  \|  \bm{\Delta}^*_{/ i} \right \|_2 \left(  \frac{\bar \omega_{\max,i}}{\nu} \right)    \right) 
\left \| \tXI \bm{J}_{/i}^{-1}(\bli) \bm{x_i} \right\|_4^2 \nonumber \\
&\stackrel{4}{\leq}& 4\left \| \bm{x_i} \right \|_2 \left( \frac{c_1^2(n) c_2(n) }{\nu} \right)  \left( 1 +  \frac{2 c_1(n)c_2(n) \bar\omega_{\max,i}  }{\nu^2}  \left \| \bm{x_i} \right \|_2     \right) 
\left \| \tXI \bm{J}_{/i}^{-1}(\bli) \bm{x_i} \right\|_4^2
\nonumber \\
&\stackrel{5}{\leq}& \underbrace{4\left \| \bm{x_i} \right \|_2 \left( \frac{c_1^2(n) c_2(n) }{\nu} \right)  \left( 1 +  \frac{2c_1(n)c_2(n) (1+  \omega_{\max,i} ) }{\nu^2}  \left \| \bm{x_i} \right \|_2     \right) }_{\triangleq \bar C_i} 
\sqrt{ \left \| \XI \bm{J}_{/i}^{-1}(\bli) \bm{x_i} \right\|_4^4 + \left \|  \bm{J}_{/i}^{-1}(\bli) \bm{x_i} \right\|_4^4}, \nonumber
\end{eqnarray}
where
\begin{itemize} 
\item $\stackrel{0}{\leq}$  is  due  \eqref{eq:A}. 
\item $\stackrel{1}{\leq}$ is due to   Assumption \ref{ass:2}, and Lemma \ref{lem:9}, where $\bar \omega_{\max,i} \triangleq \sigma_{\max} \left( \tXI \tXI^\top \right)$.
\item $\stackrel{2}{\leq}$ is due the fact that for any $\bm{\gamma}$ we have $\| \bm{\gamma} \|_4^2 \leq \| \bm{\gamma} \|_2^2$,
\item $\stackrel{3}{\leq}$ is due to Assumption \ref{ass:1} as illustrated below
\begin{eqnarray*}
\left \| \bm{\gamma}_{-\bm{\Delta}_{/i}^*/i}(\bli)   \right \|_2 &\leq& 
\left \| \bm{\ldd}_{/ i}(\bli-\bm{\Delta}_{/i}^*)  - \bm{\ldd}_{/ i}(\bli  )   \right\|_2 + \left \| \lambda (\bm{\rdd}(\bli -\bm{\Delta}_{/i}^* ) - \bm{\rdd}(\bli ))  \right\|_2 
\\
&\leq& 2 c_2(n) \left \|   \bm{\Delta}_{/i}^* \right \|_2.  
\end{eqnarray*}
Likewise,
\begin{eqnarray}\label{eq:referencefromproof}
\left \|  \int_0^1 \bm{\gamma}_{-(1-t)\bm{\Delta}_{/i}^* / i}(\bli) dt   \right \|_2 &\leq&
 \int_0^1 \left \|  \bm{\gamma}_{-(1-t)\bm{\Delta}_{/i}^* / i}(\bli)   \right \|_2 dt \nonumber
\\ &\leq& 
 \int_0^1 
\left \| \bm{\ldd}_{/ i}(\bli-(1-t)\bm{\Delta}_{/i}^*)  - \bm{\ldd}_{/ i}(\bli  )   \right\|_2  dt \nonumber
\\
&+&
 \int_0^1   \left \| \lambda (\bm{\rdd}(\bli -(1-t)\bm{\Delta}_{/i}^* ) - \bm{\rdd}(\bli ))  \right\|_2  dt \nonumber
\\
&\leq& 2 c_2(n) \left \|   \bm{\Delta}_{/i}^* \right \|_2. 
\end{eqnarray}
Here we should emphasize that this is the main place in which we have used the smoothness of second derivatives of the loss and regularizer in Assumption \ref{ass:1}. \footnote{Note that by checking the derivation, it is clear that we can replace Assumption \ref{ass:1} with the following weaker assumptions: 
\begin{eqnarray}
 c_2(n) &>&  \sup_{t \in [0,1]} \frac{  \|  \bm{\ldd}_{/ i}( (1-t) \bli + t \bl ) - \bm{\ldd}_{/ i}(\bl  ) \|_2}{\left \|   \bli - \bl  \right \|_2^\zeta}  \label{eq:footnote}
 \\
 c_2(n) &>&   \sup_{t \in [0,1]} \frac{  \|  \bm{\rdd}( (1-t) \bli + t \bl ) - \bm{\rdd}(\bl  ) \|_2}{\left \|   \bli - \bl  \right \|_2^\zeta}
\end{eqnarray}
for some $\zeta>0$, and still find an (weaker) upper bound for $\left | \bm{x_i}^\top \bm{\Delta}^*_{/i}  - \bm{x_i}^\top \bm{\hat \Delta}_{/i}  \right |$ that converges to zero as $n, p \rightarrow \infty$. }
\item   $\stackrel{4}{\leq}$ is due to inequality \eqref{eq:bounded}, and Assumption \ref{ass:1}.
\item   $\stackrel{5}{\leq}$ is due to \eqref{eq:tXI}, and
\begin{eqnarray}
\bar \omega_{\max,i} &=& \sigma_{\max} \left( \tXI \tXI^\top \right)=\sigma_{\max} \left( \tXI^\top \tXI \right)= \sigma_{\max} \left( 
 \begin{bmatrix} \XI  \nonumber \\ \bm{I} 
\end{bmatrix} ^\top
 \begin{bmatrix} \XI \\  \bm{I} 
\end{bmatrix} 
\right) \\
&=& \sigma_{\max} \left( \bm{I} +  \XI^\top \XI  \right) \leq 1+  \sigma_{\max} \left( \XI^\top \XI  \right) = 1+  \omega_{\max,i}.\label{eq:12}
\end{eqnarray}
\end{itemize}
The final result follows the basic inequality: $\sqrt{a^2 + b^2} \leq |a| + |b|$.
\end{proof}

\begin{proof}  [Proof of Lemma \ref{lem:6}]
First, we prove  
\begin{eqnarray}
\Pr\left [ \| \bm{x} \|_{\infty} >  \rho_{\max} \sqrt{2 (1+c)   \log p}   \right] &\leq& \frac{2}{p^c}
\end{eqnarray}
as follows
\begin{eqnarray}
\Pr\left [ \| \bm{x} \|_{\infty} > t   \right] \leq \sum_{i=1}^p\Pr\left [ | x_i | > t  \right]  
\leq  2 \sum_{i=1}^p e^{-\frac{ t^2}{2 \Sigma_{ii}}}} \leq  2 p e^{-\frac{ t^2}{2 \max_{i=1,\ldots,p}\Sigma_{ii}}}
\leq  2 e^{\log p -\frac{  t^2}{2 \rho_{\max}   }  , \nonumber
\end{eqnarray}
where $t= \rho_{\max} \sqrt{2 (1+c)   \log p}$ and $\max_{i=1,\ldots,p}\Sigma_{ii}  \leq \rho_{\max}$. 
Second, we prove  
\begin{eqnarray}
\Pr\left [ \| \bm{x} \|_4^2 >  2 (1+c)  \rho_{\max} \sqrt{p} \log p   \right] &\leq& \frac{2}{p^c}
\end{eqnarray}
in the following way:
\begin{eqnarray}
\Pr \left [ \sqrt{ \sum_{i=1}^p {x_i}^4 } > t \right] &=&\Pr \left [  \sum_{i=1}^p {x_i}^4  > t^2 \right] \leq \Pr \left [  p \max_{i=1,\ldots,p} {x_i}^4  > t^2 \right]  \nonumber 
\\
&\leq& \Pr \left [  \left \|\bm{x}  \right \|_{\infty}  > \left ( \frac{t^2}{p} \right)^{1/4} \right]  \leq  2 e^{\log p -\frac{    t}{2  \rho_{\max} \sqrt{p}}  },
\end{eqnarray}
where $t= 2 (1+c)  \rho_{\max} \sqrt{p} \log p $ yields the desired result. Let $\bm{z} \triangleq \bm{J}^{-1} \bm{x}$ and $\bm{u} \triangleq \bm{X J}^{-1} {\bm x}$, then $\bm{z}$ is zero mean Gaussian with covariance $ \bm{\Sigma_z} = \bm{J}^{-1} \bm{\Sigma J}^{-1}$ and $ \bm{\Sigma_u} = \bm{X J}^{-1}\bm {\Sigma_x J}^{-1} \bm{X}^{\top}$, leading to
\begin{eqnarray}
\sigma_{\max} \left(  \bm{\Sigma_z}  \right) &= & \sigma_{\max} \left( \bm{J}^{-1} \bm{\Sigma J}^{-1} \right)  \leq \frac{\rho_{\max}}{  \nu_{\min}^2 }, \\
\sigma_{\max} \left(  \bm{\Sigma_u}  \right) &= & \sigma_{\max} \left( \bm{X J}^{-1} \bm{\Sigma J}^{-1} \bm{X}^\top\right)  \leq \frac{\rho_{\max}}{  \nu_{\min}^2 }\omega_{\max}.
\end{eqnarray}
Therefore, we have
\begin{eqnarray}
\lefteqn{\Pr \left [  \left \|   \bm{J}^{-1} \bm{x} \right \|_4^2  >  2(1+c) \left(\frac{\rho_{\max}}{  \nu_{\min}^2 } \right)  \sqrt{p} \log p \right]} \nonumber \\
&\leq& \Pr \left [ \left \|   \bm{J}^{-1} \bm{x} \right \|_4^2 >  2(1+c) \sigma_{\max} \left(  \bm{\Sigma_z}  \right)  \sqrt{p} \log p \right] \leq \frac{2}{p^c}, \nonumber
\end{eqnarray}
and
\begin{eqnarray}
\lefteqn{\Pr \left [  \left \|  \bm{ X J}^{-1} x \right \|_4^2  >  2(1+c) \left(\frac{\rho_{\max}}{  \nu_{\min}^2 } \omega_{\max} \right)  \sqrt{m} \log m \right] } \nonumber \\
&\leq& \Pr \left [ \left \| \bm{X  J}^{-1} x \right \|_4^2 >  2(1+c) \sigma_{\max} \left( \bm{ \Sigma_u}  \right)  \sqrt{m} \log m \right] \leq \frac{2}{m^c}. \nonumber
\end{eqnarray}

\end{proof}

\subsection{Proof of Corollary \ref{cor:aloVSlo}} \label{proof:corofMain}

To bound $|\lo-\alo|$ we use the following variable 
\begin{eqnarray*}
\kappa_i \triangleq   \left( \frac{\ld_i(\bl)}{\ldd_i(\bl)} \right) \frac{  H_{ii}}{1 -  H_{ii}} -\bm{x_i}^\top \bm{\Delta}_{/i}^*
\end{eqnarray*}
as follows:
\begin{eqnarray}
\lefteqn{\left | \lo-\alo \right| =
\left| \frac{1}{n} \sum_{i=1}^n \phi \left (y_i,  \bm{x_i}^\top \bli   \right) - \frac{1}{n} \sum_{i=1}^n \phi \left (y_i,  \bm{x_i}^\top \bl +\left(\frac{\ld_i(\bl)}{\ldd_i(\bl)}\right) \left(  \frac{H_{ii}}{1 - H_{ii}} \right)     \right) \right|}
\nonumber
\\
 &=&
\left| \frac{1}{n} \sum_{i=1}^n \phi \left (y_i,  \bm{x_i}^\top \bl + \bm{x_i}^\top \bm{\Delta}_{/i}^*   \right) - \frac{1}{n} \sum_{i=1}^n \phi \left (y_i,  \bm{x_i}^\top \bl +\left(\frac{\ld_i(\bl)}{\ldd_i(\bl)}\right) \left(  \frac{H_{ii}}{1 - H_{ii}} \right)     \right) \right|
\nonumber
\\
 &=&
\left| \frac{1}{n} \sum_{i=1}^n \pd \left (y_i,  \bm{x_i}^\top \bli + a_i   \kappa_i \right) \kappa_i \right|.
\end{eqnarray} 
where $a_1, \ldots, a_n$ denote $n$ numbers between 0 and 1. Note that we have $\kappa_i < \frac{C_o}{\sqrt{p} }$ with probability at least $1- \left(\frac{8n}{(n-1)^3}+\frac{8n}{p^3}+4ne^{-p} \right)- q_n - \tilde{q}_n$. Therefore, with at least the same probability we have
\begin{eqnarray}
\left | \lo-\alo \right|
&\leq& \frac{C_o}{\sqrt{p} } \times  \max_{i=1,\ldots,n}   \sup_{|b_i| <  \frac{C_o}{\sqrt{p} }}  \left| \pd \left (y_i,  \bm{x_i}^\top \bli + b_i   \right)  \right|.
\label{eq:loo_approximation2}
\end{eqnarray} 



\if1\longer
{

\subsection{$\alo$ and $\lo$ in the $p$ fixed and large $n$ regime}\label{sec:largen_asymptot}

As we discussed so far, our main concern in this paper is high-dimensional settings in which $n$ is proportional to $p$. However, to present a complete picture about  $\alo$, in this section, we  study it in the classical asymptotic regime where $n$ is  large and $p$ is fixed.  The assumptions presented here will be used throughout Section \ref{sec:largen_asymptot} only. Let
\[
\hat{\bm{\beta}}_{\lambda_n} \triangleq  \underset{\bm{\beta} \in \R^p}{\argmin}  \Bigl \{ \sum_{i=1}^n  \ell ( y_i|\bm{x_i}^\top \bm{\beta} ) + \lambda_n r(\bm{\beta})  \Bigr \}. 
\] 
We also assume that the samples $\{(\bm{x_i}, y_i)\}_{i=1}^n$ are independent and identically distributed, and that $\frac{\lambda_n}{n} \rightarrow \lambda^*$. Define,
\[
\bm{\beta}^* = \argmin_{\bm{\beta} }\mathbb{E} \ell(y_i|\bm{x_i}^\top \bm{\beta}) + \lambda^* r(\bm{\beta}).
\]
Also, define 
\begin{eqnarray}
\bm{R} &\triangleq& \E \left \{\ld (y|\bm{x}^\top \bm{\beta}^*) \pd(y,\bm{x}^\top \bm{\beta}^*) \bm{x x}^\top \right\}, \nonumber \\
\bm{K} &\triangleq& \E \left \{\ldd(y|\bm{x}^\top \bm{\beta}^*) \bm{xx}^\top +  \lambda^* {\rm diag} [\bm{\rdd} ( \bm{\beta}^*)]  \right \}.
\end{eqnarray}
For the sake of brevity, we follow \cite{stone1977asymptotic} and make the following assumptions that enable us avoid repeating standard asymptotic arguments that can be found elsewhere, e.g. in \cite{van2000asymptotic}:
\begin{itemize}
\item[] (B.1) As $n \rightarrow \infty$ ${\bm{\hat\beta}}_{\lambda_n} \overset{p}{\rightarrow} \bm{\beta}^*$.
\item[] (B.2) $\sup_i \|{\bm{\hat \beta}}_{\lambda_n \slash i} - \bm{\beta}^*\|_2 \overset{p}{\rightarrow} 0$.  
\item [] (B.3) Define $\dli \triangleq {\bm{\hat \beta}}_{\lambda_n} - {\bm{\hat \beta}}_{\lambda_n \slash i}$. Let $b_1, b_2, \ldots, b_n $ and $c_1, c_2, \ldots, c_n$ denote $2n$ numbers between $[0,1]$ that may depend on the dataset $\mathcal{D}$.  Then, we assume that
\[
\frac{\bm{X} \diag\left [  \bm{\ldd} \left( {\bm{\hat \beta}}_{\lambda_n} + b_i \dli \right)  \right] \bm{X}^\top + \lambda_n {\rm diag} [\bm{\rdd} ( {\bm{\hat \beta}}_{\lambda_n} + c_i\dli )]}{n} \overset{p}{\rightarrow}  \bm{K}. 
\]
\item[] (B.4) Let $a_1, \ldots, a_n$ denote $n$ number between $0,1$ that may depend on dataset $\mathcal{D}$. Then, assume that
$$\frac{1}{n} \sum_{i=1}^n \bm{x}_i \bm{x}_i^\top \ld_i({\bm{\hat \beta}}_{\lambda_n \slash i}) \pd \left(y_i,\bm{x}_i^\top {\bm{\hat \beta}}_{\lambda_n} + a_i \bm{x}_i^\top \dli \right)   \overset{p}{\rightarrow}  \bm{R}.$$ 

\item[] (B.5) Note that $$H_{ii}= \bm{x_i}^\top  \left( \bm{X} \diag [ \bm{ \ldd} ( {\bm{ \hat \beta}}_{\lambda_n}  )  ] \bm{X}^\top + \lambda {\rm diag} [\bm{\rdd} ({\bm{\hat \beta}}_{\lambda_n })] \right)^{-1} \bm{x}_i \ldd_i ({\bm{\hat \beta}}_{\lambda_n}).$$ Hence, we also assume that $H_{ii} \overset{p}{\rightarrow} 0$. 

\item[] (B.6) Let $d_1, d_2, \ldots, d_n$ denote $n$ numbers between $[0,1]$. Note that we have already assumed that $\sup_i H_{ii} \overset{p}{\rightarrow} 0$. We further assume that $$ \sum_{i=1}^n \left(\frac{\bm{x}_i \bm{x}_i^\top}{n}\right) \left(\frac{\ld_i({\bm{ \hat \beta}}_{\lambda_n})}{1 - H_{ii}}\right) \pd \left(y_i,\bm{x}_i^\top \bl +d_i \left(\frac{\ld_i({\bm{\hat \beta}}_{\lambda_n})}{\ldd_i({\bm{\hat \beta}}_{\lambda_n})}\right) \left(  \frac{H_{ii}}{1 - H_{ii}} \right) \right) \overset{p}{\rightarrow} \bm{R}. $$
\end{itemize}

It should be clear that all these assumptions can be proved under appropriate regularity conditions on the loss function and the regularizer. Note that according to this theorem the error between $\alo$ and $\lo$ is $o_p(1/n)$.

\begin{theorem}\label{thm:alo_lo_largen}
Under assumptions (B.1), (B.2), \ldots, (B.6), we have $n({\rm ALO}_{\lambda_n}-{\rm LO}_{\lambda_n}) \overset{p}{\rightarrow} 0,$
as $n \rightarrow \infty$. 
\end{theorem}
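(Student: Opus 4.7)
I would prove this by decomposing $n(\alo-\lo)$ into two scalar sums, each of which converges in probability to $\pm\tr(\bm{K}^{-1}\bm{R})$, so that the two limits cancel. The key observation is that assumptions (B.4) and (B.6) are designed so that two different weighted sample averages of $\bm{x}_i\bm{x}_i^\top$ share the \emph{same} limit matrix $\bm{R}$; after rewriting both sums in the form $\tr(n\bm{J}^{-1}\cdot(\,\cdot\,))$ with the common factor $n\bm{J}^{-1}\to_p\bm{K}^{-1}$ coming from (B.3), this coincidence of limits drives the cancellation.

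First I would apply the mean value theorem to $\phi$ in its second argument, expanding both $\phi(y_i,\bm{x}_i^\top\bl+\eta_i)$ and $\phi(y_i,\bm{x}_i^\top\bli)$ around the common anchor $\phi(y_i,\bm{x}_i^\top\bl)$. Writing $\eta_i=(\ld_i(\bl)/\ldd_i(\bl))\cdot H_{ii}/(1-H_{ii})$ for the ALO correction and using $\bm{x}_i^\top\bli-\bm{x}_i^\top\bl=-\bm{x}_i^\top\dli$, this produces
\[
n(\alo-\lo)\;=\;\sum_{i=1}^{n}\pd_1^{(i)}\eta_i\;+\;\sum_{i=1}^{n}\pd_2^{(i)}\bm{x}_i^\top\dli,
\]
where $\pd_1^{(i)}=\pd(y_i,\bm{x}_i^\top\bl+d_i\eta_i)$ and $\pd_2^{(i)}=\pd(y_i,\bm{x}_i^\top\bl-a_i\bm{x}_i^\top\dli)$ are precisely the mean-value evaluations of $\dot{\phi}$ that appear in (B.6) and (B.4).

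For the first sum I would use the Sherman--Morrison rearrangement underlying the ALO formula, $\eta_i=\ld_i(\bl)\,\bm{x}_i^\top\bm{J}^{-1}\bm{x}_i/(1-H_{ii})$ with $\bm{J}=\bm{X}^\top\diag[\bm{\ldd}(\bl)]\bm{X}+\lambda_n\diag[\bm{\rdd}(\bl)]$, together with the trace identity $\bm{x}_i^\top\bm{J}^{-1}\bm{x}_i=\tr(\bm{J}^{-1}\bm{x}_i\bm{x}_i^\top)$, to obtain
\[
\sum_{i=1}^{n}\pd_1^{(i)}\eta_i\;=\;\tr\!\Bigl(n\bm{J}^{-1}\cdot\tfrac{1}{n}\sum_{i=1}^{n}\tfrac{\pd_1^{(i)}\ld_i(\bl)}{1-H_{ii}}\bm{x}_i\bm{x}_i^\top\Bigr)\;\xrightarrow{p}\;\tr(\bm{K}^{-1}\bm{R}),
\]
combining $n\bm{J}^{-1}\to_p\bm{K}^{-1}$ from (B.3) (at $b_i=c_i=0$) with the averaged-matrix limit from (B.6). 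For the second sum I would instead apply MVT to the full score $F(\bm{\beta}):=\sum_j\ld_j(\bm{\beta})\bm{x}_j+\lambda_n\bm{\rd}(\bm{\beta})$: since $F(\bl)=0$ and $F(\bli)=\ld_i(\bli)\bm{x}_i$, the integral-form MVT gives $\bli-\bl=(H'_i)^{-1}\bm{x}_i\ld_i(\bli)$, hence $\bm{x}_i^\top\dli=-\ld_i(\bli)\bm{x}_i^\top(H'_i)^{-1}\bm{x}_i$, where $H'_i=\int_0^1 F'(\bl+t(\bli-\bl))\,dt$ is the mean-value Hessian. This deliberate choice of expansion surfaces $\ld_i(\bli)$, exactly matching (B.4). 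Replacing $(H'_i)^{-1}$ by $\bm{J}^{-1}$ introduces an error of order $O_p(n^{-2})$ per summand---since $\|H'_i-\bm{J}\|_{\mathrm{op}}=o_p(\sqrt{n})$ from local Lipschitzness of $\bm{\ldd},\bm{\rdd}$ combined with $\|\bli-\bl\|=O_p(1/n)$, while $\|\bm{J}^{-1}\|_{\mathrm{op}}=O_p(1/n)$ by (B.3)---which sums to $o_p(1)$ across $n$ indices. The resulting main term equals $-\tr(n\bm{J}^{-1}\cdot\tfrac{1}{n}\sum_i\pd_2^{(i)}\ld_i(\bli)\bm{x}_i\bm{x}_i^\top)\to_p-\tr(\bm{K}^{-1}\bm{R})$ by (B.3) and (B.4). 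Adding the two limits yields the desired $n(\alo-\lo)\to_p 0$; assumption (B.5) enters only to make $1/(1-H_{ii})$ harmless in the first sum, and (B.1)--(B.2) localize all Taylor expansions near $\bm{\beta}^*$.

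The main technical obstacle I anticipate is controlling, uniformly in $i=1,\ldots,n$, the replacement of $(H'_i)^{-1}$ by $\bm{J}^{-1}$: the pointwise-in-$i$ bound of order $O_p(n^{-2})$ must propagate to an $o_p(1)$ bound after summing $n$ such error terms, which requires some degree of uniformity in $i$. This uniformity is implicit in (B.3), whose statement quantifies over arbitrary mean-value sequences $\{b_i,c_i\}\subset[0,1]$, but making it fully rigorous---or, equivalently, pulling out a single $i$-independent reference Hessian from the expansion at the outset---is where the bulk of the careful work sits; the rest is bookkeeping on top of classical $p$-fixed, $n\to\infty$ $M$-estimator asymptotics.
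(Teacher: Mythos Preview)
Your proposal is correct and follows essentially the same route as the paper's own proof. Both arguments anchor at the in-sample quantity $\tfrac{1}{n}\sum_i\phi(y_i,\bm{x}_i^\top\bl)$, apply the mean value theorem to $\phi$ to produce the $\pd$ factors appearing in (B.4) and (B.6), use the first-order optimality conditions together with the mean value theorem on the score to express $\dli$ via an inverse-Hessian-times-$\bm{x}_i\ld_i(\bli)$ representation, rewrite the resulting quadratic forms as traces, and then invoke (B.3)/(B.4) and (B.3)/(B.6) to identify both limits as $\tr(\bm{K}^{-1}\bm{R})$. The paper is slightly more terse: it keeps the $i$-dependent mean-value Hessian $H'_i$ inside the sum and appeals to (B.3) and (B.4) directly with the remark ``it is then straightforward,'' whereas you take the extra step of replacing each $(H'_i)^{-1}$ by the common $\bm{J}^{-1}$ and bounding the discrepancy. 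That replacement is a reasonable way to make the step more transparent, and the uniformity-in-$i$ issue you flag as the main obstacle is exactly the point the paper leaves implicit in its invocation of (B.3).
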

\begin{proof}
For notational simplicity instead of using $\lambda_n$ we use $\lambda$ in our formulas. However, the reader should note that $\lambda/n \rightarrow \lambda^*$. First note that the gradient condition implies that $\XI \bm{\ld}_{/ i}(\bli) + \lambda \bm{ \rd}(\bli)=0.$ Hence,
\begin{equation}\label{eq:firstgradmin1}
\bm{X} \bm{\ld}(\bli) +\lambda  \bm{\rd}(\bli)= \ld_i(\bli)\bm{x_i}. 
\end{equation}
Furthermore, we can use the fact that $\bm{X \ld}(\bl) + \lambda \bm{\rd}(\bl) = 0$ to obtain  
\[
\bm{X \ld}(\bli) +\lambda \bm{ \rd}(\bli) - \bm{X \ld}(\bl) - \lambda \bm{ \rd}(\bl)= \ld_i(\bli)\bm{x_i}
\]
Since both the loss function and the regularizer are assumed to be twice continuously differentiable, we can use the mean value theorem to simplify this expression to
\begin{eqnarray}\label{eq:finalgradlargen}
\left ( \bm{X} \diag\left [  \bm{\ldd} \left( \bl + b_i \dli \right)  \right] \bm{X}^\top + \lambda \diag \left[\bm{\rdd} \left( \bl + c_i\dli \right)\right]   \right) \dli  = \ld_i(\bli)\bm{x_i},
\end{eqnarray}
 where all $b_i$s and $c_i$s are in $[0,1]$. Furthermore, if $\phi$ is continuously differentiable, then we can again use the mean value theorem to obtain
 \begin{eqnarray}
\lefteqn{ \lo = \frac{1}{n} \sum_{i=1}^n \phi(y_i, \bm{x_i}^\top \bl) + \frac{1}{n} \sum_{i=1}^n \bm{x_i}^\top \dli \pd \left(y_i,\bm{x_i}^\top \bl + a_i \bm{x_i}^\top \dli \right)  = \frac{1}{n} \sum_{i=1}^n \phi(y_i, \bm{x_i}^\top \bl)} \nonumber \\
&+& \frac{1}{n} \sum_{i=1}^n \bm{x_i}^\top  \left( \bm{X} \diag\left [ \bm{ \ldd }\left( \bl + b_i \dli \right)  \right] \bm{X}^\top + \lambda \diag\left [\bm{ \rdd} \left( \bl + c_i\dli \right) \right]  \right)^{-1} \bm{x_i} \nonumber\\
&\times&\ld_i(\bli) \pd \left(y_i,\bm{x_i}^\top \bl + a_i \bm{x_i}^\top \dli \right). \nonumber
 \end{eqnarray}
Hence,
\begin{eqnarray}
\lefteqn{ \lo - \frac{1}{n} \sum_{i=1}^n \phi(y_i, \bm{x_i}^\top \bl) } \nonumber \\
&=& \frac{1}{n} \sum_{i=1}^n {\rm trace} \left[  \left( \bm{X} \diag\left [ \bm{ \ldd }\left( \bl + b_i \dli \right)  \right]\bm{ X}^\top + \lambda \diag[\bm{\rdd} \left( \bl + c_i\dli \right)]  \right)^{-1} \bm{x_i} \bm{x_i}^\top \right]
\nonumber
\\
&\times&
\ld_i(\bli) \pd \left(y_i,\bm{x_i}^\top \bl + a_i \bm{x_i}^\top \dli \right). \nonumber
\end{eqnarray}
It is then straightforward to use Assumptions (B.3) and (B.4) to claim that
\[
n (\lo - \frac{1}{n} \sum_{i=1}^n \phi(y_i, \bm{x_i}^\top \bl)) \overset{p}{\rightarrow} {\rm trace} (\bm{K}^{-1}\bm{R}). 
\]
Similarly, we can use the mean value theorem to argue that 
\begin{eqnarray}
\alo &=& \frac{1}{n} \sum_{i=1}^n \phi(y_i, \bm{x_i}^\top \bl) 
\nonumber
\\
&+& 
\frac{1}{n} \sum_{i=1}^n   \left(  \frac{H_{ii}}{\ldd_i(\bl)} \right) \times  \left(\frac{\ld_i(\bl)}{1 - H_{ii}}\right)  \pd \left (y_i,  \bm{x_i}^\top \bl +d_i \left(\frac{\ld_i(\bl)}{\ldd_i(\bl)}\right) \left(  \frac{H_{ii}}{1 - H_{ii}} \right)     \right)
\nonumber
\\
&=& 
\frac{1}{n} \sum_{i=1}^n \phi(y_i, \bm{x_i}^\top \bl) + 
\frac{1}{n}  \text{trace}\Bigg [
\left( \frac{\bm{X} \diag [ \bm{ \ldd} ( \bl  )  ] \bm{X}^\top + \lambda \diag[\bm{\rdd}( \bl )]}{n}  \right)^{-1} \nonumber
\\
&\times&
 \sum_{i=1}^n      
\left(\frac{\bm{x_i} \bm{x_i}^\top}{n}\right)\left(  \frac{  \ld_i(\bl) }{1 - H_{ii}} \right)  \pd \left (y_i,  \bm{x_i}^\top \bl +d_i \left(\frac{\ld_i(\bl)}{\ldd_i(\bl)}\right) \left(  \frac{H_{ii}}{1 - H_{ii}} \right)     \right) \Bigg] \nonumber
\end{eqnarray}
with  $|d_i| \leq 1$, $i=1,\cdots,n$. Again it is straightforward to use Assumptions (B.3) and (B.6) to show that
\begin{equation}
n(\alo-\frac{1}{n} \sum_{i=1}^n \phi(y_i, \bm{x_i}^\top \bl) )\overset{p}{\rightarrow} {\rm trace} (\bm{K}^{-1}\bm{R}). 
\end{equation}

\end{proof}
} \fi

\end{document}